\def\UrlSpecials{\do\~{\kern -.15em\lower .7ex\hbox{~}\kern .04em}} \catcode`~=13 
\newcommand{\nn}{\nonumber}
\newcommand{\calA}{\mathcal{A}}
\newcommand{\calC}{\mathcal{C}}
\newcommand{\calD}{\mathcal{D}}
\newcommand{\calH}{\mathcal{H}}
\newcommand{\calK}{\mathcal{K}}
\newcommand{\calL}{\mathcal{L}}
\newcommand{\calN}{\mathcal{N}}
\newcommand{\calP}{\mathcal{P}}
\newcommand{\calQ}{\mathcal{Q}}
\newcommand{\calU}{\mathcal{U}}
\newcommand{\calV}{\mathcal{V}}
\newcommand{\calW}{\mathcal{W}}
\newcommand{\calX}{\mathcal{X}}
\newcommand{\ba}{\mathbf{a}}
\newcommand{\bA}{\mathbf{A}}
\newcommand{\bc}{\mathbf{c}}
\newcommand{\be}{\mathbf{e}}
\newcommand{\bg}{\mathbf{g}}
\newcommand{\bh}{\mathbf{h}}
\newcommand{\bI}{\mathbf{I}}
\newcommand{\bM}{\mathbf{M}}
\newcommand{\bn}{\mathbf{n}}
\newcommand{\bp}{\mathbf{p}}
\newcommand{\bq}{\mathbf{q}}
\newcommand{\bR}{\mathbf{R}}
\newcommand{\bs}{\mathbf{s}}
\newcommand{\bu}{\mathbf{u}}
\newcommand{\bU}{\mathbf{U}}
\newcommand{\bv}{\mathbf{v}}
\newcommand{\bw}{\mathbf{w}}
\newcommand{\bx}{\mathbf{x}}
\newcommand{\bX}{\mathbf{X}}
\newcommand{\by}{\mathbf{y}}
\newcommand{\bbeta}{\bm{\beta}}
\newcommand{\btau}{\bm{\tau}}
\DeclareMathOperator{\rank}{rank}
\newtheorem{theorem}{Theorem}
\newtheorem{assumption}{Assumption}
\newtheorem{definition}{Definition} 
\newcommand{\qednew}{\nobreak \ifvmode \relax \else
      \ifdim\lastskip<1.5em \hskip-\lastskip
      \hskip1.5em plus0em minus0.5em \fi \nobreak
      \vrule height0.75em width0.5em depth0.25em\fi}
\newcommand{\scrH}{\mathscr{H}}
\newcommand{\scrN}{\mathscr{N}}
\newcommand{\scrU}{\mathscr{U}}
\newtheorem{pro}{Proposition}
\newtheorem{coro}{Corollary}
\newtheorem{lem}{Lemma}
\newtheorem{rem}{Remark}
\DeclareMathOperator{\sign}{sign}
\DeclareMathOperator{\rad}{rad}
\DeclareMathOperator{\sfP}{\mathsf{P}}
\newif\ifrevision
  \newcommand{\rev}[1]{\textcolor{teal!80!black}{#1}}
  \newcommand{\rev}[1]{#1}
\newif\ifjrnotes
  \newcommand{\jrnote}[1]{%
    {\color{red}\footnotesize\textsf{[JR: #1]}}%
  }
  \newcommand{\jrnote}[1]{}
\title{Optimal Quantized Compressed Sensing via Projected Gradient Descent}
\date{(\today)}
\author[$\dag$]{Junren Chen}
\author[$\dag$]{Ming Yuan} 
\affil[$\dag$]{Department of Statistics, Columbia University}
\begin{document}
\maketitle

\long\def\symbolfootnote[#1]#2{\begingroup\def\thefootnote{\fnsymbol{footnote}}\footnote[#1]{#2}\endgroup}

\symbolfootnote[0]{This research was supported in part by NSF Grants DMS-2015285 and DMS-2052955. Emails: \texttt{jc6315@columbia.edu} (JC); \texttt{ming.yuan@columbia.edu} (MY).}

\begin{abstract}
This paper provides a unified treatment to  the recovery of structured signals living in a star-shaped set from general quantized measurements $\calQ(\bA\bx-\btau)$, where $\bA$ is a sensing matrix, $\btau$ is a vector of (possibly random) quantization thresholds, and $\calQ$ denotes an $L$-level quantizer. The ideal estimator of hamming distance minimization (HDM) is optimal  but typically infeasible to compute. Under sub-Gaussian design, we study the projected gradient descent (PGD) algorithm with respect to the one-sided $\ell_1$-loss and identify the conditions under which PGD achieves the same error rate as HDM, up to logarithmic factors. These conditions include estimates of the separation probability, small-ball probability and some moment bounds that are easy to validate.
For \rev{the} multi-bit case,  we also develop a complementary approach based on product embedding to show global convergence. When applied to popular  models such as 1-bit compressed sensing with Gaussian $\bA$ and zero $\btau$ and the dithered 1-bit/multi-bit models with sub-Gaussian $\bA$ and uniform dither $\btau$, our unified treatment yields error rates that improve on or match the sharpest results in all instances. Particularly, PGD achieves the information-theoretic optimal rate $\widetilde{O}(\frac{k}{mL})$ for recovering $k$-sparse signals, and the rate $\widetilde{O}((\frac{k}{mL})^{1/3})$ for effectively sparse signals. For 1-bit compressed sensing of sparse signals, our result recovers the optimality of normalized binary iterative hard thresholding (NBIHT) that was proved very recently.  
\end{abstract}
\bigskip

\noindent\textbf{Keywords:} Quantized compressed sensing, Nonconvex optimization, Nonlinear observations, Gradient descent, Concentration inequality

\vspace{3mm}

\noindent\textbf{MSC Numbers:} 94A12, 90C26, 49N30

\newpage 
 
\setcounter{tocdepth}{2}

{\hypersetup{linkcolor=black}
\small
\setstretch{0.95}
\setlength{\parskip}{0pt}
\thispagestyle{empty}
\tableofcontents
\thispagestyle{empty}
}

\newpage
\section{Introduction}\label{sec:intro}
Consider an $L$-level quantizer $\calQ $  which quantizes $q\in \mathbb{R}$ to
\begin{align}\label{eq:Llevel}
    \calQ(q) = \begin{cases}
        q_1,~~&\text{if }~q< b_1\\
        q_2,~~&\text{if }~b_1\le q< b_2\\
        \cdots\\
        q_{L-1},~~&\text{if }~b_{L-2}\le q< b_{L-1}\\
        q_L,~~&\text{if }~q\ge b_{L-1}
    \end{cases}
\end{align}
for some quantization thresholds $b_1<b_2<\cdots<b_{L-1}$ and certain values $q_1<q_2<\cdots<q_L$. If $L\ge 3$, we define 
\begin{align}
    \Delta: = \min_{j=1,\cdots,L-2}|b_{j+1}-b_j| \label{defreso}
\end{align}
as the resolution of the quantizer $\calQ$. For the 1-bit quantizer that outputs either $1$ or $-1$, we let $\Delta = 2$ for technical convenience.\footnote{For (almost) bounded inputs, the 1-bit quantizer can be well approximated by the (scaled) uniform quantizer $\calQ_\delta$ in (\ref{eq:Qdelta}) with large enough $\delta$; see \cite[Fig. 2]{chen2023parameter} for instance. It is thus reasonable to define the resolution of 1-bit quantizer as {\it some large enough constant}. Here, setting $\Delta=2$ is simply a convenient option for our further development.}
Note that  $(q_i)_{i=1}^L$ are only used to indicate which quantization cell the unquantized input falls in, hence their exact values are not important; 
we can thus assume without loss of generality that\footnote{\rev{In particular, this can be ensured by a relabeling of the output quantized levels, which leaves the information in the recovery problem unchanged.}}  
\begin{align}\label{wolg}
    q_{i+1} = q_i+\Delta,\quad i = 1,2,\cdots,L-1. 
\end{align}
See Figure \ref{fig:multi_one_bit_quantize} for an illustration.
\begin{figure}[ht!]
	\begin{centering}
		 \includegraphics[width=0.8\columnwidth]{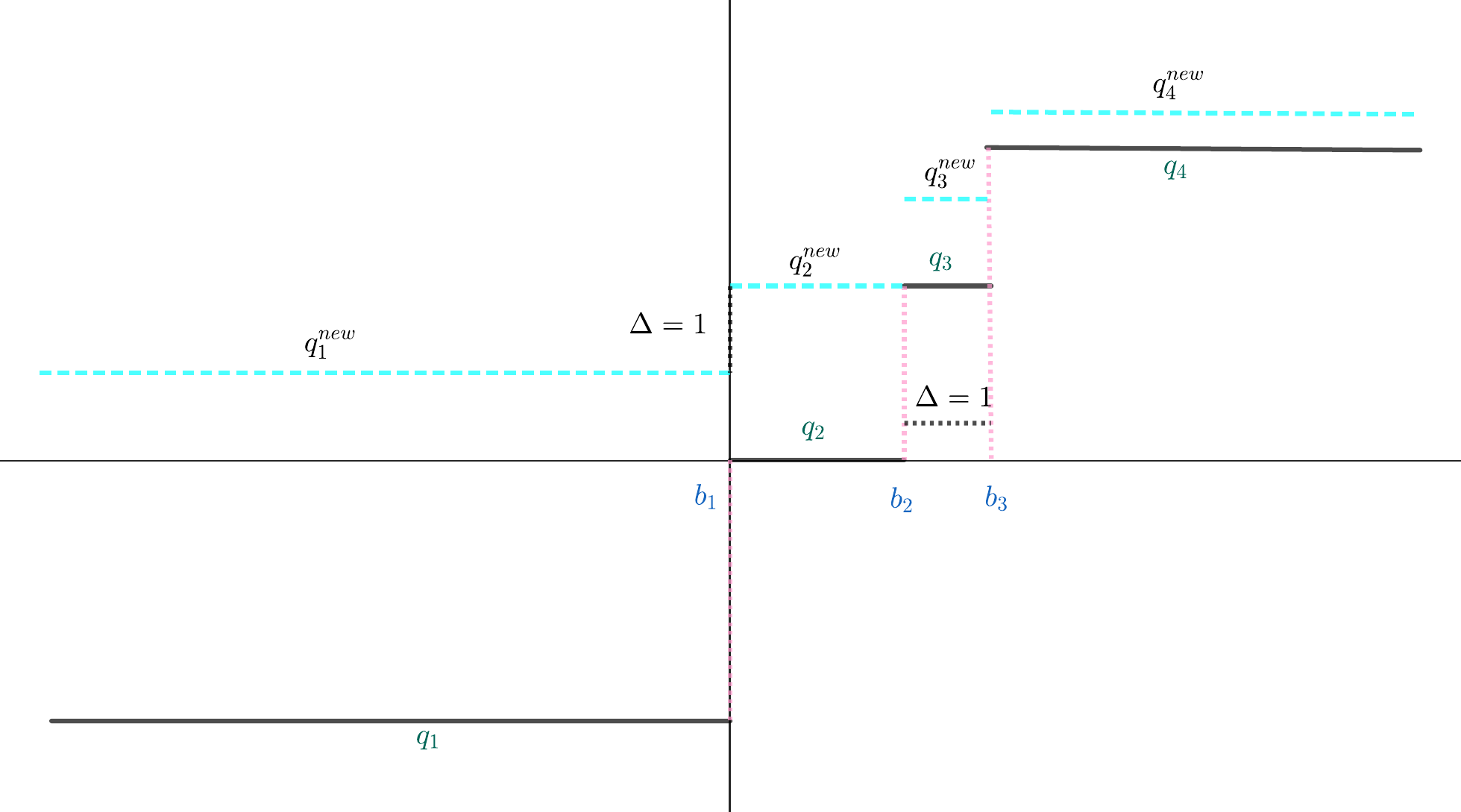}  
		\par
	\end{centering}
	
	\caption{\label{fig:multi_one_bit_quantize}\small An example of $(L=4)$-level quantizer. Note that we can assume (\ref{wolg}) because, for instance in the right figure, we can always modify $\{q_i\}_{i=1}^4$ to $\{q_i^{new}\}_{i=1}^4.$ Similarly, in 1-bit case where $\Delta=2$, we can always assume $(q_1,q_2)=(-1,1)$ to fulfill (\ref{wolg}).} 
\end{figure}

Given $m$ sensing vectors $\{\ba_i\}_{i=1}^m$ from $\mathbb{R}^n$, a set   $\calK $ in $\mathbb{R}^n$ that captures signal structure, the goal of   quantized compressed sensing   is to recover $\bx$ from
\begin{align}\label{eq:qcsmea}
    y_i = \calQ(\ba_i^\top \bx -\tau_i ),\qquad i=1,2,\cdots,m,
\end{align}
where $\tau_i$ denotes either \rev{a fixed quantization threshold or a randomly drawn} {\it dither}. For the latter case, as with \cite{dirksen2021non,jung2021quantized,thrampoulidis2020generalized,xu2020quantized}, we focus on uniform dithers  $\{\tau_i\}_{i=1}^m$ that are i.i.d. uniformly distributed over $[-\Lambda,\Lambda]$ and independent of everything else. We shall refer to $\Lambda$ as the dithering level and simply set $\Lambda=0$ to return setting without dithering.
In the vector form, we can write
\begin{align*}
    \by = \calQ(\bA\bx -\btau),
\end{align*}
where $\bA = [\ba_1,\cdots,\ba_m]^\top\in \mathbb{R}^{m\times n}$, $\btau= [\tau_1,\cdots,\tau_m]^\top$, and the quantizer is applied in an elementwise fashion.

On the signal structure,
we shall assume that $\calK$ is star-shaped set satisfying
\begin{align*}
    t \calK\subset\calK,\quad\forall t\in [0,1]. 
\end{align*} 
This general assumption was adopted in prior works such as \cite{plan2016generalized,plan2017high}, covering many interesting examples, including  sparse vectors in
\begin{align*} 
    \calK = \Sigma^n_k = \{\bu\in \mathbb{R}^n:\bu\text{ is }k\text{-sparse}\},
\end{align*}
 low-rank matrices in
\begin{align*} 
    \calK = M_{\bar{r}}^{n_1,n_2}= \{\bM\in \mathbb{R}^{n_1\times n_2}:\rank(\bM)\le \bar{r}\},
\end{align*}
and effectively sparse signals in
\begin{align*}
    \calK= \sqrt{k}\mathbb{B}_1^n:=\{\bu\in \mathbb{R}^n:\|\bu\|_1\le \sqrt{k}\},
\end{align*}
 just to name a few. We additionally use 
$$\mathbbm{A}^\beta_{\alpha} =\{\bu \in \mathbb{R}^n:\alpha\le\|\bu\|_2\le\beta\}$$
  with $\beta\ge\alpha\ge 0$
 to capture the norm constraint on the signals. \rev{Note that $\mathbbm{A}_0^\beta= \beta\mathbb{B}_2^n$ reduces to the $\ell_2$ ball of radius $\beta$.} Reserving $\calX$ for the signal space, we are interested in the recovery of signals $\bx$ in $$ \calX:=\calK\cap\mathbbm{A}^\beta_{\alpha}$$ from $\by=\calQ(\bA\bx-\btau)$.

Our general framework accommodates some of the most common quantized compressed sensing models as special cases.
\paragraph{1-bit compressed sensing (1bCS).} The most common and well-understood model is the so-called 1-bit compressed sensing (e.g., \cite{jacques2013robust,plan2012robust,plan2013one,plan2014dimension,boufounos20081,matsumoto2024binary,friedlander2021nbiht,awasthi2016learning,plan2016generalized,plan2017high,genzel2023unified}) where we seek to recover $\bx\in\calK$ from 
    \begin{align}\label{eq:1bcs}
    \by = \sign(\bA\bx).
\end{align}
In this case, we have $\Delta = 2$ (by our convention), $\Lambda =0$, and (\ref{wolg}) is satisfied. Note that the observations provide no information on the signal norm $\|\bx\|_2$, thus we set $\alpha=\beta=1$ and hence $\mathbbm{A}^\beta_{\alpha}=\mathbb{S}^{n-1}:=\{\bu\in\mathbb{R}^n:\|\bu\|_2=1\}$ restricts our attention to the signals in $\calX:=\calK\cap \mathbb{S}^{n-1}$.  
It is well known that accurate reconstruction can be achieved by efficient algorithms when $\bA$ is an i.i.d. Gaussian ensemble (e.g., \cite{plan2012robust,plan2016generalized,plan2013one}). These results, however, cannot be extended to sub-gaussian measurements without additional assumptions. See, e.g., \cite{ai2014one,goldstein2019non}.  


\paragraph{Dithered 1-bit compressed sensing (D1bCS).} Suppose that $\btau$ is uniformly distributed over $[-\lambda,\lambda]^m$ for some $\lambda>0$, we   seek to recover $\bx$ from measurements
\begin{align}\label{eq:d1bcs}
	\by = \sign(\bA\bx-\btau).
\end{align}
In this case, we have $\Delta =2$ (by convention), $\Lambda = \lambda$, and (\ref{wolg}) holds.
It was also commonly assumed that the signals have $\ell_2$-norms bounded by $R$ for some $R>0$. See, e.g., \cite{thrampoulidis2020generalized,dirksen2021non,jung2021quantized,dirksen2020one,dirksen2023robust}. Without loss of generality, we shall set $(\alpha,\beta)=(0,1)$ so that $\mathbbm{A}_{\alpha}^\beta = \mathbb{B}_2^n:=\{\bu\in\mathbb{R}^n:\|\bu\|_2\le 1\}$ and consider signals in $\calX:=\calK\cap \mathbb{B}_2^n$. Accurate reconstruction with norm can be achieved under sub-Gaussian $\bA$ (as precised by Assumption \ref{assum1}). See, e.g., \cite{thrampoulidis2020generalized,dirksen2021non,jung2021quantized}. 

\paragraph{Dithered multi-bit compressed sensing (DMbCS).} More generally, we can consider a resolution-$\delta$ uniform  quantizer  
    \begin{align}\label{eq:Qdelta}
        \calQ_\delta(a) = \delta \Big(\Big\lfloor \frac{a}{\delta}\Big\rfloor + \frac{1}{2}\Big).
    \end{align}
Under the uniform dither $\btau\sim\scrU([-\frac{\delta}{2},\frac{\delta}{2}]^m)$ and sub-Gaussian $\bA$, we can accurately recover structured signals from  
 \begin{align}\label{eq:dqcs}
 	\by = \calQ_\delta(\bA\bx -\btau).
 \end{align}
  See, e.g., \cite{thrampoulidis2020generalized,genzel2023unified,xu2020quantized}. While $\calQ_\delta(a)$ does not immediately sample {\it finite bits} from $a$ (i.e.,  the range of $\calQ_{\delta}$ is the infinite set $\{(l-\frac{1}{2})\delta:l\in\mathbb{Z}\}$), it is often more practical to use a  uniform quantizer with saturation in applications.
Specifically, for some even integer $L\ge 4$, we  consider the less informative   version of  $\calQ_\delta$ defined as  
   \begin{align}\label{eq:QdeltaL}
   	\calQ_{\delta,L}(a) = \calQ_\delta(a)\cdot \mathbbm{1}\Big(|a|< \frac{L\delta}{2}\Big) + \frac{(L-1)\delta}{2}  \mathbbm{1}\Big(a\ge\frac{L\delta}{2}\Big)+ \frac{(1-L)\delta}{2}  \mathbbm{1}\Big(a\le\frac{-L\delta}{2}\Big).
   \end{align}
   See Figure \ref{fig:saturation} for an intuitive illustration of $\calQ_\delta$ and $\calQ_{\delta,4}$. 
    Again we set $\mathbbm{A}_{\alpha}^\beta=  \mathbb{B}_2^n$. In this case, we have $\Delta=\delta$, $\Lambda=\frac{\delta}{2}$, and (\ref{wolg}) is satisfied.
 Our goal is to reconstruct $\bx\in \calX:=\calK\cap\mathbb{B}_2^n$ from 
    \begin{align}\label{eq:mbsatu}
	        \by = \calQ_{\delta,L}(\bA\bx -\btau).
	    \end{align}

\begin{figure}[ht!]
	\begin{centering}
		 \includegraphics[width=0.5\columnwidth]{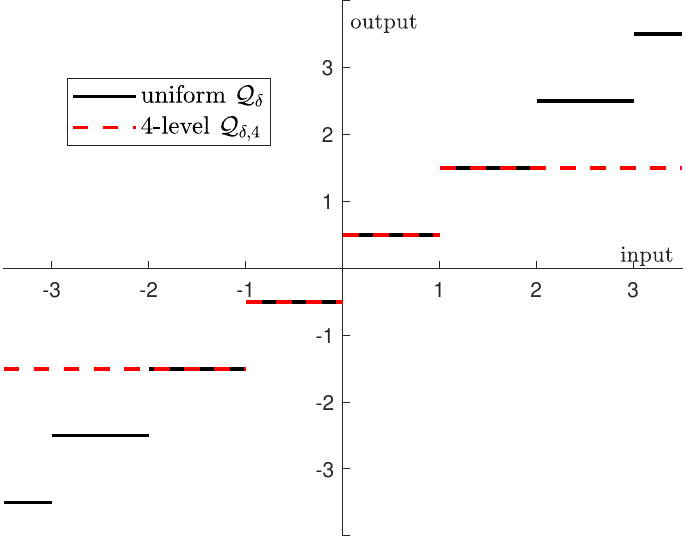}  
		\par
	\end{centering}
	
	\caption{\label{fig:saturation}\small The uniform quantizer $\calQ_\delta$ and its saturated version $\calQ_{\delta,4}$  under the resolution $\delta=1$.}
\end{figure}

\rev{We shall focus on the noiseless case of these quantized compressed sensing problems, with a brief discussion on the noisy setting provided in Section \ref{sec:discuss}.}
For recovering $k$-sparse signals from $\calQ(\bA\bx-\btau)$ with any fixed $(\bA,\btau)\in \mathbb{R}^{m\times n}\times \mathbb{R}^{m\times 1}$, the information-theoretic lower bound is known to be $\Omega(\frac{k}{mL})$.\footnote{\rev{Here and hereafter, we use the standard complexity notation that will be formally introduced in Section \ref{sec:notation}.}}  On the other hand, an ideal decoding strategy is to return an estimate having   quantized measurements coincident with the observations, which can be formulated as constrained hamming distance minimization (HDM).    This ideal decoder attains the optimal rate $\widetilde{O}(\frac{k}{m})$  for   1bCS of $k$-sparse signals with Gaussian $\bA$ and D1bCS with sub-Gaussian matrix. See  \cite{jacques2013robust,dirksen2021non,oymak2015near}. But the exact sparsity rarely occurs and may not be a realistic signal structure for practitioners. Hence,  effectively sparse signals residing in a scaled $\ell_1$-ball are   of particular interest. It was also known \cite{dirksen2021non,oymak2015near} that HDM achieves uniform error rate  $\widetilde{O}((\frac{k}{m})^{1/3})$ for recovering signals in $\sqrt{k}\mathbb{B}_1^n$ in 1bCS and D1bCS. As a side contribution, we will unify these information-theoretic bounds on the converse or attainability in Section \ref{sec:itbound}. Particularly, we reproduce the lower bound $\Omega(\frac{k}{mL})$ in Theorem \ref{thm:lower}, and then    we show in Theorem \ref{thm:itup} that HDM achieves the error rates stated above for any quantized compressed sensing systems with sub-Gaussian sensing vectors and the following two essential components:
 \begin{itemize}
     \item (Small-ball probability)  an upper bound on the   probability of the unquantized measurement (after dithering) being close to some  $b_j$;

     \item (Separation probability) a lower bound on the probability of two signals having different quantized measurement.\footnote{See Appendix \ref{appendix_sepa} for the reason why this is referred to as separation probability.}  
 \end{itemize}

The main goal of this work is to devise efficient algorithms  that can achieve the optimal reconstruction error rate. Let us first review some of the computationally efficient and theoretically guaranteed algorithms in this area.

For noiseless 1bCS, the linear programming approach that achieves $\widetilde{O}((\frac{k}{m})^{1/5})$ appeared as the first such an algorithm \cite{plan2013one}, then the convex relaxation \cite{plan2012robust} was  shown to achieve $\widetilde{O}((\frac{k}{m})^{1/4})$ as well as the robustness to a set of noise patterns. The recent work \cite{chinot2022adaboost} studied a more general binary classification context and showed that Adaboost achieves an error rate $\widetilde{O}((\frac{k}{m})^{1/3})$.
While these algorithms are capable of recovering effectively sparse signals, when restricted to $k$-sparse signals, generalized Lasso and projected-back projection (PBP) \rev{achieve a faster rate of $\widetilde{O}((\frac{k}{m})^{1/2})$} \cite{plan2017high,plan2016generalized}.

For D1bCS and DMbCS that are compatible with {\it sub-Gaussian} sensing matrix, generalized Lasso achieves the non-uniform error rate of $\widetilde{O}((\frac{\Delta k}{m})^{1/2})$ for a fixed signal in $\Sigma^n_k$ \cite{thrampoulidis2020generalized}. Note that  $\Delta=\delta$ in DMbCS. Also note that some DMbCS results are obtained under the uniform quantizer $\calQ_\delta$ (instead of $\calQ_{\delta,L}$), and $\delta$ in these results could be roughly  thought of as $1/L$---the reason is that $\calQ_{\delta,L}$ can cover inputs bounded by $\frac{L\delta}{2}$ without saturation, and $L\delta=O(\sqrt{\log m})$ is typically sufficient under sub-Gaussian matrix. Uniformly for all effectively sparse signals living in an $\ell_2$-ball, minimizing the one-sided $\ell_1$-loss over the signal space achieves $\widetilde{O}((\frac{\Delta k}{m})^{1/3})$ \cite{jung2021quantized}. This is sharper than $\widetilde{O}((\frac{k}{m})^{1/4})$ achieved by an earlier convex relaxation approach  \cite{dirksen2021non} for D1bCS, which is based on a loss function   being  ignorant of the dithers $(\tau_i)_{i=1}^m$. For DMbCS,   PBP surprisingly provides accurate estimate under RIP matrices \cite{xu2020quantized}, but note that its sharpest error rate $\widetilde{O}((1+\delta)(\frac{k}{m})^{1/2})$ for $k$-sparse signals exhibits a worse leading factor than generalized Lasso when $\delta\ll 1$. While the results for generalized Lasso in \cite{thrampoulidis2020generalized} are nonuniform, the recent work   \cite{genzel2023unified} established the uniform error rate $\widetilde{O}(\sqrt{\Delta}(\frac{k}{m})^{1/4})$ over all effectively sparse signals for 1bCS, D1bCS and the uniform quantized model (\ref{eq:dqcs}).

In summary, all the results reviewed above are no faster than $\widetilde{O}(\Delta(\frac{k}{m})^{1/2})$ in recovering $k$-sparse signals, thereby essentially inferior to the optimal rate $O(\frac{k}{mL})$; and are no faster than $\widetilde{O}((\frac{\Delta k}{m})^{1/3})$ in recovering effectively sparse signals, for which the optimal rate, to our best knowledge, remains unclear (see Remark \ref{lower_eff} for further discussion).

Following some prior attempts \cite{liu2019one,friedlander2021nbiht}, normalized binary iterative hard thresholding (NBIHT) was recently shown by \cite{matsumoto2024binary} to be optimal in recovering all $\bx\in \Sigma^n_k\cap \mathbb{S}^{n-1}$ from the 1bCS measurements $\by=\sign(\bA\bx)$. That is, NBIHT achieves a uniform error rate of  $\widetilde{O}(\frac{k}{m})$ in $\ell_2$-norm.  In our prior work  \cite{chen2024one}, similar   ideas were used to establish an efficient and optimal algorithm for one-bit phase retrieval.  Nonetheless, similar optimal results are still lacking for other models such as D1bCS and DMbCS, as well as other signal structures like low-rank matrices and effectively sparse signals. Our results fill in this void. 

\begin{table}[ht!]
\caption{A Partial Review of Efficient Algorithms for Recovering (Effectively) Sparse Signals} 
\centering
\renewcommand\arraystretch{1.4}
\begin{threeparttable}\label{table1}
\begin{tabular}{cccc}
    \hline\hline
    1bCS algorithm~~~~ & ~~~~ error rate ~ ~~~ &~~~~ signal space ~ ~~~& ~~~~uniformity \\
    \hline
 Linear Program \cite{plan2013one} & $\widetilde{O}((\frac{k}{m})^{1/5})$  & $\sqrt{k}\mathbb{B}_1^n\cap\mathbb{S}^{n-1}$ &  \ding{51}\\ 
     Convex Relaxation \cite{plan2012robust}& $\widetilde{O}((\frac{k}{m})^{1/4})$ & $\sqrt{k}\mathbb{B}_1^n\cap\mathbb{S}^{n-1}$ & \ding{55} \\
     Generalized Lasso \cite{plan2016generalized} & $\widetilde{O}((\frac{k}{m})^{1/2})$ & $\Sigma^n_k\cap \mathbb{S}^{n-1}$ & \ding{55}
    \\ 
    PBP \cite{plan2017high} & $\widetilde{O}((\frac{k}{m})^{1/2})$ & $\Sigma^n_k\cap \mathbb{S}^{n-1}$ & \ding{55}
    \\
    Adaboost \cite{chinot2022adaboost} & $\widetilde{O}((\frac{k}{m})^{1/3})$ & $\sqrt{k}\mathbb{B}_1^n \cap \mathbb{S}^{n-1}$& \rev{\ding{55}}
    \\
    \textbf{PGD} (a.k.a. NBIHT) \cite{matsumoto2024binary} &  $\widetilde{O}(\frac{k}{m})$ &  $\Sigma^n_k\cap \mathbb{S}^{n-1}$  & \ding{51}\\
    \textbf{PGD}   & $\widetilde{O}((\frac{k}{m})^{1/3})$ & $\sqrt{k}\mathbb{B}_1^n\cap \mathbb{S}^{n-1}$ &  \ding{51}
    \\
    \hline
    \hline 
    D1bCS algorithm~~~ &~~~~   error rate~~~~ & ~~~~signal space   ~~~~&~~~~ uniformity \\
    \hline 
    Convex Relaxation \cite{dirksen2021non} & $\widetilde{O}((\frac{k}{m})^{1/4})$  &  $\sqrt{k}\mathbb{B}_1^n\cap \mathbb{B}_2^n$& \ding{51} \\ 
   Con. Rel. w.r.t. (\ref{eq:1sided}) \cite{jung2021quantized} & $\widetilde{O}((\frac{k}{m})^{1/3})$ &   $\sqrt{k}\mathbb{B}_1^n\cap \mathbb{B}_2^n$& \ding{51} \\
   Generalized Lasso \cite{thrampoulidis2020generalized} & $\widetilde{O}((\frac{k}{m})^{1/2})$ &  $\Sigma^{n}_k\cap \mathbb{B}_2^n$ & \ding{55} \\
   \textbf{PGD}   &  $\widetilde{O}(\frac{k}{m})$ & $\Sigma^{n}_k\cap \mathbb{B}_2^n$ & \ding{51} \\
      \textbf{PGD}    &  $\widetilde{O}((\frac{k}{m})^{1/3})$ & $\sqrt{k}\mathbb{B}_1^n\cap\mathbb{B}_2^n$ & \ding{51}
    \\
    \hline
    \hline 
    DMbCS algorithm~~~ &~~~~   error rate ~~~~& ~~~~signal space   ~~~~&~~~~ uniformity \\\hline  
      Con. Rel. w.r.t. (\ref{eq:1sided}) \cite{jung2021quantized} & $\widetilde{O}((\frac{k}{mL})^{1/3})$ &$\sqrt{k}\mathbb{B}_1^n\cap \mathbb{B}_2^n$  & \ding{51},$\calQ_{\delta,L}$\\ 
      Generalized Lasso \cite{chen2024uniform,thrampoulidis2020generalized} & $\widetilde{O}(\delta(\frac{k}{m})^{1/2})$& $\Sigma^n_k\cap \mathbb{B}_2^n$& \makecell[c]{\cite[\ding{55},$\calQ_\delta$]{thrampoulidis2020generalized}\\\cite[\ding{51},$\calQ_\delta$]{chen2024uniform}}\\ PBP \cite{xu2020quantized} & 
      $\widetilde{O}((1+\delta)(\frac{k}{m})^{1/2})$ & $\Sigma^n_k\cap \mathbb{B}_2^n$ &  \ding{51},$\calQ_\delta$
      \\
      \textbf{PGD}   & $\widetilde{O}(\frac{k}{mL})$  & $\Sigma^n_k\cap \mathbb{B}_2^n$ &  \ding{51},$\calQ_{\delta,L}$
      \\
 \textbf{PGD} & $\widetilde{O}((\frac{k}{mL})^{1/3})$  & $\sqrt{k}\mathbb{B}_1^n\cap \mathbb{B}_2^n$ &  \ding{51},$\calQ_{\delta,L}$
      \\
       \hline
\end{tabular} 
\begin{tablenotes}
\footnotesize
\item[$\ast$] We only review results for 1bCS under standard Gaussian $\bA$, D1bCS/DMbCS under sub-Gaussian $\bA$; we review the {\it sharpest} result for a work containing multiple guarantees of different flavors. \rev{All these results are valid under the sample complexity of $m=\widetilde{\Omega}(k)$, ignoring polynomial dependence on $\lambda$ in D1bCS and DMbCS which typically scales as $\widetilde{O}(1)$.}
\item[$\dag$] The result is said to be uniform (non-uniform, resp.) if it holds for all signals (a fixed signal, resp.) in $\calX$  with a single draw of the sensing ensemble.
\item[$\ddag$] For DMbCS, we enclose under ``uniformity'' whether the result is for $\calQ_\delta$ (\ref{eq:dqcs}) or the less informative $\calQ_{\delta,L}$ (\ref{eq:mbsatu}). 
\item[$\mathsection$] \rev{For D1bCS, the rate $\widetilde{O}((k/m)^{1/2})$ can be achieved uniformly over $\Sigma^n_k\cap \mathbb{B}_2^n$ by some efficient procedures even under partial circulant matrices that are harder to work with than sub-Gaussian matrices \cite{dirksen2023robust}.} 
\end{tablenotes}
\end{threeparttable}
\end{table} 

\subsection{Main contributions}

\rev{Having introduced the general quantized model, its main special cases, and the closest algorithmic guarantees in the literature, we now summarize the main contributions of the paper.}

We propose a {\it projected gradient descent} (PGD) algorithm for the general quantized compressed sensing problem. The algorithm performs (sub)gradient descent with respect to the one-sided $\ell_1$-loss, followed by the sequential projections onto the star-shaped set $\calK$ and the norm constraint set $\mathbbm{A}_\alpha^\beta$. Note that the computational efficiency of PGD  only hinges on the projection onto $\calK$, which can be computed efficiently for (effectively) sparse vectors and low-rank matrices. Our major theoretical finding is as follows: under suitable conditions, including those for analyzing HDM and some {\it additional moment bounds} which convey a type of independence between the marginals of the sensing vector and the separation event (cf. Assumptions \ref{assum_sg1}--\ref{assum_sg3} and relevant interpretations),  
\begin{align*}
    \textit{PGD achieves essentially the same error rate as HDM, up to logarithmic factors.}
\end{align*}

Our proof strategies are inspired by arguments in \cite{matsumoto2024binary,chen2024one}, while as will be clear, we provide significant generalizations (in terms of signal structure   captured by a star-shaped set, the model, the algorithm per se) and new technical contributions (especially in the multi-bit case).

The convergence of PGD follows from a type of {\it restricted approximate invertibility condition} (RAIC) with respect to the interplay among several parties: the quantized sampler  $(\calQ,\bA,\btau)$, the star-shaped set $\calK$, and the step size $\eta$. With true signal $\bx$, loss function $\calL(\bu)$ and step size $\eta$, RAIC provides a uniform bound on the dual norm of $\bu-\bx-\eta \partial \calL(\bu)$ with respect to certain low-complexity set. Thus, RAIC essentially conveys a message that the actual subgradient descent step $\eta \calL(\bu)$ approximates the ``ideal'' step $\bu-\bx$, up to some error terms. \rev{We note that this intuition of RAIC departs from that of the  {\it regularity condition} commonly used in statistical nonconvex optimization (e.g., \cite{candes2015phase}),} and we expect that RAIC is interesting in its own right and will prove useful in many other nonconvex optimization problems. \rev{For instance, the subsequent works \cite{chen2025unified,abdalla2026robust} analyzed general nonlinear estimation problems via RAIC, with examples including phase retrieval, single index models, and so on.}

To establish RAIC under the proposed conditions, we invoke a covering argument similar to \cite{matsumoto2024binary,chen2024one} with notable generalizations in that our analysis is built upon generic assumptions with a set of refinements such as we no longer need to separately treat the large-distance regime and small-distance regime. New and fundamental challenges arise in the multi-bit setting where the more intricate gradient adds great challenge to the above-described analysis and forbids a unified treatment to the general context of quantized compressed sensing. To overcome these difficulties, we propose to    
 {\it clip} the gradient to that in the 1-bit case and then control the deviation caused by gradient clipping. This idea proves effective if we restrict our attention to two points having distance much smaller than the quantization resolution. The main  intuition is that  {\it the multi-bit quantizer behaves similarly to the 1-bit quantizer in terms of its ability to distinguish such two close enough points}. Nonetheless, this approach only leads to local convergence for the multi-bit case. We then introduce a complementary approach, which is based on a type of {\it product embedding} (or {\it projection distortion}) property, to establish global convergence.

 To apply our unified framework to a specific model, it then suffices to validate the conditions, which can   be accomplished by elementary calculations and estimations. We  will particularly specialize our theory to the popular models 1bCS, D1bCS and DMbCS, showing that PGD achieves error rates which {\it improve on or match the sharpest existing results in all contexts}. In particular, we summarize in Table \ref{table1} both existing results and the ones we establish for recovering (effectively) sparse signals.

 While prior works \cite{matsumoto2024binary,friedlander2021nbiht} are only for sparse recovery, our works also straightforwardly yield low-rank recovery guarantees: for all three models (1bCS, D1bCS and DMbCS), under appropriate Frobenius norm constraint, our PGD algorithm can recover exactly low-rank matrices in $M^{n_1,n_2}_{\bar{r}}$ with an   error rate of $\widetilde{O}(\frac{\bar{r}(n_1+n_2)}{mL})$ which is optimal in light of Theorem \ref{thm:lower}, and the effectively low-rank matrices with nuclear norm bounded by $\sqrt{\bar{r}}$  to an error rate of $\widetilde{O}((\frac{\bar{r}(n_1+n_2)}{mL})^{1/3})$. 

\subsection{Notation}\label{sec:notation}

We introduce some generic notation used throughout the work. Regular letters denote scalars, and boldface letters denote vectors and matrices.  We denote by $\mathbb{B}_2^n(\bu;r)$ the $\ell_2$-ball with radius $r$ and center $\bu$, shortened to $\mathbb{B}_2^n(r)$ if $\bu=0$. For $a,b\in \mathbb{R}$ we write $a\vee b = \max\{a,b\}$ and $a\wedge b = \min\{a,b\}$. For   $\calU,\calV\subset\mathbb{R}^n$, $\calP_{\calU}(\cdot)$ denotes the projection onto $\calU$ under $\ell_2$-norm, and we let $t\calU=\{t\bu:\bu\in\calU\}$ and  $\calU-\calV=\{\bu-\bv:\bu\in\calU,\bv\in\calV\}$. The localization of $\calU$ to the radius of $r$ is defined as $\calU_{(r)}:=(\calU-\calU)\cap \mathbb{B}_2^n(r)$.
  We work with the dual norm $\|\bu\|_{\calU^\circ}=\sup_{\bw\in \calU}\bw^\top\bu$.

  Two natural quantities are used to characterize the complexity of a set $\calU$: the Gaussian width $\omega(\calU)=\mathbbm{E}\sup_{\bu\in\calU}\langle\bg,\bu\rangle$ where $\bg\sim\calN(0,\bI_n)$ and $\langle \bg,\bu\rangle = \bg^\top\bu$, and the metric entropy $\scrH(\calU,r) = \log\scrN(\calU,r)$ where $\scrN(\calU,r)$ is the minimal number of radius-$r$ $\ell_2$-ball needed to cover $\calU$, that is, the covering number. We also denote by $|\calU|$ the cardinality of a finite set $\calU$. The sub-Gaussian norm of a random variable $X$ is defined as $\|X\|_{\psi_2}=\inf\{t>0: \mathbbm{E}\exp(\frac{X^2}{t^2})\le 2\}$, while $\|X\|_{L^p} = (\mathbbm{E}|X|^p)^{1/p}$ denotes the $L_p$-norm. The sub-Gaussian norm of a random vector $\bX$ is defined as $\|\bX\|_{\psi_2}=\sup_{\bv\in \mathbb{S}^{n-1}}\|\bv^\top\bX\|_{\psi_2}$.

  We use $C_0,C_1,C_2,\cdots,c_0,c_1,c_2,\cdots$ to denote constants that may be universal or depend on other quantities that will be specified. Their values may vary from line to line. We write $T_1\lesssim T_2$ or $T_1=O(T_2)$ if $T_1\le C_1T_2$ for some $C_1$, and write $T_1\gtrsim T_2$ or $T_1=\Omega(T_2)$ if $T_1\ge c_1T_1$ for some $c_1$. Note that $\widetilde{O}(\cdot)$ and $\widetilde{\Omega}(\cdot)$ are the less precise versions of $O(\cdot)$ and $\Omega(\cdot)$ which ignore logarithmic factors. The symbol ``$\sim$'' will be used to  connect identical distributions (that may be represented by certain random variables).

The rest of the paper is organized as follows. We first present the information-theoretic bounds in Section \ref{sec:itbound}.
Section \ref{sec:unified} formally introduces PGD and its unified analysis. In Section \ref{sec:instance} we specialize the theory to 1bCS, D1bCS and DMbCS. Section \ref{sec:discuss} discusses the extension of our results to noisy settings. We further complement our theoretical results via a number of numerical simulations in Section \ref{sec:experiment}. We conclude the paper in Section \ref{sec:conclusion}. Most of the proofs are relegated to the appendix. 

\section{Information-Theoretic Bounds}\label{sec:itbound}
We focus on non-adaptive memoryless quantization where both $\bA$ and $\btau$ are drawn before observing any measurement. 
Note that the information-theoretic lower bounds are well established in the literature (see, e.g., \cite{boufounos2015quantization,jacques2013robust,chen2024one}) and follow from a simple counting argument.

\begin{theorem}[Information-theoretic lower bound]\label{thm:lower}
Suppose there is a $K$-dimensional linear subspace $V_K$ contained in $\calK$. 
    Given $\bA\in \mathbb{R}^{m\times n}$, $\btau\in \mathbb{R}^m$ and $\beta>\alpha\ge 0$, our goal is to recover $\bx\in \calW:=\calK\cap\mathbb{A}_{\alpha,\beta}$ from $\by=\calQ(\bA\bx-\btau)$ where $\calQ$ is the $L$-level quantizer as in (\ref{eq:Llevel}),
    or recover $\bx\in\calW:=\calK\cap \mathbb{S}^{n-1}$ from $\by = \sign(\bA\bx)$ as in 1bCS (\ref{eq:1bcs}). If $(L-1)m \ge K$, then for any decoder that takes $\by$ as input and returns $\hat{\bx}$ as an estimate of $\bx$, there exists some constant $c>0$ that may depend on $(\alpha,\beta)$ such that  
    \begin{align*}
        \sup_{\bx\in\calW}~\|\hat{\bx}-\bx\|_2\ge\frac{cK}{mL}.
    \end{align*}
\end{theorem}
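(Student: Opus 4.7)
I plan to prove Theorem \ref{thm:lower} via a classical volumetric counting argument. Observe that the encoding map $\bx \mapsto \calQ(\bA\bx-\btau)$ is piecewise constant on the cells of the affine hyperplane arrangement generated by the $(L-1)m$ hyperplanes $H_{i,j}=\{\bx:\ba_i^\top\bx-\tau_i=b_j\}$ for $i\in\{1,\dots,m\}$ and $j\in\{1,\dots,L-1\}$. Restricting to the hypothesized $K$-dimensional linear subspace $V\subset\calK$, Zaslavsky's theorem bounds the number of cells of the induced arrangement in $V$ by $\sum_{i=0}^{K}\binom{(L-1)m}{i}\le \bigl(e(L-1)m/K\bigr)^K$, where the last inequality uses the standing hypothesis $(L-1)m\ge K$. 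Since the quantized output $\by$ is constant on each cell, any decoder can return at most this many distinct estimates over signals in $V\cap\calW$.

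Second, I would pack many pairwise-separated signals into $V\cap\calW$. In the annulus case $\calW=\calK\cap\mathbbm{A}_\alpha^\beta$, the subspace $V$ intersected with the annulus contains a Euclidean $K$-ball $B$ of radius $R_0=R_0(\alpha,\beta)>0$: take $R_0=\beta/2$ if $\alpha=0$, and otherwise a ball of radius $(\beta-\alpha)/2$ centered at a point of $V$ of norm $(\alpha+\beta)/2$. The standard volume-ratio bound supplies a $2\epsilon$-separated subset of $B$ of cardinality at least $(R_0/(2\epsilon))^K$. Picking $\epsilon = cK/(mL)$ with $c = c(\alpha,\beta)$ sufficiently small makes $(R_0/(2\epsilon))^K > (e(L-1)m/K)^K$; by the pigeonhole principle, two distinct packing points $\bx_1,\bx_2\in V\cap\calW$ with $\|\bx_1-\bx_2\|_2\ge 2\epsilon$ must then share a quantized output, so the decoder's common estimate $\hat\bx$ satisfies $\max_i\|\hat\bx-\bx_i\|_2\ge\epsilon$ by the triangle inequality, giving the claimed bound.

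For the 1bCS variant $\calW=\calK\cap\mathbb{S}^{n-1}$ with $\by=\sign(\bA\bx)$, the $m$ hyperplanes all pass through the origin, so I would instead apply Schl\"afli's formula for central hyperplane arrangements to bound the number of spherical regions on $V\cap\mathbb{S}^{n-1}$ by $2\sum_{i=0}^{K-1}\binom{m-1}{i}\le 2(em/K)^{K-1}$, while a standard cap-packing produces at least $(c/\epsilon)^{K-1}$ points on this $(K-1)$-sphere at pairwise distance $\ge 2\epsilon$. Taking $\epsilon=c'K/m=c'K/(mL)$ (using $L=2$) again makes packing exceed cells and concludes the argument via the same pigeonhole step. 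The only real subtlety is precisely this sphere case: one dimension is lost in the packing bound, but the central nature of the arrangement loses exactly one dimension in the cell count as well, so the exponents match and the $K/(mL)$ rate is preserved in both regimes.
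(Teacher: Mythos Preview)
Your proposal is correct and matches the approach the paper has in mind: the paper does not actually give a proof of Theorem~\ref{thm:lower} but simply points to the standard counting argument (citing \cite{jacques2013robust,boufounos2015quantization,chen2024one}), and your write-up is precisely a clean instantiation of that argument---cell counting via Zaslavsky/Schl\"afli on the $K$-dimensional subspace, volumetric packing, and pigeonhole. Your observation that the lost dimension in the sphere packing is compensated by the lost dimension in the central-arrangement cell count is exactly the right way to handle the $\mathbb{S}^{n-1}$ case.
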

\begin{proof}
    We include a proof in Appendix \ref{lower_proof} for the sake of completeness. 
\end{proof}
\begin{rem}\label{rem:optimal}
Theorem \ref{thm:lower} suggests that for recovering $k$-sparse signals from $\calQ(\bA\bx-\btau)$,  no algorithm     
  can achieve an error rate faster than $O(\frac{k}{mL})$. Similarly, because there is a $\bar{r}(n_1\vee n_2)$-dimensional  subspace contained in $M^{n_1,n_2}_{\bar{r}}$, no algorithm 
  achieves error rate faster than $O(\frac{\bar{r}(n_1+n_2)}{mL})$
  for recovering rank-$\bar{r}$ matrices in $\mathbb{R}^{n_1\times n_2}$. 
\end{rem}
\begin{rem}
    \label{lower_eff}
Relatively little is known about the information-theoretic lower bound for recovering effectively sparse signals in $\calK=\sqrt{k}\mathbb{B}_1^n$. The closest development is probably \cite{dirksen2022sharp} who showed for D1bCS that $\Omega(\delta^{-3})$ Gaussian measurements are needed to achieve the uniform random hyperplane tessellation   over $\sqrt{k}\mathbb{B}_1^n\cap \mathbb{B}_2^n$ with distortion $\delta$, which is nonetheless not necessary for signal reconstruction to $\delta$ $\ell_2$-error. In fact, the optimality of $\widetilde{O}((\frac{k}{m})^{1/3})$ remains an open question (see, e.g., \cite{chinot2022adaboost}).   
\end{rem}
\begin{rem}
    \rev{We emphasize that this work is restricted to memoryless quantization. In fact, if adaptive quantization is considered, then an exponentially decaying error rate is achievable \cite{baraniuk2017exponential}.}
\end{rem}

  For 1bCS with standard Gaussian $\bA$ and D1bCS with sub-Gaussian $\bA$ and uniform $\btau$ \rev{(where $L=2$)}, it was known that the \rev{error rates in Remark \ref{rem:optimal}, namely $O(\frac{k}{m})$ for recovering $k$-sparse vectors and $O(\frac{\bar{r}(n_1+n_2)}{m})$ for recovering matrices in $M^{n_1,n_2}_{\bar{r}}$,}  are attainable up to a logarithmic factor by (constrained) hamming distance minimization (HDM): 
\begin{align}\label{eq:hdm}
	 \hat{\bx}_{\rm hdm}=\arg\min_{\bu\in\calX}~d_H\big(\calQ(\bA\bu-\btau),\by\big), 
\end{align}
where
\begin{align*}
	d_H(\bu,\bv) = \sum_{i=1}^m \mathbbm{1}(u_i\ne v_i).
\end{align*}
See \cite[Theorem 2]{jacques2013robust}, \cite[Theorem 2.5]{oymak2015near} and  \cite[Theorem 1.9]{dirksen2021non}.  To unify these existing performance bounds, we shall develop a useful notion referred to as  {\it quantized embedding property}  to relate the Euclidean distance of two signals and the hamming distance between their quantized measurements. To establish the quantized embedding property, we only need to make three generic assumptions.

As mentioned, we work with sub-Gaussian design throughout this work. 

\begin{assumption}[Sensing vectors]\label{assum1}
   The sensing vectors $\ba_1,\ba_2,\cdots,\ba_m$ are i.i.d. isotropic sub-Gaussian, i.e., they satisfy $\mathbbm{E}(\ba_i\ba_i^\top)=\bI_n$ and $\|\ba_i\|_{\psi_2}\le C_0$ for some absolute constant $C_0$.
\end{assumption}

\begin{rem}
    We pause to provide some implications of Assumption \ref{assum1}. First, it leads to the tail bounds and the moment bounds (e.g., \cite{vershynin2018high})
\begin{subequations}
    \label{eq:d1bcs_sg_abs}
    \begin{gather}\label{eq:d1bcs_sgtail}
    \mathbbm{P}(|\ba_i^\top\bu|\ge t)\le 2\exp(-2\bar{c}_0t^2),\quad \forall t\ge 0,~\forall\bu\in \mathbb{S}^{n-1},\\\label{eq:d1bcs_moment}
    (\mathbbm{E}|\ba_i^\top \bu|^p)^{1/p} \le \bar{c}_1\sqrt{p},\quad\forall p\in \mathbb{Z}_+,~\forall \bu\in\mathbb{S}^{n-1}
\end{gather}
\end{subequations}
which hold for some absolute constants $\bar{c}_0$ and $\bar{c}_1$. Moreover,  by Lemma  \ref{lem:L1L2} in Appendix \ref{app:lemma}, there exists some absolute constant $L_0$ such that   
\begin{subequations}
    \begin{gather}
        \label{eq:L1L2equ}
    \mathbbm{E}|\ba_i^\top\bu| \ge L_0,\quad\forall \bu\in \mathbb{S}^{n-1},\\
    \label{eq:prob_ge12}\mathbbm{P}\Big(|\ba_i^\top\bu|\ge\frac{1}{2}\Big)\ge 2L_0,\quad \forall \bu\in \mathbb{S}^{n-1}.
    \end{gather}
\end{subequations}  
\end{rem}

Next, we characterize the small-ball probability of the unquantized measurement $\ba_i^\top \bu-\tau_i$ being close to some quantization thresholds $(b_j)_{j=1}^{L-1}$. In all our assumptions, we use $c^{(i)}$'s to denote constants that are typically universal when specialized to a specific model. We will use $\varepsilon^{(i)}$'s instead if these constants are extremely small or $0$. \rev{Recall that $\Delta$ is defined in (\ref{defreso}) and we assume that (\ref{wolg}) holds with no loss of generality. Also, $\Lambda$ denotes the dithering level such that $\btau\sim {\rm Unif}[-\lambda,\lambda]^m$.}

\begin{assumption}[Small-ball probability]\label{assum2}
    For some $c^{(1)}>0$ and for any $\bu\in \mathbbm{A}_\alpha^\beta$, it holds  that 
    \begin{align*}
        \mathbbm{P}\left(\min_{j\in[L-1]}|\ba_i^\top\bu-\tau_i-b_j|\le t\right)\le\frac{c^{(1)}t}{\Delta\vee\Lambda},\quad\forall t>0. 
    \end{align*}
\end{assumption}

The final assumption for quantized embedding property is on the probability of two points being distinguished by the quantized sampler. This quantity, referred to as separation probability, is defined as 
\begin{align}\label{eq:Puv}
    \sfP_{\bu,\bv}:= \mathbbm{P}\Big(\calQ(\ba_i^\top\bu-\tau_i)\ne \calQ(\ba_i^\top\bv-\tau_i)\Big)
\end{align}
and will be   recurring in subsequent   developments. 
\begin{assumption}[Lower bound on $\sfP_{\bu,\bv}$]\label{assum3}
    For any $\bu,\bv\in\mathbbm{A}_\alpha^\beta$,
    \begin{align*} \sfP_{\bu,\bv}\ge  c^{(2)}\min\left\{\frac{\|\bu-\bv\|_2}{\Delta\vee\Lambda},1\right\} 
    \end{align*}
    holds for some  $c^{(2)}>0$. 
\end{assumption}
 
 \rev{We establish two high-probability events, referred to as quantized embedding properties, under these assumptions. The first event $E_{\rm small}$ states that the measurement vectors of two points $\bu,\bv$ of small Euclidean distance are also close in hamming distance. In contrast, the event $E_{\rm large}$ guarantees large hamming distance in measurement vectors for $\bu,\bv$ having large Euclidean distance.} 
 
\begin{theorem}[Quantized embedding property] 
    \label{thm:local_embed} Let $\calW$ be a  set contained in $\mathbbm{A}_\alpha^\beta$, for \rev{$\epsilon\in(0,c_0(\Delta\vee\Lambda))$}, we let $\epsilon' = \frac{c_1 \epsilon}{ \log^{1/2}(\frac{\Delta\vee\Lambda}{\epsilon})}$ and suppose
    \begin{align}
        \label{eq:local_scale}
        m\ge C_2 (\Delta\vee\Lambda)\left(\frac{\omega^2(\calW_{(3\epsilon'/2)})}{\epsilon^3}+\frac{\scrH(\calW,\epsilon')}{\epsilon}\right),
    \end{align}
    where $c_0$ and $c_1$ are small enough, $C_2$ are large enough.   We have the following two statements: 
    \begin{itemize}
    [leftmargin=5ex,topsep=0.25ex]
		\setlength\itemsep{-0.1em}
        \item   Under Assumptions \ref{assum1}--\ref{assum2}, for some constants $c_0,c_1,C_2,c_3,C_4$ depending on $c^{(1)}$, with probability at least $1-2\exp(-\frac{c_3m\epsilon}{\Delta\vee\Lambda})$ the event \rev{$E_{\rm small}$} holds: for any $\bu,\bv\in \calW$ obeying $\|\bu-\bv\|_2\le \frac{\epsilon'}{2}$, we have 
        \begin{align*}
            d_H\big(\calQ(\bA\bu-\btau),\calQ(\bA\bv-\btau)\big)\le \frac{C_4m\epsilon}{\Delta\vee\Lambda}. 
        \end{align*}
        \item Under Assumptions \ref{assum1}--\ref{assum3},  for some constants $c_0,c_1,C_2,c_5,c_6$ depending on $(c^{(1)},c^{(2)})$, with probability at least $1-\exp(-\frac{c_5m\epsilon}{\Delta\vee\Lambda})$ the event \rev{$E_{\rm large}$} holds: for any $\bu,\bv\in\calW$ obeying $\|\bu-\bv\|_2\ge 2\epsilon$, we have 
        \begin{align}\label{eq:El}
            d_H \big(\calQ(\bA\bu-\btau),\calQ(\bA\bv-\btau)\big) \ge c_6m \min\left\{\frac{\|\bu-\bv\|_2}{\Delta\vee\Lambda},1\right\} .
        \end{align}
    \end{itemize}
\end{theorem}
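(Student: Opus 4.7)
Since
\[
d_H\bigl(\calQ(\bA\bu - \btau), \calQ(\bA\bv - \btau)\bigr) = \sum_{i=1}^m \mathbbm{1}\bigl(\calQ(\ba_i^\top \bu - \tau_i) \ne \calQ(\ba_i^\top \bv - \tau_i)\bigr)
\]
is a sum of $m$ independent Bernoulli$(\sfP_{\bu,\bv})$ variables, the plan is to (i) control $\sfP_{\bu,\bv}$ pointwise, from above via Assumption \ref{assum2} for $E_s$ and from below via Assumption \ref{assum3} for $E_l$; (ii) apply Bernstein's inequality to obtain pointwise concentration; and (iii) lift to a uniform bound over $\calW$ via an $r'$-net argument, where the contribution from the net cardinality ($\exp(\scrH(\calW, r'))$) and from the extension error (controlled through the localized Gaussian width $\omega(\calW_{(3r'/2)})$) correspond exactly to the two terms in the sample size budget (\ref{eq:local_scale}).

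\textbf{Event $E_s$ (small distances).} For any fixed $\bu, \bv \in \calW$ with $\|\bu - \bv\|_2 \le r'/2$, the $i$-th indicator vanishes unless $\ba_i^\top \bu - \tau_i$ is within $|\ba_i^\top(\bu - \bv)|$ of some threshold $b_j$. Splitting on whether $|\ba_i^\top(\bu - \bv)| \le r$ or not, Assumption \ref{assum2} bounds the contribution of the first event by $c^{(1)} r / (\Delta \vee \Lambda)$; on the complement, the sub-Gaussian tail from Assumption \ref{assum1} is $\exp(-\Omega(r^2/(r')^2))$, a small power of $r/(\Delta \vee \Lambda)$ by the defining choice $r' \asymp r / \log^{1/2}((\Delta \vee \Lambda)/r)$, and hence absorbed. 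This yields $\sfP_{\bu,\bv} \lesssim r/(\Delta \vee \Lambda)$ and, by Bernstein, $d_H \le Cmr/(\Delta \vee \Lambda)$ pointwise with probability $1 - 2\exp(-\Omega(mr/(\Delta \vee \Lambda)))$. To make this uniform, I would introduce an $r'/2$-net $\calN$ of $\calW$ and union-bound over net pairs; the residual error incurred by replacing $(\bu, \bv)$ by the nearest net pair is a localized supremum of small-distance Hamming distances, which a generic-chaining argument driven by $\omega(\calW_{(3r'/2)})$ can bound within the same target scale $mr/(\Delta \vee \Lambda)$.

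\textbf{Event $E_l$ (large distances) and main obstacle.} For the lower bound, Assumption \ref{assum3} supplies $\sfP_{\bu,\bv} \ge c^{(2)} \min\{\|\bu - \bv\|_2/(\Delta \vee \Lambda), 1\}$. Given $\bu, \bv \in \calW$ at distance $\ge 2r$, I would pick net points $\bu_0, \bv_0 \in \calN$ within $r'/2 \ll r$ of them; since then $\|\bu_0 - \bv_0\|_2 \ge \|\bu - \bv\|_2 - r' \ge 3\|\bu - \bv\|_2/4$, Bernstein's one-sided lower tail together with a union bound over the $|\calN|^2$ net pairs delivers $d_H(\calQ(\bA\bu_0 - \btau), \calQ(\bA\bv_0 - \btau)) \gtrsim m \min\{\|\bu - \bv\|_2/(\Delta \vee \Lambda), 1\}$, and a triangle inequality subtracting the $E_s$-type bound of order $mr/(\Delta \vee \Lambda)$ from each of the two correction terms produces (\ref{eq:El}). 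The main technical difficulty sits in step (iii): a bare union bound handles discretized points, but the extension error — morally, the number of indices $i$ at which $\ba_i^\top \bu_0 - \tau_i$ lies within $O(r')$ of a threshold — is a supremum of empirical processes indexed by $\calW_{(3r'/2)}$, and balancing its Gaussian-width budget $\omega^2(\calW_{(3r'/2)})/r^3$ against the target deviation $mr/(\Delta \vee \Lambda)$, while using the logarithmic factor in the definition of $r'$ to kill the sub-Gaussian tail, is exactly what dictates the precise form of (\ref{eq:local_scale}).
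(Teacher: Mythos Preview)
Your high-level strategy matches the paper's: take an $r'$-net of $\calW$, control the finite collection of net-pair Hamming distances by Chernoff plus union bound, and control the passage from net points to arbitrary points by bounding the number of measurements at which the quantized value can flip under a perturbation in $\calW_{(3r'/2)}$. The paper makes the last step explicit via a single deterministic splitting (Lemma~\ref{lem:non_sepa11}) together with Lemma~\ref{lem:max_ell_sum} rather than invoking generic chaining, but that is cosmetic.

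Where your proposal genuinely diverges is in the lower bound $E_l$. You propose the metric triangle inequality $d_H(\bu,\bv)\ge d_H(\bu_0,\bv_0)-d_H(\bu,\bu_0)-d_H(\bv,\bv_0)$ and then subtract two $E_s$-type corrections. This can be made to work, but only if the correction constants are small relative to $c^{(2)}$: the $E_s$ bound as stated carries a constant governed by $c^{(1)}$, and nothing in the assumptions relates $c^{(1)}$ to $c^{(2)}$, so at the boundary scale $\|\bu-\bv\|_2\approx 2r$ the subtraction need not leave anything positive. You would have to rerun the $E_s$ argument with a split parameter $\epsilon r$ (and a correspondingly smaller $c_1$ in $r'$), which you do not mention. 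The paper avoids this constant-chasing by introducing $\theta$-\emph{well}-separation: an index well-distinguishes $(\bp,\bq)$ if it distinguishes them and both $\ba_i^\top\bp-\tau_i,\ba_i^\top\bq-\tau_i$ sit at distance $\ge\theta\min\{\|\bp-\bq\|_2,\Delta\}$ from every threshold. Choosing $\theta=c^{(0)}=\min\{c^{(2)}/(4c^{(1)}),1/4\}$ makes the probability of well-separation still $\ge\tfrac{c^{(2)}}{2}\min\{\|\bp-\bq\|_2/(\Delta\vee\Lambda),1\}$ (the balancing of $c^{(1)}$ against $c^{(2)}$ happens once, at the probability level), and well-separation is stable under perturbations with $|\ba_i^\top(\bu-\bp)|<\tfrac{c^{(0)}r}{2}$ (Lemma~\ref{lem:wellimplyse}). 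Hence the lower bound decomposes as $\breve{T}_3^{\bp,\bq}-2\breve{T}_4$ where $\breve{T}_4$ involves \emph{only} the localized large-deviation count, whose constant can be set freely.

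One small inaccuracy in your ``main obstacle'' paragraph: the piece that costs the Gaussian-width budget is not ``the number of indices at which $\ba_i^\top\bu_0-\tau_i$ lies within $O(r')$ of a threshold'' --- that count depends solely on the net point $\bu_0$ and is dispatched by the cheap union bound. The supremum indexed by $\calW_{(3r'/2)}$ is instead the count of indices where $|\ba_i^\top\bw|$ exceeds a fixed multiple of $r$; this is what Lemma~\ref{lem:max_ell_sum} bounds and what forces the $\omega^2(\calW_{(3r'/2)})/r^3$ term in (\ref{eq:local_scale}).
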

\begin{proof}
The proof of Theorem \ref{thm:local_embed} is similar to that from \cite[Theorem 2.1]{chen2024one} (which is inspired by pioneering works \cite{plan2014dimension,oymak2015near,dirksen2021non}) and is included in Appendix \ref{app:tessella} for completeness. 
\end{proof}

Theorem \ref{thm:local_embed} immediately implies the following performance bound for HDM; this is an information-theoretic upper bound due to the computational intractability of HDM. 

\begin{theorem}
    [Information-theoretic upper bound]\label{thm:itup} Under Assumptions \ref{assum1}--\ref{assum3}, for some star-shaped set $\calK$, we recover $\bx\in \calX= \calK\cap \mathbbm{A}_\alpha^\beta$ from $\by=\calQ(\bA\bx-\btau)$ by solving HDM \rev{as in} (\ref{eq:hdm}). There exist constants \rev{$c_0,c_1,C_2,c_3$} depending on $(c^{(1)},c^{(2)})$, for any $r\in (0,c_0(\Delta\vee\Lambda))$ and the corresponding $r'=\frac{c_1r}{\log^{1/2}(\frac{\Delta\vee\Lambda}{r})}$, if 
    \begin{align*}
        m \ge C_2(\Delta\vee\Lambda)\left(\frac{\omega^2(\calK_{(r')})}{r^3}+\frac{\scrH(\calX,r')}{r}\right),
    \end{align*}
    then with probability at least $1-\exp(-c_3\scrH(\calX,r'))$ we have 
    $$\|\hat{\bx}_{\rm hdm}-\bx\|_2 \le 2r,\quad\forall \bx\in\calX.$$
\end{theorem}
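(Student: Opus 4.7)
The plan is to reduce Theorem~2 to the large-distance part ($E_l$) of Theorem~1, applied with the choice $\calW=\calX=\calK\cap\mathbbm{A}_\alpha^\beta$. The argument is essentially a one-line contradiction once the embedding is available, so most of the work is bookkeeping on the sample complexity and the probability exponent.

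First I would observe that, since $\bx\in\calX$ itself achieves objective value $0$ in (\ref{eq:hdm}) (because $\by=\calQ(\bA\bx-\btau)$), any minimizer $\hat{\bx}_{\rm hdm}\in\calX$ must satisfy
\[
d_H\bigl(\calQ(\bA\hat{\bx}_{\rm hdm}-\btau),\,\calQ(\bA\bx-\btau)\bigr)=0.
\]
Next, I would invoke Theorem~1 with $\calW=\calX$. Since $\calX\subset\calK$, we have $\omega(\calX_{(3r'/2)})\le \omega(\calK_{(3r'/2)})\lesssim\omega(\calK_{(r')})$ (the localized Gaussian width grows at worst linearly in the scale, so constants can be absorbed into $C_2$), and $\scrH(\calX,r')$ appears identically on both sides. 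Hence the sample-complexity hypothesis of Theorem~2 implies the hypothesis (\ref{eq:local_scale}) of Theorem~1 for this choice of $\calW$, possibly after enlarging $C_2$.

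Conditional on event $E_l$, the contrapositive of (\ref{eq:El}) is immediate: if $\|\hat{\bx}_{\rm hdm}-\bx\|_2>2r$, then
\[
d_H\bigl(\calQ(\bA\hat{\bx}_{\rm hdm}-\btau),\calQ(\bA\bx-\btau)\bigr)\ge c_6 m\min\Big\{\tfrac{\|\hat{\bx}_{\rm hdm}-\bx\|_2}{\Delta\vee\Lambda},1\Big\}>0,
\]
which contradicts the display above. Therefore $\|\hat{\bx}_{\rm hdm}-\bx\|_2\le 2r$ holds on $E_l$, and this bound is valid simultaneously for every $\bx\in\calX$ since $E_l$ is a uniform statement over pairs in $\calW=\calX$.

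It remains to rewrite the probability of $E_l$. Theorem~1 gives failure probability at most $\exp(-c_5 mr/(\Delta\vee\Lambda))$. The hypothesis $m\ge C_2(\Delta\vee\Lambda)\scrH(\calX,r')/r$ yields $mr/(\Delta\vee\Lambda)\ge C_2\scrH(\calX,r')$, so choosing $c_3\le c_5 C_2$ gives $\exp(-c_5 mr/(\Delta\vee\Lambda))\le\exp(-c_3\scrH(\calX,r'))$, matching the stated bound. The only step requiring any care is the first one — confirming that the localized complexity $\omega(\calX_{(3r'/2)})$ is controlled by $\omega(\calK_{(r')})$ up to an absolute constant — but this is routine since $\calX\subset\calK$ is star-shaped and the scale changes only by a factor of $3/2$, which can be absorbed into the constant $C_2$.
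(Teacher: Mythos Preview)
Your proposal is correct and follows essentially the same route as the paper: apply the $E_l$ part of Theorem~\ref{thm:local_embed} with $\calW=\calX$, verify the sample-complexity hypothesis via the inclusion $\calX_{(3r'/2)}\subset\frac{3}{2}\calK_{(r')}$ (which uses star-shapedness of $\calK$, not of $\calX$), and then argue by contradiction from $d_H=0$. The paper's proof is terser on the probability conversion but otherwise identical in structure.
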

\begin{proof}
\rev{This is a direct implication of $E_{\rm large}$ in Theorem \ref{thm:local_embed}. See the detailed proof in Appendix \ref{app:itupper}.}
\end{proof}

    \rev{We mention that Theorems \ref{thm:local_embed}--\ref{thm:itup} for 1bCS and D1bCS are known in the literature \cite{oymak2015near,dirksen2021non}, and our main work here is to identify the essential components (i.e., Assumptions \ref{assum1}, \ref{assum2}, \ref{assum3}) for these results to hold true and incorporate the multi-bit model as well.}
    
We shall discuss some explicit upper bounds implied by Theorem \ref{thm:itup}. In the following we treat $\alpha\le\beta$ as given constants and assume $\beta\le1$. 

\begin{rem}[Exactly sparse vectors or low-rank matrices]\label{rem:exactsparserate} To consider the recovery of exactly $k$-sparse vectors, we specialize Theorem  \ref{thm:itup} to $\calK=\Sigma^n_k$. Note that $\omega^2((\Sigma^n_k)_{(r')})\le (r')^2\omega^2(\Sigma^n_{2k}\cap \mathbb{B}_2^{2n})$. \rev{By (\ref{sparsegw})--(\ref{sparseentropy}) in Lemma \ref{lem:gwcoveres},} we establish that HDM achieves error rate $\widetilde{O}(\frac{(\Delta\vee\Lambda)k}{m})$ \rev{under $m=\widetilde{\Omega}(k)$}. Similarly, \rev{we set 
 $\calK=M^{n_1,n_2}_{\bar{r}}$ and use (\ref{lrgw})--(\ref{lrcover}) in Lemma \ref{lem:gwcoveres},} yielding 
 that HDM achieves Frobenius norm error rate $\widetilde{O}(\frac{(\Delta\vee\Lambda)\bar{r}(n_1+n_2)}{m})$ in the recovery of rank-$\bar{r}$ matrices in $\mathbb{R}^{n_1\times n_2}$, \rev{so long as $m=\widetilde{\Omega}(\bar{r}(n_1+n_2))$}.  These match the optimal rate implied by Theorem \ref{rem:optimal}. (In the sequel, we will validate Assumptions \ref{assum2}--\ref{assum3} under $L(\Delta\vee\Lambda)=O(1)$ for 1bCS, D1bCS and DMbCS, thus we can replace $\Delta\vee\Lambda$ in the above error rates with $L^{-1}$.)
\end{rem}

 For $q\in(0,1]$, the (quasi) $\ell_q$ norm of $\bu=(u_i)_{i=1}^n$ is defined as $\|\bu\|_q= (\sum_{i=1}^n|u_i|^q)^{1/q}$. Also denote the unit $\ell_q$ ball in $\mathbb{R}^n$ by $\mathbb{B}_q^n = \{\bu\in \mathbb{R}^n:\|\bu\|_q\le 1\}$. In what follows, we consider the recovery of effectively sparse vectors. One can obtain the counterparts for effectively low-rank matrices by replacing the vector $\ell_q$ norm by the Schatten-$q$ (quasi) norm (i.e., the $\ell_q$ quasi-norm of the singular values).

\begin{rem}[Effectively sparse vectors] \label{rem:effsparse}For $k\le n$,
it is standard to set $\calK=\sqrt{k}\mathbb{B}_1^n$ to capture the effectively $k$-sparse vectors in $\mathbb{B}_2^n$ (e.g., \cite{plan2012robust,plan2013one,chinot2022adaboost,jung2021quantized}). \rev{By $\omega^2(\calK_{(r')})\le \omega^2(2\sqrt{k}\mathbb{B}_1^n \cap \mathbb{B}_2^n)$, along with (\ref{sparsegw}) and (\ref{l1sparseentropy}) in Lemma \ref{lem:gwcoveres}, Theorem \ref{thm:itup} implies the error rate $\widetilde{O}((\frac{(\Delta\vee\Lambda)k}{m})^{1/3})$ of HDM under $m=\widetilde{\Omega}((\Delta\vee\Lambda)^{-2}k)$. One can indeed establish a fine-grained upper bound by $\omega^2(\calK_{(r')})\le \omega^2(r'\mathbb{B}_2^n)\lesssim (r')^2n$ and $\scrH(\calX,r')\le \scrH(\mathbb{B}_2^n,r')\le n\log(\frac{3}{r'})$. This yields an upper bound  $\widetilde{O}(\frac{(\Delta\vee\Lambda)n}{m})$ under $m=\widetilde{\Omega}(n)$. Under $m=\widetilde{\Omega}((\Delta\vee\Lambda)^{-2}k)$, since $\frac{(\Delta \vee \Lambda)n}{m}\lesssim (\frac{(\Delta \vee \Lambda)k}{m})^{1/3}$ holds only when $m=\widetilde{\Omega}(n)$, we conclude that HDM achieves the error rate 
\[\min\bigg\{\Big(\frac{(\Delta\vee \Lambda)k}{m}\Big)^{1/3},\frac{(\Delta\vee \Lambda)n}{m}\bigg\},\quad\textrm{up to logarithmic factors.}\]
} 
Notably, this exhibits a transition from $(k/m)^{1/3}$ to $n/m$ when $k$ increases to $n$.  
\end{rem}

\begin{rem}[$\ell_q$ sparse vectors]\label{rem:lqsparse}  \rev{For $k$ much smaller than $n$, there is an essential gap between the $\frac{(\Delta\vee\Lambda)k}{m}$ rate for exactly $k$-sparse signals and $(\frac{(\Delta\vee\Lambda)k}{m})^{1/3}$ for effectively $k$-sparse signals. Intriguingly, we can bridge the two cases smoothly by considering the $\ell_q$ sparse signals living in $\calK_q:=k^{\frac{1}{q}-\frac{1}{2}}\mathbb{B}_q^n$ for $q\in(0,1]$. We note that $\calK_{q}\cap \mathbb{B}_2^n\subset \calK_{q'}\cap \mathbb{B}_2^n$ if $0<q\le q'\le 1$, hence larger $q$ corresponds to a weaker sparsity. By using Lemma \ref{lem:lq-width-entropy}, setting $\calK=\calK_q$ in Theorem \ref{thm:itup} establishes the HDM error rate
\[\min\bigg\{\Big(\frac{(\Delta\vee\Lambda)k}{m}\Big)^{\frac{2-q}{2+q}},\frac{(\Delta\vee\Lambda)n}{m}\bigg\},\quad\textrm{up to logarithmic factors}\]
under $m=\widetilde{\Omega}((\Delta\vee\Lambda)^{-\frac{2q}{2-q}}k)$.} This recovers the error rate in Remark \ref{rem:effsparse} if $q=1$, and the $\frac{(\Delta\vee\Lambda)k}{m}$ rate in Remark \ref{rem:exactsparserate} if $q\to 0$. 
\end{rem}

\section{Projected Gradient Descent}\label{sec:unified}
The main achievement of this paper is to develop a computationally efficient procedure to attain the error rates of HDM. We propose to do so via projected gradient descent (PGD). We shall see that this yields efficient algorithms that achieve the error rates in Remarks \ref{rem:exactsparserate}--\ref{rem:effsparse}, up to logarithmic factors. \rev{For the $\ell_q,~q\in(0,1)$ sparse vectors discussed in Remark  \ref{rem:lqsparse}, our PGD is not efficient due to the computational infeasibility of projection onto the nonconvex set $\calK_q$, yet we believe the development here is of interest, e.g., in light of some existing heuristic for approximating such projection (e.g., \cite{yang2022towards}). Indeed, some works (e.g., \cite{bahmani2013unifying}) analyzed  the performance of $\ell_q$ PGD for compressed sensing.}

We first derive a slightly different hamming distance loss. Under the $L$-level quantizer, the measurement $y_i=\calQ(\ba_i^\top\bx-\tau_i)$ can   be equivalently viewed as $L-1$ binary measurements:
\begin{align*}
    y_{ij} = \sign(\ba_i^\top\bx-\tau_i-b_j),\quad j = 1,\cdots,L-1.
\end{align*}
In total, we have  $(L-1)m$ binary measurements $\{\sign(\ba_i^\top\bx-\tau_i-b_j:j\in[L-1],~i\in[m])\}$, and it is natural to consider   the (scaled) hamming distance loss
    \begin{align*}
  \calL(\bu):=\frac{\Delta}{m}\sum_{i=1}^m \sum_{j=1}^{L-1}\mathbbm{1}\big(\sign(\ba_i^\top\bu-\tau_i-b_j)\ne y_{i,j}\big)= \frac{\Delta}{m}\sum_{i=1}^m\sum_{j=1}^{L-1}\mathbbm{1}\big(-y_{ij}(\ba_i^\top\bu-\tau_i-b_j)\ge 0\big).
\end{align*}
While this loss function is discrete and nonconvex, we shall relax $\calL(\bu)$ to the one-sided $\ell_1$-loss 
\begin{align}\nn 
    &\calL_1(\bu) = \frac{\Delta}{m}\sum_{i=1}^m\sum_{j=1}^{L-1} \max\big\{-y_{ij}(\ba_i^\top\bu-\tau_i-b_j),0\big\}
    \\&= \frac{\Delta}{2m}\sum_{i=1}^m \sum_{j=1}^{L-1}\Big[|\ba_i^\top\bu-\tau_i-b_j|-y_{ij}(\ba_i^\top\bu-\tau_i-b_j)\Big],\label{eq:1sided}
\end{align}
in light of the idea of replacing $\mathbbm{1}(t>0)$ by $\max\{t,0\}$. The (sub-)gradient of $\calL_1(\bu)$ at $\bx^{(t-1)}$ takes the form 
\begin{align*}
    \partial \calL_1(\bx^{(t-1)}) &= \frac{\Delta}{2m}\sum_{i=1}^m \sum_{j=1}^{L-1}\big(\sign(\ba_i^\top\bu-\tau_i-b_j)-y_{ij}\big)\ba_i\\
    &=\frac{1}{m}\sum_{i=1}^m \big(\calQ(\ba_i^\top \bx^{(t-1)}-\tau_i)-\calQ(\ba_i^\top \bx-\tau_i)\big)\ba_i,
\end{align*}
where the second equality holds because   we have assumed $q_{i+1}=q_i+\Delta$ for $i\in[L-1]$; see (\ref{wolg}).

As we are interested in {\it uniform reconstruction}, it is beneficial to work with a notation for the gradient \rev{that exhibits the dependence on the underlying signal}. Let us define
\begin{align}\label{eq:huvdef1}
    \bh(\bu,\bv) := \frac{1}{m}\sum_{i=1}^m\big(\calQ(\ba_i^\top\bu-\tau_i)-\calQ(\ba_i^\top\bv-\tau_i)\big)\ba_i= \frac{1}{m}\bA^\top\big(\calQ(\bA\bu-\btau)-\calQ(\bA\bv-\btau)\big), 
\end{align}
\rev{which denotes the gradient at $\bu$ when $\bv$ is the underlying signal.} 
 After performing gradient descent to get $\bx^{(t-1)}-\eta\cdot\bh(\bx^{(t-1)},\bx)$, the most natural choice is to project onto $\calX= \calK\cap \mathbbm{A}_{\alpha}^\beta$, but this is not immediately efficient even for convex $\calK$ (e.g., $\calX=\sqrt{k}\mathbb{B}_1^n\cap \mathbb{S}^{n-1}$). Therefore, we propose to project $\bx^{(t-1)}-\eta\cdot\bh(\bx^{(t-1)},\bx)$ onto $\calK$ and $\mathbbm{A}_{\alpha}^\beta$ sequentially. It should be noted that for cone $\calK$, the two ways are equivalent in light of $\calP_{\calX}=\calP_{\mathbbm{A}_{\alpha}^\beta}\circ\calP_{\calK}$.

We formally outline the proposed recovery procedure in Algorithm \ref{alg:pgd}.

 \begin{algorithm}
	\label{alg:pgd}
	\caption{Projected Gradient Descent for Quantized Compressed Sensing}
	\textbf{Input}: Quantized observations $\by=\calQ(\bA\bx-\btau)$, sensing ensemble $(\bA,\btau)$, a set $\calK$ capturing signal structure, a set $\mathbbm{A}_\alpha^\beta$ constraining signal norm, initialization $\bx^{(0)}$, step size $\eta$

	\textbf{For}
	$t = 1, 2, 3,\cdots$  \textbf{do}
	\begin{gather}\label{eq:pgdrule}
		\tilde{\bx}^{(t)} =  \calP_{\calK}\big(\bx^{(t-1)} - \eta \cdot \bh(\bx^{(t-1)},\bx)
		\big), \\ \label{normproj}
  \bx^{(t)} = \calP_{\mathbbm{A}_\alpha^\beta}\big(\tilde{\bx}^{(t)}\big).
	\end{gather}
\end{algorithm}
\begin{rem}
    For 1bCS of $k$-sparse signals, we shall set $\calK=\Sigma^n_k$ and $\mathbbm{A}_\alpha^\beta=\mathbb{S}^{n-1}$, then Algorithm \ref{alg:pgd} becomes NBIHT  \cite{jacques2013robust,matsumoto2024binary,friedlander2021nbiht}.  
\end{rem}

\subsection{Convergence via  RAIC}
 Our analysis of Algorithm \ref{alg:pgd} is based on the following deterministic property called {\it restricted approximate invertibility condition} (RAIC). It generalizes similar notions in \cite{matsumoto2024binary,chen2024one,friedlander2021nbiht} from $k$-sparse  signals ($\calK=\Sigma^n_k$) to general signal structures captured by star-shaped set.
\begin{definition}
    [RAIC] \label{def:raic}
    Under some quantizer $\calQ$, for some given $\calD\subset \mathbb{R}^n$ and $\bm{\mu}=(\mu_1,\mu_2,\mu_3,\mu_4)$ with non-negative scalars $(\mu_i)_{i=1}^4$, we say  $(\calQ,\bA,\btau,\calK,\eta)$ respects $(\calD,\bm{\mu})$-RAIC at scale $\phi>0$ if  
    \begin{align}\label{raicformu}
       \frac{1}{\phi}\| \bu-\bv-\eta\cdot\bh(\bu,\bv)\|_{\calK_{(\phi)}^\circ}\le  \mu_1\|\bu-\bv\|_2+\sqrt{\mu_2\cdot\|\bu-\bv\|_2}+\mu_3 
    \end{align}
    holds for any $\bu,\bv\in \calD$ obeying $\|\bu-\bv\|_2\le \mu_4$, where $\bh(\bu,\bv)$ is defined as in (\ref{eq:huvdef1}).
\end{definition}
In light of 
$$\frac{1}{\phi}\|\bu\|_{\calK^\circ_{(\phi)}} = \frac{1}{\phi}\sup_{\bw\in \calK_{(\phi)}}\langle\bw,\bu\rangle =\sup_{\bw\in (\frac{\calK}{\phi})_{(1)}}\langle \bw,\bu\rangle = \|\bu\|_{(\frac{\calK}{\phi})_{(1)}^\circ},$$
 RAIC delivers a uniform bound on the dual norm of $\bu-\bv- \eta \cdot\bh(\bu,\bv)$ with regard to $(\frac{\calK}{\phi})_{(1)}$; thus, on the current iterate $\bu=\bx^{(t)}$ and underlying signal $\bv=\bx$, RAIC gives the message that the ``ideal descent step'' $\bx^{(t)}-\bx$ (in the sense of $\bx^{(t)}-(\bx^{(t)}-\bx)= \bx$) is well approximated by the ``actual subgradient step'' $\eta \cdot \bh(\bx^{(t)},\bx)$, up to a few error terms. \rev{See Figure \ref{fig:raic} for the intuition of RAIC,  which departs from that of {\it regularity condition} widely adopted in the statistical nonconvex optimization literature (see, e.g., \cite{candes2015phase,tu2016low}).} 
\begin{figure}[ht!]
	\begin{centering}
		 \includegraphics[width=0.4\columnwidth]{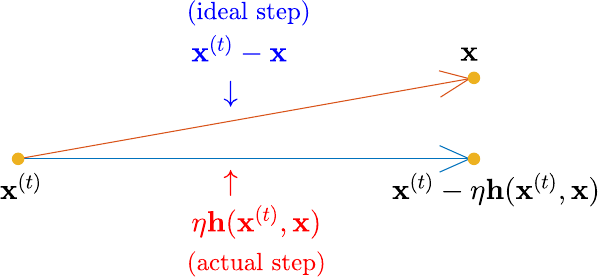}  
		\par
	\end{centering}
	
	\caption{\label{fig:raic}\small  RAIC indicates closeness of the ideal step and the actual subgradient step.} 
\end{figure}

\begin{rem}[Selection of $\phi$]\label{rem:suffraic}
Because $\calK$ is a star-shaped set, we have $\frac{\calK}{\phi_1} \subset\frac{\calK}{\phi_2}$ if $\phi_1>\phi_2$, and hence $\frac{1}{\phi}\|\bu\|_{\calK^\circ_{(\phi)}} = \|\bu\|_{(\frac{\calK}{\phi})^\circ_{(1)}}$ decreases with $\phi$. Therefore, RAIC with smaller $\phi$ is stronger and accordingly harder to establish. However, in controlling the error after projection, we will rely on Lemma \ref{planlem} which puts $\phi$ and $\frac{1}{\phi}\|\bu-\bv-\eta\cdot\bh(\bu,\bv)\|_{\calK_{(\phi)}^\circ}$ on an equal footing, making RAIC with overly large $\phi$ useless. We note that a reasonable tradeoff is to select $\phi$ at an order comparable to $\frac{1}{\phi}\|\bu-\bv-\eta\cdot\bh(\bu,\bv)\|_{\calK_{(\phi)}^\circ}$. In the sequel, we will choose $\phi=r$ if RAIC with $\mu_3 =\widetilde{O}(r)$ is sought after.  
\end{rem}
\begin{rem}[Conic RAIC]
    If $\calK$ is a cone, then the choice of $\phi>0$ becomes inessential because (\ref{raicformu}) is always equivalent to 
    \begin{align*}
        \|\bu-\bv-\eta\cdot\bh(\bu,\bv)\|_{\calK^\circ_{(1)}}\le \mu_1\|\bu-\bv\|_2+\sqrt{\mu_2\cdot\|\bu-\bv\|_2}+ \mu_3,
    \end{align*} 
    which is   similar to the definitions of RAIC in \cite{friedlander2021nbiht,matsumoto2024binary}. 
\end{rem}
 
      We shall refer to it as local RAIC if the localized parameter $\mu_4$ is essentially smaller than the diameter of $\calD$. As  we shall see in Theorem \ref{thm:convergence}, under small $\mu_1$, RAIC implies that PGD linearly converges to estimates with error rate $O(\mu_2+\mu_3)$. Under extremely small $\mu_1$, RAIC  even implies a faster convergence to the same error.

      To keep our theorem statement concise, we pause to elaborate on some technical assumptions used in Theorem \ref{thm:convergence}.

\paragraph{Auxiliary assumptions for Theorem \ref{thm:convergence}:} First, we need to define  
$$\kappa_{\alpha}: = \begin{cases}
    1~,\qquad\text{if }\alpha=0\\
    2~,\qquad \text{if }\alpha >0
\end{cases}$$
to distinguish the nature of a projection onto $\mathbbm{A}_\alpha^\beta$, which is convex when $\alpha=0$ but nonconvex when $\alpha>0$; similar idea appeared in previous analysis of projected gradient descent (e.g., \cite{soltanolkotabi2019structured,oymak2017fast}). Recall that we utilize sequential projections as in (\ref{eq:pgdrule})--(\ref{normproj}) \rev{to resolve a potential computational issue,} while this leads to the new issue that $\bx^{(t)}$ may not live in $\calK$. Our remedy is to define a slightly larger signal set $\overline{\calX}:=(2\calK)\cap \mathbbm{A}_\alpha^\beta$ and show $\bx^{(t)}\in 2\calK$. Taken collectively, we suppose that $(\calQ,\bA,\btau,\calK,\eta)$ respects $(\overline{\calX},\bm{\mu})$-RAIC for some $\bm{\mu}=(\mu_1,\mu_2,\mu_3,\mu_4)$ at some scale $\phi$ obeying $\phi\le \mu_3$, where the non-negative $(\mu_i)_{i=1}^4$ satisfy $\mu_1<\frac{1}{2\kappa_\alpha}$ and 
\begin{align}
\label{barkappa}\underbrace{\left(\frac{\kappa_\alpha\sqrt{\mu_2}+\sqrt{\kappa_\alpha^2\mu_2+2\kappa_\alpha(1-2\mu_1\kappa_\alpha)\mu_3}}{1-2\mu_1\kappa_\alpha}\right)^2}_{:=\bar{\kappa}}<\mu_4.
\end{align}
 If $\calK$ is not a cone and $\alpha>0$,
we suppose
        \begin{align}
            \mu_1\mu_4+\sqrt{\mu_2\mu_4}+\mu_3\le \frac{\alpha}{4}.\label{eq:addit}
        \end{align}
        to ensure $\bx^{(t)}\in 2\calK$. 

\begin{theorem}\label{thm:convergence} 
  Under the above assumptions, for any $\bx\in\calX=\calK\cap \mathbbm{A}_\alpha^\beta$, we run  Algorithm \ref{alg:pgd} with step size $\eta$ and some $\bx^{(0)}\in \mathbb{B}_2^n(\bx;\mu_4)\cap \calX$ to obtain $\{\bx^{(t)}\}_{t=0}^\infty$. 
    Then we have the following statements: 
    \begin{itemize}
        \item  For all $\bx\in\calX$ and any $t\ge 0$, we have 
        \begin{align}\label{raic2linear}
            \|\bx^{(t)}-\bx\|_2 \le \Big(\kappa_\alpha\mu_1+\frac{1}{2}\Big)^t \mu_4 + 4\kappa_\alpha \left[\frac{\kappa_\alpha\mu_2}{(1-2\kappa_\alpha\mu_1)^2}+\frac{\mu_3}{1-2\kappa_\alpha\mu_1}\right]. 
        \end{align}

        \item If additionally $\mu_1 \le\sqrt{\frac{\mu_2}{\mu_4}}+\frac{\mu_3}{\mu_4}$, we define $\hat{\kappa}:= 4(\kappa_\alpha\sqrt{\mu_2}+\sqrt{\kappa_\alpha^2\mu_2+\kappa_\alpha\mu_3})^2$ and assume $\mu_4\ge \hat{\kappa}$. Then for all $\bx\in\calX$ and any $t\ge 0$, we have 
        \begin{align}\label{raic2quadratic}
            \|\bx^{(t)}-\bx\|_2 \le \mu_4^{2^{-t}}\hat{\kappa}^{1-2^{-t}}.  
        \end{align}
    \end{itemize}
\end{theorem}
\begin{proof}
     The proof can be found in Appendix \ref{app:convergence}.
\end{proof}
\rev{The main assumption of Theorem \ref{thm:convergence} is the RAIC. As we illustrated, RAIC characterizes how close the grdient is to the ideal descent step. Hence, it may not be surprising that RAIC implies the convergence of PGD. The main idea of the proof is to show that, under RAIC, the errors of the iterates satisfy certain recurrence inequality, which in general leads to linear convergence as in (\ref{raic2linear}) but may also yield a faster convergence as in (\ref{raic2quadratic}) under sufficiently small $\mu_1.$} 

\begin{rem}
    \rev{Although $\bx^{(0)}\in \mathbb{B}_2^n(\bx;\mu_4)$ is required, Theorem \ref{thm:convergence} yields global convergence results for some models where we can choose $\mu_4$ larger than the diameter of $\calX$. More specifically, if $\calX\subset \mathbb{A}_\alpha^\beta$ and $\mu_4=2\beta$, then PGD with any $\bx_0\in \calX$ enjoys the above convergence guarantee.}
\end{rem}

\subsection{A Sharp Approach to RAIC}\label{sec:sharp_ana}
In light of Theorem \ref{thm:convergence}, it suffices to focus on controlling $\phi^{-1}\| \bu-\bv-\eta\cdot\bh(\bu,\bv)\|_{\calK^\circ_{(\phi)}}$  for all $\bu,\bv\in 2\calK\cap \mathbbm{A}_{\alpha}^\beta$ satisfying $\|\bu-\bv\|_2\le\mu_4$.  We will use the shorthand $\overline{\calX}:=2\calK\cap \mathbbm{A}_\alpha^\beta$. In this subsection, we present a sharp approach to RAIC through generalizing techniques in \cite{matsumoto2024binary,chen2024one} with some new developments, \rev{including a clipped gradient in (\ref{eq:hathpq}) for incorporating the multi-bit case, the elimination of the small-distance regime, and the more amenable and interpretable moment bounds in Assumptions \ref{assum_sg1}--\ref{assum_sg3}. We shall elucidate on these in the sequel.}

The overall framework is a covering argument. For some $$r\in \Big(0,\frac{\Delta\vee\Lambda}{4}\wedge \frac{\mu_4}{2}\Big),$$ we  let $\calN_r$ be a minimal $r$-net of $\overline{\calX}$ such that $\log|\calN_r|=\scrH(\overline{\calX},r)$. Then, for any $\bu,\bv\in \overline{\calX}$ obeying $\|\bu-\bv\|_2\le \mu_4$, we find their closest points in $\calN_r$:
 \begin{align}\label{eq:findu1v1}
    \bu_1:=\text{arg}\min_{\bw\in\calN_r}\|\bw-\bu\|_2~\text{~and~}~\bv_1:=\text{arg}\min_{\bw\in\calN_r}\|\bw-\bv\|_2.
\end{align}
It follows that $\|\bu_1-\bv_1\|_2\le\|\bu_1-\bu\|_2+\|\bu-\bv\|_2+\|\bv-\bv_1\|_2\le\mu_4+2r<2\mu_4.$ We will ultimately achieve RAIC  with $\mu_3=\widetilde{O}(r)$, so we simply choose $\phi=r$ and seek to control $\frac{1}{r}\|\bu-\bv-\eta\cdot \bh(\bu,\bv)\|_{\calK_{(r)}^\circ}$ (see Remark \ref{rem:suffraic}).

We define the constraint sets 
    \begin{gather*}
        \calD^{(2)}_r :=\{(\bp,\bq)\in\overline{\calX}:\|\bp-\bq\|_2\le r\},\\
        \calN^{(2)}_{r,\mu_4}:=\{(\bp,\bq)\in\calN_r\times\calN_r: 0<\|\bp-\bq\|_2\le 2\mu_4\}. 
    \end{gather*}
    Observing  from (\ref{eq:huvdef1}) the following
    \begin{align*}
        &\bh(\bu_1,\bv_1)-\bh(\bu,\bv)\\
        &= \frac{1}{m}\sum_{i=1}^m \big[\calQ(\ba_i^\top\bu-\tau_i)-\calQ(\ba_i^\top\bv-\tau_i)-\calQ(\ba_i^\top\bu_1-\tau_i)+\calQ(\ba_i^\top\bv_1-\tau_i)\big]\ba_i  
        \\
        &=\bh(\bu_1,\bu)+\bh(\bv,\bv_1)
    \end{align*}
    we can proceed as 
         \begin{align}
        &r^{-1}\big\|\bu-\bv-\eta\cdot\bh(\bu,\bv)\big\|_{\calK^\circ_{(r)}} \nn\\&\le  r^{-1}\big\|\bu_1-\bv_1-\eta\cdot\bh(\bu_1,\bv_1)\big\|_{\calK^\circ_{(r)}} + r^{-1}\|\bu-\bu_1\|_{\calK_{(r)}^\circ}+r^{-1}\|\bv-\bv_1\|_{\calK_{(r)}^\circ}\nn \\& \qquad+\frac{\eta}{r} \Big(\|\bh(\bu_1,\bu)\|_{\calK_{(r)}^\circ}+ \|\bh(\bv,\bv_1)\|_{\calK_{(r)}^\circ}\Big)
       \nn
        \\
        &\le r^{-1}\big\|\bu_1-\bv_1-\eta\cdot\bh(\bu_1,\bv_1)\big\|_{\calK_{(r)}^\circ} +2 r+\frac{2\eta}{r}\cdot \sup_{(\bp,\bq)\in\calD^{(2)}_r}\big\|\bh(\bp,\bq)\big\|_{\calK_{(r)}^\circ},\label{eq:large_decompose}
    \end{align}
    where in the last line we observe $(\bu_1,\bu),(\bv,\bv_1)\in\calD_{r}^{(2)}$ and take the supremum over them. 
Note that $\|\bu_1-\bv_1\|_2\le 2\mu_4$, hence either $\bu_1=\bv_1$ or $(\bu_1,\bv_1)\in \calN^{(2)}_{r,\mu_4}$ holds. Therefore, bounding $r^{-1}\|\bu_1-\bv_1-\eta\cdot\bh(\bu_1,\bv_1)\|_{\calK_{(r)}^\circ}$ amounts to uniformly controlling $r^{-1}\|\bp-\bq-\eta\cdot\bh(\bp,\bq)\|_{\calK^\circ_{(r)}}$ over the discrete constraint set $\calN^{(2)}_{r,\mu_4}$. 
In view of the third term in (\ref{eq:large_decompose}), another task is to bound $\|\bh(\bp,\bq)\|_{\calK_{(r)}^\circ}$ over $(\bp,\bq)\in \calD_r^{(2)}$.

\subsubsection{Gradient clipping}
 Since $\calQ$ is non-decreasing, we can write
    \begin{align}\nn
    \bh(\bp,\bq)& = \frac{1}{m}\sum_{i=1}^m \sign(\ba_i^\top(\bp-\bq))|\calQ(\ba_i^\top\bp-\tau_i)-\calQ(\ba_i^\top\bq-\tau_i)|\cdot\ba_i\\
    & = \frac{1}{m}\sum_{i\in\bR_{\bp,\bq}} \sign(\ba_i^\top(\bp-\bq))|\calQ(\ba_i^\top\bp-\tau_i)-\calQ(\ba_i^\top\bq-\tau_i)|\cdot\ba_i,\label{eq:hpq}
\end{align} 
where in the second equality we define the index set  
\begin{align*}
    \bR_{\bp,\bq} := \big\{i\in[m]:\calQ(\ba_i^\top\bp-\tau_i)\ne\calQ(\ba_i^\top\bq-\tau_i)\big\}.
\end{align*}  
For convenience, we further introduce $E^{(i)}_{\bp,\bq}$ to denote the event that $\bp$ and $\bq$ have different $i$-th measurements
$$E^{(i)}_{\bp,\bq}:= \{i\in\bR_{\bp,\bq}\}.$$

In the 1-bit case, we have $|\calQ(\ba_i^\top\bp-\tau_i)-\calQ(\ba_i^\top\bq-\tau_i)|=\Delta$ for $i\in\bR_{\bp,\bq}$, hence $\bh(\bp,\bq)$ reduces to 
\begin{align*}
    \bh(\bp,\bq) = \frac{\Delta}{m}\sum_{i\in\bR_{\bp,\bq}}\sign(\ba_i^\top(\bp-\bq))\ba_i.
\end{align*}
This is not true for the multi-bit setting where $|\calQ(\ba_i^\top\bp-\tau_i)-\calQ(\ba_i^\top\bq-\tau_i)|$, if being non-zero, can take values in $\{\ell\Delta:\ell=1,2,\cdots,L-1\}$. The multiple values   greatly  add to the difficulty  in deriving   sharp concentration bounds for   $r^{-1}\| \bp-\bq-\eta\cdot\bh(\bp,\bq)\|_{\calK_{(r)}^\circ}$ because the moments of the marginals of $ |\calQ(\ba_i^\top\bp-\tau_i)-\calQ(\ba_i^\top\bq-\tau_i)|\ba_i$ could have distinct behaviors in different regimes.\footnote{In particular, $|\calQ(\ba_i^\top\bp-\tau_i)-\calQ(\ba_i^\top\bq-\tau_i)|$ behaves similarly to $\Delta\cdot\mathbbm{1}(i\in\bR_{\bp,\bq})$ if $\|\bp-\bq\|_2\ll \Delta$, while it is more comparable to $|\ba_i^\top(\bp-\bq)|$ if $\|\bp-\bq\|_2\gg \Delta$.} This issue also precludes a unified treatment to the multi-bit case and 1-bit case and increases the difficulty in formulating the forthcoming Assumptions \ref{assum_sg1}--\ref{assum_sg3}.

Our remedy is to {\it clip}  the gradient $\bh(\bp,\bq)$ to
\begin{align}\label{eq:hathpq}
    \hat{\bh}(\bp,\bq):= \frac{\Delta}{m}\sum_{i\in\bR_{\bp,\bq}}\sign(\ba_i^\top(\bp-\bq))\ba_i 
\end{align}
and then work with $\hat{\bh}(\bp,\bq)$.
Nonetheless, this induces some deviation that we need to separately treat; in particular, triangle inequality gives 
\begin{subequations}
    \begin{gather}\label{eq:clipping1}
    r^{-1}\| \bp-\bq-\eta \cdot\bh(\bp,\bq)\|_{\calK_{(r)}^\circ}\le r^{-1}\|\bp-\bq-\eta\cdot \hat{\bh}(\bp,\bq)\|_{\calK_{(r)}^\circ}+\frac{\eta}{r}\|\bh(\bp,\bq)-\hat{\bh}(\bp,\bq)\|_{\calK_{(r)}^\circ},\\
    \frac{\eta}{r} \cdot  \|\bh(\bp,\bq)\|_{\calK_{(r)}^\circ} \le \frac{\eta}{r}\cdot  \|\hat{\bh}(\bp,\bq)\|_{\calK^\circ_{(r)}} + \frac{\eta}{r}\|\bh(\bp,\bq)-\hat{\bh}(\bp,\bq)\|_{\calK^\circ_{(r)}},\label{eq:clipping2}
\end{gather}
\end{subequations}
where $\frac{\eta}{r}\| \bh(\bp,\bq)-\hat{\bh}(\bp,\bq)\|_{\calK_{(r)}^\circ}$ captures the deviation.

Our hope is that this deviation term is uniformly  small over $\calN_{r,\mu_4}^{(2)}$ and $\calD_r^{(2)}$, which ensures that the gradient clipping has minimal impact on our goals of bounding $r^{-1}\|\bp-\bq-\eta\cdot \bh(\bp,\bq)\|_{\calK_{(r)}^\circ}$ over $\calN_{r,\mu_4}^{(2)}$ and $\frac{\eta}{r}\|\bh(\bp,\bq)\|_{\calK^\circ_{(r)}}$ over $\calD_{(r)}^{(2)}$. Intuitively, this is plausible if $\mu_4$ for RAIC is much smaller than  $\Delta$. For instance, in DMbCS (\ref{eq:mbsatu}), $|\calQ_{\delta,L}(\ba_i^\top\bp-\tau_i)-\calQ_{\delta,L}(\ba_i^\top\bq-\tau_i)|\ge 2\delta$ implies $|\ba_i^\top(\bp-\bq)|\ge\delta$, which is evidently     a rare event for $\|\bp-\bq\|_2\le \mu_4\ll \delta$ under sub-Gaussian $\ba_i$; in other words, if $i\in\bR_{\bp,\bq}$ and $\|\bp-\bq\|_2\ll \delta$, then $|\calQ(\ba_i^\top\bp-\tau_i)-\calQ(\ba_i^\top\bq-\tau_i)|=\delta$ holds with very high probability.

More precisely, we need the deviation term to be well controlled as in the following assumption. 

\begin{assumption}[Deviation of gradient clipping]
    \label{assump4}
    For some $c^{(3)},c^{(4)}$ and some  small enough $\varepsilon^{(1)}$, 
    \begin{subequations}
        \begin{gather}\label{eq:boundclip1}
        \frac{\eta}{r}\|\bh(\bp,\bq)-\hat{\bh}(\bp,\bq)\|_{\calK_{(r)}^\circ}\le  \frac{\eta\Delta[\varepsilon^{(1)}\|\bp-\bq\|_2 + c^{(3)}r]}{\Delta\vee\Lambda},~~\forall(\bp,\bq)\in \calN_{r,\mu_4}^{(2)},\\\label{eq:boundclip2}
      \frac{\eta}{r}\| \bh(\bp,\bq)-\hat{\bh}(\bp,\bq)\|_{\calK_{(r)}^\circ}\le \frac{\eta\Delta c^{(4)}r}{\Delta\vee\Lambda},~~\forall (\bp,\bq)\in \calD_r^{(2)}.
    \end{gather}\label{eq:deviterms}
    \end{subequations}
\end{assumption}

\paragraph{Bound the first term on the right-hand side of (\ref{eq:clipping2}):}
The dual norm of the clipped gradient $r^{-1}\|\hat{\bh}(\bp,\bq)\|_{\calK_{(r)}^\circ}$  can be uniformly controlled over $(\bp,\bq)\in\calD_r^{(2)}$ by generic arguments: given that $\|\bp-\bq\|_2\le r$, we first invoke  $E_{\rm small}$ in Theorem \ref{thm:local_embed} to uniformly control the number of contributors and then apply Lemma \ref{lem:max_ell_sum}. This type of arguments will be invoked multiple times in subsequent developments. 

\begin{pro}[Bounding $r^{-1}\|\hat{\bh}(\bp,\bq)\|_{\calK_{(r)}^\circ}$]
\label{pro:boundsmall} 
    Under Assumptions \ref{assum1}--\ref{assum2},  for some constants $c_0,C_1,c_2,C_3$ depending on $c^{(1)}$, if   $r\le c_0 (\Delta\vee\Lambda) $ with sufficiently small $c_0$,  and it holds that 
    \begin{align}
        \label{eq:sample_pro1}
        m \ge C_1(\Delta\vee\Lambda)\left(\frac{\omega^2(\calK_{(r)})}{r^3\log^{3/2}(\frac{\Delta\vee\Lambda}{r})}+\frac{\scrH(\overline{\calX},2r)}{r\log^{1/2}(\frac{\Delta\vee\Lambda}{r})}\right), 
    \end{align}
    then with probability at least $1-\exp(-\frac{c_2mr}{\Delta\vee\Lambda})$ we have 
    \begin{align*}
        \frac{1}{r}\| \hat{\bh}(\bp,\bq)\|_{\calK^\circ_{(r)}}\le \frac{C_3\Delta r}{\Delta\vee\Lambda} \log^{1/2}\Big(\frac{\Delta\vee\Lambda}{r}\Big),~~\forall(\bp,\bq)\in\calD_r^{(2)}.
    \end{align*}
\end{pro}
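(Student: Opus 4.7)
The plan is to combine pointwise concentration with a covering argument in both the pair-space $\calD_r^{(2)}$ and the dual direction $\bw\in\calK_{(\phi)}$. Writing
\[
\langle\bw,\hat{\bh}(\bp,\bq)\rangle \;=\; \frac{\Delta}{m}\sum_{i=1}^m \xi_i,\qquad \xi_i:=\mathbbm{1}(i\in\bR_{\bp,\bq})\,\sign(\ba_i^\top(\bp-\bq))\,\bw^\top\ba_i,
\]
the goal is to show $\sup_{(\bp,\bq,\bw)}\bigl|\tfrac{\Delta}{m}\sum_i\xi_i\bigr|\lesssim \phi\Delta r(\Delta\vee\Lambda)^{-1}\log^{1/2}((\Delta\vee\Lambda)/r)$. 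I would separate the expectation of the sum from its fluctuation, then absorb the discretization errors via the three terms in (\ref{eq:sample_pro1}).

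\textbf{Pointwise mean and variance.} For $i\in\bR_{\bp,\bq}$ some threshold $b_j$ lies in the interval between $\ba_i^\top\bp-\tau_i$ and $\ba_i^\top\bq-\tau_i$, so
\[
\mathbbm{1}(i\in\bR_{\bp,\bq})\;\le\;\mathbbm{1}\bigl(\min_j|\ba_i^\top\bp-\tau_i-b_j|\le|\ba_i^\top(\bp-\bq)|\bigr).
\]
Splitting on the sub-Gaussian tail event $|\ba_i^\top(\bp-\bq)|\le R$ with $R\asymp r\log^{1/2}((\Delta\vee\Lambda)/r)$ (whose complement has probability $\lesssim r/(\Delta\vee\Lambda)$ by Assumption~\ref{assum1}) and applying Assumption~\ref{assum2} on the truncated event yields $\mathbbm{P}(i\in\bR_{\bp,\bq})\lesssim R/(\Delta\vee\Lambda)$. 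Decomposing $\bw$ along and orthogonal to $\hat{\bp}:=\bp/\|\bp\|_2$ and using that the truncated indicator confines $\hat{\bp}^\top\ba_i$ to a band of width $O(R)$ around the points $\{b_j+\tau_i\}$, sub-Gaussian moment bounds give $\mathbbm{E}[\mathbbm{1}(i\in\bR_{\bp,\bq})|\bw^\top\ba_i|]\lesssim \phi R/(\Delta\vee\Lambda)$ and, similarly, $\mathbbm{E}[\xi_i^2]\lesssim \phi^2 R/(\Delta\vee\Lambda)$. Multiplying the mean bound by $\Delta/m$ already produces the target order.

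\textbf{Concentration and union bound.} For fixed $(\bp,\bq,\bw)$, applying Bernstein's inequality to the (appropriately truncated) $\xi_i$'s yields the desired fluctuation bound with failure probability $\exp(-\Omega(mr/(\Delta\vee\Lambda)))$. To lift this to a uniform statement I would (i) cover $\overline{\calX}$ by a minimal $6r$-net $\calN$ with $\log|\calN|\le\scrH(\overline{\calX},6r)$ and pick $(\bu_1,\bv_1)\in\calN\times\calN$ closest to $(\bp,\bq)$, the associated union-bound cost being absorbed by the $\scrH(\overline{\calX},6r)/(r\log^{1/2})$ term; and (ii) for each such net point, control $\sup_{\bw\in\calK_{(\phi)}}$ via a generic-chaining / Dudley-integral argument for the sub-Gaussian process $\bw\mapsto\sum_i\xi_i$, whose cost is absorbed by the $\omega^2(\calK_{(\phi)})/(r\phi^2\log^{3/2})$ term.

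\textbf{Main obstacle.} The hardest step is bounding the discretization error $\hat{\bh}(\bp,\bq)-\hat{\bh}(\bu_1,\bv_1)$ uniformly, since $\hat{\bh}$ is discontinuous in $(\bp,\bq)$ through the indicator, ruling out a naive Lipschitz/chaining argument. The indices on which $\mathbbm{1}(i\in\bR_{\cdot,\cdot})$ or the sign flips are exactly those for which $\ba_i^\top\bp-\tau_i$ lies within $O(|\ba_i^\top(\bp-\bu_1)|+|\ba_i^\top(\bq-\bv_1)|)$ of some threshold --- a rare event by Assumption~\ref{assum2}, whose frequency is governed by the displacements $\bp-\bu_1,\bq-\bv_1\in\calK_{(r)}$. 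A Talagrand/matrix-deviation inequality applied to the resulting sparse process indexed jointly by $\calK_{(r)}$ and $\calK_{(\phi)}$ is what brings in the first sample-size term $\omega^2(\calK_{(r)})/(r^3\log^{3/2})$. Keeping all logarithmic factors sharp --- so that only $\log^{1/2}$ rather than $\log$ or $\log^{3/2}$ appears in the final bound --- requires carefully synchronizing the truncation level $R$ for $|\ba_i^\top(\bp-\bq)|$ with the Bernstein-level truncation for $|\bw^\top\ba_i|$, and checking that the orthogonal-decomposition bookkeeping is valid for general isotropic sub-Gaussian $\ba_i$ rather than just Gaussian ones.
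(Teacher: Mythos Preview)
Your approach differs substantially from the paper's, and the orthogonal-decomposition step you flag at the end as needing care is in fact a genuine gap, not bookkeeping. For general isotropic sub-Gaussian $\ba_i$ the component of $\bw^\top\ba_i$ orthogonal to $\mathrm{span}(\bp,\bq)$ is merely uncorrelated with, not independent of, the indicator $\mathbbm{1}(i\in\bR_{\bp,\bq})$, so your claimed bounds $\mathbbm{E}[\mathbbm{1}(i\in\bR_{\bp,\bq})|\bw^\top\ba_i|]\lesssim\phi R/(\Delta\vee\Lambda)$ and $\mathbbm{E}[\xi_i^2]\lesssim\phi^2 R/(\Delta\vee\Lambda)$ do not follow from Assumptions~\ref{assum1}--\ref{assum2} alone. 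Controlling precisely these conditional moments is what the paper's later Assumptions~\ref{assum_sg1}--\ref{assum_sg3} are for, and Proposition~\ref{pro:boundsmall} is deliberately stated without them. A H\"older substitute recovers the moment bounds only up to an extra $\log^{1/2}((\Delta\vee\Lambda)/r)$, after which your mean term alone already overshoots the target, and the discretization step you identify as the ``main obstacle'' remains only a sketch.

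The paper's proof sidesteps all of this via a two-step decoupling. First, it invokes the event $E_s$ of Theorem~\ref{thm:local_embed} (with parameters rescaled so that $r'\leftarrow 6r$, $r\leftarrow r'':=C r\log^{1/2}((\Delta\vee\Lambda)/r)$) to obtain, \emph{uniformly} over $(\bp,\bq)\in\calD_r^{(2)}$, the cardinality bound $|\bR_{\bp,\bq}|\le\ell_0:=c\,m r''/(\Delta\vee\Lambda)$; the first two terms in~(\ref{eq:sample_pro1}) are exactly the sample-size requirement for $E_s$, and your ``discretization obstacle'' is already absorbed there. Second, on this event one crudely replaces the specific index set $\bR_{\bp,\bq}$ by the worst case over all subsets of size at most $\ell_0$:
\[
\|\hat{\bh}(\bp,\bq)\|_{\calK_{(\phi)}^\circ}
\;\le\;\frac{\Delta\ell_0}{m}\,\sup_{\bw\in\calK_{(\phi)}}\max_{\substack{I\subset[m]\\|I|\le\ell_0}}\Bigl(\ell_0^{-1}\sum_{i\in I}|\ba_i^\top\bw|^2\Bigr)^{1/2},
\]
and Lemma~\ref{lem:max_ell_sum} bounds the right-hand side by $C\bigl(\omega(\calK_{(\phi)})/\sqrt{\ell_0}+\phi\log^{1/2}(em/\ell_0)\bigr)$, which yields the claim once the third term in~(\ref{eq:sample_pro1}) is used. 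Because the second step maximizes over \emph{all} index sets, any dependence between $\bR_{\bp,\bq}$ and $\{\ba_i^\top\bw\}_i$ is irrelevant---no orthogonal decomposition, no pointwise Bernstein, no conditional sub-Gaussianity.
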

\begin{proof}
  See the complete proof in Appendix \ref{app:prove_pro1}. 
\end{proof}
\subsubsection{Orthogonal decomposition} \label{orthosubsec}
All that remains is to control $r^{-1}\|\bp-\bq-\eta\cdot \hat{\bh}(\bp,\bq)\|_{\calK_{(r)}^\circ}$ over $(\bp,\bq)\in \calN_{r,\mu_4}^{(2)}$. Note that $\calN_{r,\mu_4}^{(2)}$ is a finite set, we can accomplish the goal by first bounding it for a fixed pair $(\bp,\bq)$ and then taking a union bound over $\calN_{r,\mu_4}^{(2)}$.

For   fixed $(\bp,\bq)\in\overline{\calX}\times \overline{\calX}$ obeying $0<\|\bp-\bq\|_2\le 2\mu_4$, we utilize Lemma \ref{lem:parameterization} to establish the following useful parameterization: there exists some orthonormal $(\bbeta_1:=\frac{\bp-\bq}{\|\bp-\bq\|_2},\bbeta_2)$ such that\footnote{Note that $\bbeta_1,\bbeta_2$ here and some subsequent notation depend on $(\bp,\bq)$, while we drop such dependence   for succinctness.} 
\begin{align}\label{eq:parapq}
    \bp = u_1\bbeta_1+u_2\bbeta_2~\text{ and }~\bq = v_1\bbeta_1+u_2\bbeta_2
\end{align}
 for some coordinates $(u_1,u_2,v_1)$ obeying $u_1>v_1$ and $u_2\ge 0$. 
 Now we have the following orthogonal decomposition of the clipped gradient
\begin{align}\label{eq:ortho_decomp}
     \hat{\bh}(\bp,\bq)= \langle\hat{\bh}(\bp,\bq),\bbeta_1\rangle\bbeta_1+\langle \hat{\bh}(\bp,\bq),\bbeta_2\rangle\bbeta_2 + [\underbrace{\hat{\bh}(\bp,\bq)-\langle\hat{\bh}(\bp,\bq),\bbeta_1\rangle\bbeta_1-\langle \hat{\bh},(\bp,\bq),\bbeta_2\rangle\bbeta_2}_{:=\hat{\bh}^\bot(\bp,\bq)}]
\end{align}
which allows us to decompose the dual norm as 
    \begin{align}
    &\frac{1}{r}\big\|\bp-\bq-\eta\cdot\hat{\bh}(\bp,\bq)\big\|_{\calK_{(r)}^\circ} \nn\\
    &= \frac{1}{r}\sup_{\bw\in \calK_{(r)}} \bw^\top\left[(\bp-\bq-\eta\cdot\langle\hat{\bh}(\bp,\bq),\bbeta_1\rangle\bbeta_1)-(\eta\cdot \langle\hat{\bh}(\bp,\bq),\bbeta_2\rangle\bbeta_2)-(\eta\cdot \hat{\bh}^\bot(\bp,\bq))\right]\nn\\
    &\le  \big|\|\bp-\bq\|_2- \eta\cdot \langle\hat{\bh}(\bp,\bq),\bbeta_1\rangle\big|  + \big|\eta \cdot \langle\hat{\bh}(\bp,\bq),\bbeta_2\rangle\big| + r^{-1}\cdot\sup_{\bw\in\calK_{(r)}}\eta \cdot\bw^\top \hat{\bh}^\bot(\bp,\bq)\nn\\
    :&= \big|\|\bp-\bq\|_2- T_1^{\bp,\bq}\big|+  |T_2^{\bp,\bq}|+ r^{-1} \cdot T_3^{\bp,\bq}.\label{eq:large_further_decom}
\end{align}  
 We seek to control the three terms in (\ref{eq:large_further_decom}).

\subsubsection{Sharp concentration bounds}
   Substituting $\hat{\bh}(\bp,\bq)=\frac{\Delta}{m}\sum_{i\in\bR_{\bp,\bq}}\sign(\ba_i^\top\bbeta_1)\ba_i$, we shall readily find
 \begin{gather*}
  T_1^{\bp,\bq}:=\eta \cdot \langle \hat{\bh}(\bp,\bq),\bbeta_1\rangle =  \frac{\eta\Delta}{m}\sum_{i=1}^m|\ba_i^\top\bbeta_1|\mathbbm{1}(E^{(i)}_{\bp,\bq}),\\
    T_2^{\bp,\bq}:=\eta\cdot\langle\hat{\bh}(\bp,\bq),\bbeta_2\rangle  =  \frac{\eta\Delta}{m}\sum_{i=1}^m\sign(\ba_i^\top\bbeta_1)(\ba_i^\top\bbeta_2) \mathbbm{1}(E^{(i)}_{\bp,\bq}),\\
   T_3^{\bp,\bq}:=\sup_{\bw\in\calK_{(r)}}\eta\cdot\bw^\top\hat{\bh}^\bot(\bp,\bq)= \sup_{\bw\in\calK_{(r)}}\frac{\eta\Delta}{m}\sum_{i\in\bR_{\bp,\bq}}\sign(\ba_i^\top\bbeta_1)\big[\ba_i-(\ba_i^\top\bbeta_1)\bbeta_1-(\ba_i^\top\bbeta_2)\bbeta_2\big]^\top\bw. \nn 
\end{gather*} 
Recall our convention $E^{(i)}_{\bp,\bq}=\{i\in\bR_{\bp,\bq}\}$.

We seek  sharp concentration bounds for $(T_i^{\bp,\bq})_{i=1}^3$. The core ideas are to proceed with arguments aware of the distance between $\bp$ and $\bq$. For instance, if $\bp$ and $\bq$ are close enough, the number of effective contributors to $T_i^{\bp,\bq}$ (i.e.,  $|\bR_{\bp,\bq}|$) will be much fewer than $m$, thus one can hope for tighter concentration bounds. To that end, our particular strategy for bounding $T_3^{\bp,\bq}$ is to  first establish the concentration bound by conditioning on $|\bR_{\bp,\bq}|$, and then further get rid of the conditioning by analyzing $|\bR_{\bp,\bq}|$, a binomial variable, via Chernoff bound. To establish the conditional concentration bound, we shall show the  sub-Gaussianity of the conditional distribution $[\ba_i-(\ba_i^\top\bbeta_1)\bbeta_1-(\ba_i^\top\bbeta_2)\bbeta_2]^\top\bw|\{i\in\bR_{\bp,\bq}\}$ and then use Lemma \ref{lem:tala} to obtain a bound relevant to the Gaussian width of $\calK_{(r)}$.  Similar idea was used in \cite{chen2024one,matsumoto2024robust}. On the other hand, for the simpler terms $T_1^{\bp,\bq}$ and $T_2^{\bp,\bq}$, we find that bounding the moments as in (\ref{eq:sg1_sg}) and (\ref{eq:sg2_sg}) and then using Bernstein's inequality (see Lemma \ref{lem:chernoff}) is sufficient.

  We make the following assumptions on these matters. 
  

\begin{assumption}\label{assum_sg1}
    For any $\bp,\bq\in \overline{\calX}$ obeying $0<\|\bp-\bq\|_2\le2\mu_4$, we suppose that
    \begin{subequations}
        \begin{gather}\label{eq:sg1_sg}
        \mathbbm{E}\Big(|\ba_i^\top\bbeta_1|^p\mathbbm{1}(E^{(i)}_{\bp,\bq})\Big)\le c^{(5)}\mathsf{P}_{\bp,\bq}\cdot \frac{p!}{2},\quad\forall p\ge 2\\
       \left|\mathbbm{E}\big(|\ba_i^\top\bbeta_1|\mathbbm{1}(E^{(i)}_{\bp,\bq})\big) - \frac{c^{(6)}\|\bp-\bq\|_2}{\Delta\vee\Lambda}\right|\le \frac{\varepsilon^{(2)}\|\bp-\bq\|_2}{\Delta\vee\Lambda},  \label{eq:sg1_exp}
    \end{gather}\label{eq:sg1_todo}
    \end{subequations}
    hold for some $c^{(5)},c^{(6)}>0$, $\varepsilon^{(2)}\ge 0$. 
\end{assumption}
\begin{assumption}\label{assum_sg2}
    For any $\bp,\bq\in \overline{\calX}$ obeying $0<\|\bp-\bq\|_2\le 2\mu_4$, we suppose that
    \begin{subequations}
        \begin{gather}\label{eq:sg2_sg}
        \mathbbm{E}\Big(|\ba_i^\top\bbeta_2|^p\mathbbm{1}(E^{(i)}_{\bp,\bq})\Big)\le  c^{(7)}\mathsf{P}_{\bp,\bq}\cdot \frac{p!}{2},\quad\forall p\ge 2\\\label{eq:sg2_exp}
         \Big|\mathbbm{E}\Big(\sign(\ba_i^\top\bbeta_1)\ba_i^\top\bbeta_2\mathbbm{1}(E^{(i)}_{\bp,\bq})\Big)\Big|\le \frac{\varepsilon^{(3)}\|\bp-\bq\|_2}{\Delta\vee\Lambda},
    \end{gather}
    \end{subequations}
    hold for some $c^{(7)}>0$, $\varepsilon^{(3)}\ge 0$.
\end{assumption}
For any $\bw\in \mathbb{R}^{n}$, we define $$J_{i,\bw}^{\bp,\bq}:=\sign(\ba_i^\top\bbeta_1)[\ba_i-(\ba_i^\top\bbeta_1)\bbeta_1-(\ba_i^\top\bbeta_2)\bbeta_2]^\top\bw.$$
\begin{assumption}\label{assum_sg3}
For any $\bp,\bq\in \overline{\calX}$ obeying $0<\|\bp-\bq\|_2\le 2\mu_4$, we suppose that 
   \begin{subequations}
        \begin{gather}\label{eq:sg3_sg}
         \mathbbm{E}\Big(|J_{i,\bw}^{\bp,\bq}|^p\mathbbm{1}(E^{(i)}_{\bp,\bq})\Big)\le \sfP_{\bp,\bq} (c^{(8)}\sqrt{p})^p,\quad\forall p\ge 2,~~\forall\bw\in \mathbb{S}^{n-1}\\\label{eq:sg3_exp}
       \Big|\mathbbm{E}\Big(J_{i,\bw}^{\bp,\bq}\mathbbm{1}(E^{(i)}_{\bp,\bq})\Big)\Big|\le \frac{\varepsilon^{(4)}\|\bp-\bq\|_2}{\Delta\vee\Lambda},~~ \forall\bw\in\mathbb{S}^{n-1}
    \end{gather}
   \end{subequations}
    hold for some $c^{(8)}>0$, $\varepsilon^{(4)}\ge 0$. 
\end{assumption}

As mentioned, $\varepsilon^{(i)}$'s denote sufficiently small constants, thus (\ref{eq:sg1_exp}), (\ref{eq:sg2_exp}) and (\ref{eq:sg3_exp}) precisely characterize the mean of $T_i^{\bp,\bq}$; for instance, (\ref{eq:sg1_exp}) essentially characterize 
$$\mathbbm{E}[T_1^{\bp,\bq}]=\eta\Delta \mathbbm{E}\big[|\ba_i^\top\bbeta_1|\mathbbm{1}(E^{(i)}_{\bp,\bq})\big]\approx \frac{c^{(6)}\eta\Delta\|\bp-\bq\|_2}{\Delta\vee\Lambda}.$$ These are relevant in the selection of the step size $\eta$   to ensure contraction. While we can validate (\ref{eq:sg2_exp}) and (\ref{eq:sg3_exp}) for 1bCS, D1bCS and DMbCS,   meaning that $T_2^{\bp,\bq}$ and $T_3^{\bp,\bq}$ nearly concentrate about $0$, in some more intricate problem this may not be true and we will need a more general condition as in (\ref{eq:sg1_exp}). See one-bit phase retrieval \cite{chen2024one} for instance.


The moment bounds (\ref{eq:sg1_sg}), (\ref{eq:sg2_sg}) and (\ref{eq:sg3_sg}) are more interesting and, in some sense, convey a type of {\it independence} between the marginals of $\ba_i$ and the separation event $E^{(i)}_{\bp,\bq}$. We shall use (\ref{eq:sg1_sg}) to elaborate on this. In fact, the most natural way to bound $ \mathbbm{E}(|\ba_i^\top\bbeta_i|^p\mathbbm{1}(E^{(i)}_{\bp,\bq}))$ is via Cauchy–Schwarz inequality: 
  \begin{align}\label{csachieve}
      \mathbbm{E}\Big(|\ba_i^\top\bbeta_i|^p\mathbbm{1}(E^{(i)}_{\bp,\bq})\Big)\le \Big(\mathbbm{E}\big[|\ba_i^\top\bbeta_1|^{2p}\big]\mathbbm{P}(E^{(i)}_{\bp,\bq})\Big)^{1/2}\le \sqrt{\sfP_{\bp,\bq}}(c_1\sqrt{p})^p\le c_2\sqrt{\sfP_{\bp,\bq}}\cdot\frac{p!}{2}
  \end{align}
  for some absolute constants $c_1$ and $c_2$.
   However, this is weaker than what we assumed in (\ref{eq:sg1_sg}) if $\sfP_{\bp,\bq}=o(1)$. On the other hand,  if $\ba_i^\top\bbeta_1$ and $E^{(i)}_{\bp,\bq}$ are independent, then we can easily reach (\ref{eq:sg1_sg}) by  
    \begin{align}
      \mathbbm{E}\Big(|\ba_i^\top\bbeta_i|^p\mathbbm{1}(E^{(i)}_{\bp,\bq})\Big)= \sfP_{\bp,\bq}\mathbbm{E}\big[|\ba_i^\top\bbeta_i|^p\big] \le \sfP_{\bp,\bq}(c_1\sqrt{p})^p\le c_2\sfP_{\bp,\bq}\cdot\frac{p!}{2}.
  \end{align}
  Therefore, roughly put, (\ref{eq:sg1_sg}) indicates a type of independence between $\ba_i^\top\bbeta_i$ and $E^{(i)}_{\bp,\bq}$. As we shall see, such   ``independence''  comes from the rotational invariance of Gaussian sensing vector in 1bCS, while is due to the additional randomness of $\btau$ in D1bCS and DMbCS.

  A more precise description of such independence is that ``{\it the event $E_{\bp,\bq}^{(i)}$ does not significantly affect the sub-Gaussianity of $\ba_i^\top\bbeta_1,~\ba_i^\top\bbeta_2$ and $J_{i,\bw}^{\bp,\bq}$}.'' It is easy to note that (\ref{eq:sg1_sg}), (\ref{eq:sg2_sg}) and (\ref{eq:sg3_sg}) are equivalent to 
\begin{subequations}
     \begin{gather}
    \mathbbm{E}\Big(|\ba_i^\top\bbeta_1|^p\big|E^{(i)}_{\bp,\bq}\Big) \le c^{(5)}\cdot\frac{p!}{2},\\
\mathbbm{E}\Big(|\ba_i^\top\bbeta_2|^p\big|E^{(i)}_{\bp,\bq}\Big) \le c^{(7)}\cdot\frac{p!}{2},\\      \mathbbm{E}\Big(|J^{\bp,\bq}_{i,\bw}|^p\big|E^{(i)}_{\bp,\bq}\Big) \le (c^{(8)}\sqrt{p})^p,\quad \forall \bw\in \mathbb{S}^{n-1} 
  \end{gather}
\end{subequations}
  for any $p\ge 2$. As $\|X\|_{\psi_2}\asymp \sup_{p\ge 1}(\mathbbm{E}|X|^p)^{1/p}/\sqrt{p}$ and $\|X\|_{\psi_1}\asymp \sup_{p\ge 1}(\mathbbm{E}|X|^p)^{1/p}/p$ \cite{vershynin2018high}, these conditions essentially characterize the sub-exponential tails of $\ba_i^\top\bbeta_1|E^{(i)}_{\bp,\bq}$ and $\ba_i^\top\bbeta_2|E^{(i)}_{\bp,\bq}$
  \begin{align}
      \big\|\ba_i^\top\bbeta_1|E^{(i)}_{\bp,\bq}\big\|_{\psi_1}=O(c^{(5)})\quad\text{and}\quad \big\|\ba_i^\top\bbeta_2|E^{(i)}_{\bp,\bq}\big\|_{\psi_1}=O(c^{(7)}), \label{conse}
  \end{align}
  and the sub-Gaussian tail of $J_{i,\bw}^{\bp,\bq}|E^{(i)}_{\bp,\bq}$
  \begin{align}
\big\|J_{i,\bw}^{\bp,\bq}|E^{(i)}_{\bp,\bq}\big\|_{\psi_2} = O(c^{(8)}),\quad \forall \bw\in \mathbb{S}^{n-1}.  
      \label{consg}
  \end{align}
  This perspective was 
    employed in \cite{matsumoto2024binary,chen2024one} and will be useful in controlling $T_3^{\bp,\bq}$ in Proposition \ref{pro4}.

With these assumptions in place, we  bound $|\|\bp-\bq\|_2-T_1^{\bp,\bq}|$, $|T_2^{\bp,\bq}|$ and $T_3^{\bp,\bq}$ in the following propositions. 

\begin{pro}[Bounding $|\|\bp-\bq\|_2-T_1^{\bp,\bq}|$]
\label{pro2}
    Under Assumption \ref{assum_sg1},   with probability at least $1-2\exp(-2\scrH(\overline{\calX},r))$, for all $(\bp,\bq)\in\calN_{r,\mu_4}^{(2)}$ we have 
\begin{align*}
    \big|\|\bp-\bq\|_2-T_1^{\bp,\bq}\big| \le 4\eta\Delta\left[ \sqrt{\frac{c^{(5)}\scrH(\overline{\calX},r)\sfP_{\bp,\bq}}{m}}+\frac{\scrH(\overline{\calX},r)}{m}\right]+\left[\Big|1-\frac{\eta c^{(6)} \Delta }{\Delta\vee\Lambda}\Big|+\frac{\eta\Delta \varepsilon^{(2)}}{\Delta\vee\Lambda}\right]\|\bp-\bq\|_2. 
\end{align*} 
\end{pro}
\begin{proof}
    The proof can be found in Appendix \ref{app:pro2}.
\end{proof}
\begin{pro}[Bounding $|T_2^{\bp,\bq}|$]
    \label{pro3}
    Under Assumption  \ref{assum_sg2}, with probability at least $1-2\exp(-2\scrH(\overline{\calX},r))$, for all $(\bp,\bq)\in\calN_{r,\mu_4}^{(2)}$ we have 
    \begin{align*}
        |T_2^{\bp,\bq}|\le 4\eta\Delta\left[ \sqrt{\frac{c^{(7)}\scrH(\overline{\calX},r)\sfP_{\bp,\bq}}{m }}+\frac{\scrH(\overline{\calX},r)}{m}\right]+\frac{\eta\Delta \varepsilon^{(3)}\|\bp-\bq\|_2}{\Delta\vee\Lambda}.
    \end{align*}
\end{pro}
\begin{proof}
    The proof can be found in Appendix \ref{app:prove_pro2}.
\end{proof}
\begin{pro}[Bounding $T_3^{\bp,\bq}$]\label{pro4}
    Under Assumption   \ref{assum_sg3}, 
    with probability at least $1-4\exp(-2\scrH(\overline{\calX},r))$, there exists some absolute constant $C$ such that 
    \begin{align*}
          T_3^{\bp,\bq}& \le Cc^{(8)}r\eta\Delta \sqrt{\Big[\sfP_{\bp,\bq}+\frac{\scrH(\overline{\calX},r)}{m}\Big]\Big[\frac{r^{-2}\omega^2(\calK_{(r)})+\scrH(\overline{\calX},r)}{m}\Big]} + \frac{5r\eta\Delta\varepsilon^{(4)}\|\bp-\bq\|_2}{\Delta\vee\Lambda}\left(1+\frac{\scrH(\overline{\calX},r)}{m\sfP_{\bp,\bq}}\right)
    \end{align*}
    holds for all $(\bp,\bq)\in \calN^{(2)}_{r,\mu_4}$.
\end{pro}
\begin{proof}
     The proof can be found in Appendix \ref{app:pro4}. 
\end{proof}
\begin{rem}[Simplifications of \cite{matsumoto2024binary,chen2024one}] \rev{Our work simplifies   the analysis in \cite{matsumoto2024binary,chen2024one} from some aspects. The main simplification is to provide a unified treatment to all the pairs $(\bp,\bq)\in\calN_{r,\mu_4}^{(2)}$, unlike \cite{matsumoto2024binary,chen2024one} that separately analyzed $(\bp,\bq)\in \calN_r\times \calN_r$ in  the large-distance regime (i.e., $\|\bp-\bq\|_2\ge r$) and the small-distance regime (i.e., $\|\bp-\bq\|_2\le r$). Indeed, the only essential use of $\|\bp-\bq\|_2\ge r$ in their large-distance regime is in instances such as \cite[Equation (92)]{matsumoto2024binary}, where $\|\bp-\bq\|_2$ appears in the probability term and hence it should be suitably large to tolerate a union bound over $\calN_{r,\mu_4}^{(2)}$. To avoid $\|\bp-\bq\|_2\ge r$, we adjust the upper bounds in this part of analysis to ensure a sufficient probability term (see Propositions \ref{pro2}--\ref{pro4}), and it turns out that these slightly different upper bounds, such as $O(\sqrt{\frac{\scrH(\overline{\calX},r)\sfP_{\bp,\bq}}{m}}+\frac{\scrH(\overline{\calX},r)}{m})$ in Proposition \ref{pro2}, are   tight enough for our purpose. Another simplification is that, unlike \cite{matsumoto2024binary,chen2024one} that bound $T_1^{\bp,\bq},T_2^{\bp,\bq}$ by conditioning on $|E^{(i)}_{\bp,\bq}|$ and establishing the sub-Gaussianity of the conditional distribution, we observe that the moment-based Bernstein inequality (Lemma \ref{lem:chernoff}) suffices and therefore impose the moment bounds in (\ref{eq:sg1_sg}) and (\ref{eq:sg2_sg}). These conditions are more convenient to validate and are slightly weaker than conditional sub-Gaussianity; see   (\ref{conse}).} 
\end{rem}
\subsubsection{RAIC and Main Theorem} 
Additionally, we assume an upper bound on $\sfP_{\bu,\bv}$ that complements Assumption \ref{assum3}. 
\begin{assumption}[Upper bound on $\sfP_{\bu,\bv}$]\label{assum_Puvup}
    For some $c^{(9)}>0$, 
    \begin{align*}
        \sfP_{\bu,\bv}\le \frac{ c^{(9)}\|\bu-\bv\|_2}{\Delta\vee\Lambda},\quad \forall\bu,\bv\in \mathbbm{A}_\alpha^\beta .
    \end{align*}
\end{assumption}
Now we can combine all the pieces to obtain our main theorem. We shall pause to present some auxiliary conditions to keep our theorem statement concise.

\paragraph{Auxiliary Assumptions for Theorem \ref{thm:main}:} We will work with Assumptions \ref{assum1}--\ref{assum_Puvup} under    $r>0$ which is a given small enough   positive constant obeying $r \lesssim  \frac{\mu_4}{\log^{1/2}((\Delta\vee\Lambda)/\mu_4)}$.\footnote{This is just a technical assumption to  ensure $\bar{\kappa}\vee\hat{\kappa}<\mu_4$ in Theorem \ref{thm:convergence}.} Recall that $\kappa_{\alpha}=1$ if $\alpha=0$ and $\kappa_\alpha=2$ if $\alpha>0$. We choose the step size $\eta=\frac{\Delta\vee\Lambda}{c^{(6)} \Delta }$ inspired by the bound in Proposition \ref{pro2}. Corresponding to $\mu_1 < \frac{1}{2\kappa_\alpha}$, we assume 
    \begin{align}\label{eq:eta_condition}
\frac{ \sum_{j=1}^3\varepsilon^{(j)}+5\varepsilon^{(4)} }{c^{(6)}}\le \frac{0.49}{\kappa_\alpha}.
    \end{align} 
   If $\calK$ is not a cone and $\alpha>0$, we also assume 
       \begin{align}
\frac{ \sum_{j=1}^3\varepsilon^{(j)}+5\varepsilon^{(4)} }{c^{(6)}}  \le \frac{0.24\alpha}{\mu_4} 
        \label{mu1mu2mu3}
    \end{align}
    which stems from (\ref{eq:addit}) in Theorem \ref{thm:convergence} and ensures $\bx^{(t)}\in2\calK$. Furthermore, we assume the mild condition $\mu_4\le C_0(\Delta\vee\Lambda)$ for some absolute constant $C_0$.  

\begin{theorem}
    [Main Theorem] \label{thm:main}  Under the above assumptions, we suppose that Assumptions \ref{assum1}--\ref{assum_Puvup} hold and $\eta = \frac{\Delta\vee\Lambda}{c^{(6)} \Delta }$. 
    There exist some constants $C_1,c_2,C_3,C_4,C_5,C_6$ depending on  $(c^{(i)})_{i=1}^9$, if 
    \begin{align}\label{eq:sample_main}
        m\ge C_1(\Delta\vee\Lambda)\left(\frac{\omega^2(\calK_{(r)})}{r^3}+\frac{\scrH(\calX,r/2)}{r}\right),
    \end{align}
    then with probability at least $1-\exp(-c_2\scrH(\calX,r))$, we have that $(\calQ,\bA,\btau,\calK,\eta=\frac{\Delta\vee\Lambda}{c^{(6)} \Delta })$ respects $(\overline{\calX},\bm{\mu})$-RAIC at the scale of $r$ with 
        \begin{align}\label{eq:bmmu}
            \bm{\mu} = \left( \frac{ (\sum_{j=1}^3\varepsilon^{(j)}+5\varepsilon^{(4)})}{c^{(6)}},C_3r,C_4r \log^{1/2}\Big(\frac{\Delta\vee\Lambda}{r}\Big),\mu_4\right).
        \end{align} This RAIC implies the following for any $\bx\in\calX$ and the corresponding sequence $\{\bx^{(t)}\}$ produced by running Algorithm \ref{alg:pgd} with initialization $\bx^{(0)}\in \mathbb{B}_2^n(\bx;\mu_4)\cap \calX$ and step size $\eta=\frac{\Delta\vee\Lambda}{c^{(6)} \Delta }$: 
    \begin{itemize}
    [leftmargin=5ex,topsep=0.25ex]
		\setlength\itemsep{-0.1em}
        \item (Linear Convergence)  It holds that, for any integer $t\ge 0$,\footnote{We comment that $0.99$ is in general improvable (e.g., it can often be a much smaller constant in specific models). This won't be pursued in our paper.}
        \begin{align}\label{eq:rate_main}
            \|\bx^{(t)}-\bx\|_2\le 0.99^t\mu_4 + C_5 r\log^{1/2}\Big(\frac{\Delta\vee\Lambda}{r}\Big). 
        \end{align}
        \item \rev{(Faster Convergence)} If  additionally $\mu_1\le \sqrt{\frac{C_3r}{\mu_4}}$ holds, then for any $t\ge 0$,  
        \begin{align}
             \|\bx^{(t)}-\bx\|_2\le \mu_4^{2^{-t}}\left(C_6r\log^{1/2}\Big(\frac{\Delta\vee\Lambda}{r}\Big)\right)^{1-2^{-t}}. 
        \end{align}
    \end{itemize}
\end{theorem}

\begin{proof} 
In this proof, we combine the developments of this section (i.e., Section \ref{sec:sharp_ana}) to establish the RAIC. Then the two convergence guarantees follow directly from  Theorem \ref{thm:convergence}. See the detailed proof in Appendix \ref{app:provemain}. 
\end{proof}
Our main theorem indicates the following takeaway message of our paper.
\begin{rem}
Comparing Theorem \ref{thm:main} and Theorem \ref{thm:itup}, we find that under the further Assumptions \ref{assump4}--\ref{assum_Puvup}, the proposed PGD achieves the same error rate as HDM, up to logarithmic factor.  While HDM is in general computationally intractable, PGD is efficient so long as the projection onto $\calK$ can be executed efficiently, covering some canonical examples such as   $\calK=\Sigma^n_k,~M^{n_1,n_2}_{\bar{r}}$, $\sqrt{k}\mathbb{B}_1^n$ and so on. 
\end{rem}


\subsection{A Complementary Approach}\label{sec:raic_qpe}
In the multi-bit case, the techniques developed in the last subsection only establish RAIC with $\mu_4\ll\Delta$; in fact, if $\|\bu-\bv\|_2\ge\Delta$, then     $\hat{\bh}(\bu,\bv)$ deviates too much from the actual gradient $\bh(\bu,\bv)$. Hence, in general, Assumption \ref{assump4} does not hold.

In this subsection, we develop a complementary approach for the situation where the sharp approach only yields local convergence.   Here, similarly, we introduce $r_1$ as the accuracy we want to achieve;\footnote{This is similar to the role of $r$ for the sharp approach in the last subsection.} hence, as in Remark \ref{rem:suffraic}, we will let $\phi=r_1$ and seek a different way to uniformly control 
\begin{align}
    \frac{1}{r_1}\|\bu-\bv-\eta\cdot\bh(\bu,\bv)\|_{\calK_{(r_1)}^\circ},\quad \bu,\bv\in \overline{\calX}. 
\end{align}

Let us start with a useful perspective for nonlinear observation  (e.g., \cite{plan2016generalized,genzel2016high,plan2017high})---to view it as a properly rescaled linear model under a near-centered non-standard noise. In particular, for some properly chosen $\zeta>0$, we decompose the quantized measurement into a re-scaled linear observation $\zeta\cdot \ba_i^\top\bu$ and an irregular noise  $\xi_i(\bu)$ that hinges on $(\ba_i,\tau_i,\bu)$: 
\begin{align}\label{nonlinearview}
    \calQ(\ba_i^\top\bu -\tau_i) = \zeta \cdot \ba_i^\top\bu + (\underbrace{\calQ(\ba_i^\top\bu-\tau_i)-\zeta \cdot\ba_i^\top\bu}_{:=\xi_i(\bu)}).
\end{align}

Utilizing the decomposition, for any $\bu,\bv\in\overline{\calX}$ we have 
    \begin{align}
    &\frac{1}{r_1}\|\bu-\bv-\eta\cdot\bh(\bu,\bv)\|_{\calK^\circ_{(r_1)}}\nn
   \\&\nn =\frac{1}{r_1}\sup_{\bw\in\calK_{(r_1)}}\Big\langle\bw,\bu-\bv- \frac{\eta}{m}\sum_{i=1}^m\big[\calQ(\ba_i^\top\bu-\tau_i)-\calQ(\ba_i^\top\bv-\tau_i)\big]\ba_i\Big\rangle\\ \nn
    &= \frac{1}{r_1}\sup_{\bw\in \calK_{(r_1)}}\left[\bw^\top\Big(\bI_n - \frac{\eta\zeta}{m}\sum_{i=1}^m\ba_i\ba_i^\top\Big)(\bu-\bv)-\frac{\eta}{m}\sum_{i=1}^m\xi_i(\bu)\ba_i^\top\bw + \frac{\eta}{m}\sum_{i=1}^m\xi_i(\bv)\ba_i^\top\bw\right]\\
    &\le  
\frac{1}{r_1}\sup_{\bw\in\calK_{(r_1)}}\left|\bw^\top \Big(\bI_n-\frac{\eta\zeta}{m}\sum_{i=1}^m \ba_i\ba_i^\top\Big)(\bu-\bv)\right|+2\underbrace{\sup_{\bu\in\overline{\calX}}\sup_{\bw\in\calK_{(r_1)}}\frac{\eta}{mr_1}\sum_{i=1}^m\xi_i(\bu)\ba_i^\top\bw}_{:=\hat{T}}, \label{eq:qpe2raic}
\end{align}

The first term can be controlled uniformly for all $\bu,\bv\in\overline{\calX}$ by the concentration inequality of product process \cite{mendelson2016upper}; see Lemma \ref{lem:product_process}. 

\begin{pro}\label{pro5}
Under Assumption \ref{assum1}, if $m\ge\frac{\omega^2(\calK_{(r_1)})}{r_1^2}$, then there exist some absolute constants $C_1,c_2$, for any $\bu,\bv\in \overline{\calX}$ we have
\begin{align}\label{boundfir}\nn
  &\frac{1}{r_1}\sup_{\bw\in\calK_{(r_1)}}\left|\bw^\top \Big(\bI_n-\frac{\eta\zeta}{m}\sum_{i=1}^m \ba_i\ba_i^\top\Big)(\bu-\bv)\right| \\&\le \left(\frac{C_1|\eta\zeta|\cdot\omega(\calK_{(r_1)})}{r_1\sqrt{m}}+|\eta\zeta-1|\right)\cdot\|\bu-\bv\|_2+ \frac{C_1|\eta\zeta|\omega(\calK_{(r_1)})}{\sqrt{m}}
\end{align}
with probability at least $1-\exp(-\frac{c_2\omega^2(\calK_{(r_1)})}{r_1^2})$.
\end{pro}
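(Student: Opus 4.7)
The plan is to decompose $\bI_n - \eta\zeta\bM$, where $\bM = \frac{1}{m}\sum_{i=1}^m \ba_i\ba_i^\top$, into a deterministic shift and a mean-zero fluctuation. Since $\mathbbm{E}[\bM] = \bI_n$ by Assumption \ref{assum1}, I would first write
\begin{align*}
\bw^\top(\bI_n - \eta\zeta\bM)(\bu-\bv) = (1-\eta\zeta)\,\bw^\top(\bu-\bv) \;+\; \eta\zeta\cdot\bw^\top(\bI_n - \bM)(\bu-\bv),
\end{align*}
and bound the two pieces separately. For the deterministic piece, Cauchy--Schwarz together with $\|\bw\|_2\le\phi$ (from $\calK_{(\phi)}\subset\phi\mathbb{B}_2^n$) gives $|(1-\eta\zeta)\bw^\top(\bu-\bv)|\le\phi|\eta\zeta-1|\cdot\|\bu-\bv\|_2$, which reproduces the middle term on the right-hand side of (\ref{boundfir}) exactly.

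For the stochastic piece, I would invoke Lemma \ref{lem:product_process} on the bilinear form $\bw^\top(\bI_n-\bM)\bz$, taken uniformly over $\bw\in\calK_{(\phi)}$ and $\bz = \bu-\bv$ with $\bu,\bv\in\overline{\calX}$. The Gaussian width of the first index set is $\omega(\calK_{(\phi)})$ and its radius is $\phi$, which should drive the lemma to deliver, with probability at least $1-\exp(-c_2\omega^2(\calK_{(\phi)})/\phi^2)$ whenever $m\gtrsim \omega^2(\calK_{(\phi)})/\phi^2$, a bound of the form
\begin{align*}
\sup_{\bw\in\calK_{(\phi)}} |\bw^\top(\bI_n-\bM)\bz| \;\lesssim\; \frac{\omega(\calK_{(\phi)})}{\sqrt{m}}\,\|\bz\|_2 \;+\; \frac{\phi\,\omega(\calK_{(\phi)})}{\sqrt{m}}
\end{align*}
holding simultaneously for every admissible $\bz$. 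Multiplying by $|\eta\zeta|$ and combining with the deterministic estimate assembles all three terms in (\ref{boundfir}).

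The main obstacle is arranging a \emph{single} high-probability event that controls \emph{all} pairs $(\bu,\bv)\in\overline{\calX}\times\overline{\calX}$ together with every $\bw\in\calK_{(\phi)}$, while preserving the linear-in-$\|\bu-\bv\|_2$ scaling rather than only an estimate in terms of the diameter of $\overline{\calX}-\overline{\calX}$. Generic product-process chaining bounds tend to produce an additive $\mathrm{rad}\cdot\omega/\sqrt{m}$ baseline; here the term $\phi\,\omega(\calK_{(\phi)})/\sqrt{m}$ plays exactly that role, so the pointwise linear scaling follows by homogenizing in $\bz$ once the baseline has been absorbed correctly. The assumed sample size $m\ge \omega^2(\calK_{(\phi)})/\phi^2$ is the threshold at which Talagrand-type chaining against $\omega(\calK_{(\phi)})$ becomes tight, so beyond a direct invocation of Lemma \ref{lem:product_process} no additional refinement should be necessary.
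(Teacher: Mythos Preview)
Your decomposition into the deterministic shift $(1-\eta\zeta)\bw^\top(\bu-\bv)$ and the fluctuation $\eta\zeta\,\bw^\top(\bI_n-\bM)(\bu-\bv)$ is exactly what the paper does, and Lemma~\ref{lem:product_process} is indeed the right tool. The gap is in your handling of the second index set. Applying the product-process lemma with $\bz$ ranging over $\overline{\calX}-\overline{\calX}$ yields bounds in terms of $\omega(\overline{\calX}-\overline{\calX})$ and $\rad(\overline{\calX}-\overline{\calX})\asymp\beta$, not the linear-in-$\|\bz\|_2$ scaling with coefficient $\omega(\calK_{(\phi)})/\sqrt{m}$ that you claim. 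Your ``homogenizing in $\bz$'' remark gestures toward the fix but omits the crucial structural fact: after normalization, $\bz/\|\bz\|_2$ does \emph{not} land in an arbitrary subset of $\mathbb{S}^{n-1}$ (which would cost $\omega(\mathbb{S}^{n-1})\asymp\sqrt{n}$) but rather in a rescaling of $\calK_{(\phi)}$ itself.

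The paper makes this concrete by a two-regime split. When $\|\bu-\bv\|_2\ge\phi$, one has $\frac{\bu-\bv}{\|\bu-\bv\|_2}\in\frac{2(\calK-\calK)}{\phi}\cap\mathbb{B}_2^n\subset\frac{2}{\phi}\calK_{(\phi)}$ (using $\overline{\calX}\subset 2\calK$), so the fluctuation is bounded by $\frac{2\|\bu-\bv\|_2}{\phi}\sup_{\bw,\bs\in\calK_{(\phi)}}|\bw^\top(\bI_n-\bM)\bs|$. When $\|\bu-\bv\|_2<\phi$, one has $\bu-\bv\in(2\calK)_{(\phi)}\subset 2\calK_{(\phi)}$ directly, giving the additive term. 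Both regimes reduce to a \emph{single} product-process bound over $\calK_{(\phi)}\times\calK_{(\phi)}$, which Lemma~\ref{lem:product_process} (with $t\asymp\omega(\calK_{(\phi)})/\phi$ and the hypothesis $m\ge\omega^2(\calK_{(\phi)})/\phi^2$) controls by $C\phi\,\omega(\calK_{(\phi)})/\sqrt{m}$. Once you insert this step, your argument is complete and coincides with the paper's.
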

\begin{proof}
    See its proof in  Appendix \ref{app:pro5}. 
\end{proof}

In contrast, it takes more technicalities to bound $\hat{T}$ which accounts for the quantization. Indeed, in the literature, the property of $\hat{T}$ being well controlled has proven a major ingredient in the analysis for many other estimators, and is referred to as (global) quantized product embedding (see \cite{chen2024uniform} for $\calQ_\delta$), limited projection distortion (see \cite{xu2020quantized} for $\calQ_\delta$ or even general non-linearity) or sign product embedding (see \cite{plan2012robust} for 1bCS).

\paragraph{Auxiliary Assumptions for Theorem \ref{thm:conver_qpe}:}  As before, we state some secondary assumptions here to ensure cleaner theorem statement.  With {\it arbitrary initialization}, we let  $\{\bx^{(t)}\}$ be the sequence produced by running Algorithm \ref{alg:pgd} with initialization $\bx^{(0)}\in \calX$ and step size $\eta$. In a specific nonlinear model, $\zeta$ in (\ref{nonlinearview}) is often known, and {\it we simply choose $\eta=\zeta^{-1}$ so that $\eta\zeta =1$ and the bound (\ref{boundfir}) is ``minimized''}. If $\calK$ is not a cone and $\alpha>0$, we  additionally  assume $3r_1\le 0.24\alpha$ to ensure (\ref{eq:addit}). Under these assumptions, we have the following guarantee once $\hat{T}$ defined in (\ref{eq:qpe2raic}) is bounded by $r_1$.

\begin{theorem}[Complementary approach] 
\label{thm:conver_qpe}  Under the above setting and Assumption \ref{assum1}, we assume for some small enough given constant $r_1$ and 
$\hat{T}$ defined in (\ref{eq:qpe2raic}) is bounded as $$\hat{T}\le  r_1.$$    
For some   constants $C_1,c_2,c_3,C_4$ that may depend on $(\alpha,\beta)$, if $$m\ge \frac{C_1\omega^2(\calK_{(r_1)})}{r_1^2}$$ and $  r_1\le c_2\beta$, then with probability at least $1-2\exp(-c_3\omega^2(\calK))$, $(\calQ,\bA,\btau,\calK,\eta)$ respects $(\overline{\calX},\bm{\mu})$-RAIC with $\bm{\mu}=(r_1,0,2r_1,2\beta)$, and 
for any $\bx\in\calX$ and any $t\ge 0$ we have    
  \begin{align*}
      \|\bx^{(t)}-\bx\|_2 \le  \left(\frac{C_4\omega(\calK_{(r_1)})}{r_1\sqrt{m}}\right)^t\beta  +24r_1. 
  \end{align*}
\end{theorem}
\begin{proof}
See the detailed proof in Appendix \ref{app:provecomplement}. 
\end{proof}

\section{Instantiation for Specific Models}\label{sec:instance}
In this section,  we validate   Assumptions \ref{assum1}--\ref{assum_Puvup} to establish the optimality of PGD for 1bCS, D1bCS and DMbCS. 
For a specific model, the validation of Assumptions \ref{assum_sg1}--\ref{assum_sg3} consists of elementary probability and moment estimations. It will often be convenient to work with the orthonormal $(\bbeta_1,\bbeta_2)$ such that $\bp=u_1\bbeta_1+u_2\bbeta_2$ and $\bq=v_1\bbeta_1+u_2\bbeta_2$ for some $u_1>v_1$ and $u_2\ge 0$; also note that $\|\bp-\bq\|_2 = u_1-v_1$; see Section \ref{orthosubsec}. This enables the expression 
\begin{align}\label{Eib1b2}
    E^{(i)}_{\bp,\bq}=\{i\in\bR_{\bp,\bq}\} = \big\{\calQ(u_1\ba_i^\top\bbeta_1+u_2\ba_i^\top\bbeta_2 -\tau_i)\ne \calQ(v_1\ba_i^\top\bbeta_1+u_2\ba_i^\top\bbeta_2 -\tau_i)\big\}.
\end{align}

Observe that we can simply focus on sufficiently small $\sfP_{\bp,\bq}$ when we seek to validate (\ref{eq:sg1_sg}), (\ref{eq:sg2_sg}) and (\ref{eq:sg3_sg}). This is because the cases when $\sfP_{\bp,\bq}\ge c$ (where $c$ is any given small absolute constant) can be addressed by Cauchy–Schwarz inequality; see (\ref{csachieve}) for instance. Therefore, upon validating Assumption \ref{assum3} for some absolute constant $c^{(2)}$, we can restrict our attention to $\|\bp-\bq\|_2 \le c(\Delta\vee\Lambda)$ where $c$ is any given small absolute constant.

\rev{This section is organized by model. In the main text we provide the statements needed to invoke the general theory, while we defer lengthier technical derivations to Appendix~\ref{app:verify}.}

 \subsection{1-Bit Compressed Sensing (1bCS)}
 We consider the recovery of $\bx\in \calX:=\calK\cap \mathbb{S}^{n-1}$ from $\by=\sign(\bA\bx)$  where $\bA$ has i.i.d. $\calN(0,1)$ entries. In this case, we have
 $$\Delta = 2,~~\Lambda= 0,~~ \alpha=1,~~\beta = 1,~~b_1=0.$$
 We seek a global RAIC and set $\mu_4=2$; this brings no restriction to $\bu,\bv\in\mathbb{S}^{n-1}$. We proceed to establish Assumptions \ref{assum1}--\ref{assum_Puvup} for 1bCS.

 \begin{lem}[Assumptions \ref{assum1}--\ref{assump4}, \ref{assum_Puvup} for 1bCS] \label{lem:1bcssmallassum}\rev{In the model of 1bCS, Assumption \ref{assum1} holds, Assumption \ref{assum2} holds with $c^{(1)}=\sqrt{8/\pi}$, Assumption \ref{assum3} holds with $c^{(2)}=\frac{2}{\pi}$, Assumption \ref{assump4} holds  with $\varepsilon^{(1)}=c^{(3)}=c^{(4)}=0$, Assumption \ref{assum_Puvup} holds with $c^{(9)}=1$. }
 \end{lem}
 \begin{proof}
     The proofs consist of some elementary probability estimations. See Appendix \ref{app:provelem1}. 
 \end{proof}

 


\begin{lem}[Assumptions \ref{assum_sg1}--\ref{assum_sg3} for 1bCS]
    For $\ba_i\sim\calN(0,\bI_n)$, Assumption \ref{assum_sg1} holds for some absolute constant $c^{(5)}$ and
    $c^{(6)}=\sqrt{2/\pi}$ and $\varepsilon^{(2)}=0$; Assumption \ref{assum_sg2} holds for some absolute constant $c^{(7)}$ and $\varepsilon^{(3)} =0$;  Assumption \ref{assum_sg3} holds for some absolute constant $c^{(8)}$ and $ \varepsilon^{(4)}=0$. \label{fac:1bcs_sg2}
\end{lem}
\begin{proof}
  In the parameterization of $(\bp,\bq)$ via $(\bbeta_1,\bbeta_2)$, we additionally have $v_1=-u_1$  due to $\bp,\bq\in\mathbb{S}^{n-1}$ (see Lemma \ref{lem:parameterization}). As discussed, we can focus on sufficiently small $\|\bp-\bq\|_2=u_1-v_1=2u_1$ in the validation of (\ref{eq:sg1_sg}), (\ref{eq:sg2_sg}) and (\ref{eq:sg3_sg}). 
    The proof consists of some elementary integral estimates and calculations. It
    can be found in Appendix \ref{app:provefact2}. 
\end{proof}
Now we invoke Theorem \ref{thm:main} and come to the following result. 
\begin{theorem}[1bCS via PGD]\label{thm:1bcs}
   We recover $\bx\in\calX=\calK\cap \mathbb{S}^{n-1}$ for some star-shaped set $\calK$ from $\by = \sign(\bA\bx)$ with standard Gaussian $\bA$ by running Algorithm \ref{alg:pgd} with any $\bx^{(0)}\in\calX$ and $\eta=\sqrt{\frac{\pi}{2}}$, which produces a sequence $\{\bx^{(t)}\}_{t=0}^\infty$.  There exist some absolute constants $c_0,C_1,c_2,C_3,C_4$, for any $r\in (0,c_0)$, if 
    \begin{align*}
        m\ge C_1 \left(\frac{\omega^2(\calK_{(r)})}{r^3}+\frac{\scrH(\calX,r/2)}{r}\right),
    \end{align*}
then  with probability at least $1-\exp(-c_2\scrH(\calX,r/2))$, for any $\bx\in \calX$ we have  
 $$ \|\bx^{(t)}-\bx\|_2 \le C_3 r\sqrt{\log(r^{-1})},\quad\forall t\ge C_4\log(\log(r^{-1})).$$  
\end{theorem}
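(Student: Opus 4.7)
The plan is to treat this as a direct corollary of the Main Theorem \ref{thm:main} (the fast convergence branch), so the work reduces to verifying each of Assumptions \ref{assum1}--\ref{assum_Puvup} for 1bCS under standard Gaussian $\bA$ and then checking the numerical conditions on $\eta$, $\mu_4$, $r$. Most of the assumptions have in fact been validated in the paragraphs preceding the theorem: Assumption \ref{assum1} is immediate for Gaussian $\bA$; Assumption \ref{assump4} holds trivially with $\varepsilon^{(1)}=c^{(3)}=c^{(4)}=0$ because $\hat{\bh}=\bh$ in the 1-bit case; Assumption \ref{assum2} holds with $c^{(1)}=\sqrt{8/\pi}$ from the Gaussian density bound; the geodesic identity \eqref{eq:geodesic} together with \eqref{eq:1bcs_Puv} establishes Assumptions \ref{assum3} and \ref{assum_Puvup} with $c^{(2)}=2/\pi$ and $c^{(9)}=1$; and Facts \ref{fact1}, \ref{fac:1bcs_sg2} supply Assumptions \ref{assum_sg1}--\ref{assum_sg3} with absolute constants $c^{(5)}, c^{(7)}, c^{(8)}$, with $c^{(6)}=\sqrt{2/\pi}$, and with $\varepsilon^{(2)}=\varepsilon^{(3)}=\varepsilon^{(4)}=0$.

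Next I would set $\mu_4=2$, which imposes no restriction on the RAIC localization since any pair $\bu,\bv\in\mathbb{S}^{n-1}$ satisfies $\|\bu-\bv\|_2\le 2$. Plugging $\Delta=2$, $\Delta\vee\Lambda=2$, $c^{(6)}=\sqrt{2/\pi}$ into the step-size condition \eqref{eq:eta_condition}, the choice $\eta=\sqrt{\pi/2}$ gives $\eta\Delta c^{(6)}/(\Delta\vee\Lambda)=1$ exactly, so that
\[
\Big|1-\tfrac{\eta\Delta c^{(6)}}{\Delta\vee\Lambda}\Big|+\tfrac{\eta\Delta\sum_{j=1}^4\varepsilon^{(j)}}{\Delta\vee\Lambda}=0,
\]
which comfortably satisfies \eqref{eq:eta_condition} even with $\kappa_\alpha=2$ (since $\alpha=1>0$). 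This same vanishing identity is precisely what triggers the fast convergence branch of Theorem \ref{thm:main}.

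Since $\calK$ need not be a cone and $\alpha=\beta=1$, I also need the supplementary condition \eqref{mu1mu2mu3}. Its first half reduces to $0\le 0.24$ given the vanishing bracket above, hence holds. Its second half, $r\le c_0'\min\{\alpha^2/\mu_4,\alpha/\sqrt{\log((\Delta\vee\Lambda)/\alpha)}\}=c_0'\min\{1/2,1/\sqrt{\log 2}\}$, is absorbed into the smallness hypothesis $r<c_0$ (after possibly shrinking $c_0$). The bound \eqref{eq:higher_main} on $r$ is similarly automatic for small $r$ because $\Delta=2$ and $\mu_4=2$.

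Finally, feeding the sample-size hypothesis $m\ge C_1(\omega^2(\calK_{(r)})/r^3+\scrH(\calX,r/2)/r)$ into \eqref{eq:sample_main} (with $\Delta\vee\Lambda=2$ absorbed into $C_1$) and invoking the fast-convergence conclusion of Theorem \ref{thm:main} delivers the $(\overline{\calX},\bm{\mu})$-RAIC with $\bm{\mu}=(0,C_3r,C_4r\log^{1/2}(2/r),2)$ on the stated high-probability event, and hence
\[
\|\bx^{(t)}-\bx\|_2\le C_5 r\log^{1/2}(2/r)\le C_3' r\sqrt{\log(r^{-1})}\quad\text{for all}\quad t\ge C_7\log(\log(r^{-1})),
\]
as claimed. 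The only non-mechanical step in this plan is the exact arithmetic showing that $\eta=\sqrt{\pi/2}$ is the unique choice that cancels both the bias term and triggers fast convergence; everything else is just substitution into the already-established framework, so I do not anticipate a genuine obstacle.
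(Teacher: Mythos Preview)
Your proposal is correct and follows essentially the same approach as the paper: set $\mu_4=2$, observe that $\eta=\sqrt{\pi/2}=\frac{\Delta\vee\Lambda}{\Delta c^{(6)}}$ and $\varepsilon^{(i)}=0$ for all $i$, and invoke the fast convergence branch of Theorem~\ref{thm:main}. The paper's proof is more terse (it only remarks that the first half of \eqref{mu1mu2mu3} is justified and that the result follows immediately), whereas you spell out the arithmetic for \eqref{eq:eta_condition}, both halves of \eqref{mu1mu2mu3}, and \eqref{eq:higher_main}; this extra detail is fine and introduces no new ideas.
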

\begin{proof}
   We set $\mu_4=2$ and consider sufficiently small $r$.   As in Lemmas \ref{lem:1bcssmallassum}--\ref{fac:1bcs_sg2}, we have $\eta = \sqrt{\frac{\pi}{2}}= \frac{\Delta\vee\Lambda}{c^{(6)} \Delta }$ and $\varepsilon^{(i)}=0$ ($i=1,2,3,4$), which   justify all the conditions in Theorem \ref{thm:main}. Therefore, we have $$\|\bx^{(t)}-\bx\|_2 \le 2^{2^{-t}}(C_1r\sqrt{\log(r^{-1})})^{1-2^{-t}},\qquad\forall t\ge 0.$$   This implies the desired claim. 
\end{proof}
\begin{rem}\label{1bcsconcrete}
    Specializing Theorem \ref{thm:1bcs} to $\calK=\Sigma^n_k$ recovers the optimality of NBIHT proved in the recent work \cite{matsumoto2024binary}. Moreover, letting $\calK = M^{n_1,n_2}_{\bar{r}}$ leads to the first efficient and optimal algorithm for 1bCS of low-rank matrices. In addition, letting $\calK = \sqrt{k}\mathbb{B}_1^n$ shows PGD achieves $\widetilde{O}((\frac{k}{m})^{1/3})$ in 1bCS of effectively sparse signals, which  matches with Adaboost \cite{chinot2022adaboost}  (the fastest algorithm in existing works). 
\end{rem}

\subsection{Dithered 1-Bit Compressed Sensing (D1bCS)}
We consider the recovery of $\bx\in \calX:= \calK\cap \mathbb{B}_2^n$ from $\by=\sign(\bA\bx-\btau)$ under sub-Gaussian $\bA$ with rows $\{\ba_i\}_{i=1}^m$ satisfying Assumption \ref{assum1} and uniform dither $\btau\sim \scrU([-\lambda,\lambda]^m)$ with some   $\lambda\ge 2$; here, $\bA$ and $\btau$ are independent. In this case,  
we have 
$$\Delta=2,\quad \Lambda = \lambda,\quad \alpha=0,\quad\beta=1,\quad b_1=0.$$
We seek to establish a global RAIC with $\mu_4=2$. \rev{It remains to validate Assumptions \ref{assum2}--\ref{assum_Puvup}, which is accomplished in the following two lemmas.}   

\begin{lem}[Assumptions \ref{assum2}--\ref{assump4}, \ref{assum_Puvup} for D1bCS] \label{lem:d1bcsearly} \rev{In D1bCS, Assumption \ref{assum2} holds with $c^{(1)}=1$, Assumption \ref{assum3} holds with absolute constant $c^{(2)}$, Assumption \ref{assump4} holds  with $\varepsilon^{(1)}=c^{(3)}=c^{(4)}=0$, Assumption \ref{assum_Puvup} holds with $c^{(9)}=\frac{1}{2}$.}
\end{lem}
\begin{proof}
    \rev{The proof consists of some elementary integral estimations. Compared to 1bCS, $\ba_i$'s here no longer possess rotational invariance. Instead, the trick is to utilize the randomness of $\tau_i$ first to transform $\sfP_{\bu,\bv}$ to a technically more amenable form  (see, e.g., (\ref{taulower})). The complete proof can be found in Appendix \ref{app:proveearlyd1bcs}.}
\end{proof}



\begin{lem}[Assumptions \ref{assum_sg1}--\ref{assum_sg3} for D1bCS]  
\label{fact4}  
In our D1bCS setting, suppose that $\ba_i$ satisfies Assumption \ref{assum1} which leads to (\ref{eq:d1bcs_sg_abs}) with some absolute constants $\bar{c}_0,\bar{c}_1$. Then we have that Assumption \ref{assum_sg1} holds for some absolute constant $c^{(5)}$,
    $c^{(6)}=\frac{1}{2}$ and $\varepsilon^{(2)}=8(\bar{c}_1)^2\exp(-\bar{c}_0\lambda^2);$ Assumption \ref{assum_sg2} holds for some absolute constant $c^{(7)}$ and  $\varepsilon^{(3)}=8(\bar{c}_1)^2\exp(-\bar{c}_0\lambda^2)$; Assumption \ref{assum_sg3} holds for some absolute constant $c^{(8)}$ and  $\varepsilon^{(4)}=8(\bar{c}_1)^2\exp(-\bar{c}_0\lambda^2)$.
\end{lem}
\begin{proof}
The main idea, similarly to Lemma \ref{lem:d1bcsearly}, is to take expectation of the uniform dither first. This transforms the expectations into much more amenable forms. 
The proof   can be found in Appendix \ref{app:provefact3}. 
\end{proof}
 
With all assumptions being validated,   Theorem \ref{thm:main} yields the following statement. 
\begin{theorem}[D1bCS via PGD]\label{thm:d1bcsthm}
   Consider the recovery of $\bx\in \calX:=\calK\cap \mathbb{B}_2^n$ for some star-shaped set $\calK$ from   $\by = \sign(\bA\bx-\btau)$ where the $m$ rows of $\bA$ satisfy Assumption \ref{assum1} and  $\btau\sim\scrU([-\lambda,\lambda]^m)$. To this end, we obtain   $\{\bx^{(t)}\}_{t=0}^\infty$ by running Algorithm \ref{alg:pgd} with $\bx^{(0)}\in \calX$ and $\eta=\lambda$.     
    There exist some absolute constants $c_1,C_2,C_3,c_4,C_5,C_6$, given any $r\in(0,c_1)$, if $\lambda\ge C_2$ and 
    \begin{align*}
        m\ge C_3\lambda\left(\frac{\omega^2(\calK_{(r)})}{r^3}+\frac{\scrH(\calX,r/2)}{r}\right),
    \end{align*}
    then with probability at least $1-\exp(-c_4\scrH(\calX,r))$,  for any $\bx\in\calX$ we have 
        $$\|\bx^{(t)}-\bx\|_2\le C_5 r\log^{1/2}\Big(\frac{\lambda}{r}\Big),\quad \forall\,t\ge C_6\log (r^{-1}).$$ 
\end{theorem}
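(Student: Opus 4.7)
The plan is to deduce Theorem~\ref{thm:d1bcsthm} directly from the main unified result Theorem~\ref{thm:main} by verifying each of Assumptions \ref{assum1}--\ref{assum_Puvup} in the D1bCS setting, plugging in the parameters $\Delta=2$, $\alpha=0$, $\beta=1$, and $\mu_4=2$ (so RAIC is required globally over $\mathbb{B}_2^n$), and then choosing the step size $\eta=\Lambda$ so that the leading coefficient of $\mu_1$ in (\ref{eq:bmmu}) vanishes. Since $\alpha=0$, $\kappa_\alpha=1$, and the additional cone/boundary condition (\ref{mu1mu2mu3}) is vacuous, this is the simplest branch of Theorem \ref{thm:main}.

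Concretely, the verifications are scattered through Section~\ref{sec:instance} and I would collect them as follows. Assumption~\ref{assum1} is given. Assumption~\ref{assump4} is trivial in the 1-bit setting ($\varepsilon^{(1)}=c^{(3)}=c^{(4)}=0$). Assumption~\ref{assum2} is verified with $c^{(1)}=1$ by a direct computation on $\tau_i\sim\scrU([-\Lambda,\Lambda])$. Assumption~\ref{assum3} with $c^{(2)}=L_0/4$ follows from the lower bound (\ref{eq:lowerPuv_d1bcs}) that uses (\ref{eq:L1L2equ}) together with the sub-Gaussian tail (\ref{eq:d1bcs_sgtail}) to discard the event $\{|\ba_i^\top\bu|\vee|\ba_i^\top\bv|>\Lambda\}$. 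Assumption~\ref{assum_Puvup} with $c^{(9)}=1/2$ is immediate from conditioning on $\ba_i$. Finally Assumptions~\ref{assum_sg1}--\ref{assum_sg3} are exactly Fact~\ref{fact4}, yielding $c^{(6)}=\tfrac12$ and $\varepsilon^{(2)}=\varepsilon^{(3)}=\varepsilon^{(4)}=8\bar{c}_1^2\exp(-\bar{c}_0\Lambda^2)$.

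With these constants in hand I would check the three quantitative hypotheses of Theorem~\ref{thm:main}. First, (\ref{eq:higher_main}) reduces to $r\le c_0/\log^{1/2}(\Lambda)$ for small $r$ and large $\Lambda$, which is weaker than the stated $r\le c_1$ after absorbing the log-factor into the universal constants (and this is why the final bound carries the $\log^{1/2}(\Lambda/r)$ factor). Second, the step-size condition (\ref{eq:eta_condition}) becomes $|1-\tfrac{\Lambda\cdot 2\cdot(1/2)}{\Lambda}|+\tfrac{2\Lambda\sum\varepsilon^{(j)}}{\Lambda}\le 0.49$, and the first term is exactly $0$ while the second is $O(\exp(-\bar{c}_0\Lambda^2))$, so $\Lambda\ge C_2$ suffices. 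Third, the sample complexity (\ref{eq:sample_main}) reduces to the stated one since $\Delta\vee\Lambda=\Lambda$. Invoking Theorem~\ref{thm:main} then yields that $(\calQ,\bA,\btau,\calK,\eta)$ respects $(\overline{\calX},\bm{\mu})$-RAIC with $\mu_1=O(\exp(-\bar{c}_0\Lambda^2))<\tfrac12$, $\mu_3=O(r\log^{1/2}(\Lambda/r))$, and by (\ref{eq:rate_main}) the iterates satisfy $\|\bx^{(t)}-\bx\|_2\lesssim r\log^{1/2}(\Lambda/r)$ for $t\gtrsim\log(r^{-1})$.

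The main obstacle is not in this final assembly but in verifying Fact~\ref{fact4}, i.e., the sub-Gaussianity and (near) expectation identities of the conditional variables $Z_i^{\bp,\bq}$, $\hat Z_i^{\bp,\bq}$, $\tilde Z_i^{\bp,\bq}(\bw)$ under the additional randomness of $\tau_i$. Unlike the 1bCS case in Fact~\ref{fact1}, the dither enlarges the conditioning event $E_{\bp,\bq}$ and couples the rotational decomposition $(\ba_i^\top\bbeta_1,\ba_i^\top\bbeta_2)$ with $\tau_i$, so the conditional P.D.F.\ has to be written out explicitly and the truncation of the event $\{|\ba_i^\top\bu|\vee|\ba_i^\top\bv|\le\Lambda\}$ used to separate the bulk behaviour (which contributes the constants $c^{(6)}=1/2$ etc.) from an exponentially small remainder (the $\varepsilon^{(j)}$'s). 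Once this is done, the remaining work is bookkeeping; note in particular that we do not need the quadratic-convergence branch of Theorem~\ref{thm:main} because the $\varepsilon^{(j)}$'s are not exactly zero, which is consistent with the stated iteration count $t\ge C_6\log(r^{-1})$.
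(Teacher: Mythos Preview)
Your proposal is correct and follows essentially the same route as the paper: verify Assumptions~\ref{assum1}--\ref{assum_Puvup} via the calculations in Section~4.2 and Fact~\ref{fact4}, set $\mu_4=2$, observe that $\eta=\Lambda=\frac{\Delta\vee\Lambda}{\Delta c^{(6)}}$ makes the first term in (\ref{eq:eta_condition}) vanish, use $\Lambda\ge C_2$ to kill the $\varepsilon^{(j)}$ contribution, and invoke Theorem~\ref{thm:main}. One small wording slip: the constraint $r\lesssim 1/\log^{1/2}(\Lambda)$ coming from (\ref{eq:higher_main}) is \emph{stronger}, not weaker, than the stated $r\le c_1$ when $\Lambda$ is large; the paper (implicitly) and you (explicitly) handle this by treating $\Lambda$ as a fixed large-enough constant, so that $c_1$ absorbs the $\log^{1/2}(\Lambda)$ dependence.
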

\begin{proof}
Under $\eta=\lambda=\frac{\Delta\vee\lambda}{c^{(6)} \Delta }$ and $\sum_{j=1}^3\varepsilon^{(j)}+5\varepsilon^{(4)}\le 56(\bar{c}_1)^2\exp(-\bar{c}_0\lambda^2)$, we use $\lambda \ge C_2$ with sufficiently large $C_2$  to ensure (\ref{eq:eta_condition}). We set $\mu_4=2$ and then   Theorem \ref{thm:main} implies $$\|\bx^{(t)}-\bx\|_2\le 2 (0.99^t) + C_3r\log^{1/2}\Big(\frac{\lambda}{r}\Big),\qquad \forall t\ge 0$$ which immediately leads to the desired claim.  
\end{proof}
\begin{rem}
    Similarly to Remark \ref{1bcsconcrete}, Theorem \ref{thm:d1bcsthm} implies that PGD is the first efficient and optimal algorithm for D1bCS of sparse vectors or low-rank matrices. Also, PGD achieves $\widetilde{O}((\frac{k}{m})^{1/3})$ for D1bCS of $\bx\in\sqrt{k}\mathbb{B}_1^n\cap \mathbb{B}_2^n$, matching with the fastest known efficient algorithm (i.e., the convex 
 program in \cite{jung2021quantized}). 
\end{rem} 

\subsection{Dithered Multi-Bit Compressed Sensing (DMbCS)}
We proceed to the more intricate multi-bit model DMbCS that concerns the recovery of $\bx\in \calX:=\calK\cap \mathbb{B}_2^n$ from $\by=\calQ_{\delta,L}(\bA\bx-\btau)$, where $\bA$ has rows $(\ba_i)_{i=1}^m$ satisfying Assumption \ref{assum1},  $\btau\sim\scrU([-\frac{\delta}{2},\frac{\delta}{2}]^m)$ is independent of $\bA$. In this setting, we have
$$\Delta = \delta,~~\Lambda=\frac{\delta}{2},~~\alpha=0,~~\beta = 1$$
 and the quantization thresholds $\{b_j:j\in[L-1]\}=\{j\delta :j=1-\frac{L}{2},\cdots,-1,0,1,\cdots,\frac{L}{2}-1\}$.

\subsubsection{The Sharp Local Approach}
We choose $\mu_4\le c_1\delta$ for sufficiently small $c_1$ and aim to show PGD converges to an error rate much sharper than $\mu_4$. We thus also suppose $r\lesssim \delta$. We validate Assumptions \ref{assum2}--\ref{assum_Puvup} in the next three lemmas.

\begin{lem}[Assumptions \ref{assum2}, \ref{assum3}, \ref{assum_Puvup} for DMbCS]
    In our DMbCS setting, if $L\delta\ge C_0$ for some absolute constant $C_0$, then Assumption \ref{assum2} holds with $c^{(1)}=2$, Assumptions \ref{assum3} and \ref{assum_Puvup} hold with some absolute constant $c^{(2)}>0$ and $c^{(9)} = 1$.  \label{fact7}
\end{lem}
\begin{proof}
    Compared to D1bCS, this appears more intricate because of the multiple quantization thresholds, while the core idea remains at using the randomness of $\tau_i\sim \scrU([-\frac{\delta}{2},\frac{\delta}{2}])$. The   proof can be found in Appendix \ref{app:provefact4}. 
\end{proof}

An additional intricacy for the multi-bit model is that the validation of Assumption \ref{assump4} requires substantial technical works. We have the following statement. 
\begin{lem}[Assumption \ref{assump4} for DMbCS]
    \label{fact8}
  In our DMbCS setting, there exist some absolute constants $c_2,C_3,c_4,C_5$ such that the following holds: given any sufficiently small $c_1>0$, if
    $$\mu_4\le c_1\delta,~~m\ge C_3\exp\Big(\frac{c_2}{c_1^2}\Big)\scrH(\calX,r/2)~\text{ and }~m\ge \ \frac{\delta \omega^2(\calK_{(r)})}{r^3},$$
 then with probability at least $1-2\exp(-c_4 \scrH(\calX,r))$, Assumption \ref{assump4} holds with $$ \varepsilon^{(1)}=C_5\exp\Big(-\frac{c_2}{2c_1^2}\Big)$$ and some absolute constants $c^{(3)}$ and $c^{(4)}$.  
\end{lem}
\begin{proof}
   See the complete proof in Appendix \ref{app:provefact5}. 
\end{proof}


\begin{lem}[Assumptions \ref{assum_sg1}--\ref{assum_sg3} for DMbCS]
    \label{fact9}
     In our DMbCS setting, assume that $\mu_4\le c_1\delta$ for some small enough $c_1$, and that $\ba_i$ satisfies (\ref{eq:d1bcs_sg_abs}) for some absolute constants $\bar{c}_0,\bar{c}_1$. We let 
     \begin{align*}
         \varepsilon' := 8 (\bar{c}_1)^2\Big[\exp\Big(-\frac{\bar{c}_0}{4c_1^2}\Big)+\exp\Big(-\frac{\bar{c}_0L^2\delta^2}{16}\Big)\Big],
     \end{align*}
     then we have that Assumption \ref{assum_sg1} holds for some absolute constant $c^{(5)}$, $c^{(6)}=1$ and $\varepsilon^{(2)}=\varepsilon'$; Assumption \ref{assum_sg2} holds for some absolute constant $c^{(7)}$, $\varepsilon^{(3)}=\varepsilon'$; Assumption \ref{assum_sg3} holds for some absolute constant $c^{(8)}$, $\varepsilon^{(4)}=\varepsilon'$. 
\end{lem}
\begin{proof}
    \rev{By using the randomness of $\tau_i$ first, we can essentially treat $\mathbbm{E}_{\tau_i}(\mathbbm{1}(E^{(i)}_{\bp,\bq}))$ as $\frac{|\ba_i^\top(\bp-\bq)|}{\delta}=\frac{\|\bp-\bq\|_2|\ba_i^\top\bbeta_1|}{\delta}$ for $(\bp,\bq)$ obeying $\|\bp-\bq\|_2\le 2\mu_4 \ll \delta$ and  $L\delta$ being large enough.} 
    The proof can be found in Appendix \ref{app:provefact6}.
\end{proof}
With Assumptions \ref{assum1}--\ref{assum_Puvup} being verified, we come to the following statement by invoking Theorem \ref{thm:main}.
\begin{coro}[DMbCS via PGD: Phase II]\label{cor:2ndstagemb}
In our DMbCS setting, we obtain $\{\bx^{(t)}\}_{t=0}^\infty$ by running Algorithm \ref{alg:pgd} with $\bx^{(0)}\in \calX\cap \mathbb{B}_2^n(\bx;c_0\delta)$ and $\eta =1$, where $c_0$ is sufficiently small. There exist some absolute constants $c_1,C_2,C_3,c_4,C_5$,    if $r\in (0,c_1\delta)$  for small enough $c_1$, $L\delta\ge C_2$ for large enough $C_2$, and 
    \begin{align}\label{sssaa}
        m\ge C_3\delta \left(\frac{\omega^2(\calK_{(r)})}{r^3}+\frac{\scrH(\calX,r/2)}{r}\right),
    \end{align}
    then with probability at least $1-\exp(-c_4\scrH(\calX,r))$, for any $\bx\in \calX$ and any $t\ge 0$ it holds that 
    \begin{align*}
        \|\bx^{(t)}-\bx\|_2 \le 0.99^t (c_0\delta) + C_5 r\log^{1/2}\Big(\frac{\delta}{r}\Big). 
    \end{align*}
\end{coro}
\begin{proof}
  We set $\mu_4=c_0\delta$ with small enough $c_0$ and   only need to check the conditions in Theorem \ref{thm:main}. 
    For $r\in (0,c_1\delta)$ with small enough $c_1$,  the sample complexity in Lemma \ref{fact8} (i.e., $m\ge \frac{\delta \omega^2(\calK_{(r)}) }{r^3}$)  can be ensured by (\ref{sssaa}) in our statement.  
    By Lemmas \ref{fact8}--\ref{fact9} and $\eta =1$, we have   
    $$\frac{\eta\Delta (\sum_{j=1}^4\varepsilon^{(j)})}{\Delta\vee\Lambda}\le C_5\exp\Big(-\frac{c_3}{2c_0^2}\Big)+24 (\bar{c}_1)^2\left[\exp\Big(-\frac{\bar{c}_0}{4c_0^2}\Big)+\exp\Big(-\frac{\bar{c}_0L^2\delta^2}{16}\Big)\right]$$
    for some absolute constants $c_3$ and $C_5$. Therefore, with small enough $c_0$ and large enough $L\delta$, we can ensure (\ref{eq:eta_condition}). Then we invoke Theorem \ref{thm:main} to establish the result. 
\end{proof}

\subsubsection{The Complementary Global Approach}
The incomplete part of Corollary \ref{cor:2ndstagemb} lies in the condition $\|\bx^{(0)}-\bx\|_2 \le c_0 \delta$ with small enough $c_0$, which remains non-trivial under fine quantization  corresponding to  small $\delta$. In this subsection, we invoke the complementary approach to show that {\it   Algorithm \ref{alg:pgd} with $\bx^{(0)}=0$ enters a region surrounding the underlying signal with radius $c_0\delta$ after a few iterations.}

In the generic development in Section \ref{sec:raic_qpe}, $r_1$ is the (order of) reconstruction accuracy we want to attain, so we should let $$r_1=c_1\delta$$ 
for small constant $c_1$ to be chosen, in order to fulfill the initialization requirement in Corollary \ref{cor:2ndstagemb}. Moreover, we should choose $\zeta=1$ and use $\ba_i^\top\bu$ to approximate   $\calQ_{\delta,L}(\ba_i^\top \bu-\tau_i)$ \cite{chen2024uniform,thrampoulidis2020generalized,xu2020quantized,jung2021quantized}, which leads to the decomposition $\calQ_{\delta,L}(\ba_i^\top\bu-\tau_i)=\ba_i^\top\bu +\xi_i(\bu)$. Accordingly, we set the step size $\eta =1$.

Therefore,   all that remains is to control $\hat{T}$ in (\ref{eq:qpe2raic}), which can be written as
\begin{align*}
  \hat{T}:= \sup_{\bu\in \overline{\calX}}\sup_{\bw\in\calK_{(r_1)}}\left|\frac{1}{mr_1}\big\langle\calQ_{\delta,L}(\bA\bu-\btau) -\bA\bu,\bA\bw\big\rangle\right|. 
\end{align*}
We are able to establish the following bound. 
\begin{lem}[Bounding $\hat{T}$]
    \label{factqpe}In our DMbCS setting we let $r_1=c_1\delta$ for some small enough $c_1$. There exist some constants $C_1,C_2,C_3,c_4,c_5,c_6,c_7$ that may depend on $c_1$, if $C_1\log^{1/2}(\frac{m}{\omega^2(\calX)})\le L\delta \le C_2 \omega(\calX)$ and 
    \begin{align}\label{pro6sam}
        m \ge C_3\left[\scrH(\calX,c_4\delta)+\frac{\omega^2(\calK_{(\delta)})}{\delta^2}+\frac{\omega(\calX)\omega(\calK_{(\delta)})}{\delta^2}\right],
    \end{align}
   then with probability at least $1-\exp(-\frac{c_5\omega^2(\calX)}{(L\delta)^2})-\exp(-c_6\scrH(\calX,2c_7\delta))$, we have $$\hat{T}\le r_1.$$
\end{lem}
\begin{proof}
    The proof can be found in Appendix \ref{app:factgqpe}. 
\end{proof}

 Then we invoke Theorem \ref{thm:conver_qpe}     to obtain the following statement.
\begin{coro}[DMbCS via PGD: Phase I]\label{coro3}
    In our DMbCS setting, suppose we   obtain $\{\bx^{(t)}\}_{t=0}^\infty$ by runing Algorithm \ref{alg:pgd} with $\eta=1$ and $\bx^{(0)}=0$. In the same setting as Proposition \ref{factqpe} (i.e., constraint on $L\delta$, sample size and  probability), for all $\bx\in \calX$ we have
    $$ \|\bx^{(t)}-\bx\|_2 \le \left(\frac{C_8\omega(\calK_{(\delta)})}{\delta\sqrt{m}}\right)^t+ 24c_1\delta.$$ 
\end{coro}
\begin{proof}
     Under our assumptions,  Proposition \ref{factqpe} implies $\hat{T}\le r_1$, and we  further note that other conditions needed in Theorem \ref{thm:conver_qpe} are satisfied. Hence, our statement follows from Theorem \ref{thm:conver_qpe}. 
\end{proof}
\subsubsection{Final Result}
  By running Algorithm \ref{alg:pgd} with $\bx^{(0)}=0$, Corollary \ref{coro3} establishes the first stage that the sequence enters $\mathbb{B}_2^n(\bx;c_0\delta)$, and then the subsequent refinement to a sharp error rate is characterized by Corollary \ref{cor:2ndstagemb}. 
We thus obtain the following final theorem for DMbCS. 

\begin{theorem}[DMbCS via PGD]\label{thm:mb}
   Consider the recovery of $\bx\in\calX:=\calK\cap\mathbb{B}_2^n$ for some star-shaped set $\calK$ from $\by=\calQ_{\delta,L}(\bA\bx-\btau)$ where the $m$ rows of $\bA$ satisfy Assumption \ref{assum_sg1} and $\btau\sim\scrU([-\frac{\delta}{2},\frac{\delta}{2}]^m)$. To this end, we obtain $\{\bx^{(t)}\}_{t=0}^\infty$ by running Algorithm \ref{alg:pgd} with $\bx^{(0)}=0$ and $\eta = 1$.  There exist some absolute constants $c_1,C_2,C_3,C_4,c_5,c_6,c_7,C_8,C_9$ such that, given any $r\in (0,c_1\delta)$, if $C_2 \log^{1/2}(\frac{m}{\omega^2(\calX)})\le L\delta\le C_3\omega(\calX)$ and 
    \begin{align}\label{eq:mb_sam}
        m\ge C_4\left(\frac{\delta\cdot \omega^2(\calK_{(r)})}{r^3}+\frac{\delta\cdot \scrH(\calX,r/2)}{r}+ \frac{\omega(\calX)\cdot\omega(\calK_{(\delta)})}{\delta^2}\right),
    \end{align}
     then with probability at least $1-\exp(-c_5\scrH(\calX,c_6\delta))-\exp(-\frac{c_7\omega^2(\calX)}{(L\delta)^2})$, for any $\bx\in\calX$ we have 
    \begin{align*}
        \|\bx^{(t)}-\bx\|_2 \le C_8r\log^{1/2}\Big(\frac{\delta}{r}\Big),\qquad\forall t\ge C_9\log(r^{-1}). 
    \end{align*}
\end{theorem}
\begin{proof}
Because $\scrH(\calX,c\delta)\lesssim \frac{\delta}{r}\scrH(\calX,\frac{r}{2})$ and $\frac{\omega^2(\calK_{(\delta)})}{\delta^2}\lesssim \frac{\delta \omega^2(\calK_{(r)})}{r^3}$, (\ref{eq:mb_sam}) can fulfill the sample complexity requirements in    Corollaries \ref{cor:2ndstagemb}--\ref{coro3}.
     Suppose the small enough $c_0$ in Corollary \ref{cor:2ndstagemb} has been  fixed, and regarding this $c_0$ we invoke Corollary \ref{coro3} with small enough $c_1$ to ensure $$\|\bx^{(t)}-\bx\|_2\le 0.99^t + \frac{c_0\delta}{2},\qquad \forall t \ge 0.$$ Note that this implies $\|\bx^{(t)}-\bx\|_2\le c_0\delta$ for any $t\ge t_0 := \lceil C_2\log(\delta^{-1})\rceil$ for some absolute constant $C_2$. Then, Corollary \ref{cor:2ndstagemb}  leads to $$\|\bx^{(t_0+t)}-\bx\|_2\le 0.99^t (c_0\delta) + C_3 r\log^{1/2}(\frac{\delta}{r}),\qquad\forall t\ge 0,$$ which implies $\|\bx^{(t_0+t)}-\bx\|_2\le C_4r \log^{1/2}(\frac{\delta}{r})$ for any $t\ge t_1:= \lceil C_5\log(\frac{\delta}{r})\rceil$.  The result follows due to $t_0+t_1\le C_6 \log(r^{-1})$.  
\end{proof}
\begin{rem}\label{rem:88}
    Corollary \ref{cor:2ndstagemb}, when specialized to $\calK=\Sigma^n_k$ and $M_{\bar{r}}^{n_1,n_2}$, yields the error rates $\widetilde{O}(\frac{\delta k}{m})$ and $\widetilde{O}(\frac{\delta\bar{r}(n_1+n_2)}{m})$. Under $L$-level quantizer, since $\delta=\widetilde{O}(\frac{1}{L})$ suffices to fulfill $L\delta \gtrsim \log^{1/2}(\frac{m}{\omega^2(\calK\cap \mathbb{B}_2^n)})$, PGD achieves   error rates $\widetilde{O}(\frac{k}{mL})$ and $\widetilde{O}(\frac{\bar{r}(n_1+n_2)}{mL})$ that nearly match the lower bounds in Theorem \ref{thm:lower}. For the recovery of sparse vectors, to our best knowledge, this is the first result for a (memoryless) multi-bit quantized compressed sensing system to be decoded nearly optimally by an efficient algorithm. 
\end{rem}
We believe the condition $m\gtrsim \delta^{-2}\omega(\calX)\omega(\calK_{(c_5\delta)})$ is a proof artifact and can be removed. While we leave further investigation to future work, we mention two regimes where our current arguments are able to get rid of this constraint. 
\begin{rem}\label{rem:99}
   The constraint $m\gtrsim \delta^{-2}\omega(\calX)\omega(\calK_{(c_5\delta)})$, which arises in Lemma \ref{lemsatu} (bounding the effect of saturation), is benign for the most interesting cases but precludes fine quantization with very small $\delta$. We can get rid of the constraint in either of the following two regimes (where there is no saturation and hence Lemma \ref{lemsatu} is no longer needed): (i) We consider the uniform quantizer $\calQ_\delta$ without the saturation issue; (ii) We only pursue the non-uniform recovery of a fixed signal $\bx\in \calX$  as with results in \cite{thrampoulidis2020generalized}; the reason is that for fixed $\bx$ we have $\|\bA\bx-\btau\|_\infty\lesssim\sqrt{\log m}$ w.h.p.; thus, if $L\delta\gtrsim \sqrt{\log m}$, then w.h.p. we have $\by = \calQ_{\delta,L}(\bA\bx- \btau)=\calQ_{\delta}(\bA\bx-\btau)$. 
\end{rem}

\section{Extension to Noisy Setting}\label{sec:discuss}
\rev{While previous sections focus on quantized compressed sensing under noiseless measurements $\by= \calQ(\bA\bx-\btau-\be)$,
it is worth discussing how to extend our results to a noisy setting. We shall focus on two types of noise (or corruption) commonly treated in prior works: a small fraction of adversarial bit flips \cite{plan2012robust,dirksen2021non,awasthi2016learning,matsumoto2024robust,chinot2022adaboost} and pre-quantization random noise \cite{plan2012robust,dirksen2021non,jung2021quantized,matsumoto2025learning}.}

\subsection{Adversarial corruption} \label{sec:noise}
In a follow-up work of \cite{matsumoto2024binary}, the same authors showed \cite{matsumoto2024robust} that NBIHT achieves error rate $\widetilde{O}(\frac{k}{m}+\bar{\zeta})$  under a $\bar{\zeta}$-fraction of adversarial bit flips, that is, a setting where one observes $\by_{\rm cor}$ obeying $$d_H(\by_{\rm cor},\sign(\bA\bx))\le \bar{\zeta}m.$$ 
Under the same corruption pattern, Adaboost achieves $\widetilde{O}((\frac{k}{m}+\bar{\zeta})^{1/3})$ for signals in $\sqrt{k}\mathbb{B}_1^n\cap\mathbb{S}^{n-1}$, as shown in \cite[Corollary 2.3]{chinot2022adaboost}.

Regarding our PGD algorithm, we shall  note that the robustness to adversarial bit flips is in fact a straightforward extension of the current corruption-free results. 
Suppose that we observe $\by_{\rm cor}=\by+\be$ under the adversarial corruption $\be=(e_1,\cdots,e_m)^\top$, then the update rule of PGD reads
\begin{align}
    \bx^{(t)} =\calP_{\mathbbm{A}_\alpha^\beta}\left( \calP_{\calK}\Big(\bx^{(t-1)}-\eta\cdot \bh(\bx^{(t-1)},\bx)-\frac{\eta\cdot\bA^\top\be}{m}\Big)\right).
\end{align}
By using an adapted version of RAIC to imply convergence, the only difference is that an additional term $
     O\big(\frac{\eta}{mr}\|\bA^\top\be\|_{\calK^\circ_{(r)}}\big)$
  contributes to the estimation error. We shall characterize the corruption jointly by its sparsity and $\ell_2$-norm:  
$$\|\be\|_0 \le \bar{\zeta}m\quad\text{and}\quad \|\be\|_2\le \bar{\gamma}\sqrt{m}.$$
 Then, Lemma \ref{lem:max_ell_sum}, together with the sample complexity in Theorem \ref{thm:main}, establishes the high-probability event 
        \begin{align*}
        &\frac{\eta}{mr}\|\bA^\top\be\|_{\calK^\circ_{(r)}}=\frac{\eta}{mr}\sup_{\bw\in \calK_{(r)}}\sum_{i=1}^m e_i \ba_i^\top\bw \\& \le \frac{\eta}{m}\sup_{\bw\in r^{-1}\calK_{(r)}}\max_{\substack{I\subset [m]\\|I|\le \bar{\zeta}m}}\sum_{i\in I}e_i|\ba_i^\top\bw|\\& \le \frac{\eta\|\be\|_2}{m}\sup_{\bw\in r^{-1}\calK_{(r)}}\max_{\substack{I\subset [m]\\|I|\le \bar{\zeta}m}} \Big(\sum_{i\in I}|\ba_i^\top\bw|^2\Big)^{1/2} \\&\lesssim \frac{\eta\bar{\gamma}\cdot\omega(\calK_{(r)})}{\sqrt{m}r}+\eta\bar{\gamma}\sqrt{\bar{\zeta}\log(\bar{\zeta}^{-1})}\\& \lesssim \eta\bar{\gamma}\sqrt{\frac{r}{\Delta\vee\Lambda}}+\eta\bar{\gamma}\sqrt{\bar{\zeta}\log(\bar{\zeta}^{-1})}.
    \end{align*}
For 1bCS, we can always set $\bar{\gamma}=(2\bar{\zeta})^{1/2}$; combined with $(r\bar{\zeta})^{1/2}\le 2r+2\bar{\gamma}$,  we find that the corruption only increments the estimation error by $O(\bar{\zeta}\log^{1/2}(\bar{\zeta}^{-1}))$.  This   recovers the major result in \cite{matsumoto2024robust} and \rev{improves on the corruption term $O(\bar{\zeta}^{1/3})$ from \cite[Corollary 2.3]{chinot2022adaboost}. In fact, a subsequent work shows \cite[Section 6]{abdalla2026robust} that this approach is more general in the following sense: in the analysis of PGD via RAIC, robustness can be obtained by bounding an additional random process (such as the above $\frac{\eta}{mr}\|\bA^\top\be\|_{\calK_{(r)}^\circ}$). Moreover, it is straightforward to argue that the corruption term $O(\bar{\zeta}\log^{1/2}(\bar{\zeta}^{-1}))$ is tight up to a logarithmic factor; see \cite[Remark 6.2]{abdalla2026robust}.}

\subsection{Pre-quantization random noise} 
\rev{The robustness to pre-quantization noise, which arises prior to the quantization and  corrupts $\by$ to $\by_{\rm cor}:=\calQ(\bA\bx-\btau-\be)$, is more intricate. For instance, consider $\be\sim\calN(0,\sigma^2\bI_m)$, then \cite[Theorem 1.6]{dirksen2023robust} established a lower bound $\Omega(\sigma\sqrt{k/m})$ in the recovery of $k$-sparse vector. As such,  our PGD upper bound $\widetilde{O}(\frac{k}{mL})$ is no longer attainable if, for instance, $\sigma\gtrsim 1$. After the first appearance of our paper on arXiv, \cite{matsumoto2025learning} appeared and contributed significantly to the problem of recovery of $\bx\in\Sigma^{n,*}_k$ from $\by=\sign(\bA\bx - \be)$ where $\be\sim \calN(0,\sigma^2\bI_m)$; this problem can be viewed as 1bCS with pre-quantization Gaussian noise, but indeed, is also the sparse probit regression model in statistics. For any $\sigma>0$, it was shown \cite[Corollary 4.3]{matsumoto2025learning} that NBIHT (with the noisy observations) achieves 
\[\|\hat{\bx}_{nbiht}-\bx\|_2 =\widetilde{O}\big(\frac{k}{m}\big)+ \widetilde{O}\bigg(\sqrt{\frac{(\sigma+\sigma^2)k}{m}}\bigg),\]
implying that a tiny pre-quantization noise ($\sigma\lesssim \frac{k}{m}$) does not affect the optimal rate of $\widetilde{O}(k/m)$. Notably, this upper bound is likely tight (up to logarithmic factors) in light of the lower bounds for logistic regression \cite{hsu2024sample}. For D1bCS, some treatment to possibly heavy-tailed pre-quantization noise can be found in \cite{dirksen2020one}. }


\section{Numerical Simulations}\label{sec:experiment}
 We demonstrate the error rates of Algorithm \ref{alg:pgd} implied by  Theorems \ref{thm:1bcs}--\ref{thm:mb}  by simulating the following: 
\begin{itemize}
    \item \textbf{1bCS} where we   recover $\bx\in \calK=\Sigma^{n}_k,~  M^{n_1,n_2}_{\bar{r}},~\sqrt{k}\mathbb{B}_1^n$ with unit $\ell_2$-norm from $\by = \sign(\bA\bx)$ under $\bA\sim \calN^{m\times n}(0,1)$, by running  Algorithm \ref{alg:pgd} with  $\bx^{(0)}\in \calK\cap \mathbb{S}^{n-1}$ and  $\eta = \sqrt{\pi/2}$ for $100$ iterations; 
    \item \textbf{D1bCS} where we recover $\bx\in\calK=\Sigma^{n}_k,~  M^{n_1,n_2}_{\bar{r}},~\sqrt{k}\mathbb{B}_1^n$ with $\ell_2$-norm not greater than $1$ from $\by=\sign(\bA\bx-\btau)$, under $\bA$ having i.i.d. $\{-1,1\}$-valued Bernoulli entries (Bernoulli design) and $\btau\sim\scrU([-\Lambda,\Lambda]^m)$, by running Algorithm \ref{alg:pgd} with $\bx^{(0)}=0$  and $\eta= \Lambda$ for $100$ iterations; 
    \item \textbf{DMbCS} where we recover   $\bx\in\calK=\Sigma^{n}_k,~ M^{n_1,n_2}_{\bar{r}},~\sqrt{k}\mathbb{B}_1^n$ with $\ell_2$-norm not greater than $1$ from $\by = \calQ_{\delta,L}(\bA\bx-\btau)$, under Bernoulli design $\bA$ and $\btau\sim\scrU([-\frac{\delta}{2},\frac{\delta}{2}]^m)$, by   running Algorithm \ref{alg:pgd} with $\bx^{(0)}=0$ and $\eta= 1$ for $100$ iterations;
\end{itemize}

We draw a $k$-sparse signal from a uniform distribution over $\Sigma^n_k\cap \mathbb{S}^{n-1}$, and generate a rank-$\bar{r}$ matrix in $\mathbb{R}^{n_1\times n_2}$ by drawing $\bX_0\sim \calN^{n_1\times n_2}(0,1)$ and then retaining only the top $\bar{r}$ SVD components of $\bX_0$. Then, the signal is normalized to unit $\ell_2$-norm for 1bCS,   rescaled to $\|\bx\|_2\sim\scrU([0,1])$ for D1bCS and DMbCS. For effectively sparse signals living in $\mathbb{B}_1^n(\sqrt{k})$, we generate it as 
\begin{align*}
    \bx = \big(\underbrace{\pm a,\pm a,\cdots,\pm a}_{\text{first~$c$~entries}},\underbrace{\pm b,\pm b,\cdots,\pm b}_{\text{last~$n-c$~entries}}\big)^\top
\end{align*}
where $c\sim \scrU(\{1,\cdots,\lceil 0.6k\rceil\})$, $a=\frac{\sqrt{k}+\sqrt{k+n(n-k-c)/c}}{n}$,  $b=\frac{\sqrt{k}-ca}{n-c}$, and the sign of each entry is $1$ or $-1$ with equal probability. It is easy to verify that signals generated in this way have unit $\ell_2$-norm and $\ell_1$-norm  equaling $\sqrt{k}$, respectively. Each data point is averaged over $50$ independent trials. \rev{The MATLAB code used to generate the figures in this paper is available at \url{https://github.com/junrenchen58/optimal-qcs}.}

Our experimental results are reported in Figures \ref{fig:1bcs}--\ref{fig:dmbcs}. More details are provided in the caption. For the recovery of sparse signals and low-rank matrices (the first two sub-figures), we clearly find that the data points roughly shape a straight line with slope $-1$, confirming that the errors decay  with $m$ in the optimal rate $O(m^{-1})$. For the recovery of effectively sparse signals (the rightmost sub-figures), we observe decay rates that are noticeably slower than $O(m^{-1})$ but faster than $O(m^{-1/3})$. In fact, some curves clearly follow the $m^{-1/3}$ decay rate and suggest that this is tight in recovering effectively sparse signals via PGD. 
Moreover, simultaneously doubling $k$ (or $\bar{r}$) and the measurement number $m$ (e.g., $(k,m)=(3,400)$ v.s. $(k,m)=(6,800)$) almost maintains the same estimation error, which corroborates the scaling laws $\frac{k}{m}$ and $\frac{\bar{r}}{m}$.  It is   worth mentioning that $\Lambda$ for D1bCS should be chosen to fit the signal norm, and not doing so  leads to performance degradation, as seen by the curves of $\Lambda=0.8,~3.2$ in \rev{Figure~\ref{fig:d1bcs} (Left)}. This is also true for $L\delta$ in DMbCS, though we only report the results under the appropriate choice $L\delta=5$.

In the multi-bit model DMbCS we  additionally track the role of $L$ to corroborate the scaling law $\frac{1}{mL}$. To this end, we simultaneously double $L$ and halve $m$ to maintain the same $mL$ (e.g., $(L,m)=(4,200)$ v.s. $(L,m)=(8,100)$); we have observed  that   the   estimation errors in these cases almost coincide, which is consistent with our theoretical error rate. This is however not true for overly large $L$: the curves of $L=32,~64$ in \rev{Figure~\ref{fig:dmbcs} (Left) and $L=32$ in Figure~\ref{fig:dmbcs} (Middle)} fail to maintain the same estimation errors as others, especially under small measurement number. This phenomenon can also be nicely interpreted by our Theorem \ref{thm:mb}: consider the recovery of $k$-sparse signals, we will need $m\gtrsim k\log(\frac{en}{k})$ to fulfill (\ref{eq:mb_sam}), namely, the measurement number should exceed a threshold that is independent of $L$ and $\delta$; when $\frac{mL}{k}$ is small, the measurement numbers for the cases with large $L$  may not be numerous enough to fulfill (\ref{eq:mb_sam}).

\begin{figure}[ht!]
	\begin{centering}
		\includegraphics[width=0.3\columnwidth]{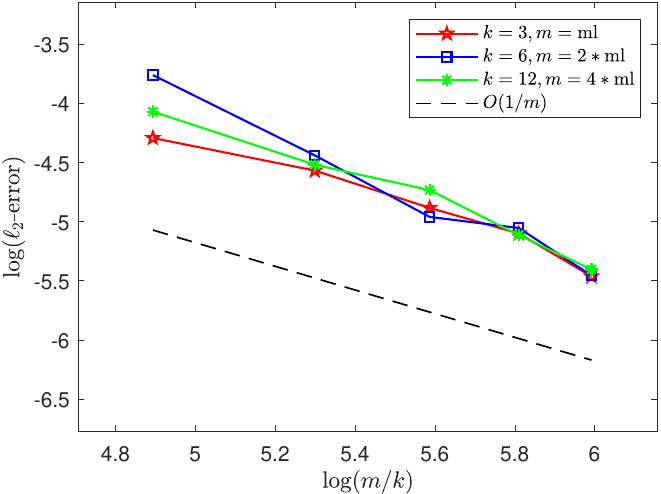} \quad \includegraphics[width=0.3\columnwidth]{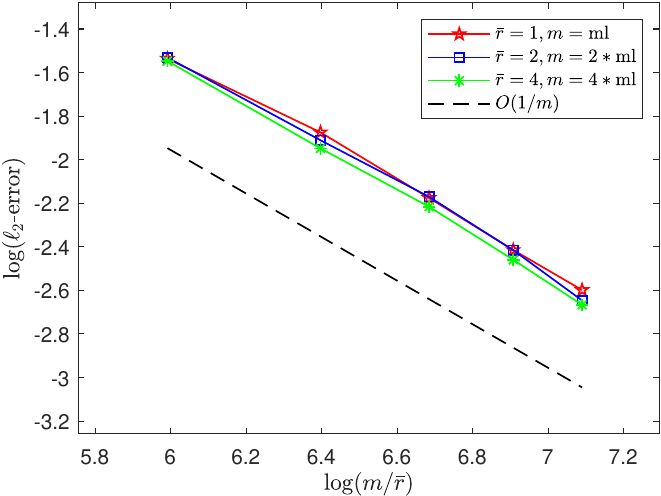}\quad
  \includegraphics[width=0.3\columnwidth]{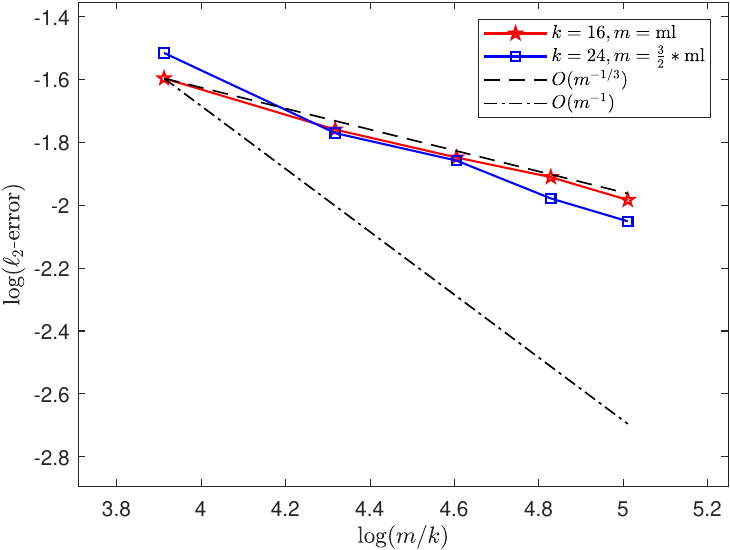}
		\par
	\end{centering}
	
	\caption{\label{fig:1bcs}\small  In 1bCS, PGD achieves error rates $\widetilde{O}(\frac{k}{m})$ for $\bx\in \Sigma^n_k$ (left), $\widetilde{O}(\frac{\bar{r}(n_1+n_2)}{m})$ for $\bx\in M^{n_1,n_2}_{\bar{r}}$ (middle), and $\widetilde{O}((\frac{k}{m})^{1/3})$ for $\bx\in \sqrt{k}\mathbb{B}_1^n\cap\mathbb{S}^{n-1}$ (right). In sparse recovery, we recover $k$-sparse $500$-dimensional signals under  $m=c\cdot\text{ml}$ with $\text{ml}=400:200:1200$. In low-rank recovery, we recover rank-$\bar{r}$ $25\times 25$ matrices under  $m=c\cdot\text{ml}$ with $\text{ml}=400:200:1200$. In recovering effectively sparse signals, we test $300$-dimensional signals under   $m=c\cdot\text{ml}$ with $\text{ml}=800:400:2400$.} 
\end{figure}

\begin{figure}[ht!]
	\begin{centering}
		~\includegraphics[width=0.3\columnwidth]{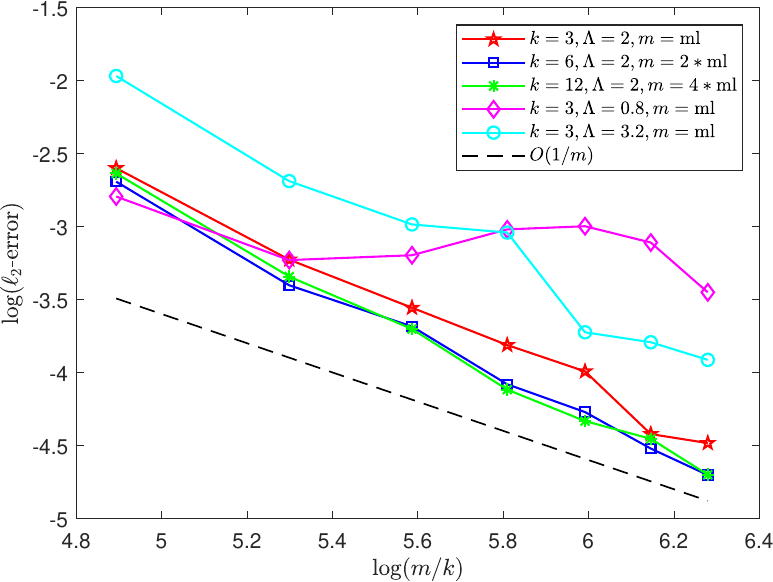} \quad \includegraphics[width=0.3\columnwidth]{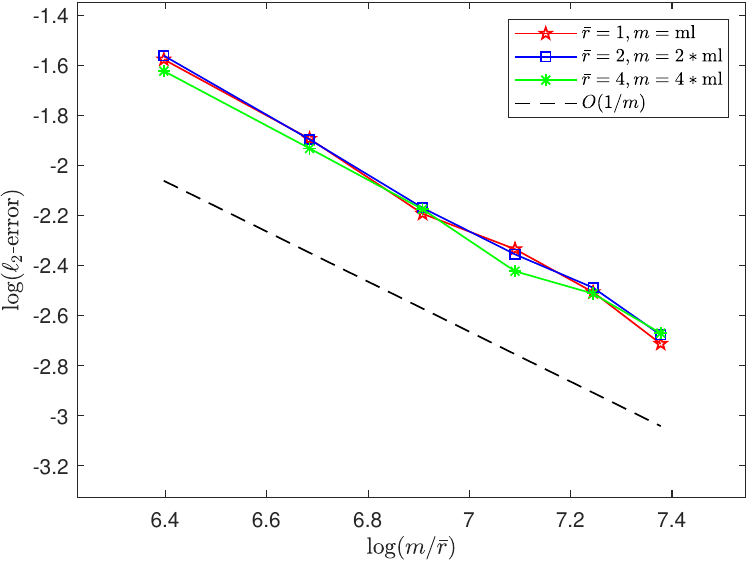}\quad \includegraphics[width=0.3\columnwidth]{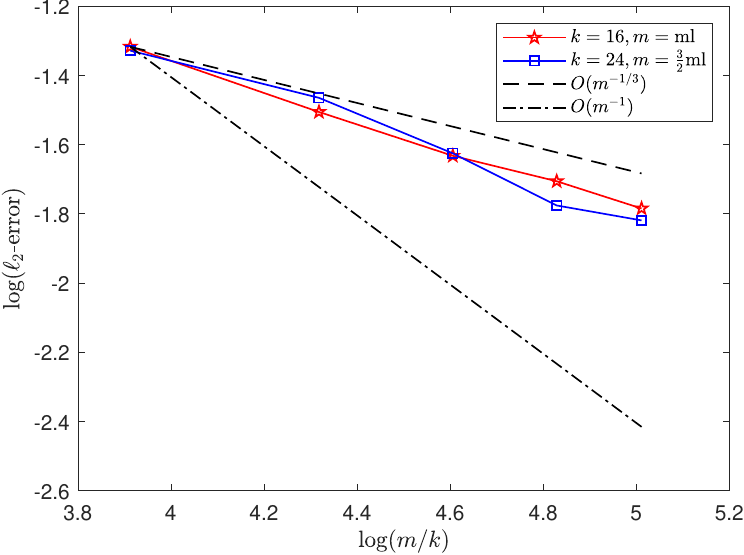}
		\par
	\end{centering}
	
	\caption{\label{fig:d1bcs}\small In D1bCS, PGD achieves error rates $\widetilde{O}(\frac{k}{m})$ for $\bx\in \Sigma^n_k$ (left), $\widetilde{O}(\frac{\bar{r}(n_1+n_2)}{m})$ for $\bx\in M^{n_1,n_2}_{\bar{r}}$ (middle), and $\widetilde{O}((\frac{k}{m})^{1/3})$ for $\bx\in \sqrt{k}\mathbb{B}_1^n\cap\mathbb{S}^{n-1}$ (right). In sparse recovery, we recover $k$-sparse $500$-dimensional signals under   $m=c\cdot\text{ml}$ with $\text{ml}=400:200:1600$. In low-rank recovery, we recover rank-$\bar{r}$ $25\times 25$ matrices under measurement number $m=c\cdot\text{ml}$ with $\text{ml}=600:200:1600$ with $\Lambda=1.5$. In recovering effectively sparse signals, we test $300$-dimensional signals under   $m=c\cdot\text{ml}$ with $\text{ml}=800:400:2400$.}
\end{figure}

\begin{figure}[ht!]
	\begin{centering}
		~~\includegraphics[width=0.29\columnwidth]{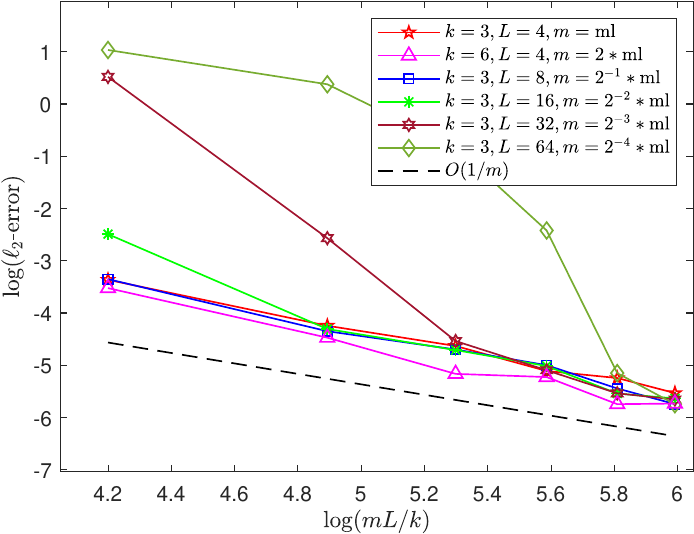} \quad \includegraphics[width=0.3\columnwidth]{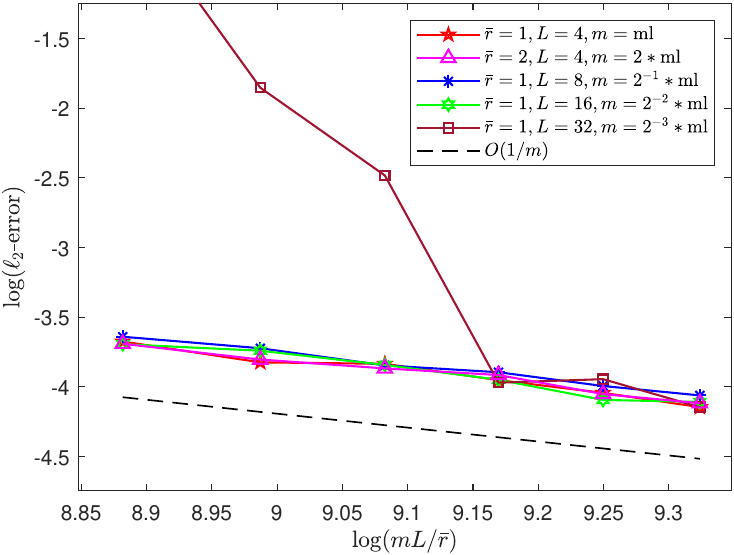}\quad \includegraphics[width=0.3\columnwidth]{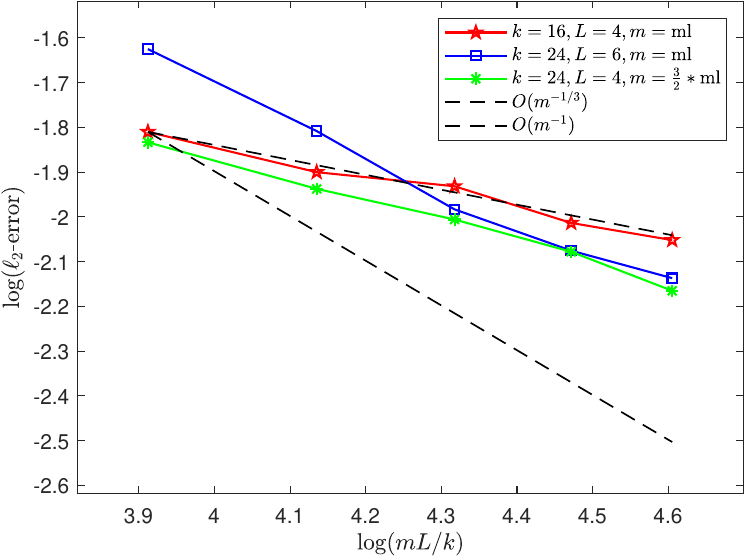}
		\par
	\end{centering}
	
	\caption{\label{fig:dmbcs}\small In DMbCS with $\delta=\frac{5}{L}$, PGD achieves error rates $\widetilde{O}(\frac{k}{mL})$ for $\bx\in \Sigma^n_k$ (left), $\widetilde{O}(\frac{\bar{r}(n_1+n_2)}{mL})$ for $\bx\in M^{n_1,n_2}_{\bar{r}}$ (middle), and $\widetilde{O}((\frac{k}{mL})^{1/3})$ for $\bx\in \sqrt{k}\mathbb{B}_1^n\cap\mathbb{S}^{n-1}$ (right). In sparse recovery, we recover $k$-sparse $500$-dimensional signals under measurement number $m=c\cdot\text{ml}$ with $\text{ml}=200:200:1200$. In low-rank recovery, we recover rank-$\bar{r}$ $25\times 25$ matrices under measurement number $m=c\cdot\text{ml}$ with $\text{ml}=1800:200:2800$. In recovering effectively sparse signals, we test $400$-dimensional signals under   $m=c\cdot\text{ml}$ with $\text{ml}=800:200:1600$.}
\end{figure}

\section{Conclusion}\label{sec:conclusion}
The goal of quantized compressed sensing is to recover   structured signals $\bx$ from only a few quantized compressive measurements. The information-theoretic lower bounds for recovering $k$-sparse signal are known as $\widetilde{O}(\frac{k}{mL})$ in $\ell_2$-norm. However, most efficient algorithms in the literature are sub-optimal and only achieve error rates inferior to $\widetilde{O}(\frac{1}{L}(\frac{k}{m})^{1/2})$.   The only exception is   NBIHT that was recently shown \cite{matsumoto2024binary} to be near-optimal for the very specific problem of recovering $k$-sparse $\bx$ from $\by=\sign(\bA\bx)$. Moreover, for recovering the effectively sparse signals in $\sqrt{k}\mathbb{B}_1^n$, the best known error rate (including that for the intractable HDM) is at the order $\widetilde{O}((\frac{k}{mL})^{1/3})$; see Table \ref{table1}.

In this work, we provide a unified treatment to the general quantized compressed sensing problem by analyzing a PGD algorithm. An intriguing finding is that, under sub-Gaussian design, PGD is a procedure that achieves the same error rate as HDM under the following conditions:   {\it separation probability}, {\it small-ball probability} and {\it some moment bounds}. By validating these conditions for the popular models 1bCS, D1bCS and DMbCS, we establish error rates for PGD that improve on or match the fastest error rates in all instances.

We mention a few questions that need further investigation.
An important direction  is to develop information-theoretic lower bound for the approximately sparse cases. For example, the rate $\widetilde{O}((\frac{k}{m})^{1/3})$ is the best known rate for recovering signals in $\sqrt{k}\mathbb{B}_1^n$, but it remains unclear whether this is tight in any sense, as we discussed in Remark \ref{lower_eff}. Moreover, the current analysis is built upon sub-Gaussian sensing matrix. It is of great interest to relax this assumption to fit better with the sensing matrix in actual signal processing applications or the features in machine learning problems.

\subsection*{Acknowledgment}
This work was done while JC was visiting the Department of Statistics at Columbia University in the summer of 2024, during his Ph.D. studies in the Department of Mathematics at The University of Hong Kong. The authors would like to thank the reviewers for some helpful comments that helped improve the quality and readibility of this paper. 
\bibliography{libr}
\bibliographystyle{plain}

\newpage

 {\centering \huge \bf Appendix \par}
 \begin{appendix}

\section{Roadmap and Notation}
  Below is the roadmap for the Appendix:  
\begin{itemize}
    \item The main aim of this appendix is to provide a table of notation for the convenience of the readers; 
    \item Appendix \ref{lower_proof} provides the proof of Theorem \ref{thm:lower}; 
    \item In Appendix \ref{app:tessella}, we prove Theorems \ref{thm:local_embed}--\ref{thm:itup};
    \item Appendix \ref{app:convergence} proves Theorem \ref{thm:convergence} regarding the convergence of PGD implied by RAIC; 
    \item Appendix \ref{app:provemain} proves our main result, Theorem \ref{thm:main}; 
    \item Appendix \ref{app:provecomplement} proves Theorem \ref{thm:conver_qpe} concerning the convergence via the complementary approach; 
    \item Appendix \ref{app:unified11} collects the missing proofs in the unified analysis for the sharp approach to RAIC in Section \ref{sec:sharp_ana} and the complementary approach in Section \ref{sec:raic_qpe};
    \item In Appendix \ref{app:verify}, we show the optimality of PGD for the specific models of 1bCS, D1bCS and DMbCS by validating all the assumptions made in our unified framework;
    \item The auxiliary lemmas that support our technical proofs are collected in Appendix \ref{app:lemma}.  
\end{itemize}
The recurring non-standard notation can be found in     Table \ref{tbl:notation}.

\begin{table}[ht]
\centering
\caption{List of recurring non-standard notation \label{tbl:notation}}
\vspace{1mm}

{\small
\begin{tabular}{|c|c|}
\hline 
$\bA,\bx,\by$ & sub-Gaussian   matrix, true signal, vector of quantized measurements\tabularnewline
\hline 
$\calQ$ & a generic $L$-level quantizer 
\tabularnewline
\hline 
$\Delta$ & resolution of the quantizer  (set as $2$ for 1-bit quantizer)
\tabularnewline
\hline 
$\btau,\Lambda$ & uniform dither $\btau\sim \scrU[-\Lambda,\Lambda]^m$, dithering scale
\tabularnewline
\hline 
$\calK$ &  the star-shaped set   that captures signal structure 
\tabularnewline
\hline
$\mathbbm{A}_\alpha^\beta$ &  the annulus   that captures signal norm 
\tabularnewline
\hline
$\calX,\overline{\calX}$ &  signal set  $\calX= \calK\cap \mathbbm{A}_\alpha^\beta$, enlarged signal set $\overline{\calX}=(2\calK)\cap \mathbbm{A}_\alpha^\beta$  
\tabularnewline
\hline 
$\calQ_\delta,\calQ_{\delta,L}$ &  uniform quantizer (\ref{eq:Qdelta}), $L$-level uniform quantizer (\ref{eq:QdeltaL})
\tabularnewline
\hline 
 $\sfP_{\bu,\bv}$ & the probability of $\bu,\bv$ being distinguished (or separated)
\tabularnewline
\hline 
$\calL(\bu),\calL_1(\bu)$ & hamming distance loss, one-sided $\ell_1$ loss  
\tabularnewline
\hline 
$\bh(\bu,\bv)$ & subgradient of $\calL_1(\bu)$ under true signal $\bv$ (\ref{eq:huvdef1}) 
\tabularnewline
\hline 
$\hat{\bh}(\bu,\bv)$ &  the clipped subgradient (\ref{eq:hathpq})
\tabularnewline
\hline 
 $r $ & desired accuracy for the sharp approach  
\tabularnewline
\hline 
 $r _1$ & desired accuracy for the complementary approach  
\tabularnewline
\hline 
$\calD^{(2)}_r $ & $\{(\bp,\bq)\in\overline{\calX}:\|\bp-\bq\|_2\le r\}$
\tabularnewline
\hline 
$\calN^{(2)}_{r,\mu_4}$ & $\{(\bp,\bq)\in\calN_r\times\calN_r: 0<\|\bp-\bq\|_2\le 2\mu_4\}$
\tabularnewline
\hline 
$\bR_{\bp,\bq}$ & the index set $\{i\in[m]:\calQ(\ba_i^\top\bp-\tau_i)\ne\calQ(\ba_i^\top\bq-\tau_i)\}$  
\tabularnewline
\hline
$E^{(i)}_{\bp,\bq}$ & the event $\{i\in \bR_{\bp,\bq}\}$
\tabularnewline
\hline
$c^{(i)},~ i\in[9]$ & constants in Assumptions \ref{assum2}--\ref{assum_Puvup} 
\tabularnewline
\hline
$\varepsilon^{(i)},~i\in[4]$ & sufficiently small constants in Assumptions  \ref{assump4}--\ref{assum_sg3} 
\tabularnewline
\hline
$\kappa_\alpha$ & constant that equals $1$ when $\alpha=0$ and equals $2$ when $\alpha>0$ 
\tabularnewline
\hline
$\zeta,\xi_i(\bu)$ & viewing nonlinear observations as noisy linear ones (\ref{nonlinearview})
\tabularnewline
\hline
\end{tabular}
}
\end{table}

\section{The Proof of Theorem \ref{thm:lower} (Lower Bounds)}\label{lower_proof}
\begin{proof}
We allow the constants in this proof to depend on the given $\alpha$ and $\beta$. Because $V_K\subset \calK$, 
we only need to establish the lower bounds for the  easier problems of recovering $\bx\in V_K \cap \mathbbm{A}_{\alpha}^\beta$ with $\beta>\alpha$ from $\by = \calQ(\bA\bx-\btau)$, and recovering $\bx\in V_K\cap \mathbb{S}^{n-1}$ from $\by=\sign(\bA\bx)$; these problems are {\it easier} because of the smaller signal space. For a small $\epsilon>0$, by standard packing results (e.g., \cite{vershynin2018high}), there exist  $P_1\subset V_K\cap \mathbbm{A}_{\alpha}^\beta$ with $|P_1|\ge (\frac{c_1}{\epsilon})^K$ and $P_2\subset V_K\cap\mathbb{S}^{n-1} $ with $|P_2|\ge (\frac{c_2}{\epsilon})^{K-1}$, such that the $\ell_2$-distance of any two different points in $P_i~(i=1,2)$ is greater than $\epsilon$. On the other hand, by standard counting results, we have $$|\{\calQ(\bA\bx-\btau):\bx\in V_K\}|\le \big|\big\{(\sign(\bA\bx -\btau- b_1)^\top,\cdots,\sign(\bA\bx -\btau- b_{L-1})^\top)^\top:\bx\in V_K\big\}\big| ,$$
which can be bounded by {\it the number of orthants in $\mathbb{R}^{(L-1)m}$ intersected by an affine linear subspace of dimension no greater than $K$.} (This affine linear subspace can be formally written as 
$ \big\{\big((\bA\bx-\btau-b_1)^\top,\cdots, (\bA\bx-\btau-b_{L-1})^\top\big)^\top:\bx\in V_K\big\}.$) 
Therefore, we have (see, e.g., \cite[Lemma 3.2]{chen2024one})
$$|\{\calQ(\bA\bx-\btau):\bx\in V_K\}| \le \Big(\frac{C_3(L-1)m}{K}\Big)^K;$$
 by similar reasoning, we also have (e.g., \cite[Eq. (16)]{jacques2013robust}) $$|\{\sign(\bA\bx): \bx\in V_K\}|\le \Big(\frac{C_4m}{K}\Big)^{K-1}.$$ Notice that both $(\frac{C_3(L-1)m}{K})^K \ge (\frac{c_1}{\epsilon})^K$ and $(\frac{C_4m}{K})^{K-1}\ge (\frac{c_2}{\epsilon})^{K-1}$ lead to $\epsilon\ge \frac{c_5K}{mL}$. 
\end{proof}

 \section{Quantized Embedding Property}\label{app:tessella}
  The general $L$-level quantizer (\ref{eq:Llevel}) associated with some $(\ba,\tau)$ forms  the quantized sampler $\calQ(\langle\ba,\cdot\rangle-\tau)$. 
\subsection{Preparations}\label{appendix_sepa}
We will need the notions of (well) separation. A hyperplane $\calH_{\ba,\tau}:=\{\bp\in \mathbb{R}^n:\ba^\top\bp-\tau = 0\}$ separates the whole space $\mathbb{R}^n$ into two sides:
\begin{align*}
    \calH_{\ba,\tau}^+ = \{\bp\in \mathbb{R}^n:\ba^\top\bp\ge\tau\}~~\text{and}~~\calH^-_{\ba,\tau} = \{\bp\in\mathbb{R}^n:\ba^\top\bp<\tau\}.
\end{align*}
We say two points $\bu,\bv$ are separated by $\calH_{\ba,\tau}$ if they live in different sides of the hyperplane, which is just a geometric statement equivalent to $\sign(\ba^\top\bu-\tau)\ne\sign(\ba^\top\bv-\tau)$.

To build this viewpoint for the $L$-level quantizer (\ref{eq:Llevel}), we note that $\calQ(\ba^\top\bu-\tau)\ne\calQ(\ba^\top\bv-\tau)$ happens if and only if $\sign(\ba^\top\bu-\tau-b_j)\ne\sign(\ba^\top\bv-\tau-b_j)$ holds for some $j\in [L-1]$. Hence,  the $L$-level quantizer corresponds to $L-1$ hyperplanes $\{\calH_{\ba,\tau+b_j}:j\in[L-1]\}$; geometrically, $\calQ(\ba^\top\bu-\tau)\ne\calQ(\ba^\top\bv-\tau)$ holds if and only if $\bu,\bv$ are separated by $\calH_{\ba,\tau+b_j}$ for some $j\in[L-1]$.

Such hyperplane separation, however, can be unstable: suppose that $\bu\in \calH_{\ba,\tau}^+$ and $\bv\in \calH_{\ba,\tau}^-$ live very close to the hyperplane itself $\calH_{\ba,\tau}$, then $\calH_{\ba,\tau}$ may no longer separate $\bu'$ and $\bv$ even though $\bu'$ and $\bu$ are very close. As with \cite{plan2014dimension,dirksen2021non,jung2021quantized,chen2024one}, we shall work with $\theta$-well-separation to resolve this issue.
\begin{definition}
    In the context of quantized compressed sensing under the quantizer (\ref{eq:Llevel}) with resolution $\Delta$, we say $\calH_{\ba,\tau}$ $\theta$-well-separates $\bu$ and $\bv$ if   $\calH_{\ba,\tau}$ separates $\bu$ and $\bv$, and it holds additionally that 
    \begin{align*}
       \min\big\{|\ba^\top\bu-\tau|,|\ba^\top\bv-\tau|\big\}\ge \theta\min\{\|\bu-\bv\|_2,\Delta\vee \Lambda\}.
    \end{align*}
\end{definition}
Next, we extend the notion of well (separation) to an $L$-level quantizer. 
\begin{definition}\label{def:theta_well}
For $\calQ$ in (\ref{eq:Llevel}) and some $(\ba,\tau)$, we say the quantized sampler $\calQ(\langle \ba,\cdot\rangle-\tau)$ distinguishes   $\bu$ and $\bv$, if there exists some $j\in[L-1]$ such that $\calH_{\ba,\tau+b_j}$ separates $\bu$ and $\bv$ (this is just equivalent to $\calQ(\ba^\top\bu-\tau)\ne\calQ(\ba^\top\bv-\tau)$). We say the quantized sampler $\calQ(\langle \ba,\cdot\rangle-\tau)$ $\theta$-well-distinguishes   $\bu$ and $\bv$, if there exists some $j\in[L-1]$ such that $\calH_{\ba,\tau+b_j}$ $\theta$-well-separates $\bu$ and $\bv$.
\end{definition}
 In the following, we present two useful facts. 
 \begin{lem}\label{lem:non_sepa11}
    Given $\bu$ and $\bv$, if 
    \begin{align}\label{eq:sepacon}
         \min_{j\in[L-1]}|\ba^\top\bu-\tau - b_j|>2|\ba^\top(\bu-\bv)|,
    \end{align}
    then $\bu$ and $\bv$ are not distinguished by $\calQ(\langle\ba,\cdot\rangle-\tau)$, i.e., $\calQ(\ba^\top\bu-\tau)=\calQ(\ba^\top\bv-\tau)$.
 \end{lem}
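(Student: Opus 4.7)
The plan is to prove this by contradiction, exploiting the piecewise-constant structure of $\calQ$ as defined in (\ref{eq:Llevel}). The crux is the elementary geometric observation that if two scalars lie in different quantization bins, then at least one threshold $b_j$ must lie between them, forcing one of them to be close to $b_j$ — in fact, at distance no larger than the spread of the two scalars.

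First I would introduce the shorthand $p := \ba^\top\bu - \tau$ and $q := \ba^\top\bv - \tau$, so the hypothesis reads
\begin{align*}
    \min_{j\in[L-1]}|p - b_j| > 2|p - q|,
\end{align*}
and the goal is $\calQ(p) = \calQ(q)$. Suppose for contradiction that $\calQ(p) \neq \calQ(q)$. By the definition (\ref{eq:Llevel}) of the $L$-level quantizer, the output $\calQ(\cdot)$ changes value only across the thresholds $b_1 < b_2 < \cdots < b_{L-1}$; hence $\calQ(p) \neq \calQ(q)$ forces the existence of some $j^* \in [L-1]$ such that $b_{j^*}$ lies in the closed interval with endpoints $p$ and $q$. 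In particular, $|p - b_{j^*}| \leq |p - q|$.

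Combining this with the hypothesis yields
\begin{align*}
    2|p - q| < \min_{j\in[L-1]}|p - b_j| \leq |p - b_{j^*}| \leq |p - q|,
\end{align*}
which forces $|p - q| < 0$, a contradiction. Therefore $\calQ(p) = \calQ(q)$, i.e., $\calQ(\ba^\top\bu - \tau) = \calQ(\ba^\top\bv - \tau)$, and $\bu,\bv$ are not distinguished by the quantized sampler.

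I do not expect any real obstacle here; the only subtlety worth noting is the one-sidedness of the thresholding (the rule uses ``$<$'' versus ``$\ge$'' at $b_j$), but this plays no role since we only need the interval containing $p$ and $q$ to contain a threshold $b_{j^*}$, which is forced by $\calQ(p)\neq \calQ(q)$ regardless of how endpoints are assigned. If desired, the factor $2$ in the hypothesis can even be relaxed to any constant strictly greater than $1$; the proof retains the same structure.
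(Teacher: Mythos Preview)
Your proof is correct and follows essentially the same approach as the paper's: both argue by contradiction, noting that $\calQ(p)\neq\calQ(q)$ forces some threshold $b_{j^*}$ to lie between $p$ and $q$ (equivalently, $\sign(p-b_{j^*})\neq\sign(q-b_{j^*})$), whence $|p-b_{j^*}|\le|p-q|$ contradicts the hypothesis. The paper just phrases the separation in terms of $\sign$ rather than interval containment, but the logic is identical.
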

 \begin{proof}
     We shall assume $\bu,\bv$ are distinguished by $\calQ(\langle\ba,\cdot\rangle-\tau)$ and seek a contradiction. Particularly, suppose $\calQ(\ba^\top\bu-\tau)\ne \calQ(\ba^\top\bv-\tau)$, then there exists some $j\in[L-1]$ such that $\calH_{\ba,\tau+b_j}$ separates $\bu$ and $\bv$, namely, $\sign(\ba^\top\bu-\tau-b_j)\ne\sign(\ba^\top\bv-\tau-b_j)$. This leads to 
     \begin{align*}
         |\ba^\top(\bu-\bv)|=\big|(\ba^\top\bu-\tau-b_j)-(\ba^\top\bv-\tau-b_j)\big| \ge |\ba^\top\bu-\tau-b_j|,
     \end{align*}
     which contradicts (\ref{eq:sepacon}). 
 \end{proof}
 \begin{lem}\label{lem:wellimplyse}
     Given $\bu,\bv,\bp,\bq$ and some $\theta>0$, if $\calQ(\langle\ba,\cdot\rangle -\tau)$ $\theta$-well-distinguishes $\bu$ and $\bv$, and additionally 
     \begin{align}\label{eq:sepa_con2}
         \max\big\{|\ba^\top(\bp-\bu)|,|\ba^\top(\bq-\bv)|\big\} < \frac{\theta}{2}\min\big\{\|\bu-\bv\|_2,\Delta\vee\Lambda\big\}, 
     \end{align}
     then $\bp$ and $\bq$ are distinguished by $\calQ(\langle \ba,\cdot\rangle-\tau)$, i.e., $\calQ(\ba^\top\bp -\tau)\ne \calQ(\ba^\top\bq-\tau)$. 
 \end{lem}
 \begin{proof}
     By Definition \ref{def:theta_well},  for some $j\in [L-1]$ we have $\sign(\ba^\top\bu -\tau - b_j)\ne\sign(\ba^\top\bv -\tau-b_j)$ and 
     \begin{align*}
         \min\big\{|\ba^\top \bu-\tau-b_j|,|\ba^\top\bv-\tau-b_j|\big\}\ge\theta \min\{\|\bu-\bv\|_2,\Delta\vee\Lambda\}.
     \end{align*}
     Combining with (\ref{eq:sepa_con2}) we obtain 
         \begin{align*}
         &2|\ba^\top(\bp-\bu)|<|\ba^\top\bu-\tau-b_j|,\\
         &2|\ba^\top(\bq-\bv)|<|\ba^\top\bv-\tau-b_j|.
     \end{align*}
     Now we invoke Lemma \ref{lem:non_sepa11} with respect to the 1-bit quantized sampler $\sign(\langle \ba,\cdot\rangle-\tau - b_j)$, yielding that it does not distinguish $\bu$ and $\bp$, nor $\bv$ and $\bq$. We hence arrive at $\sign(\ba^\top\bp -\tau-b_j)=\sign(\ba^\top\bu-\tau-b_j)$ and $\sign(\ba^\top\bq-\tau-b_j)=\sign(\ba^\top\bv-\tau-b_j)$. Taken collectively with $\sign(\ba^\top\bu-\tau-b_j)\ne\sign(\ba^\top\bv-\tau-b_j)$, we obtain $\sign(\ba^\top\bp-\tau-b_j)\ne \sign(\ba^\top \bq-\tau-b_j)$ that implies the desired $\calQ(\ba^\top\bp-\tau)\ne\calQ(\ba^\top\bq-\tau)$. 
 \end{proof}
 \subsection{The Proof of Theorem \ref{thm:local_embed} (Quantized Embedding Property)}
\begin{proof}
 We now delve into the arguments for proving Theorem \ref{thm:local_embed}. For some  $\epsilon\in(0,c_0(\Delta\vee\Lambda))$ with small enough $c_0$, we let $\epsilon' = \frac{c_1\epsilon}{\log^{1/2}(\frac{\Delta\vee\Lambda}{\epsilon})}$ for some small enough constant  $c_1$. We construct $\calN_{\epsilon'}$ as a minimal $\epsilon'$-net of $\calW$  that satisfies $\log|\calN_{\epsilon'}|=\scrH(\calW,\epsilon')$. For any $\bu,\bv\in\calW$, we find their closest points in $\calN_{\epsilon'}$ as follows: 
\begin{align}\label{eq:foundpq}
    \bp = \text{arg}\min_{\bw\in \calN_{\epsilon'}}\|\bw-\bu\|_2~~\text{and}~~\bq = \text{arg}\min_{\bw\in \calN_{\epsilon'}}\|\bw-\bv\|_2.
\end{align}
\subsubsection*{(i) Establish the event $E_{\rm small}$}
Given any $\bu,\bv\in\calW$ obeying $\|\bu-\bv\|_2\le \epsilon'/2$, we have $\|\bu-\bp\|_2\le \epsilon'$ and $\|\bv-\bp\|_2\le\|\bv-\bu\|_2+\|\bu-\bp\|_2\le3\epsilon'/2$. By Lemma \ref{lem:non_sepa11} and a union bound, we can proceed with the following deterministic arguments:
    \begin{align*}
    &m-d_H\big(\calQ(\bA\bu-\btau),\calQ(\bA\bp-\btau)\big) \\
    &= \sum_{i=1}^m \mathbbm{1}\big(\calQ(\ba_i^\top\bu-\tau_i)=\calQ(\ba_i^\top\bp-\tau_i)\big)\\
    &\ge \sum_{i=1}^ m \mathbbm{1}\Big(\min_{j\in[L-1]}|\ba_i^\top \bp-\tau_i-b_j|>\epsilon,~|\ba_i^\top(\bu-\bp)|\le\frac{\epsilon}{2}\Big)\\
    &\ge \sum_{i=1}^m \mathbbm{1}\Big(\min_{j\in[L-1]}|\ba_i^\top \bp-\tau_i-b_j|>\epsilon\Big)-\sum_{i=1}^m\mathbbm{1}\Big(|\ba_i^\top(\bu-\bp)|>\frac{\epsilon}{2}\Big),
\end{align*}
which implies 
\begin{align*}
    d_H\big(\calQ(\bA\bu-\btau),\calQ(\bA\bp-\btau)\big) \le \sum_{i=1}^m \mathbbm{1}\Big(\min_{j\in[L-1]}|\ba_i^\top\bp-\tau_i-b_j|\le \epsilon\Big)+\sum_{i=1}^m\mathbbm{1}\Big(|\ba_i^\top(\bu-\bp)|>\frac{\epsilon}{2}\Big).
\end{align*}
This remains valid when replacing $\bu$ with $\bv$: 
\begin{align*}
    d_H\big(\calQ(\bA\bv -\btau),\calQ(\bA\bp-\btau)\big)\le \sum_{i=1}^m \mathbbm{1}\Big(\min_{j\in[L-1]}|\ba_i^\top\bp-\tau_i-b_j|\le \epsilon\Big)+\sum_{i=1}^m\mathbbm{1}\Big(|\ba_i^\top(\bv-\bp)|>\frac{\epsilon}{2}\Big).
\end{align*}
Noticing $\bp\in\calN_{\epsilon'}$ and $\bu-\bp,\bv-\bp\in \calW_{(3\epsilon'/2)}$, by triangle inequality and taking supremums, we arrive at 
     \begin{align}
    &d_H\big(\calQ(\bA\bu-\btau),\calQ(\bA\bv-\btau)\big)\nn\\&\le d_H \big(\calQ(\bA\bu-\btau),\calQ(\bA\bp-\btau)\big)+d_H\big(\calQ(\bA\bv-\btau),\calQ(\bA\bp-\btau)\big)\nn\\\nn
    &\le 2 \sum_{i=1}^m \mathbbm{1}\Big(\min_{j\in[L-1]}|\ba_i^\top\bp-\tau_i-b_j|\le \epsilon\Big) + \sum_{i=1}^m \mathbbm{1}\Big(|\ba_i^\top(\bu-\bp)|>\frac{\epsilon}{2}\Big)+\sum_{i=1}^m \mathbbm{1}\Big(|\ba_i^\top(\bv-\bp)|>\frac{\epsilon}{2}\Big)\\
    & \le 2 \underbrace{\sup_{\bp\in\calN_{\epsilon'}} \sum_{i=1}^m \mathbbm{1}\Big(\min_{j\in[L-1]}|\ba_i^\top\bp-\tau_i-b_j|\le \epsilon\Big)}_{:=\breve{T}_1} + 2 \underbrace{\sup_{\bw\in\calW_{(3\epsilon'/2)}}\sum_{i=1}^m \mathbbm{1}\Big(|\ba_i^\top\bw|>\frac{\epsilon}{2}\Big)}_{:=\breve{T}_2}. \label{eq:Eslocal}
\end{align}
 \paragraph{Bound $\breve{T}_1$:} To bound $\breve{T}_1$, we shall invoke Chernoff bound along with a union bound. By Assumption \ref{assum2}, for any $\bp\in\calN_{\epsilon'}$, $\min_{j\in[L-1]}|\ba_i^\top \bp-\tau_i-b_j|\le \epsilon$ holds with probability smaller than $\frac{c^{(1)}\epsilon}{\Delta\vee \Lambda}$, hence we have 
    \begin{align*}
    &\mathbbm{P}\left(\sum_{i=1}^m \mathbbm{1}\Big(\min_{j\in[L-1]}|\ba_i^\top\bp-\tau_i-b_j|\le \epsilon\Big)\ge \frac{3mc^{(1)}\epsilon}{2(\Delta\vee\Lambda)}\right)\\
    &\le\mathbbm{P}\left(\text{Bin}\Big(m,\frac{c^{(1)}\epsilon}{\Delta\vee\Lambda}\Big)\ge \frac{2mc^{(1)}\epsilon}{\Delta\vee\Lambda}\right)\\&\le \exp \Big(-\frac{c^{(1)}}{12}\frac{m\epsilon}{\Delta\vee\Lambda}\Big),
\end{align*}
where the last inequality is due to Chernoff bound. Hence, a union bound shows that $\breve{T}_1\le \frac{3c^{(1)}m\epsilon}{2(\Delta\vee\Lambda)}$ holds with probability at least $$1-\exp\Big(\scrH(\calW,\epsilon')-\frac{c^{(1)}}{12}\frac{m\epsilon}{\Delta\vee\Lambda}\Big)\ge 1-\exp\Big(-\frac{c^{(1)}}{24}\frac{m\epsilon}{\Delta\vee\Lambda}\Big),$$ with the proviso that $m\ge \frac{24(\Delta\vee\Lambda)}{c^{(1)}}\frac{\scrH(\calW,\epsilon')}{\epsilon}$, which has been assumed in (\ref{eq:local_scale}) in our statement.

\paragraph{Bound $\breve{T}_2$:} We seek to control $\breve{T}_2$ to a scaling similar to the bound for $\breve{T}_1$. In particular, we seek to show $\breve{T}_2\le \frac{c'm\epsilon}{\Delta\vee \Lambda}$ for some  constant $c'$ that is independent of other constants and can be set small enough if needed. To this end, we let $\ell_0:=\lceil \frac{c'm\epsilon}{\Delta\vee\Lambda}\rceil$ and observe that it is sufficient to ensure 
\begin{align*}
    \sup_{\bw\in\calW_{(3\epsilon'/2)}}\max_{\substack{I\subset [m]\\|I|\le \ell_0}}\left(\frac{1}{\ell_0}\sum_{i\in I}|\ba_i^\top\bw|^2\right)^{1/2} \le \frac{\epsilon}{2}.
\end{align*}
 By Lemma \ref{lem:max_ell_sum}, with probability at least $1-2\exp(-\frac{c'm\epsilon}{\Delta\vee\Lambda})$, we only need to ensure 
 \begin{align*}
    \frac{\omega(\calW_{(3\epsilon'/2)})}{\sqrt{\ell_0}} + \frac{3\epsilon'}{2}\log ^{1/2}\Big(\frac{em}{\ell_0}\Big) \le c_3\epsilon
 \end{align*}
for some small enough absolute constant $c_3$; note that this follows from (\ref{eq:local_scale}) and our choice $\epsilon' = \frac{c\epsilon}{\log^{1/2}(\frac{\Delta\vee\Lambda}{\epsilon})}$ with small enough $c$. Substituting the bounds on $\breve{T_1}$ and $\breve{T}_2$ into (\ref{eq:Eslocal}), we have shown that the event $E_{\rm small}$ holds with the promised probability. 

\subsubsection*{(ii) Establish the event $E_{\rm large}$}
With Assumptions \ref{assum2}--\ref{assum3}, we let $c^*:=\min\{\frac{c^{(2)}}{4c^{(1)}},\frac{1}{4}\}$ and first establish a lower bound on the probability of $\bu$ and $\bv$ being $c^*$-well-distinguished by $\calQ(\langle \ba_i,\cdot\rangle-\tau_i)$:
    \begin{align}
&\mathbbm{P}\Big(\calQ(\langle\ba_i,\cdot\rangle-\tau_i)~\text{$c^*$-well-distinguishes }\bu\text{ and }\bv\Big)\nn\\
    &\ge \mathbbm{P}\big(\calQ(\ba_i^\top\bu-\tau_i)\ne\calQ(\ba_i^\top\bv-\tau_i)\big) \nn\\&\quad -  \mathbbm{P}\left(\min_{j\in[L-1]}\min\{|\ba_i^\top\bu-\tau_i-b_j|,|\ba_i^\top\bv-\tau_i-b_j|\}<c^*\min\{\|\bu-\bv\|_2,\Delta\vee\Lambda\}\right)\nn\\
    &\ge  \mathbbm{P}\big(\calQ(\ba_i^\top\bu-\tau_i)\ne\calQ(\ba_i^\top\bv-\tau_i)\big) \nn\\&\quad   - \mathbbm{P}\left(\min_{j\in[L-1]}|\ba_i^\top \bu-\tau_i-b_j|<c^*\min\{\|\bu-\bv\|_2,\Delta\vee\Lambda\}\right)\nn\\
    &\quad -\mathbbm{P}\left(\min_{j\in[L-1]}|\ba_i^\top \bv-\tau_i-b_j|<c^*\min\{\|\bu-\bv\|_2,\Delta\vee\Lambda\}\right)\nn\\
    &\ge c^{(2)}\min\Big\{\frac{\|\bu-\bv\|_2}{\Delta\vee\Lambda},1\Big\}  - \frac{2c^{(1)}c^*\min\{\|\bu-\bv\|_2,\Delta\vee\Lambda\}}{\Delta\vee\Lambda}\nn \\
    &\ge \frac{c^{(2)}}{2}\min\Big\{\frac{\|\bu-\bv\|_2}{\Delta\vee\Lambda},1\Big\} .\label{eq:wellprob} 
\end{align} 
For any $(\bu,\bv)\in \calW\times\calW$ obeying $\|\bu-\bv\|_2\ge 2\epsilon$ and the corresponding $\bp,\bq$ found by (\ref{eq:foundpq}), by $\epsilon'<\frac{\epsilon}{2}$ we have 
\begin{align}\label{eq:chain_ine1}
    \|\bp-\bq\|_2 \ge \|\bu-\bv\|_2 - \|\bu-\bp\|_2-\|\bv-\bq\|_2 \ge \frac{\|\bu-\bv\|_2}{2} \ge \epsilon, 
\end{align}
where the second inequality holds because $\|\bu-\bp\|_2+\|\bv-\bq\|_2\le 2\epsilon'<\epsilon\le \frac{1}{2}\|\bu-\bv\|_2$. Then by Lemma \ref{lem:wellimplyse}, along with $\min\{\|\bp-\bq\|_2,\Delta\vee\Lambda\}\ge \epsilon$, we proceed with the following deterministic arguments:
\begin{subequations}
    \begin{align}\nn
    &d_H\big(\calQ(\bA\bu-\btau),\calQ(\bA\bv-\btau)\big)\\
    &\ge \sum_{i=1}^m \mathbbm{1}\Big(\calQ(\langle\ba_i,\cdot\rangle-\tau_i)~\text{$c^*$-well-distinguishes }\bp\text{ and }\bq,~\max\{|\ba_i^\top(\bu-\bp)|,|\ba_i^\top(\bv-\bq)|\}\le \frac{c^*\epsilon}{2}\Big)\nn\\
    &\ge \underbrace{\sum_{i=1}^m \mathbbm{1}\big(\calQ(\langle\ba_i,\cdot\rangle-\tau_i)~\text{$c^*$-well-distinguishes }\bp\text{ and }\bq\big)}_{:=\breve{T}_3^{\bp,\bq}}\label{eq:lower}\\&\qquad\qquad\qquad- \sum_{i=1}^m \mathbbm{1}\Big(|\ba_i^\top(\bu-\bp)|\ge\frac{c^*\epsilon}{2}\Big)-\sum_{i=1}^m \mathbbm{1}\Big(|\ba_i^\top(\bv-\bq)|\ge\frac{c^*\epsilon}{2}\Big).\label{eq:upup}
\end{align}\label{eq:lowerlocal_quan}
\end{subequations}
\paragraph{Bound (\ref{eq:lower}):} \rev{Recall that we denote the term in (\ref{eq:lower}) by 
$\breve{T}_3^{\bp,\bq}$,} and now we seek to lower bound $\breve{T}_3^{\bp,\bq}$
uniformly for all $(\bp,\bq)\in \calU_{\epsilon,\epsilon'}:=\{(\bp,\bq)\in\calN_{\epsilon'}\times\calN_{\epsilon'}:\|\bp-\bq\|_2\ge \epsilon\}$. We first achieve this for a fixed pair $(\bp,\bq)\in\calU_{\epsilon,\epsilon'}$. By (\ref{eq:wellprob})   and Chernoff bound we obtain that, for fixed $(\bp,\bq)\in\calU_{\epsilon,\epsilon'}$, 
    \begin{align*}
    &\mathbbm{P}\Big(\breve{T}_3^{\bp,\bq}\le \frac{c^{(2)}m}{4}\min\Big\{\frac{\|\bp-\bq\|_2}{\Delta\vee\Lambda},1\Big\}\Big)\\
    &\le \mathbbm{P}\left(\text{Bin}\Big(m,\frac{c^{(2)}}{2}\min\Big\{\frac{\|\bp-\bq\|_2}{\Delta\vee\Lambda},1\Big\}\Big)\le \frac{c^{(2)}m}{4}\min\Big\{\frac{\|\bp-\bq\|_2}{\Delta\vee\Lambda},1\Big\}\right)\\
    &\le \exp\left(-\frac{c^{(2)}m}{24}\min\Big\{\frac{\|\bp-\bq\|_2}{\Delta\vee\Lambda},1\Big\}\right)
    \\&\le \exp\Big(-\frac{c^{(2)}m\epsilon}{24(\Delta\vee\Lambda)}\Big).
\end{align*}
Taking a union bound,     
$
    \breve{T}_3^{\bp,\bq}> \frac{c^{(2)}m}{8}\min\big\{\frac{\|\bp-\bq\|_2}{\Delta\vee\Lambda},1\big\}
$ holds for any $(\bp,\bq)\in\calU_{\epsilon,\epsilon'} $
  with probability at least $$1-\exp\Big(2\scrH(\calW,\epsilon')-\frac{c^{(2)}m\epsilon}{24(\Delta\vee\Lambda)}\Big)\ge 1-\exp\Big(-\frac{c^{(2)}m\epsilon}{48(\Delta\vee\Lambda)}\Big),$$ with the proviso that $m\ge \frac{96(\Delta\vee\Lambda) }{c^{(2)}\epsilon}\scrH(\calW,\epsilon')$; note that this is assumed in (\ref{eq:local_scale}).

\paragraph{Bound (\ref{eq:upup}):} Next, we bound the terms in (\ref{eq:upup}). By $\bu-\bp,\bv-\bq\in \calW_{(3\epsilon'/2)}$, we have 
\begin{align*}
    \sum_{i=1}^m \mathbbm{1}\Big(|\ba_i^\top(\bu-\bp)|\ge\frac{c^*\epsilon}{2}\Big)+\sum_{i=1}^m \mathbbm{1}\Big(|\ba_i^\top(\bv-\bq)|\ge\frac{c^*\epsilon}{2}\Big)\le 2\sup_{\bw\in\calW_{(3\epsilon'/2)}}\sum_{i=1}^m\Big(|\ba_i^\top\bw|\ge \frac{c^*\epsilon}{2}\Big):=2\breve{T}_4.
\end{align*}
By arguments parallel to those for bounding $\breve{T}_2$, under (\ref{eq:local_scale}) and $\epsilon'= \frac{c_1\epsilon}{\log^{1/2}(\frac{\Delta\vee\Lambda}{\epsilon})}$ with some $c_1,C_2$ depending on $c^{(1)}$ and $c^{(2)}$, we can show $2\breve{T}_4\le \frac{c^{(2)}m\epsilon}{16(\Delta\vee\Lambda)}$ with probability at least $1-\exp(-\frac{c''m\epsilon}{\Delta\vee\Lambda})$ for some $c''$ depending on $c^{(1)}$ and $c^{(2)}$.

Hence, we can substitute the bounds on $\breve{T}_3^{\bp,\bq}$ and $\breve{T}_4$ into (\ref{eq:lowerlocal_quan}), yielding the following for any $(\bu,\bv)\in\calW\times\calW$ obeying $\|\bu-\bv\|_2\ge 2\epsilon$:
    \begin{align*}
    &d_H\big(\calQ(\bA\bu-\btau),\calQ(\bA\bv-\btau)\big)\\&\ge \frac{c^{(2)}m}{4}\min\Big\{\frac{\|\bp-\bq\|_2}{\Delta\vee\Lambda},1\Big\} - \frac{c^{(2)}m\epsilon}{16(\Delta\vee\Lambda)}\\
    & \ge \frac{c^{(2)}m}{8}\min\Big\{\frac{\|\bp-\bq\|_2}{\Delta\vee\Lambda},1\Big\}\\&\ge \frac{c^{(2)}m}{16}\min\Big\{\frac{\|\bu-\bv\|_2}{\Delta\vee\Lambda},1\Big\},
\end{align*}
where the last inequality follows from $\|\bp-\bq\|_2\ge \frac{1}{2}\|\bu-\bv\|_2$, see (\ref{eq:chain_ine1}). We have shown that the desired $E_{\rm large}$ holds with the promised probability. 
\end{proof}
\subsection{The Proof of Theorem \ref{thm:itup} (Information-theoretic upper bound)}\label{app:itupper}
\begin{proof}
    Setting $\calW=\calX=\calK\cap \mathbbm{A}_\alpha^\beta$ in Theorem \ref{thm:local_embed} and observing   $\calW_{(3r'/2)}=(\calW-\calW)\cap (\frac{3r'}{2}\mathbb{B}_2^n)\subset (\calK-\calK)\cap (\frac{3r'}{2}\mathbb{B}_2^n) \subset \frac{3}{2}\calK_{(r')}$, we find that under the assumed sample size, we have 
      $$d_H\big(\calQ(\bA\bu-\btau),\calQ(\bA\bv-\btau)\big)>0,\quad \forall (\bu,\bv)\in \calX~~\text{obeying}~~\|\bu-\bv\|_2\ge 2r$$
      with the promised probability. On the other hand, by the definition of HDM, for any $\bx\in \calX$, we have $\hat{\bx}_{\rm hdm}\in \calX$ and 
      $d_H\big(\calQ(\bA\hat{\bx}_{\rm hdm}-\btau),\by\big)\le d_H\big(\calQ(\bA\bx-\btau),\by\big)=0,$ giving $d_H\big(\calQ(\bA\hat{\bx}_{\rm hdm}-\btau),\calQ(\bA\bx -\btau)\big)=0.$ We therefore arrive at $\|\hat{\bx}_{\rm hdm}-\bx\|_2\le 2r$ ($\forall \bx\in\calX$), as claimed.
\end{proof}
\section{The Proof of Theorem \ref{thm:convergence} (RAIC Implies Convergence)}\label{app:convergence}
\begin{proof} 
We define a sequence $\{\varphi_t\}_{t=0}^\infty$ by the initial value $\varphi_0=\mu_4$ and the recurrence 
\begin{align}
    \varphi_{t} = 2\mu_1 \kappa_\alpha \varphi_{t-1} + 2 \kappa_\alpha\sqrt{\mu_2 \varphi_{t-1}} + 2\kappa_\alpha\mu_3,\quad t=1,2,\cdots .
\end{align}
Notice that $\bar{\kappa}$ defined in (\ref{barkappa}) is the unique fixed point of the above  recurrence. Together with $\bar{\kappa}<\mu_4$, we know that $\{\varphi_t\}_{t=0}^\infty$ is monotonically decreasing and converges to $\bar{\kappa}$ when $t\to\infty$. We omit the proof of this simple fact.

Now we prove that $\bx^{(t)}\in 2\calK $ and $\|\bx^{(t)}-\bx\|_2\le \varphi_t$ for any integer $t\ge 0$. We notice that this is trivially true for $t=0$ and use induction: suppose that $\bx^{(t-1)}\in 2\calK$ and $\|\bx^{(t-1)}-\bx\|_2 \le \varphi_{t-1}$ for some $t\ge 1$, we seek to show $\bx^{(t)}\in 2\calK$ and $\|\bx^{(t)}-\bx\|_2\le \varphi_t$. Recall that $\tilde{\bx}^{(t)}=\calP_{\calK}(\bx^{(t-1)}-\eta\cdot \bh(\bx^{(t-1)},\bx))$ and $\bx^{(t)}=\calP_{\mathbbm{A}_\alpha^\beta}(\tilde{\bx}^{(t)})$. Notice that $\bx^{(t-1)}\in\mathbbm{A}_\alpha^\beta$ and hence $\bx^{(t-1)}\in\overline{\calX}=(2\calK)\cap \mathbbm{A}_\alpha^\beta$, and that $\|\bx^{(t-1)}-\bx\|_2\le \varphi_{t-1}\le \mu_4$. 
Then by Lemma \ref{planlem} and the RAIC (set $\phi=\mu_3$), we have 
\begin{align}
    &\|\tilde{\bx}^{(t)}-\bx\|_2 = \big\|\calP_{\calK}\big(\bx^{(t-1)}-\eta \cdot\bh(\bx^{(t-1)},\bx)\big)-\bx\big\|_2\\
    &\le \max \left\{\mu_3, \frac{2}{\mu_3}\big\|\bx^{(t-1)}-\bx-\eta\cdot\bh(\bx^{(t-1)},\bx)\big\|_{\calK_{(\mu_3)}^\circ}\right\}\\
    & \le 2\mu_1 \|\bx^{(t-1)}-\bx\|_2 +2\sqrt{\mu_2\|\bx^{(t-1)}-\bx\|_2} + 2\mu_3\\
    &\le 2\mu_1 \varphi_{t-1}+ 2\sqrt{\mu_2\varphi_{t-1}}+2\mu_3.\label{useful11} 
\end{align}

Let us first show $\bx^{(t)}\in 2\calK$. Note that $\tilde{\bx}^{(t)}\in\calK$ and $\bx^{(t)}=\calP_{\mathbbm{A}_\alpha^\beta}(\tilde{\bx}^{(t)})=\gamma \tilde{\bx}^{(t)}$ for   $\gamma= \frac{\|\bx^{(t)}\|_2}{\|\tilde{\bx}^{(t)}\|_2}$. Thus, $\bx^{(t)}\in2\calK$ is evident when $\calK$ is a cone. If $\alpha=0$, then the projection onto $\mathbbm{A}_{\alpha}^\beta$ cannot increase norm, hence $\gamma\le 1$ and we obtain $\bx^{(t)}\in\calK$. If $\calK$ is not a cone and $\alpha>0$, we will need (\ref{eq:addit}) to ensure $\bx^{(t)}\in 2\calK$. Note that we can consider only the case when the projection onto $\mathbbm{A}_{\alpha}^\beta$ increases norm (i.e., $\|\bx^{(t)}\|_2\ge \|\tilde{\bx}^{(t)}\|_2$), and in this case we necessarily have $\|\bx^{(t)}\|_2=\alpha$. By (\ref{useful11}) and $\varphi_{t-1}\le\mu_4$ we have $\|\tilde{\bx}^{(t)}-\bx\|_2\le 2\mu_1\mu_4+2\sqrt{\mu_2\mu_4}+2\mu_3\le \frac{\alpha}{2}$. Therefore,  we have $\|\tilde{\bx}^{(t)}\|_2\ge \frac{\alpha}{2}$ and $\gamma= \frac{\|\bx^{(t)}\|_2}{\|\tilde{\bx}^{(t)}\|_2}\le 2$, which gives $\bx^{(t)}\in 2\calK$. 
Moreover, we shall show $\|\bx^{(t)}-\bx\|_2\le \varphi_t$. Note that $\kappa_\alpha=1$ when $\mathbbm{A}_{\alpha}^\beta$ is convex and $\kappa_\alpha=2$ when it is not, so by a well-known bound for projection (e.g., \cite[Lemma 15]{soltanolkotabi2019structured}), we have 
\begin{align}
    \|\bx^{(t)}-\bx\|_2\le \kappa_\alpha \|\tilde{\bx}^{(t)}-\bx\|_2\le 2\kappa_\alpha\mu_1\varphi_{t-1}+2\kappa_\alpha\sqrt{\mu_2\varphi_{t-1}}+2\kappa_\alpha\mu_3=\varphi_{t},
\end{align}
as desired. The induction is complete. Due to $\|\bx^{(t)}-\bx\|_2\le\varphi_t$, all that remains is to control $\{\varphi_t\}_{t=0}^\infty$.
\subsubsection*{Linear Convergence}
By using $$ 2\kappa_{\alpha} \sqrt{\mu_2 \varphi_{t-1}} \le \Big(\frac{1}{2}-\kappa_\alpha \mu_1\Big)\varphi_{t-1} + \frac{\kappa_{\alpha}^2 \mu_2}{\frac{1}{2}-\kappa_\alpha\mu_1},$$
we obtain 
$$\varphi_t \le \Big(\kappa_\alpha\mu_1+\frac{1}{2}\Big)\varphi_{t-1} + 2\kappa_\alpha \Big(\frac{\kappa_\alpha\mu_2}{1-2\kappa_\alpha\mu_1}+\mu_3\Big).$$
Due to $\kappa_\alpha\mu_1<\frac{1}{2}$, by initializing with $\varphi_0=\mu_4$ and iterating the above inequality, we obtain 
\begin{align}
    \varphi_t \le \Big(\kappa_\alpha\mu_1+\frac{1}{2}\Big)^t \mu_4 + 4\kappa_\alpha \left[\frac{\kappa_\alpha\mu_2}{(1-2\kappa_\alpha\mu_1)^2}+\frac{\mu_3}{1-2\kappa_\alpha\mu_1}\right]. 
\end{align}
The claim immediately follows. 

\subsubsection*{\rev{Faster} Convergence}
Combining $\mu_1\le \sqrt{\frac{\mu_2}{\mu_4}}+\frac{\mu_3}{\mu_4}$ and $\varphi_t\le \mu_4$, we have 
\begin{align*}
    \varphi_t\le 2\Big(\sqrt{\frac{\mu_2}{\mu_4}}+\frac{\mu_3}{\mu_4}\Big)\kappa_\alpha\varphi_{t-1} + 2\kappa_\alpha\sqrt{\mu_2\varphi_{t-1}} + 2\kappa_\alpha\mu_3 \le 4\kappa_\alpha \big(\sqrt{\mu_2\varphi_{t-1}}+\mu_3\big). 
\end{align*}
We observe that $\hat{\kappa}$ is the unique solution to the equation $x= 4\kappa_\alpha(\sqrt{\mu_2x}+\mu_3)$. We also construct $$\omega_t:=\mu_4^{2^{-t}}\hat{\kappa}^{1-2^{-t}},\quad\forall t\ge 0 $$ 
and notice that $\{\omega_t\}_{t=0}^\infty$ obeys $\omega_0=\mu_4$, $\omega_t = \sqrt{\hat{\kappa}\omega_{t-1}}$, $\omega_t\ge \hat{\kappa}$. Now let us show $\varphi_t\le \omega_t$ for any integer $t\ge 0$ by induction. This evidently holds when $t=0$. Suppose $\varphi_{t-1}\le \omega_{t-1}$ for some $t\ge 1$, then we proceed as 
\begin{align*}
    \varphi_{t} \le  4\kappa_{\alpha}(\sqrt{\mu_2\varphi_{t-1}}+ \mu_3)\le  4\kappa_{\alpha}(\sqrt{\mu_2\omega_{t-1}}+ \mu_3) \le \sqrt{\hat{\kappa}\omega_{t-1}}= \omega_{t},
\end{align*}
where the second inequality holds because $\omega_{t-1}\ge \hat{\kappa}$ and hence   $(\sqrt{\hat{\kappa}}-4\kappa_{\alpha}\sqrt{\mu_2})\sqrt{\omega_{t-1}}\ge\hat{\kappa} -4\kappa_{\alpha}\sqrt{\mu_2\hat{\kappa}}=4\kappa_{\alpha}\mu_3.$
 The induction is complete and we obtain $\|\bx^{(t)}-\bx\|_2\le \varphi_t\le \omega_t$, as claimed.      
\end{proof}
\section{The Proof of Theorem \ref{thm:main} (Main Result)}\label{app:provemain}
\begin{proof}
 The constants in this proof, including those hidden behind $\lesssim$, $O(\cdot)$, $\gtrsim$, $\Omega(\cdot)$ and $\asymp$, are allowed to depend on $(c^{(i)})_{i=1}^9$ and $(\varepsilon^{(i)})_{i=1}^4$.
 Recall that $\calN_r$ is a minimal $r$-net of $\overline{\calX}$, for any $\bu,\bv\in\overline{\calX}$ obeying $\|\bu-\bv\|_2\le \mu_4$, we find $\bu_1,\bv_1$ as the points in $\calN_r$ being closest to $\bu,\bv$  (see (\ref{eq:findu1v1})) and we have $(\bu_1,\bv_1)\in\calN_{r,\mu_4}^{(2)}$. 

Under $\eta=\frac{\Delta\vee\Lambda}{c^{(6)} \Delta }$, we combine (\ref{eq:large_decompose}), (\ref{eq:clipping1}), (\ref{eq:boundclip1}) and (\ref{eq:large_further_decom}) to obtain 
     \begin{align}
    &\| \bu-\bv-\eta\cdot\bh(\bu,\bv)\|_{\calK_{(r)}^\circ}\nn \\&\nn\le \|\bu_1-\bv_1-\eta\cdot \hat{\bh}(\bu_1,\bv_1)\|_{\calK_{(r)}^\circ}\\&\quad + \eta\cdot \|\hat{\bh}(\bu_1,\bv_1)-\bh(\bu_1,\bv_1)\|_{\calK_{(r)}^\circ}\nn +  2r^2+ 2\eta \cdot\sup_{(\bp,\bq)\in\calD_r^{(2)}}\|\bh(\bp,\bq)\|_{\calK_{(r)}^\circ} \nn
    \\\nn
    &\le r\big|\|\bu_1-\bv_1\|_2-T_1^{\bu_1,\bv_1}\big|+r|T_2^{\bu_1,\bv_1}|+T_3^{\bu_1,\bv_1}+\frac{r\varepsilon^{(1)}}{c^{(6)}}\|\bu_1-\bv_1\|_2\\&\quad+\Big(2+\frac{c^{(3)}}{c^{(6)}}\Big)r^2+  2\eta \cdot\sup_{(\bp,\bq)\in\calD_r^{(2)}}\|\bh(\bp,\bq)\|_{\calK_{(r)}^\circ}. \label{eq:seekraic}
\end{align}

\paragraph{Bound $r|\|\bu_1-\bv_1\|_2-T_1^{\bu_1,\bv_1}|+r|T_2^{\bu_1,\bv_1}|+T_3^{\bu_1,\bv_1}$:} Note that our assumed sample complexity (\ref{eq:sample_main}) implies $$\frac{\omega^2(\calK_{(r)})/r^2 + \scrH(\overline{\calX},r)}{m}\lesssim \frac{r}{\Delta\vee\Lambda}.$$
Therefore, with promised probability Propositions \ref{pro2}--\ref{pro3} yield
\begin{align*}
    \big|\|\bu_1-\bv_1\|_2-T_1^{\bu_1,\bv_1}\big|+ |T_2^{\bu_1,\bv_1}| \le \frac{C_1}{c^{(6)}}   \sqrt{ r(\Delta\vee\Lambda)\sfP_{\bu_1,\bv_1}}    +  \frac{ (\varepsilon^{(2)}+\varepsilon^{(3)})\|\bu_1-\bv_1\|_2+C_1r}{c^{(6)}} ,&
\end{align*}
and Proposition \ref{pro4} gives
\begin{align*}
    T_3^{\bu_1,\bv_1}\le \frac{C_2 r}{c^{(6)}}\sqrt{r\big((\Delta\vee\Lambda)\sfP_{\bu_1,\bv_1}+r\big) }+ \frac{5r  \varepsilon^{(4)}\|\bu_1-\bv_1\|_2}{c^{(6)}}\left(1+O\Big(\frac{r}{(\Delta\vee\Lambda)\sfP_{\bu_1,\bv_1}}\Big)\right).
\end{align*}
By $\|\bu_1-\bv_1\|_2\le C_0(\Delta\vee\Lambda)$ and Assumptions \ref{assum3}, \ref{assum_Puvup} we have $\sfP_{\bu_1,\bv_1}\asymp \frac{\|\bu_1-\bv_1\|_2}{\Delta\vee\Lambda}$. Taken collectively with  $\sum_{j=1}^4\varepsilon^{(j)}\lesssim 1$ implied by (\ref{eq:eta_condition}), we obtain 
     \begin{align}
    &r\big|\|\bu_1-\bv_1\|_2-T_1^{\bu_1,\bv_1}\big| +r|T_2^{\bu_1,\bv_1}|+T_3^{\bu_1,\bv_1}\nn \\
    &\le  r\left(   \frac{(\varepsilon^{(2)}+\varepsilon^{(3)}+5\varepsilon^{(4)})\|\bu_1-\bv_1\|_2}{c^{(6)}}    +C_3 \sqrt{r\|\bu_1-\bv_1\|_2}+C_4r\right)\nn\\
    &\le r\left(  \frac{(\varepsilon^{(2)}+\varepsilon^{(3)}+5\varepsilon^{(4)})\|\bu-\bv\|_2}{c^{(6)}}  +C_3 \sqrt{r\|\bu-\bv\|_2}+C_5r\right).\label{eq:uniform_u1v1}
\end{align} 

\paragraph{Bound $\sup_{(\bp,\bq)\in\calD_r^{(2)}}\|\bh(\bp,\bq)\|_{\calK_{(r)}^\circ}$:} We combine (\ref{eq:clipping2}), (\ref{eq:boundclip2}) and Proposition \ref{pro:boundsmall} to obtain 
\begin{align}
    \label{eq:raicbound2}\eta\cdot\|\bh(\bp,\bq)\|_{\calK^\circ_{(r)}} \le \eta \cdot \| \hat{\bh}(\bp,\bq)\|_{\calK^\circ_{(r)}} + c_4r^2 \le C_4r^2\log^{1/2}\Big(\frac{\Delta\vee\Lambda}{r}\Big)
\end{align}
that holds for all $(\bp,\bq)\in \calD_r^{(2)}$ with the promised probability.

\paragraph{Establish RAIC:} Substituting  (\ref{eq:uniform_u1v1}) and (\ref{eq:raicbound2}) into (\ref{eq:seekraic}) and combining with  $$\frac{r \varepsilon^{(1)}\|\bu_1-\bv_1\|_2}{c^{(6)}}\le \frac{r\varepsilon^{(1)}\|\bu-\bv\|_2}{c^{(6)}}+\frac{2 \varepsilon^{(1)}r^2}{c^{(6)}}=\frac{r \varepsilon^{(1)}\|\bu-\bv\|_2}{c^{(6)}}+O(r^2),$$  we arrive at the following that holds $\forall \bu,\bv\in\overline{\calX}\text{ obeying }\|\bu-\bv\|_2\le \mu_4$: 
     \begin{align*}
   & \frac{\| \bu-\bv-\eta\cdot\bh(\bu,\bv)\|_{\calK_{(r)}^\circ}}{r} \le  \frac{ (\sum_{j=1}^3\varepsilon^{(j)}+5\varepsilon^{(4)})\|\bu-\bv\|_2 }{c^{(6)}}  +C_3 \sqrt{r\|\bu-\bv\|_2} + C_6 r\log^{1/2}\Big(\frac{\Delta\vee\Lambda}{r}\Big).
\end{align*}
 This shows that, with the promised probability, $(\calQ,\bA,\btau,\calK,\eta)$ respects $(\overline{\calX},\bm{\mu})$-RAIC at the scale of $r$ with $\bm{\mu}$ given in (\ref{eq:bmmu}). 
 
 \paragraph{Convergence:} With RAIC and     (\ref{eq:eta_condition}) and (\ref{mu1mu2mu3}) that we assume to fulfill $\mu_1<\frac{1}{2\kappa_\alpha}$ and (\ref{eq:addit}), the desired guarantee immediately follows from Theorem \ref{thm:convergence}.  
\end{proof}

\section{The Proof of Theorem \ref{thm:conver_qpe} (Complementary approach)}\label{app:provecomplement}
\begin{proof}
    Under $\eta = \zeta^{-1}$ and $m\gtrsim \frac{\omega^2(\calK_{(r_1)})}{r_1^2}$, we substitute (\ref{boundfir}) and $\hat{T}\le  r_1$ into (\ref{eq:qpe2raic}) and obtain that $(\calQ,\bA,\btau,\calK,\eta)$ respects $(\overline{\calX},\bm{\mu})$-RAIC at the scale $r_1$ with $$\bm{\mu}=(\mu_1,\mu_2,\mu_3,\mu_4)^\top=\left( \frac{C_1\omega(\calK_{(r_1)}) }{r_1\sqrt{m}}, 0,3r_1, 2\beta\right)^\top,$$ for some absolute constant $C_1$. Note that we have $\mu_1<\frac{1}{8}$ and $r_1\lesssim \beta$, so by the proof of Theorem \ref{thm:convergence} we have $\|\bx^{(t)}-\bx\|_2\le\varphi_t$, where $\{\varphi_t\}_{t=0}^\infty$ is defined by $\varphi_0=2\beta$ and the recurrence $\varphi_t= 4\mu_1\varphi_{t-1}+12r_1.$
   It is easy to see $\varphi_t\le(4\mu_1)^t\varphi_{0} + \frac{12r_1}{1-4\mu_1} $. Substituting $\mu_1=\frac{C_1\omega_{(r_1)}}{r_1\sqrt{m}}$ and $\mu_1<\frac{1}{8}$, the claim follows. 
\end{proof}
\section{The Proofs in Sections \ref{sec:sharp_ana}--\ref{sec:raic_qpe} (Unified Analysis)}\label{app:unified11}
We collect the missing proofs for our unified analysis for Algorithm \ref{alg:pgd}.
\subsection{The Proof of Proposition \ref{pro:boundsmall} (Bounding $\|\hat{\bh}(\bp,\bq)\|_{\calK_{(r)}^\circ}$)}\label{app:prove_pro1}
\begin{proof}
The constants in this proof, including those hidden behind $\gtrsim$ and $\lesssim$, are allowed to depend on $c^{(1)}$ in Assumption \ref{assum2}. We invoke the first statement in Theorem \ref{thm:local_embed} to uniformly bound $|\bR_{\bp,\bq}|$ over $(\bp,\bq)\in\calD_r^{(2)}$. Specializing         Theorem \ref{thm:local_embed} to $(\epsilon',\epsilon)=(2r,r'')$ with $r'':= C_1r\log^{1/2}(\frac{\Delta\vee\Lambda}{r})$ for some sufficiently large $C_1$ and $\calW = \overline{\calX}=2\calK\cap \mathbbm{A}_\alpha^\beta$, we obtain that
   with probability at least $1-2\exp(-\frac{c_2mr}{\Delta\vee\Lambda})$, we have
\begin{align}\label{eq:uniRpq}
    |\bR_{\bp,\bq}|=d_H(\calQ(\bA\bp-\btau),\calQ(\bA\bq-\btau))\le \frac{c_3mr''}{\Delta\vee\Lambda},~~\forall (\bp,\bq)\in\calD_r^{(2)},
\end{align}
as long as 
\begin{align}\label{eq:needlocal}
    m\ge C_4(\Delta\vee\Lambda)\left(\frac{\omega^2(\overline{\calX}_{(3r)})}{(r'')^3}+\frac{\scrH(\overline{\calX},2r)}{r''}\right).
\end{align}
Due to $\overline{\calX}\subset 2\calK$, we have $\overline{\calX}_{(3r)} = (\overline{\calX}-\overline{\calX})\cap\mathbb{B}_2^n(3r)\subset (2\calK-2\calK)\cap \mathbb{B}_2^n(3r)\subset 3\calK_{(r)}$, together with $r''=C_1r\log^{1/2}(\frac{\Delta\vee\Lambda}{r})$,      (\ref{eq:needlocal}) can be ensured by (\ref{eq:sample_pro1}) we assumed.

On the event (\ref{eq:uniRpq}), by  (\ref{eq:hathpq}), Cauchy-Schwarz inequality and Lemma \ref{lem:max_ell_sum} we could proceed uniformly for all $(\bp,\bq)\in\calD_r^{(2)}$ as follows: 
     \begin{align*}\nn
    &\frac{1}{r}\big\|\hat{\bh}(\bp,\bq)\big\|_{\calK^\circ_{(r)}}=\frac{1}{r}\sup_{\bw\in\calK_{(r)}}\bw^\top\hat{\bh}(\bp,\bq) 
    \\& = \sup_{\bw\in\calK_{(r)}}\frac{\Delta}{mr}\sum_{i\in\bR_{\bp,\bq}}\sign(\ba_i^\top(\bp-\bq))\ba_i^\top \bw\\ \nn
    &\le \frac{c_3r''\Delta}{r(\Delta\vee\Lambda)}\sup_{\bw\in\calK_{(r)}}\max_{\substack{I\subset [m]\\|I|\le c_3mr''/(\Delta\vee\Lambda)}}\Big(\frac{c_3mr''}{\Delta\vee\Lambda}\Big)^{-1}\sum_{i\in I}|\ba_i^\top\bw| \\\nn
    &\le \frac{c_3r''\Delta}{r(\Delta\vee\Lambda)}\sup_{\bw\in\calK_{(r)}}\max_{\substack{I\subset [m]\\|I|\le c_3mr''/(\Delta\vee\Lambda)}}\left[\Big(\frac{c_3mr''}{\Delta\vee\Lambda}\Big)^{-1}\sum_{i\in I}|\ba_i^\top\bw| ^2\right]^{1/2}\\
    &\le \frac{C_5r''\Delta}{r(\Delta\vee\Lambda)}\left[\Big(\frac{(\Delta\vee\Lambda)\cdot\omega^2(\calK_{(r)})}{c_3mr''}\Big)^{1/2}+r\cdot\log^{1/2}\Big(\frac{\Delta\vee\Lambda}{r''}\Big)\right]\\&
    \le \frac{C_6 r''\Delta}{\Delta\vee\Lambda}\log^{1/2}\Big(\frac{\Delta\vee\Lambda}{r''}\Big) 
\end{align*}
 where  the first inequality in the last line holds with probability at least $1-2\exp(-\frac{c_7mr''}{\Delta\vee\Lambda})$ due to Lemma \ref{lem:max_ell_sum}, and the second inequality holds so long as $m\gtrsim \frac{(\Delta\vee\Lambda)\cdot\omega^2(\calK_{(r)})}{r''r^2\log(\frac{\Delta\vee\Lambda}{r''})}$, and this can be implied by (\ref{eq:sample_pro1}) in our statement due to $r''=C_1r\log^{1/2}(\frac{\Delta\vee\Lambda}{r})$. Combining everything and substituting $r''$ conclude the proof.      
\end{proof}

\subsection{The Proof of Proposition \ref{pro2} (Bounding $|\|\bp-\bq\|_2-T_1^{\bp,\bq}|$)}\label{app:pro2}
\begin{proof}
    
We shall first bound $|\|\bp-\bq\|_2-T_1^{\bp,\bq}|$ for fixed $(\bp,\bq)\in\calN^{(2)}_{r,\mu_4}$ and then take a union bound. Recall that $T_1^{\bp,\bq}=\frac{\eta\Delta}{m}\sum_{i=1}^m|\ba_i^\top\bbeta_1|\mathbbm{1}(E^{(i)}_{\bp,\bq})$, and so we have $\mathbbm{E}[T_1^{\bp,\bq}]=\eta\Delta \cdot\mathbbm{E}\big[|\ba_i^\top\bbeta_1|\mathbbm{1}(E^{(i)}_{\bp,\bq})\big]$. Therefore, triangle inequality and (\ref{eq:sg1_exp}) yield 
\begin{align}\label{center72}
    \big|\|\bp-\bq\|_2-T_1^{\bp,\bq}\big|\le \big|T_1^{\bp,\bq}-\mathbbm{E}[T_1^{\bp,\bq}]\big| + \left[\Big|1-\frac{\eta c^{(6)} \Delta }{\Delta\vee\Lambda}\Big|+\frac{\eta\Delta \varepsilon^{(2)}}{\Delta\vee\Lambda}\right]\cdot\|\bp-\bq\|_2.
\end{align}
Next, we use Bernstein's inequality (Lemma \ref{lem:chernoff}) to bound $$\big|T_1^{\bp,\bq}-\mathbbm{E}[T_1^{\bp,\bq}]\big|=\frac{\eta\Delta}{m}\left|\sum_{i=1}^m\Big(|\ba_i^\top\bbeta_1|\mathbbm{1}(E^{(i)}_{\bp,\bq})-\mathbbm{E}\big[|\ba_i^\top\bbeta_1|\mathbbm{1}(E^{(i)}_{\bp,\bq})\big]\Big)\right|.$$
For any integer $p\ge 2$, (\ref{eq:sg1_sg}) gives $$\sum_{i=1}^m\mathbbm{E}\Big||\ba_i^\top\bbeta_1|\mathbbm{1}(E^{(i)}_{\bp,\bq})\Big|^p\le \frac{p!}{2}c^{(5)}m\sfP_{\bp,\bq}.$$
Hence, for any $t>0$, Lemma \ref{lem:chernoff} yields 
\begin{align}
    \mathbbm{P}\left(\big|T_1^{\bp,\bq}-\mathbbm{E}[T_1^{\bp,\bq}]\big|\le  \eta\Delta \Big[\Big(\frac{2c^{(5)}\sfP_{\bp,\bq}t}{m}\Big)^{1/2}+\frac{t}{m}\Big]\right)\ge 1-2\exp(-t). 
\end{align}
We let $t=4 \scrH(\overline{\calX},r)$ and take a union bound over no more than $|\scrN(\overline{\calX},r)|^2$ pairs of $(\bp,\bq)\in \calN^{(2)}_{r,\mu_4}$, yielding that the uniform bound
\begin{align}
    \big|T_1^{\bp,\bq}-\mathbbm{E}[T_1^{\bp,\bq}]\big|\le 4\eta\Delta \left[\Big(\frac{c^{(5)}\sfP_{\bp,\bq}\scrH(\overline{\calX},r)}{m}\Big)^{1/2}+\frac{\scrH(\overline{\calX},r)}{m}\right],\quad\forall (\bp,\bq)\in\calN_{r,\mu_4}^{(2)}
\end{align}
holds with probability at least $1-2\exp(-2\scrH(\overline{\calX},r))$. Substituting this into (\ref{center72}) yields the claim. 
\end{proof}

\subsection{The Proof of Proposition \ref{pro3} (Bounding $|T_2^{\bp,\bq}|$)}\label{app:prove_pro2}
\begin{proof}
Recall that $T_2^{\bp,\bq}=\frac{\eta\Delta}{m}\sum_{i=1}^m\sign(\ba_i^\top\bbeta_1)(\ba_i^\top\bbeta_2)\mathbbm{1}(E^{(i)}_{\bp,\bq})$. Combining with (\ref{eq:sg2_exp}) we have  $$|\mathbbm{E}[T_2^{\bp,\bq}]|=\eta\Delta\big|\mathbbm{E}[\sign(\ba_i^\top\bbeta_1)(\ba_i^\top\bbeta_2)\mathbbm{1}(E^{(i)}_{\bp,\bq})]\big|\le \frac{\eta\Delta\varepsilon^{(3)}\|\bp-\bq\|_2}{\Delta\vee\Lambda}.$$
Therefore, by triangle inequality we reach
$$|T_2^{\bp,\bq}|\le |T_2^{\bp,\bq}-\mathbbm{E}[T_2^{\bp,\bq}]|+ \frac{\eta\Delta\varepsilon^{(3)}\|\bp-\bq\|_2}{\Delta\vee\Lambda}.$$
All that remains is to bound $|T_2^{\bp,\bq}-\mathbbm{E}[T_2^{\bp,\bq}]|$, which is parallel to bounding $|T_1^{\bp,\bq}-\mathbbm{E}[T_1^{\bp,\bq}]|$ in the proof of Proposition \ref{pro2}, following from the moment bound (\ref{eq:sg2_sg}) and Lemma \ref{lem:chernoff}. We omit the details.  \end{proof}

\subsection{The Proof of Proposition \ref{pro4} (Bounding $T_3^{\bp,\bq}$)}\label{app:pro4}
\begin{proof}
The idea is to first condition on $\bR_{\bp,\bq}$ to bound the random process over $\bw\in\calK_{(r)}$ and then get rid of the conditioning by analyzing $|\bR_{\bp,\bq}|$. Recall that we defined $J_{i,\bw}^{\bp,\bq}=\sign(\ba_i^\top\bbeta_1)[\ba_i-(\ba_i^\top\bbeta_1)\bbeta_1-(\ba_i^\top\bbeta_2)\bbeta_2]^\top\bw$ and this is linear on $\bw\in \mathbb{R}^n$, and we can write $$T_3^{\bp,\bq}=\sup_{\bw\in\calK_{(r)}}\frac{\eta\Delta}{m}\sum_{i\in\bR_{\bp,\bq}}J_{i,\bw}^{\bp,\bq}.$$ We already have noted that for $\bw\in\mathbb{S}^{n-1}$, the conditional distribution $J_{i,\bw}^{\bp,\bq}|E^{(i)}_{\bp,\bq}$ has $O(c^{(8)})$ sub-Gaussian norm; see (\ref{consg}).  

\subsubsection*{Conditioning on $|\bR_{\bp,\bq}|$}
For any integer $0\le r_{\bp,\bq}\le m$, $T_3^{\bp,\bq}|\{|\bR_{\bp,\bq}|=r_{\bp,\bq}\}$ is identically distributed as 
\begin{align*}
    \sup_{\bw\in\calK_{(r)}}\frac{\eta\Delta}{m}\sum_{i=1}^{r_{\bp,\bq}} {Z}_i^{\bp,\bq}(\bw),
\end{align*}
where $Z_1^{\bp,\bq}(\bw),Z_2^{\bp,\bq}(\bw),\cdots ,Z_{r_{\bp,\bq}}^{\bp,\bq}(\bw)$ are i.i.d. random variables following the conditional distribution $J_{i,\bw}^{\bp,\bq}|E^{(i)}_{\bp,\bq}$. Therefore, for any $\bw\in\mathbb{S}^{n-1}$, $\|Z_i^{\bp,\bq}(\bw)\|_{\psi_2}=O(c^{(8)})$ and hence $\|Z_i^{\bp,\bq}(\bw)-\mathbbm{E}[Z_i^{\bp,\bq}(\bw)]\|_{\psi_2}=O(c^{(8)})$ \cite[Lemma 2.6.8]{vershynin2018high}. Now we   center the random process to come to 
     \begin{align} 
   \sup_{\bw\in \calK_{(r)}}\frac{\eta\Delta}{m}\sum_{i=1}^{r_{\bp,\bq}}{Z}_i^{\bp,\bq}(\bw)\le \sup_{\bw\in \calK_{(r)}}\frac{\eta\Delta}{m}\sum_{i=1}^{r_{\bp,\bq}}\big[{Z}_i^{\bp,\bq}(\bw)-\mathbbm{E}\big({Z}_i^{\bp,\bq}(\bw)\big)\big]+\sup_{\bw\in\calK_{(r)}}\frac{\eta\Delta r_{\bp,\bq}\mathbbm{E}({Z}_i^{\bp,\bq}(\bw))}{m}.\label{147}
\end{align}

We seek to control the first term in the right-hand side of (\ref{147}). 
  For any $\bw_1,\bw_2\in \calK_{(r)}$ we let $\bw_3:=\frac{\bw_1-\bw_2}{\|\bw_1-\bw_2\|_2}$, then by the linearity of ${Z}_i^{\bp,\bq}(\bw)$ on $\bw$ we have 
\begin{align*}
   &\left\|\frac{\eta\Delta}{m}\sum_{i=1}^{r_{\bp,\bq}}\big[{Z}_i^{\bp,\bq}(\bw_1)-\mathbbm{E}\big({Z}_i^{\bp,\bq}(\bw_1)\big)\big]-\frac{\eta\Delta}{m}\sum_{i=1}^{r_{\bp,\bq}}\big[{Z}_i^{\bp,\bq}(\bw_2)-\mathbbm{E}\big({Z}_i^{\bp,\bq}(\bw_2)\big)\big]\right\|_{\psi_2}  \\
   &= \left\|\frac{\eta\Delta}{m}\sum_{i=1}^{r_{\bp,\bq}}\big[{Z}_i^{\bp,\bq}(\bw_3)-\mathbbm{E}\big({Z}_i^{\bp,\bq}(\bw_3)\big)\big]\right\|_{\psi_2}\|\bw_1-\bw_2\|_2\\&\le \frac{C_1c^{(8)}\eta\Delta \sqrt{r_{\bp,\bq}}\|\bw_1-\bw_2\|_2}{m},
\end{align*}
where the last inequality holds for some absolute constant $C_1$, due to \cite[Proposition 2.6.1]{vershynin2018high} and $\|Z_i^{\bp,\bq}(\bw)-\mathbbm{E}(Z_i^{\bp,\bq}(\bw))\|_{\psi_2}=O(c^{(8)})$ for $\bw\in \mathbb{S}^{n-1}$.

Therefore, by invoking Lemma \ref{lem:tala} with $K= \frac{C_1c^{(8)}\eta\Delta\sqrt{r_{\bp,\bq}}}{m}$ and $t = 2[\scrH(\overline{\calX},r)]^{1/2}$, we obtain that
\begin{align*}
    \sup_{\bw\in \calK_{(r)}}\left|\frac{\eta\Delta}{m}\sum_{i=1}^{r_{\bp,\bq}}\big[Z_i^{\bp,\bq}(\bw)-\mathbbm{E}\big(Z_i^{\bp,\bq}(\bw)\big)\big]\right|\le \frac{C_2c^{(8)}\eta\Delta\sqrt{r_{\bp,\bq}}}{m}\left[\omega(\calK_{(r)})+2r\sqrt{\scrH(\overline{\calX},r)}\right]   
\end{align*}
with probability at least $1-2\exp(-4\scrH(\overline{\calX},r))$. Due to $$T_3^{\bp,\bq}|\{|\bR_{\bp,\bq}|=r_{\bp,\bq}\}\sim \sup_{\bw\in\calK_{(r)}}\frac{\eta\Delta}{m}\sum_{i=1}^{r_{\bp,\bq}}Z_i^{\bp,\bq}(\bw)$$ and (\ref{147}), we reach the conditional concentration bound
\begin{align}\label{conditionalt3}
    \Big[T_3^{\bp,\bq}\big|\{|\bR_{\bp,\bq}|=r_{\bp,\bq}\} \Big]\le \frac{C_2c^{(8)}\eta\Delta\sqrt{r_{\bp,\bq}}}{m}\Big[\omega(\calK_{(r)})+2r\sqrt{\scrH(\overline{\calX},r)}\Big] +\sup_{\bw\in \calK_{(r)}}\frac{\eta\Delta r_{\bp,\bq}\mathbbm{E}(Z_i^{\bp,\bq}(\bw))}{m}   
\end{align}
that holds with probability at least $1-2\exp(-4\scrH(\overline{\calX},r))$.

\subsubsection*{Unconditional Bound}
We further analyze the behaviour of $|\bR_{\bp,\bq}|\sim\text{Bin}(m,\sfP_{\bp,\bq})$ to derive the final claim. Using Lemma \ref{lem:chernoff} (or Chernoff bound), for any $t>0$ we obtain 
\begin{align}
    \mathbbm{P}\Big(\big||\bR_{\bp,\bq}|-m\sfP_{\bp,\bq}\big|\le\sqrt{2m\sfP_{\bp,\bq}t}+t\Big)\ge 1-2\exp(-t). 
\end{align}
Therefore, with probability at least $1-2\exp(-4\scrH(\overline{\calX},r))$ we have \begin{align}\label{bernsteinrpq}
    |\bR_{\bp,\bq}|\le 5 \big(m\sfP_{\bp,\bq}+\scrH(\overline{\calX},r)\big).
\end{align} 
Therefore, we can combine (\ref{conditionalt3}) and (\ref{bernsteinrpq}), then take a union bound over $(\bp,\bq)\in\calN_{r,\mu_4}^{(2)}$, to obtain
\begin{align}\nn
    T_3^{\bp,\bq}&\le C_3c^{(8)}r\eta\Delta \sqrt{\Big[\sfP_{\bp,\bq}+\frac{\scrH(\overline{\calX},r)}{m}\Big]\Big[\frac{r^{-2}\omega^2(\calK_{(r)})+\scrH(\overline{\calX},r)}{m}\Big]}\\
    &+ 5\eta\Delta \left[1+\frac{\scrH(\overline{\calX},r)}{m\sfP_{\bp,\bq}}\right] \sup_{\bw\in\calK_{(r)}}\sfP_{\bp,\bq}\mathbbm{E}(Z_i^{\bp,\bq}(\bw)),\quad \forall (\bp,\bq)\in\calN_{r,\mu_4}^{(2)}\label{near_final}
\end{align}
with probability at least $1-4\exp(-2\scrH(\overline{\calX},r))$. Moreover, by the linearity of $Z_i^{\bp,\bq}(\bw)$ on $\bw$ and $Z_i^{\bp,\bq}(\bw)\sim \big(J^{\bp,\bq}_{i,\bw}|E^{(i)}_{\bp,\bq}\big)$ and (\ref{eq:sg3_exp}), we have
\begin{align*}
     &\sup_{\bw\in \calK_{(r)}}\sfP_{\bp,\bq}\mathbbm{E}(Z_i^{\bp,\bq}(\bw)) \\&\le r\sup_{\bw\in\mathbb{S}^{n-1}}\big|\sfP_{\bp,\bq}\mathbbm{E}(Z_i^{\bp,\bq}(\bw))\big|\\
   &= r\sup_{\bw\in\mathbb{S}^{n-1}}\Big|\mathbbm{P}(E^{(i)}_{\bp,\bq})\mathbbm{E}\big(J^{\bp,\bq}_{i,\bw}|E^{(i)}_{\bp,\bq}\big)\Big|  \\&=r \sup_{\bw\in\mathbb{S}^{n-1}}\Big|\mathbbm{E}\Big(J_{i,\bw}^{\bp,\bq}\mathbbm{1}(E^{(i)}_{\bp,\bq})\Big)\Big|
   \le \frac{r \varepsilon^{(4)}\|\bp-\bq\|_2}{\Delta\vee\Lambda}.
\end{align*}
Substituting this into (\ref{near_final}) yields the claim. 
\end{proof}

\subsection{The Proof of Proposition \ref{pro5} (Bounding the First Term in (\ref{eq:qpe2raic}))} \label{app:pro5}
\begin{proof}
By triangle inequality we  decompose the first term in (\ref{eq:qpe2raic}) into 
$$\underbrace{|\eta\zeta|\sup_{\bw\in\calK_{(r_1)}}\Big|\bw^\top\Big(\bI_n-\frac{1}{m}\sum_{i=1}^m\ba_i\ba_i^\top\Big)(\bu-\bv)\Big|}_{:=\hat{T}_1}+r_1|\eta\zeta -1|\|\bu-\bv\|_2.$$
To further control $\hat{T}_1$ above, we shall discuss ``large-distance regime'' and ``small-distance regime''. 
\paragraph{Large-distance regime ($\|\bu-\bv\|_2\ge r_1$):} Uniformly for all $\bu,\bv\in\overline{\calX}\subset 2\calK$ obeying $\|\bu-\bv\|_2\ge r_1$ we have $$\frac{\bu-\bv}{\|\bu-\bv\|_2}\in  \frac{2(\calK-\calK)}{r_1} \cap \mathbb{B}_2^n \subset \frac{2\calK_{(r_1)}}{r_1}.$$
Thus, we obtain  
\begin{align*}
   \hat{T}_1 \le \frac{2|\eta\zeta|\|\bu-\bv\|_2}{r_1}\sup_{\bw,\bs\in \calK_{(r_1)}}\Big|\frac{1}{m}\sum_{i=1}^m \bw^\top\ba_i\ba_i^\top\bs - \bw^\top\bs\Big|.
\end{align*}
\paragraph{Small-distance regime ($\|\bu-\bv\|_2<r_1$):} Uniformly for all $\bu,\bv$ obeying $\|\bu-\bv\|_2<r_1$, we can directly take the supremum over $\bu-\bv\in (2\calK)_{(r_1)}\subset 2\calK_{(r_1)}$ and obtain 
\begin{align*}
    \hat{T}_1 \le 2|\eta\zeta|\sup_{\bw,\bs\in \calK_{(r_1)}}\Big|\frac{1}{m}\sum_{i=1}^m \bw^\top\ba_i\ba_i^\top\bs - \bw^\top\bs\Big|.
\end{align*}
By Lemma \ref{lem:product_process}, if $m\ge\frac{\omega^2(\calK_{(r_1)})}{r_1^2}$, then for some absolute constants $C_1,c_2$, we have \begin{align}
    \label{productbb}
    \sup_{\bw,\bs\in\calK_{(r_1)}}\Big|\frac{1}{m}\sum_{i=1}^m\bw^\top\ba_i\ba_i^\top\bs-\bw^\top\bs\Big|\le \frac{C_1r_1 \omega(\calK_{(r_1)})}{\sqrt{m}}
\end{align}
with probability at least $1-\exp(-c_2\omega^2(\calK_{(r_1)})/r_1^2)$. Therefore, combining the two regimes and the bound (\ref{productbb}) yield the claim. 
\end{proof}

\section{The Proofs in Section \ref{sec:instance} (Validating Assumptions)}\label{app:verify}

\subsection{The Proof of Lemma \ref{lem:1bcssmallassum} (Assumptions \ref{assum1}--\ref{assump4}, \ref{assum_Puvup} for 1bCS)}\label{app:provelem1}
\begin{proof}
    Note that Assumption  \ref{assum1} is automatic for Gaussian $\bA$, and in the 1-bit sensing regime Assumption \ref{assump4} holds trivially with $\varepsilon^{(1)}=c^{(3)}=c^{(4)}=0$. Since $\ba_i^\top\bu\sim\calN(0,1)$ for any $\bu\in \mathbb{S}^{n-1}$, given $r\in(0,\frac{1}{2})$ we have 
 $$\mathbbm{P}\big(|\ba_i^\top\bu|\le r\big)\le \sqrt{2/\pi}\cdot r$$
 that verifies Assumption \ref{assum2} with $c^{(1)}=\sqrt{8/\pi}$. Moreover, it is well-documented  
 that $\sfP_{\bu,\bv}$ equals to the Geodesic distance:
 \begin{align}
     \sfP_{\bu,\bv}=\mathbbm{P}(\sign(\ba_i^\top\bu)\ne\sign(\ba_i^\top\bv)) = \frac{\arccos(\bu^\top\bv)}{\pi}.\label{eq:geodesic}
 \end{align}
  \rev{For the proof and history of (\ref{eq:geodesic}) one can refer to \cite{vershynin2018high} and the references therein.} 
  Then it follows from some algebra that
\begin{align}\label{eq:1bcs_Puv}
   \frac{1}{\pi}\|\bu-\bv\|_2\le \sfP_{\bu,\bv} \le \frac{1}{2}\|\bu-\bv\|_2,\quad \forall \bu,\bv\in\mathbb{S}^{n-1},
\end{align}
which establishes Assumption \ref{assum3} with $c^{(2)}=\frac{2}{\pi}$ and Assumption \ref{assum_Puvup} with $c^{(9)}=1$. 
\end{proof}

\subsection{The Proof of Lemma \ref{fac:1bcs_sg2} (Assumptions \ref{assum_sg1}--\ref{assum_sg3} for 1bCS)}\label{app:provefact2}
\begin{proof}
Continuing from the proof of Lemma \ref{fac:1bcs_sg2}, we have that $\ba_i^\top\bbeta_1:=a_1$ and $\ba_i^\top\bbeta_2:=a_2$ are independent $\calN(0,1)$ variables.

\subsubsection*{(i) Validating Assumption \ref{assum_sg1}}
First we show (\ref{eq:sg1_sg}) holds for some absolute constant $c^{(5)}$. As we observed, we can simply focus on small enough $u_1$, say, $|u_1|\le \frac{1}{2}$; this implies $|u_2|\ge\frac{1}{2}$ due to $u_1^2+u_2^2=\|\bp\|_2^2=1$. We note that $\ba_i^\top\bbeta_1:=a_1$ and $\ba_i^\top\bbeta_2:=a_2$ are independent $\calN(0,1)$ variables. (\ref{Eib1b2}) thus specializes to  
    $$E^{(i)}_{\bp,\bq}=\big\{\sign(u_1a_1+u_2a_2)\ne\sign(-u_1a_1+u_2a_2)\big\}=\{u_1|a_1|>u_2|a_2|\}.$$
    Therefore, for any $p\ge 2$ we have 
    \begin{align*}
        &\mathbbm{E}\Big(|\ba_i^\top\bbeta_1|^p\mathbbm{1}(E^{(i)}_{\bp,\bq})\Big) 
        \\&= \mathbbm{E}\Big(|a_1|^p\mathbbm{1}\big(|u_1a_1|>|u_2a_2|\big)\Big)\\
        &\le \mathbbm{E}\left(|a_1|^p\sqrt{\frac{2}{\pi}}\Big|\frac{u_1a_1}{u_2}\Big|\right)\\&\le |2u_1|\cdot \sqrt{\frac{2}{\pi}}\mathbbm{E}\big[|a_1|^{p+1}\big]\le c^{(5)}\sfP_{\bp,\bq}\cdot\frac{p!}{2}, 
    \end{align*}
where the first inequality is due to the randomness of $a_2$, and the last two inequalities hold because $|u_2|\ge\frac{1}{2}$ and $2u_1=\|\bp-\bq\|_2 \asymp \sfP_{\bp,\bq}$.

 Next we establish (\ref{eq:sg1_exp}) with $c^{(6)}=\sqrt{2/\pi}$ and $\varepsilon^{(2)}=0$. Note that the joint distribution of $(|a_1|,|a_2|)$ is $\frac{2}{\pi}\exp(-\frac{a_1^2+a_2^2}{2})$, we have
       \begin{align*}
          & \mathbbm{E}\Big(|\ba_i^\top\bbeta_1| \mathbbm{1}(E^{(i)}_{\bp,\bq})\Big)\\
           &=\mathbbm{E}\Big(|a_1| \mathbbm{1}\big(u_1|a_1|>u_2|a_2|\big)\Big)\\
           & = \frac{2}{\pi}\int_0^\infty \exp\Big(-\frac{w^2}{2}\Big)\left[\int_{u_2w/u_1}^\infty z\exp\Big(-\frac{z^2}{2}\Big)~\text{d}z\right]~\text{d}w\\
           & = \frac{2}{\pi}\int_0^\infty \exp\Big(-\frac{w^2}{2u_1^2}\Big) ~\text{d}w = \sqrt{\frac{2}{\pi}}u_1.
       \end{align*}
  The proof is complete. 

\subsubsection*{(ii) Validating Assumption \ref{assum_sg2}}
Recall that we can restrict our attention to $u_1\le \frac{1}{2}$ and $u_2\ge \frac{1}{2}$ in the validation of (\ref{eq:sg2_sg}). For any $p\ge 2$ we have
\begin{align*}
    &\mathbbm{E}\Big(|\ba_i^\top\bbeta_2|^p \mathbbm{1}(E^{(i)}_{\bp,\bq})\Big) \\&= \mathbbm{E}\Big(|a_2|^p\mathbbm{1}(u_1|a_1|>u_2|a_2|)\Big)\\
    &\le \mathbbm{E}\Big(|a_2|^{p-1}\cdot 2u_1|a_1|\Big)\\& = \|\bp-\bq\|_2 \sqrt{\mathbbm{E}[|a_2|^{2(p-1)}]}\\& \le \sfP_{\bp,\bq}(C_1\sqrt{p})^p\le C_2\sfP_{\bp,\bq}\cdot\frac{p!}{2}, 
\end{align*}
where in the first inequality we use the event $|a_2|<\frac{u_1|a_1|}{u_2}\le 2u_1|a_1|=\|\bp-\bq\|_2 |a_1|$. Therefore, (\ref{eq:sg2_sg}) holds with some absolute constant $c^{(7)}=C_2$.

In addition, (\ref{eq:sg2_exp}) holds with $\varepsilon^{(3)}=0$ because 
$$\mathbbm{E}\Big(\sign(\ba_i^\top\bbeta_1)\ba_i^\top\bbeta_2 \mathbbm{1}(E^{(i)}_{\bp,\bq})\Big)=\mathbbm{E}\Big(\sign(a_1)a_2 \mathbbm{1}(u_1|a_1|>u_2|a_2|)\Big)=0,$$
where we use the randomness of $\sign(a_1)$.

\subsubsection*{(iii) Validating Assumption \ref{assum_sg3}}
For any $\bw\in\mathbb{R}^n$ we have defined $J^{\bp,\bq}_{i,\bw}=\sign(\ba_i^\top\bbeta_1)[\ba_i-(\ba_i^\top\bbeta_1)^\top\bbeta_1-(\ba_i^\top\bbeta_2)^\top\bbeta_2]^\top\bw$, and we further define $$a_{\bw}:=[\ba_i-(\ba_i^\top\bbeta_1)^\top\bbeta_1-(\ba_i^\top\bbeta_2)^\top\bbeta_2]^\top\bw=\ba_i^\top(\bI_n-\bbeta_1\bbeta_1^\top-\bbeta_2\bbeta_2^\top)\bw.$$ Because $(\bI_n-\bbeta_1\bbeta_1^\top-\bbeta_2\bbeta_2^\top)\bw$ is orthogonal to $\bbeta_1,\bbeta_2$, the rotational invariance of $\ba_i\sim\calN(0,\bI_n)$ implies that $a_1,a_2,a_{\bw}$ are independent. In view of $E^{(i)}_{\bp,\bq}=\{u_1|a_1|>u_2|a_2|\}$, $a_\bw$ and $E^{(i)}_{\bp,\bq}$ are independent. Note that $(\bI_n-\bbeta_1\bbeta_1^\top-\bbeta_2\bbeta_2^\top)\bw\in\mathbb{B}_2^n$ when $\bw\in\mathbb{S}^{n-1}$. Therefore, for $p\ge 2$ we have
\begin{align*}
    \mathbbm{E}\Big(|J_{i,\bw}^{\bp,\bq}|^p\mathbbm{1}(E^{(i)}_{\bp,\bq})\Big) = \mathbbm{E}\big(|a_{\bw}|^p\mathbbm{1}(E^{(i)}_{\bp,\bq})\big) = \sfP_{\bp,\bq}\mathbbm{E}(|a_\bw|^p) \le \sfP_{\bp,\bq}(C_3\sqrt{p})^p,
\end{align*}
yielding (\ref{eq:sg3_sg}) with some absolute constant $c^{(8)}$.

Moreover, (\ref{eq:sg3_exp}) holds with $\varepsilon^{(4)}=0$ because we can use the randomness of $\sign(a_1)$ to obtain 
$$\mathbbm{E}\Big(J_{i,\bw}^{\bp,\bq}\mathbbm{1}(E^{(i)}_{\bp,\bq})\Big)=\mathbbm{E}\big(\sign(a_1)a_{\bw}\mathbbm{1}(u_1|a_1|>u_2|a_2|)\big)=0,$$
as desired. 
\end{proof}

\subsection{The Proof of Lemma \ref{lem:d1bcsearly} (Assumptions \ref{assum2}--\ref{assump4}, \ref{assum_Puvup} for D1bCS)}\label{app:proveearlyd1bcs}
\begin{proof}

For any $\bu\in\mathbb{B}_2^n$ and small enough $r>0$, we utilize the randomness of $\tau_i\sim \scrU([-\lambda,\lambda])$ to obtain 
$$\mathbbm{P}\big(|\ba_i^\top\bu-\tau_i|\le r\big) \le \frac{r}{\lambda},$$ which establishes Assumption \ref{assum2} with $c^{(1)}=1$. Also, Assumption \ref{assump4} holds trivially for 1-bit models with $\varepsilon^{(1)}=c^{(3)}=c^{(4)}=0$.

Now let us work on Assumptions \ref{assum3} and \ref{assum_Puvup}. Under $\lambda\ge \Delta=2$, we only need to find absolute constants $c_1$ and $C_2$ such that $$\frac{c_1\|\bu-\bv\|_2}{\lambda}\le \sfP_{\bu,\bv} \le \frac{C_2\|\bu-\bv\|_2}{\lambda},\quad\forall \bu,\bv\in\mathbb{B}_2^n.$$
To this end, in view of
$$\sfP_{\bu,\bv} = \mathbbm{E}\big[\mathbbm{1}(\sign(\ba_i^\top\bu-\tau_i)\ne\sign(\ba_i^\top\bv-\tau_i))\big],$$
we can first utilize the randomness of $\tau_i$ to obtain an upper bound
\begin{align}
    \sfP_{\bu,\bv}\le \mathbbm{E}\left[\frac{|\ba_i^\top(\bu-\bv)|}{2\lambda}\right]\label{tauupper}
\end{align}
and also a lower bound
     \begin{align}\nn
        \sfP_{\bu,\bv} &\ge \mathbbm{E}\big[\mathbbm{1}(\sign(\ba_i^\top\bu-\tau_i)\ne\sign(\ba_i^\top\bv-\tau_i))\mathbbm{1}(|\ba_i^\top\bu|\vee|\ba_i^\top\bv|\le\lambda)\big]\\ 
        & = \mathbbm{E}\left[\frac{|\ba_i^\top(\bu-\bv)|}{2\lambda}\cdot \mathbbm{1}\big(|\ba_i^\top\bu|\vee|\ba_i^\top\bv|\le\lambda\big)\right].\label{taulower}
    \end{align}

Now
 we introduce $\bn_0:=\frac{\bu-\bv}{\|\bu-\bv\|_2}\in\mathbb{S}^{n-1}$ and further deal with the randomness of $\ba_i$. By (\ref{tauupper}) we obtain Assumption \ref{assum_Puvup} with $c^{(9)}=\frac{1}{2}$: 
 $$\sfP_{\bu,\bv}\le \frac{\|\bu-\bv\|_2\mathbbm{E}[|\ba_i^\top\bn_0|]}{2\lambda}\le \frac{\|\bu-\bv\|_2}{2\lambda}.$$
 Conversely, if $\lambda$ is large enough, then we proceed from (\ref{taulower}) as 
\begin{align*}
    \sfP_{\bu,\bv} &\ge \frac{\|\bu-\bv\|_2}{2\lambda}\mathbbm{E}\Big[|\ba_i^\top\bn_0|-|\ba_i^\top\bn_0|\mathbbm{1}(|\ba_i^\top\bu|>\lambda)-|\ba_i^\top\bn_0|\mathbbm{1}(|\ba_i^\top\bv|>\lambda)\Big]\\
    &\ge \frac{\|\bu-\bv\|_2}{2\lambda}\Big[L_0-4\exp(-\bar{c}_0\lambda^2)\Big]\ge \frac{L_0\|\bu-\bv\|_2}{4\lambda},
\end{align*}
where the second inequality follows from  (\ref{eq:L1L2equ}) and (\ref{eq:d1bcs_sgtail}) that gives $$\mathbbm{E}\big[|\ba_i^\top\bn_0|\mathbbm{1}(|\ba_i^\top\bw|>\lambda)\big]\le \sqrt{\mathbbm{E}[|\ba_i^\top\bn_0|^2]\cdot\mathbbm{P}(|\ba_i^\top\bw|>\lambda)}\le 2\exp(-\bar{c}_0\lambda^2),\quad\forall \bw\in\mathbb{B}_2^n,$$ 
  and the last inequality holds when $\lambda$ is large enough. Thus, Assumption \ref{assum3} holds with absolute constant $c^{(2)}=\frac{L_0}{4}$.
\end{proof}
\subsection{The Proof of Lemma \ref{fact4} (Assumptions \ref{assum_sg1}--\ref{assum_sg3} for D1bCS)}\label{app:provefact3}
\begin{proof}
Because $(\ba_i^\top\bbeta_1,\ba_i^\top\bbeta_2,J_{i,\bw}^{\bp,\bq})$ are independent of $\tau_i$, we can always expect $\tau_i$ first and, as with (\ref{tauupper}) and (\ref{taulower}), we have the two-sided bound
\begin{align}
    \label{tau2sided11}
   \frac{|\ba_i^\top(\bp-\bq)|}{2\lambda}\cdot\mathbbm{1}\big(|\ba_i^\top\bp|\vee|\ba_i^\top\bq|\le\lambda\big)\le \mathbbm{E}_{\tau_i}\big[\mathbbm{1}(E^{(i)}_{\bp,\bq})\big]\le \frac{|\ba_i^\top(\bp-\bq)|}{2\lambda}.
\end{align}
This  enables us  to essentially treat $\mathbbm{1}(E^{(i)}_{\bp,\bq})$ as $\frac{|\ba_i^\top(\bp-\bq)|}{2\lambda}=\frac{\|\bp-\bq\|_2}{2\lambda}|\ba_i^\top\bbeta_1|$, as $\{|\ba_i^\top\bp|\vee|\ba_i^\top\bq|\le\lambda\}$ holds with overwhelming probability under sufficiently large $\lambda$.

    Recall that any two different points $\bp,\bq$ in $\mathbb{B}_2^n$   are parameterized by some orthonormal $(\bbeta_1,\bbeta_2)$, and we have derived a two-sided bound in (\ref{tau2sided}) that implies 
\begin{align}
    \label{tau2sided2}
    -\frac{|\ba_i^\top(\bp-\bq)|}{2\lambda}\cdot\big[\mathbbm{1}(|\ba_i^\top\bp|>\lambda)+\mathbbm{1}(|\ba_i^\top\bq|>\lambda)\big] \le \mathbbm{E}_{\tau_i}\big[\mathbbm{1}(E^{(i)}_{\bp,\bq})\big]- \frac{|\ba_i^\top(\bp-\bq)|}{2\lambda}\le 0.
\end{align}
We introduce the notation 
$$a_1:=\ba_i^\top\bbeta_1,~a_2:=\ba_i^\top\bbeta_2,~a_{\bw}:=[\ba_i-(\ba_i^\top\bbeta_1)\bbeta_1-(\ba_i^\top\bbeta_2)\bbeta_2]^\top\bw=\ba_i^\top(\bI_n-\bbeta_1\bbeta_1^\top-\bbeta_2\bbeta_2^\top)\bw$$
with $\bw\in\mathbb{S}^{n-1}$. Then by $\mathbbm{E}(\ba_i\ba_i^\top)=\bI_n$ it is easy to see $\mathbbm{E}(a_1^2)=1$ and $\mathbbm{E}(a_1a_2)=\mathbbm{E}(a_1a_{\bw})=0$. In this proof, we will first validate the moment bounds (\ref{eq:sg1_sg}), (\ref{eq:sg2_sg}) and (\ref{eq:sg3_sg}), and then show (\ref{eq:sg1_exp}), (\ref{eq:sg2_exp}) and (\ref{eq:sg3_exp}) via calculations.  

\subsubsection*{(i) Validating (\ref{eq:sg1_sg}), (\ref{eq:sg2_sg}) and (\ref{eq:sg3_sg})}
We shall give a unified treatment. Let $A=a_1$, $a_2$ or $a_\bw$, and note that  when $\bw\in \mathbb{S}^{n-1}$, for any choice of $A$ and any $p\ge 2$, we have $\mathbbm{E}|A|^p\le (\bar{c}_1\sqrt{p})^p$ as in (\ref{eq:d1bcs_moment}). By using the randomness of $\tau_i$ and the upper bound in (\ref{tau2sided2}), we obtain
\begin{align*}
    &\mathbbm{E}\Big(|A|^p\mathbbm{1}(E^{(i)}_{\bp,\bq})\Big)\\& = \mathbbm{E}\Big(|A|^p \mathbbm{E}_{\tau_i}\big[E^{(i)}_{\bp,\bq}\big]\Big)\\
    &\le \frac{\|\bp-\bq\|_2}{2\lambda}\mathbbm{E}\Big(|A|^p|\ba_i^\top\bbeta_1|\Big)\\&\le \frac{\|\bp-\bq\|_2}{2\lambda}\sqrt{\mathbbm{E}[|A|^{2p}]}\\
    &\le C_1\sfP_{\bp,\bq}(2\bar{c}_1\sqrt{p})^p \le C_2\sfP_{\bp,\bq}\cdot\frac{p!}{2}
\end{align*}
for some absolute constants $C_1$ and $C_2$, where we use Cauchy-Schwarz inequality and the verified $\sfP_{\bp,\bq}\asymp\frac{\|\bp-\bq\|_2}{\lambda}$. Therefore, (\ref{eq:sg1_sg}), (\ref{eq:sg2_sg}) and (\ref{eq:sg3_sg}) hold with absolute constants $c^{(5)}$, $c^{(7)}$ and $c^{(8)}$.

\subsubsection*{(ii) Validating (\ref{eq:sg1_exp}), (\ref{eq:sg2_exp}) and (\ref{eq:sg3_exp})}
    Again we give a unified treatment and let $A$ be one of the followings: $a_1$, $a_2$ or $a_{\bw}$ with $\bw\in\mathbb{S}^{n-1}$.  Then, we shall calculate $$\mathbbm{E}\big(\sign(a_1)A\cdot\mathbbm{1}(E^{(i)}_{\bp,\bq})\big)=\mathbbm{E}\big(\sign(a_1)A\cdot\mathbbm{E}_{\tau_i}\big[\mathbbm{1}(E^{(i)}_{\bp,\bq})\big]\big)$$ for all choices of $A$. Our two-sided estimate (\ref{tau2sided2}) yields 
  \begin{align*}
      &\left|\mathbbm{E}\big(\sign(a_1)A\cdot\mathbbm{E}_{\tau_i}\big[\mathbbm{1}(E^{(i)}_{\bp,\bq})\big]\big)-\mathbbm{E}\Big(\sign(a_1)A\cdot\frac{|\ba_i^\top(\bp-\bq)|}{2\lambda}\Big)\right|\\
      &\le \mathbbm{E}\left[|A|\cdot \frac{|\ba_i^\top(\bp-\bq)|}{2\lambda}\Big(\mathbbm{1}(|\ba_i^\top\bp|>\lambda)+\mathbbm{1}(|\ba_i^\top\bq|>\lambda)\Big)\right]\\
      &\le \frac{\|\bp-\bq\|_2}{\lambda}\sup_{\bu\in \mathbb{B}_2^n}\mathbbm{E}\left[|A|\cdot|\ba_i^\top\bbeta_1|\cdot \mathbbm{1}(|\ba_i^\top\bu|>\lambda)\right]\\
      &\le \frac{\|\bp-\bq\|_2}{\lambda} \Big[\mathbbm{E}|A|^4\Big]^{1/4}\Big[\mathbbm{E}|\ba_i^\top\bbeta_1|^4\Big]^{1/4}\sup_{\bu\in \mathbb{B}_2^n} \sqrt{\mathbbm{P}(|\ba_i^\top\bu|>\lambda)}\\
      &\le \frac{\|\bp-\bq\|_2}{\lambda} \cdot (2\bar{c}_1)^2\cdot \sqrt{2\exp(-2\bar{c}_0\lambda^2)}\\&
      \le 8(\bar{c}_1)^2\exp(-\bar{c}_0\lambda^2)\cdot \frac{\|\bp-\bq\|_2}{\lambda}, 
  \end{align*}
  where in the last line we use (\ref{eq:d1bcs_sg_abs}).  Therefore, to establish (\ref{eq:sg1_exp}), (\ref{eq:sg2_exp}) and (\ref{eq:sg3_exp}) with $\varepsilon^{(2)}=\varepsilon^{(3)}=\varepsilon^{(4)}=8(\bar{c}_1)^2\exp(-\bar{c}_0\lambda^2)$, all that remains is to evaluate $\mathbbm{E}(\sign(a_1)A\frac{|\ba_i^\top(\bp-\bq)|}{2\lambda})$:
\begin{align*}
    &\mathbbm{E}\Big(\sign(a_1)A\frac{|\ba_i^\top(\bp-\bq)|}{2\lambda}\Big) 
    \\&= \frac{\|\bp-\bq\|_2}{2\lambda}\mathbbm{E}\big(\sign(a_1)A|\ba_i^\top\bbeta_1|\big)
    \\&= \frac{\|\bp-\bq\|_2}{2\lambda}\mathbbm{E}(a_1A) = \begin{cases}
        \frac{\|\bp-\bq\|_2}{2\lambda},\quad &\text{if }~A=a_1\\
        ~~0~~,\quad &\text{if }~A=a_2,~a_{\bw}
    \end{cases}.
\end{align*}
We come to the desired claim. 
 \end{proof}
\subsection{The Proof of Lemma \ref{fact7} (Assumptions \ref{assum2}, \ref{assum3}, \ref{assum_Puvup} for DMbCS)}\label{app:provefact4}
\begin{proof}
For any $\bu\in\mathbb{B}_2^n$, we utilize the randomness of $\tau_i$ to obtain  
     \begin{align*}
        &\mathbbm{P}\left(\min_{|j|\le (L/2)-1}|\ba_i^\top \bu -\tau_i- j\delta |\le r\right) = \mathbbm{P}\left(-\tau_i\in \bigcup_{|j|\le (L/2)-1}\big[j\delta-r-\ba_i^\top\bu,j\delta+r-\ba_i^\top\bu\big]\right)\le \frac{2r}{\delta},
    \end{align*}
which establishes Assumption \ref{assum2} with $c^{(1)}=2$.

Note that $\sfP_{\bu,\bv}= \mathbbm{P}(E^{(i)}_{\bu,\bv})=\mathbbm{E}\big(\mathbbm{1}(E^{(i)}_{\bu,\bv})\big)= \mathbbm{E}_{\ba_i}\big[\mathbbm{E}_{\tau_i}\big(\mathbbm{1}(E^{(i)}_{\bu,\bv})\big)\big]$.

\paragraph{Expecting $\tau_i$:} By the definition of $\calQ_{\delta,L} $ in (\ref{eq:QdeltaL}), we have $\calQ_{\delta,L}(a)=\calQ_\delta(a)$ for $|a|<\frac{L\delta}{2}$, which together with $|\tau_i|\le \frac{\delta}{2}$ yields
\begin{subequations}
    \begin{gather}\label{86aa}
        \mathbbm{1}(E^{(i)}_{\bu,\bv}) \le \mathbbm{1}\big(\calQ_{\delta}(\ba_i^\top\bu-\tau_i)\ne\calQ_{\delta}(\ba_i^\top\bv-\tau_i)\big),\\\label{86bb}
        \mathbbm{1}(E^{(i)}_{\bu,\bv})\ge \mathbbm{1}\big(\calQ_{\delta}(\ba_i^\top\bu-\tau_i)\ne\calQ_{\delta}(\ba_i^\top\bv-\tau_i)\big)\mathbbm{1}\Big(|\ba_i^\top\bu|\vee|\ba_i^\top\bv|\le \frac{(L-1)\delta}{2}\Big)
    \end{gather}
\end{subequations}
Therefore, we can utilize the randomness of $\tau_i\sim\scrU([-\frac{\delta}{2},\frac{\delta}{2}])$ to obtain from (\ref{86aa})
\begin{align}\nn
    &\mathbbm{E}_{\tau_i}\big(E^{(i)}_{\bu,\bv}\big)\le \mathbbm{E}_{\tau_i}\Big(\mathbbm{1}\big(\calQ_{\delta}(\ba_i^\top\bu-\tau_i)\ne\calQ_{\delta}(\ba_i^\top\bv-\tau_i)\big)\Big)\\
    \nn&= \mathbbm{E}_{\tau_i}\left[\mathbbm{1}\big(|\ba_i^\top(\bu-\bv)|\ge\delta\big)+\mathbbm{1}\big(|\ba_i^\top(\bu-\bv)|<\delta\big)\mathbbm{1}\left(\tau_i\in \bigcup_{j\in \mathbb{Z}}\big(\ba_i^\top\bu\wedge\ba_i^\top\bv -j\delta,\ba_i^\top\bu\vee\ba_i^\top\bv -j\delta\big)\right)\right]\\
    \label{tau_uppermb}&=\mathbbm{1}\big(|\ba_i^\top(\bu-\bv)|\ge\delta\big)+\mathbbm{1}\big(|\ba_i^\top(\bu-\bv)|<\delta\big)\frac{|\ba_i^\top(\bu-\bv)|}{\delta}\le\frac{|\ba_i^\top(\bu-\bv)|}{\delta} 
\end{align}
and similarly obtain from (\ref{86bb})
\begin{align}
    &\mathbbm{E}_{\tau_i}\big(E^{(i)}_{\bu,\bv}\big)\ge \mathbbm{E}_{\tau_i}\Big(\mathbbm{1}\big(\calQ_{\delta}(\ba_i^\top\bu-\tau_i)\ne\calQ_{\delta}(\ba_i^\top\bv-\tau_i)\big)\Big)  \mathbbm{1}\Big(|\ba_i^\top\bu|\vee|\ba_i^\top\bv|\le \frac{(L-1)\delta}{2}\Big)\nn\\
    &=\left[ \mathbbm{1}\big(|\ba_i^\top(\bu-\bv)|\ge\delta\big)+\mathbbm{1}\big(|\ba_i^\top(\bu-\bv)|<\delta\big)\frac{|\ba_i^\top(\bu-\bv)|}{\delta} \right]\mathbbm{1}\Big(|\ba_i^\top\bu|\vee|\ba_i^\top\bv|\le \frac{(L-1)\delta}{2}\Big)\label{8888}\\
\label{taulowermb}    &\ge \frac{|\ba_i^\top(\bu-\bv)|}{\delta} \mathbbm{1}\Big(|\ba_i^\top\bu|\vee|\ba_i^\top\bv|\le \frac{(L-1)\delta}{2},~~|\ba_i^\top(\bu-\bv)|<\delta\Big)
\end{align}

\paragraph{Expecting $\ba_i$:} We further take expectation regarding $\tau_i$. By (\ref{tau_uppermb}) we have 
     \begin{align*}
    &\sfP_{\bu,\bv}\le \frac{\mathbbm{E}|\ba_i^\top(\bu-\bv)|}{\delta}\le \frac{\|\bu-\bv\|_2}{\delta}.
\end{align*}
  Hence, Assumption \ref{assum_Puvup} holds with $c^{(9)}=1$.

  By continuing from (\ref{8888}) and some simple algebra, with $\bn := \frac{\bu-\bv}{\|\bu-\bv\|_2}$
  we can lower bound $\sfP_{\bu,\bv}$ as 
\begin{subequations}\label{eq:lower_Puvmbmb}
    \begin{align}\nn
        &\sfP_{\bu,\bv}  
         \ge  \mathbbm{E}\left[\left[\mathbbm{1}\big(|\ba_i^\top(\bu-\bv)|\ge\delta\big)+\mathbbm{1}\big(|\ba_i^\top(\bu-\bv)|<\delta\big)\frac{|\ba_i^\top(\bu-\bv)|}{\delta}\right]\cdot\mathbbm{1}\Big(|\ba_i^\top\bu|\vee|\ba_i^\top\bv|\le \frac{(L-1)\delta}{2}\Big)\right] 
          \\\label{eq:term1_lower}
        & = \mathbbm{E}\left[\mathbbm{1}\Big(|\ba_i^\top \bn|\ge \frac{\delta}{\|\bu-\bv\|_2}\Big)+\mathbbm{1}\Big(|\ba_i^\top\bn|<\frac{\delta}{\|\bu-\bv\|_2}\Big)\frac{|\ba_i^\top\bn|\|\bu-\bv\|_2}{\delta}\right]\\\label{eq:term2_up}
        &- \mathbbm{E}\left[\left[\mathbbm{1}\Big(|\ba_i^\top\bn|\ge \frac{\delta}{\|\bu-\bv\|_2}\Big)+\mathbbm{1}\Big(|\ba_i^\top\bn|<\frac{\delta}{\|\bu-\bv\|_2}\Big)\frac{|\ba_i^\top\bn|\|\bu-\bv\|_2}{\delta}\right] \mathbbm{1}\Big(|\ba_i^\top\bu|\vee|\ba_i^\top\bv|>\frac{(L-1)\delta}{2}\Big)\right].
    \end{align}
\end{subequations}
It is not hard to see the following lower bound on (\ref{eq:term1_lower}): 
\begin{align*}
    \text{The term in (\ref{eq:term1_lower})}\ge \min\Big\{\frac{\|\bu-\bv\|_2}{\delta},1\Big\}\cdot \mathbbm{E}\Big[\mathbbm{1}(|\ba_i^\top\bn|\ge 1)+ |\ba_i^\top\bn|\mathbbm{1}(|\ba_i^\top\bn|<1)\Big].
\end{align*}
By (\ref{eq:prob_ge12}), for some absolute constant $L_0$ we have $$\mathbbm{E}\big[\mathbbm{1}(|\ba_i^\top\bn|\ge 1)+|\ba_i^\top\bn|\mathbbm{1}(|\ba_i^\top\bn|<1)\big]\ge \frac{1}{2}\mathbbm{P}\Big(|\ba_i^\top\bn|\ge \frac{1}{2}\Big)\ge L_0,$$ 
hence we obtain the lower bound
\begin{align*}
    \text{The term in (\ref{eq:term1_lower})}\ge L_0\min\Big\{\frac{\|\bu-\bv\|_2}{\delta},1\Big\}. 
\end{align*}

 Next, we invoke Cauchy–Schwarz  inequality to upper bound the  term in (\ref{eq:term2_up}): 
     \begin{align*}
        &\text{The term in (\ref{eq:term2_up})}\\
        &\le \sqrt{\mathbbm{E}\left[\mathbbm{1}\Big(|\ba_i^\top\bn|\ge\frac{\delta}{\|\bu-\bv\|_2}\Big) + \mathbbm{1}\Big(|\ba_i^\top\bn|<\frac{\delta}{\|\bu-\bv\|_2}\Big)\frac{|\ba_i^\top\bn|^2\|\bu-\bv\|^2_2}{\delta^2}\right]\cdot\mathbbm{P}\Big(|\ba_i^\top\bu|\vee|\ba_i^\top\bv|>\frac{(L-1)\delta}{2}\Big)}  \\
        & \le 2\min\Big\{\frac{\|\bu-\bv\|_2}{\delta},1\Big\} \exp\Big(-\frac{\bar{c}_0L^2\delta^2}{16}\Big),
    \end{align*}
 where in the last inequality we use (\ref{eq:d1bcs_sgtail}) and $\frac{L-1}{2}\ge \frac{L}{4}$; they imply $$\sqrt{\mathbbm{P}\Big(|\ba_i^\top\bu|\vee|\ba_i^\top\bv|>\frac{(L-1)\delta}{2}\Big)}\le  2\exp\Big(-\frac{\bar{c}_0L^2\delta^2}{16}\Big).$$ Therefore, when $L\delta$ is sufficiently large such that $2\exp(-\frac{\bar{c}_0L^2\delta^2}{16})<\frac{L_0}{2}$, we arrive at $$\sfP_{\bu,\bv}\ge \frac{L_0}{2}\min\Big\{\frac{\|\bu-\bv\|_2}{\delta},1\Big\},$$
thus establishing Assumption \ref{assum3} with $c^{(2)}=\frac{L_0}{2}$. The proof is complete.    
\end{proof}
\subsection{The Proof of Lemma \ref{fact8} (Assumption \ref{assump4} for DMbCS)}\label{app:provefact5}
 
Since we work with the clipped gradient $\hat{\bh}(\bp,\bq)$, now we shall validate Assumption \ref{assump4} to bound the deviation.   By (\ref{eq:hpq}) and (\ref{eq:hathpq}) we first formulate 
\begin{align*}
    \bh(\bp,\bq)-\hat{\bh}(\bp,\bq) = \frac{1}{m}\sum_{i\in\bR_{\bp,\bq}}\sign(\ba_i^\top(\bp-\bq))\big(|\calQ_{\delta,L}(\ba_i^\top\bp-\tau_i)-\calQ_{\delta,L}(\ba_i^\top\bq-\tau_i)|-\delta\big)\ba_i. 
\end{align*}
  Using  triangle inequality   we can formulate
    \begin{align}
        &\eta\cdot\| \bh(\bp,\bq)-\hat{\bh}(\bp,\bq)\|_{\calK_{(r)}^\circ}\nn\\
        & = \eta\cdot \sup_{\bw\in\calK_{(r)}}\left|\frac{1}{m}\sum_{i\in\bR_{\bp,\bq}}\sign(\ba_i^\top(\bp-\bq))\big(|\calQ_{\delta,L}(\ba_i^\top\bp-\tau_i)-\calQ_{\delta,L}(\ba_i^\top\bq-\tau_i)|-\delta\big)\ba_i^\top\bw\right|\nn\\
        &\le \eta\cdot \sup_{\bw\in\calK_{(r)}}\frac{1}{m}\sum_{i\in\bR_{\bp,\bq}}\big||\calQ_{\delta,L}(\ba_i^\top\bp-\tau_i)-\calQ_{\delta,L}(\ba_i^\top\bq-\tau_i)|-\delta\big||\ba_i^\top\bw|\nn\\
        &\le \eta\cdot \sup_{\bw\in\calK_{(r)}}\frac{1}{m}\sum_{i\in\bR_{\bp,\bq}}|\ba_i^\top\bw||\ba_i^\top(\bp-\bq)|\mathbbm{1}\big(|\ba_i^\top(\bp-\bq)|\ge\delta\big) ,\label{eq:formu_assum4}
    \end{align} 
where the last inequality is due to the following Lemma \ref{lem:mblem}.

\begin{lem}\label{lem:mblem}
    For any $a,b\in \mathbb{R}$ we have 
    \begin{align*}
        \big||\calQ_{\delta,L}(a)-\calQ_{\delta,L}(b)|-\delta\big|\cdot\mathbbm{1}(\calQ_{\delta,L}(a)\ne\calQ_{\delta,L}(b))\le |a-b|\cdot \mathbbm{1}(|a-b|\ge\delta). 
    \end{align*}
\end{lem}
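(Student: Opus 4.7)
The plan is to use the fact that $\calQ_{\delta,L}$ is a staircase function taking values in $\{(k+\tfrac12)\delta:k=-L/2,\ldots,L/2-1\}$ (assuming $L$ even, the odd case is analogous), so that any two distinct output values differ by an integer multiple of $\delta$. The goal then reduces to relating the ``bin gap'' to the actual distance $|a-b|$.

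First, I would dispose of the trivial case $\calQ_{\delta,L}(a)=\calQ_{\delta,L}(b)$: both sides of the claimed inequality equal zero and there is nothing to prove. So assume $\calQ_{\delta,L}(a)\neq\calQ_{\delta,L}(b)$ and, without loss of generality, that $a<b$; write $|\calQ_{\delta,L}(a)-\calQ_{\delta,L}(b)|=j\delta$ for some positive integer $j\in\{1,\ldots,L-1\}$. The core claim I would prove is
\[
|a-b|\;\ge\;(j-1)\delta.
\]
To see this, describe the preimage bins explicitly: the middle bins are half-open intervals $[k\delta,(k+1)\delta)$ of length $\delta$ for $-L/2+1\le k\le L/2-2$, while the two saturated bins are the unbounded intervals $(-\infty,(1-L/2)\delta)$ and $[(L/2-1)\delta,\infty)$. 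If $a$ sits in a bin whose right endpoint is $c$ and $b$ sits in a bin whose left endpoint is $d$, then since $a<c$ and $b\ge d$ we have $b-a>d-c$. A quick case check according to whether $a$'s bin, $b$'s bin, both, or neither are saturated shows in every case $d-c=(j-1)\delta$, where $j$ is the index gap between the two bins. This yields $|a-b|>(j-1)\delta$, which is the claim.

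With this in hand, the two sub-cases finish the proof immediately. If $j=1$, then $||\calQ_{\delta,L}(a)-\calQ_{\delta,L}(b)|-\delta|=0$ and the left-hand side is zero. If $j\ge 2$, then $|a-b|\ge(j-1)\delta\ge\delta$, so the indicator on the right-hand side is $1$ and $|a-b|\ge(j-1)\delta=||\calQ_{\delta,L}(a)-\calQ_{\delta,L}(b)|-\delta|$, giving exactly the required bound. No step seems to pose a serious obstacle; the only thing to be careful about is to handle the saturated (unbounded) bins uniformly with the middle ones in the ``bin gap'' estimate, which is why I would verify the four configurations (both bins internal, one saturated, the other saturated, both saturated) rather than invoking a single formula.
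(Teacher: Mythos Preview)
Your proposal is correct. The bin argument with the key estimate $|a-b|>(j-1)\delta$ does go through; the case analysis on saturated vs.\ internal bins is routine once one notes (as you implicitly do) that under $a<b$ and $\calQ_{\delta,L}(a)\ne\calQ_{\delta,L}(b)$, $a$ cannot lie in the upper saturated bin and $b$ cannot lie in the lower one, so the relevant endpoints $c,d$ are always finite.

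The paper takes a different route: instead of analyzing the bin structure of $\calQ_{\delta,L}$ directly, it first reduces to the unsaturated quantizer $\calQ_\delta$ via the monotonicity relations $|\calQ_{\delta,L}(a)-\calQ_{\delta,L}(b)|\le|\calQ_\delta(a)-\calQ_\delta(b)|$ and $\mathbbm{1}(\calQ_{\delta,L}(a)\ne\calQ_{\delta,L}(b))\le\mathbbm{1}(\calQ_\delta(a)\ne\calQ_\delta(b))$, and then uses the single pointwise bound $|\calQ_\delta(a)-a|\le\delta/2$ to get $|\calQ_\delta(a)-\calQ_\delta(b)|-\delta\le|a-b|$; the indicator constraint $|a-b|\ge\delta$ follows since a strictly positive left-hand side forces $|\calQ_\delta(a)-\calQ_\delta(b)|\ge 2\delta$. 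The paper's argument is a bit slicker in that the reduction to $\calQ_\delta$ absorbs all saturation effects at once, whereas your approach is more hands-on and needs the four-configuration check. On the other hand, your argument yields the slightly sharper strict inequality $|a-b|>(j-1)\delta$ and never leaves the saturated quantizer, so it is self-contained.
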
 

\begin{proof}[The Proof of Lemma \ref{lem:mblem}]
    By the definitions of $\calQ_{\delta} $ and $\calQ_{\delta,L} $ in (\ref{eq:Qdelta}) and (\ref{eq:QdeltaL}), we have $|\calQ_{\delta,L}(a)-\calQ_{\delta,L}(b)|\ge \delta $ when $\calQ_{\delta,L}(a)\ne\calQ_{\delta,L}(b)$, and we always have 
    $$|\calQ_{\delta,L}(a)-\calQ_{\delta,L}(b)|\le|\calQ_{\delta}(a)-\calQ_{\delta}(b)|\quad\text{and}\quad \mathbbm{1}(\calQ_{\delta,L}(a)\ne\calQ_{\delta,L}(b))\le \mathbbm{1}(\calQ_\delta(a)\ne\calQ_\delta(b)).$$
 Taken collectively, we arrive at 
        \begin{align*}
            \big||\calQ_{\delta,L}(a)-\calQ_{\delta,L}(b)|-\delta\big|\cdot \mathbbm{1}(\calQ_{\delta,L}(a)\ne\calQ_{\delta,L}(b))\le \big(|\calQ_{\delta}(a)-\calQ_{\delta}(b)|-\delta\big)\cdot\mathbbm{1}(\calQ_\delta(a)\ne\calQ_\delta(b)).
        \end{align*}
        Further,  by $|\calQ_\delta(a)-a|\le \frac{\delta}{2}$ ($\forall a\in\mathbb{R}$) we always have 
        \begin{align}\label{eq:Qab}
            |\calQ_\delta(a)-\calQ_\delta(b)|\le |a-b|+\delta,\quad \forall~a,b\in\mathbb{R}.
        \end{align}  
        This leads to $|\calQ_\delta(a)-\calQ_\delta(b)|-\delta\le |a-b|$. 
        Also note that when $\calQ_\delta(a)\ne\calQ_\delta(b)$, we have $|\calQ_\delta(a)-\calQ_\delta(b)|\in\{\delta,2\delta,3\delta,\cdots\}$ and hence $|\calQ_\delta(a)-\calQ_\delta(b)|-\delta\ge 0$; and note that if $|\calQ_\delta(a)-\calQ_\delta(b)|-\delta>0$, we must have $|\calQ_\delta(a)-\calQ_\delta(b)|\ge 2\delta$, which combined with (\ref{eq:Qab}) implies $|a-b|\ge \delta$. Combining all these observations leads to 
        \begin{align*}
           \big (|\calQ_\delta(a)-\calQ_\delta(b)|-\delta\big)\cdot\mathbbm{1}(\calQ_\delta(a)\ne\calQ_\delta(b)) \le |a-b|\cdot\mathbbm{1}(|a-b|\ge\delta),
        \end{align*}
       as desired.
 \end{proof}
 The major ideas for showing  (\ref{eq:boundclip1}) is to control the number of nonzero summands in the summation (\ref{eq:formu_assum4}), by making use of the  factor $\mathbbm{1}(|\ba_i^\top(\bp-\bq)|\ge\delta)$.  
   Specifically, by $\|\bp-\bq\|_2\le 2\mu_4\le 2c_1\delta$ with small enough $c_1$ we manage to show  that $|\{i\in [m]:|\ba_i^\top(\bp-\bq)|\ge\delta\}|$ is uniformly small over $(\bp,\bq)\in \calN_{r,\mu_4}^{(2)}$. In contrast, it turns out that  a cruder argument which proceeds with all $m$ summands is sufficient for (\ref{eq:boundclip2}).  We have the following statement.   It demonstrates that (\ref{eq:boundclip1}) holds with $\varepsilon^{(1)}$ exponentially decaying with $c_1^{-1}$ if $\mu_4\le c_1\delta$, thus ensures that the deviation has minimal impact.

\begin{proof}[The Proof of Lemma \ref{fact8}] 
Let us first show (\ref{eq:boundclip2}). 

\paragraph{Show (\ref{eq:boundclip2}):} Note that $\|\bp-\bq\|_2\le r$ implies $\bp-\bq\in \calK_{(r)}$, we  continue from  (\ref{eq:formu_assum4}) to obtain  
    \begin{align*}
       &\eta\cdot\|\bh(\bp,\bq)-\hat{\bh}(\bp,\bq)\|_{\calK^\circ_{(r)}} \le \eta \sup_{\bw\in\calK_{(r)}}\sup_{\bv\in\calK_{(r)}}\frac{1}{m}\sum_{i=1}^m|\ba_i^\top\bw||\ba_i^\top\bv|
        \\&\le \eta \sup_{\bw\in\calK_{(r)}}\frac{1}{m}\sum_{i=1}^m|\ba_i^\top\bw|^2 \le C_6\eta\Big(\frac{\omega(\calK_{(r)})}{\sqrt{m}}+r\Big)^2 \le 4C_6\eta r^2
    \end{align*}
     with probability at least $1-\exp(-c_7m)$, where $C_6,c_7$ are absolute constants; this is achieved by using   Lemma \ref{lem:max_ell_sum} with $\ell =m$ and the sample complexity $m\ge \frac{\delta\omega^2(\calK_{(r)})}{r^3}\ge \frac{\omega^2(\calK_{(r)})}{r}$. Therefore, (\ref{eq:boundclip2}) holds for some absolute constant $c^{(4)}$.

\paragraph{Show (\ref{eq:boundclip1}):}
Note that we can restrict our attention on $(\bp,\bq)\in\calN^{(2)}_{r,\mu_4}$ obeying $\|\bp-\bq\|_2\ge r$, since we can simply use the bound in (\ref{eq:boundclip2}) if $\|\bp-\bq\|_2\le r$. 
The core idea  is to control the number of effective contributors to the last line in (\ref{eq:formu_assum4}) and then invoke Lemma \ref{lem:max_ell_sum}.

We    use   $\mathbbm{1}(|\ba_i^\top(\bp-\bq)|\ge\delta)$ to control the number of non-zero contributors. For fixed $(\bp,\bq)\in\calN_{r,\mu_4}^{(2)}$ and a specific $i\in[m]$, by $\|\bp-\bq\|_2\le2\mu_4\le 2c_1\delta$ and (\ref{eq:d1bcs_sgtail}) where $\bar{c}_0$ is an absolute constant, we have 
\begin{align*}
    \mathbbm{P}\big(|\ba_i^\top(\bp-\bq)|\ge\delta\big) \le \mathbbm{P}\Big(\Big|\frac{\ba_i^\top(\bp-\bq)}{\|\bp-\bq\|_2}\Big|\ge \frac{1}{2c_1}\Big)\le 2\exp \Big(-\frac{\bar{c}_0}{2c_1^2}\Big):=p_0. 
\end{align*}
We assume $p_0\in (0,1/2)$, which can be ensured by setting $c_1$ sufficiently small.
By Chernoff bound and letting $c_2=\frac{\bar{c}_0}{2}$, we obtain 
     \begin{align*}
        &\mathbbm{P}\Big(\big|\{i\in[m]:|\ba_i^\top(\bp-\bq)|\ge\delta\}\big|\ge \frac{3mp_0}{2}\Big)\\&\le \mathbbm{P}\Big(\text{Bin}(m,p_0)\ge\frac{3mp_0}{2}\Big) \le \exp\Big(-\frac{mp_0}{12}\Big)=\exp \Big(-\frac{m\exp(-c_2/c_1^2)}{6}\Big).
    \end{align*}
    Because of the sample complexity $m\gtrsim\exp(\frac{c_2}{c_1^2}) \scrH(\calX,r)$ and $|\calN_{r,\mu_4}^{(2)}|\le |\scrN(\calX,r)|^2$, we can take a union bound over $(\bp,\bq)\in\calN_{r,\mu_4}^{(2)}$  and obtain that the event 
  \begin{align}\label{eq:uniEve}
      \big|\{i\in[m]:|\ba_i^\top(\bp-\bq)|\ge\delta\}\big|<\frac{3mp_0}{2},\quad\forall (\bp,\bq)\in\calN_{r,\mu_4}^{(2)}
  \end{align}
  holds with the probability at least $1-\exp(-c_8\scrH(\calX,r))$.

  Now we observe that any $(\bp,\bq)\in \calN_{r,\mu_4}^{(2)}$ obeying $\|\bp-\bq\|_2 \ge r$ satisfies $\frac{\bp-\bq}{\|\bp-\bq\|_2}\in \frac{\calK-\calK}{r}\cap \mathbb{B}_2^n=\frac{1}{r}\calK_{(r)}$.
  On the event (\ref{eq:uniEve}),     we continue from   (\ref{eq:formu_assum4})  and take supremum over $\frac{r(\bp-\bq)}{\|\bp-\bq\|_2}\in\calK_{(r)}$   to reach 
       \begin{align}\nn
          &\eta\cdot \|\bh(\bp,\bq) - \hat{\bh}(\bp,\bq)\|_{\calK_{(r)}^\circ}
          \\&\le \frac{\eta\|\bp-\bq\|_2}{r}\sup_{\bw \in\calK_{(r)}}\sup_{\bv\in \calK_{(r)}}\max_{\substack{I\subset [m]\\|I|\le 3mp_0/2}}\frac{1}{m}\sum_{i\in I}|\ba_i^\top\bw||\ba_i^\top\bv| \\\nn 
          &\le \frac{3\eta p_0\|\bp-\bq\|_2}{2r} \cdot  \sup_{\bw\in \calK_{( r)}}\max_{\substack{I\subset [m]\\|I|\le 3mp_0/2}} \frac{1}{3mp_0/2}\sum_{i\in I}|\ba_i^\top\bw|^2 \\
          &\le C_9\eta r\|\bp-\bq\|_2 \cdot \Big(\frac{\omega^2(\calK_{(r)})}{mr^2}+ p_0\log(p_0^{-1}) \Big),\label{eq:clip1uni}
      \end{align}
 where   the last step holds with probability at least $1-\exp(-c_{10}\scrH(\calX,r))$ for some absolute constants $C_9,c_{10}$ due to Lemma \ref{lem:max_ell_sum} and $mp_0= 2m\exp(-\frac{c_2}{c_1^2})\gtrsim \scrH(\calX,r)$. By $\|\bp-\bq\|_2\le \mu_4\le 2c_1\delta$ and the sample complexity $m\ge \frac{\delta\omega^2(\calK_{(r)})}{r^3}$, we have 
 $$\eta r\|\bp-\bq\|_2\cdot \frac{\omega^2(\calK_{(r)})}{mr^2}\lesssim \eta r\delta \cdot \frac{r}{\delta} = \eta r^2,$$ 
 which can be accommodated by $\frac{c^{(3)} \eta\Delta r}{\Delta\vee\Lambda}=c^{(3)}\eta r$ in (\ref{eq:boundclip1}) for some absolute constant $c^{(3)}$. When $c_1$ is small enough, then $p_0 = 2\exp(-\frac{\bar{c}_0}{2c_1^2})=2\exp(-\frac{c_2}{c_1^2})$ is also small enough, and we have $p_0\log(p_0^{-1}) \le \frac{C_5}{2}\sqrt{p_0}\le C_5\exp(-\frac{c_2}{2c_1^2})$ for some absolute constant $C_5$. Therefore, we have $$\eta r\|\bp-\bq\|_2 \cdot p_0\log(p_0^{-1})\le C_5\exp\Big(-\frac{c_2}{2c_1^2}\Big)\eta r\|\bp-\bq\|_2$$
 which can be accounted by $\frac{\varepsilon^{(1)}\phi\eta\Delta\|\bp-\bq\|_2}{\Delta\vee\Lambda} = \varepsilon^{(1)}\eta r\|\bp-\bq\|_2$ with $\varepsilon^{(1)} =C_5\exp(-\frac{c_2}{2c_1^2}) $, as claimed.  
 \end{proof} 
\subsection{The Proof of Lemma \ref{fact9} (Assumptions \ref{assum_sg1}--\ref{assum_sg3} for DMbCS)}\label{app:provefact6}
\begin{proof}
 The proof is similar to the proof of Lemma \ref{fact4} but involves more involved calculations. The main intuition is that the randomness of $\tau_i$ gives
\begin{align}
    \label{tau2sided}
  \frac{|\ba_i^\top(\bp-\bq)|}{\delta} \mathbbm{1}\Big(|\ba_i^\top(\bp-\bq)|<\delta,~|\ba_i^\top\bp|\vee |\ba_i^\top\bq|\le \frac{(L-1)\delta}{2}\Big)  \le \mathbbm{E}_{\tau_i}\big(\mathbbm{1}(E^{(i)}_{\bp,\bq})\big) \le \frac{|\ba_i^\top(\bp-\bq)|}{\delta},
\end{align}
which we obtain along the proof of Lemma \ref{fact7} in Appendix \ref{app:provefact4}. 
This allows us to essentially treat $\mathbbm{E}_{\tau_i}(\mathbbm{1}(E^{(i)}_{\bp,\bq}))$ as $\frac{|\ba_i^\top(\bp-\bq)|}{\delta}=\frac{\|\bp-\bq\|_2|\ba_i^\top\bbeta_1|}{\delta}$ for $(\bp,\bq)$ obeying $\|\bp-\bq\|_2\le 2\mu_4 \ll \delta$ and  $L\delta$ being large enough.

We consider any $\bp,\bq \in \mathbb{B}_2^n$ satisfying $\|\bp-\bq\|_2 \le 2\mu_4\le 2c_1\delta$ for some small enough $c_1$. By Assumptions \ref{assum3} and \ref{assum_Puvup} already validated in Lemma \ref{fact7}, we have $\sfP_{\bp,\bq}\asymp \frac{\|\bp-\bq\|_2}{\delta}$.

A useful technique is to first deal with the randomness of $\tau_i$. By (\ref{tau_uppermb}) and (\ref{taulowermb}) we have 
the two sided bound 
\begin{align}
    \label{2sidedmb}
    \frac{|\ba_i^\top(\bp-\bq)|}{\delta}\mathbbm{1}\Big(|\ba_i^\top\bp|\vee|\ba_i^\top\bq|\le \frac{(L-1)\delta}{2},~|\ba_i^\top(\bp-\bq)|<\delta\Big) \le \mathbbm{E}_{\tau_i}(E^{(i)}_{\bp,\bq}) \le \frac{|\ba_i^\top(\bp-\bq)|}{\delta}.
\end{align}
Recall that $\bbeta_1=\frac{\bp-\bq}{\|\bp-\bq\|_2}$ and $\bbeta_2$ are orthonormal. It is convenient to introduce the notation $\ba_i^\top\bbeta_1 := a_1$, $\ba_i^\top\bbeta_2:=a_2$ and  $[\ba_i-(\ba_i^\top\bbeta_1)\bbeta_1-(\ba_i^\top\bbeta_2)\bbeta_2]:=a_{\bw}$ for $\bw\in \mathbb{S}^{n-1}$. Moreover, we further use the generic notation $A$ which can be $a_1$, $a_2$ or $a_{\bw}$.

\subsubsection*{(i) Validating (\ref{eq:sg1_sg}), (\ref{eq:sg2_sg}) and (\ref{eq:sg3_sg})}
For any $p\ge 2$, by the upper bound $\mathbbm{E}_{\tau_i}(E^{(i)}_{\bp,\bq})\le \frac{|\ba_i^\top(\bp-\bq)|}{\delta}=\frac{\|\bp-\bq\|_2|\ba_i^\top\bbeta_1|}{\delta}$ from (\ref{2sidedmb}), we have 
\begin{align*}
    &\mathbbm{E}\big(|A|^p \mathbbm{1}(E^{(i)}_{\bp,\bq})\big) \le \mathbbm{E}\Big(|A|^p \frac{\|\bp-\bq\|_2|\ba_i^\top\bbeta_1|}{\delta}\Big) \\&\le \frac{\|\bp-\bq\|_2}{\delta}\sqrt{\mathbbm{E}[|A|^{2p}]\mathbbm{E}|\ba_i^\top\bbeta_1|^2}\\&\le C_1\sfP_{\bp,\bq} (C_2\sqrt{p})^{p} \le C_3\sfP_{\bp,\bq} \frac{p!}{2}
\end{align*}
for some absolute constants $C_1,C_2,C_3$. This establishes (\ref{eq:sg1_sg}), (\ref{eq:sg2_sg}) and (\ref{eq:sg3_sg}) with some absolute constants $c^{(5)}$, $c^{(7)}$ and $c^{(8)}$. 

\subsubsection*{(ii) Validating (\ref{eq:sg1_exp}), (\ref{eq:sg2_exp}) and (\ref{eq:sg3_exp})}

This amount to evaluating $\mathbbm{E}(\sign(a_1)A\cdot\mathbbm{1}(E^{(i)}_{\bp,\bq}))$ for $A=a_1,~a_2$ or $a_{\bw}$. By (\ref{2sidedmb}), Cauchy–Schwarz inequality, the moment bound (\ref{eq:d1bcs_moment}) and the tail bound (\ref{eq:d1bcs_sgtail}), for all three choices of $A$ we have 
\begin{align*}
    &\left|\mathbbm{E}\big(\sign(a_1)A\cdot\mathbbm{1}(E^{(i)}_{\bp,\bq})\big) - \mathbbm{E}\Big(\sign(a_1)A\frac{|\ba_i^\top(\bp-\bq)|}{\delta}\Big)\right|\\
    &\le \mathbbm{E}\left[\frac{|A||\ba_i^\top(\bp-\bq)|}{\delta}\mathbbm{1}\Big(|\ba_i^\top\bp|\vee|\ba_i^\top\bq|>\frac{(L-1)\delta}{2}\Big)\right]+\mathbbm{E}\left[\frac{|A||\ba_i^\top(\bp-\bq)|}{\delta}\mathbbm{1}\Big(|\ba_i^\top(\bp-\bq)|\ge\delta\Big)\right] \\
    & = \frac{\|\bp-\bq\|_2}{\delta}\left( \mathbbm{E}\left[|A||\ba_i^\top\bbeta_1|\mathbbm{1}\Big(|\ba_i^\top\bp|\vee|\ba_i^\top\bq|>\frac{(L-1)\delta}{2}\Big)\right]+\mathbbm{E}\left[|A||\ba_i^\top\bbeta_1|\mathbbm{1}\Big(|\ba_i^\top(\bp-\bq)|\ge\delta\Big)\right] \right)\\
    &\le \frac{\|\bp-\bq\|_2}{\delta}\sqrt{\mathbbm{E}\big[|A|^2|\ba_i^\top\bbeta_1|^2\big]}\left(\sqrt{\mathbbm{P}\Big(|\ba_i^\top\bp|\vee|\ba_i^\top\bq|>\frac{L\delta}{4}\Big)}+\sqrt{\mathbbm{P}\Big(|\ba_i^\top\bbeta_1|\ge\frac{1}{2c_1}\Big)}\right)\\
    &\le  \frac{\|\bp-\bq\|_2}{\delta}\cdot 8(\bar{c}_1)^2\Big[\exp \Big(-\frac{\bar{c}_0L^2\delta^2}{16}\Big)+ \exp\Big(-\frac{\bar{c}_0}{4c_1^2}\Big)\Big] = \frac{\varepsilon'\|\bp-\bq\|_2}{\delta}.
\end{align*}
All that remains is to compute $\mathbbm{E}(\sign(a_1)A|\ba_i^\top(\bp-\bq)|/\delta)$:
\begin{align*}
    \mathbbm{E}\Big(\sign(a_1)A\frac{|\ba_i^\top(\bp-\bq)|}{\delta}\Big) = \frac{\|\bp-\bq\|_2}{\delta} \mathbbm{E}\big(\sign(a_1)A |a_1|\big) = \begin{cases}
        \frac{\|\bp-\bq\|_2}{\delta},\quad &\text{if}~~A= a_1\\
        ~~~~0~,&\text{if}~~A= a_1\text{ or }a_{\bw}
    \end{cases}.
\end{align*}
The claim follows.     
\end{proof}

\subsection{The Proof of Lemma \ref{factqpe} (Bounding $\hat{T}$ for DMbCS)}\label{app:factgqpe}
We start with an outline of the proof. We shall first decompose $\hat{T}$ into 
\begin{align}
    \hat{T}&\le \underbrace{\sup_{\bu\in \overline{\calX}}\sup_{\bw\in\calK_{(r_1)}}\Big|\frac{1}{mr_1}\big\langle\calQ_{\delta}(\bA\bu-\btau) -\bA\bu,\bA\bw\big\rangle\Big|}_{:=\hat{T}^{(1)}}\nn\\&+  \underbrace{\sup_{\bu\in \overline{\calX}}\sup_{\bw\in\calK_{(r_1)}}\Big|\frac{1}{mr_1}\big\langle\calQ_{\delta}(\bA\bu-\btau)-\calQ_{\delta,L}(\bA\bu-\btau),\bA\bw\big\rangle\Big|}_{:=\hat{T}^{(2)}}\label{T12hat}
\end{align}
where $\hat{T}^{(1)}$ and $\hat{T}^{(2)}$  capture the distortion arising from the dithered uniform quantization and the impact of saturation, respectively.

To bound $\hat{T}^{(1)}$, we use the global quantized product embedding result for $\calQ_\delta$   from \cite{chen2024uniform},
which we rephrase in Lemma \ref{lem:qpe1}. Then, $\hat{T}^{(2)}$ is bounded in Lemma \ref{lemsatu}. Our idea to control $\hat{T}^{(2)}$ is similar to that for bounding the deviation from gradient clipping (Lemma \ref{fact8})---first we establish uniform bound on the number of measurements affected by  saturation, then we utilize Lemma \ref{lem:max_ell_sum}.

We first restate \cite[Corollary D.3]{chen2024uniform} that will be used to bound $\hat{T}^{(1)}$ in (\ref{T12hat}).  

\begin{lem}[Adapted from Corollary D.3 in \cite{chen2024uniform}] 
\label{lem:qpe1}
In our DMbCS setting, there exist absolute constants $c_0,C_1,c_2,C_3$ such that, given $\rho\in (0,c_0\delta)$ for small enough $c_0$ and two sets $\calA,\calC\subset \mathbb{B}_2^n$, if 
\begin{align*}
    m\ge C_1 \left[\frac{\delta^2(\scrH(\calA,\rho)+\omega^2(\calC))}{\rho^2\log^2(\frac{\delta}{\rho})}+\frac{\delta\omega^2(\calA_{(\rho)})}{\rho^3 \log^{3/2}(\frac{\delta}{\rho})}\right],
\end{align*}
then with probability at least $1-\exp(-c_2\scrH(\calA,\rho))$ we have 
$$\sup_{\ba\in\calA}\sup_{\bc\in\calC}\Big|\frac{1}{m}\big\langle \calQ_\delta(\bA\ba-\btau)-\bA\ba,\bA\bc\big\rangle\Big|\le C_3\rho \log\Big(\frac{\delta}{\rho}\Big).$$
\end{lem}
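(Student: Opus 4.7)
The plan is to write the quantization noise in a conditionally centered, bounded form and then control the resulting doubly-indexed empirical process by a covering-plus-concentration argument. Set $\xi_i(\ba) := \calQ_\delta(\ba_i^\top\ba-\tau_i) - \ba_i^\top\ba$, so that the target equals $\sup_{\ba,\bc}|\frac{1}{m}\sum_i\xi_i(\ba)\ba_i^\top\bc|$. Writing $\xi_i(\ba) = [\calQ_\delta(\ba_i^\top\ba-\tau_i)-(\ba_i^\top\ba-\tau_i)] - \tau_i$, both summands are bounded by $\delta/2$, and the key identity $\mathbbm{E}_\tau[\calQ_\delta(x-\tau)] = x$ for $\tau\sim\scrU([-\delta/2,\delta/2])$ gives $\mathbbm{E}[\xi_i(\ba)\mid\ba_i]=0$. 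Thus, conditional on $\bA$, the $\xi_i(\ba)$ are independent, zero-mean and $\delta$-bounded in $\tau_i$.

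Let $\calN_\calA$ be a minimal $\rho$-net of $\calA$ with $\log|\calN_\calA|=\scrH(\calA,\rho)$, and for $\ba\in\calA$ pick $\ba'\in\calN_\calA$ with $\|\ba-\ba'\|_2\le\rho$. Split
\[
\tfrac{1}{m}\sum_i\xi_i(\ba)\ba_i^\top\bc \;=\;\underbrace{\tfrac{1}{m}\sum_i\xi_i(\ba')\ba_i^\top\bc}_{=:\,T_{\rm net}(\ba',\bc)}\;+\;\underbrace{\tfrac{1}{m}\sum_i[\xi_i(\ba)-\xi_i(\ba')]\ba_i^\top\bc}_{=:\,T_{\rm res}(\ba,\ba',\bc)}.
\]
For fixed $\ba'\in\calN_\calA$, the process $\bc\mapsto T_{\rm net}(\ba',\bc)$ is linear in $\bc$ with sub-Gaussian increments of scale $\lesssim\delta/\sqrt{m}\cdot\|\bc_1-\bc_2\|_2$ (after Hoeffding conditional on $\bA$ and standard $\bA$-concentration on a net of $\calC$). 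Applying the generic-chaining/Talagrand-type bound (Lemma~\ref{lem:tala}) together with a union bound over $\calN_\calA$ yields
\[
\sup_{\ba'\in\calN_\calA}\sup_{\bc\in\calC}|T_{\rm net}(\ba',\bc)| \;\lesssim\; \tfrac{\delta(\omega(\calC)+\sqrt{\scrH(\calA,\rho)})}{\sqrt{m}}
\]
with probability $\ge 1-\exp(-c\scrH(\calA,\rho))$. Requiring this to be $\lesssim \rho\log(\delta/\rho)$ produces exactly the first term of the sample condition, namely $m\gtrsim \delta^2[\scrH(\calA,\rho)+\omega^2(\calC)]/[\rho\log(\delta/\rho)]^2$.

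The delicate piece is the residual $T_{\rm res}$, because $\calQ_\delta$ is piecewise constant and hence not Lipschitz in $\ba$. Writing $\xi_i(\ba)-\xi_i(\ba')=[\calQ_\delta(\ba_i^\top\ba-\tau_i)-\calQ_\delta(\ba_i^\top\ba'-\tau_i)]-\ba_i^\top(\ba-\ba')$, the bracket vanishes unless $i$ lies in the ``jump set'' $\calB(\ba,\ba'):=\{i:\calQ_\delta(\ba_i^\top\ba-\tau_i)\ne\calQ_\delta(\ba_i^\top\ba'-\tau_i)\}$, on which it is bounded by $|\ba_i^\top(\ba-\ba')|+\delta$. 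This splits $T_{\rm res}$ into (i) a product-process term $\frac{1}{m}\sum_i(\ba_i^\top(\ba-\ba'))(\ba_i^\top\bc)$ controlled by Lemma~\ref{lem:product_process} (giving a contribution of order $\omega(\calA_{(\rho)})\omega(\calC)/m+\rho\omega(\calC)/\sqrt{m}$, absorbable into the first sample term), and (ii) a ``bad-index'' term $\frac{\delta}{m}\sum_{i\in\calB(\ba,\ba')}|\ba_i^\top\bc|$.

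The main obstacle is term (ii): we must bound $|\calB(\ba,\ba')|$ uniformly over $\ba\in\calA,\ba'\in\calN_\calA$ with $\|\ba-\ba'\|_2\le\rho$. This is precisely what the quantized embedding property of Theorem~\ref{thm:local_embed} (event $E_s$, applied with $\calW=\calA$ and threshold $\sim\rho$) delivers: $|\calB(\ba,\ba')|\lesssim m\rho\log^{1/2}(\delta/\rho)/\delta$ for all such pairs simultaneously, provided $m\gtrsim \delta[\omega^2(\calA_{(\rho)})/\rho^3+\scrH(\calA,\rho)/\rho]\cdot\log^{-1/2}(\delta/\rho)$. On this high-probability event, Lemma~\ref{lem:max_ell_sum} gives
\[
\sup_{\ba,\ba',\bc}\tfrac{\delta}{m}\sum_{i\in\calB(\ba,\ba')}|\ba_i^\top\bc| \;\lesssim\; \delta\Big(\tfrac{\omega(\calC)}{\sqrt{m}}\sqrt{\tfrac{\rho\log^{1/2}(\delta/\rho)}{\delta}}+\rho\log^{1/2}(\delta/\rho)\sqrt{\log(\delta/\rho)}\Big),
\]
which is $\lesssim\rho\log(\delta/\rho)$ under the second sample-complexity term $m\gtrsim \delta\omega^2(\calA_{(\rho)})/[\rho^3\log^{3/2}(\delta/\rho)]$ (and is absorbed by the first). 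A union bound over $\calN_\calA$ (whose logarithm is $\scrH(\calA,\rho)$) yields the claim. The hardest step is (ii): the non-Lipschitz behaviour of $\calQ_\delta$ forces us to count jumps rather than bound increments, and routing that count through Theorem~\ref{thm:local_embed} is what drives the somewhat unusual $\rho^{-3}\log^{-3/2}$ scaling in the second sample term.
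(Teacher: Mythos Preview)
The paper does not prove this lemma at all: it is explicitly labeled ``Adapted from Corollary~D.3 in \cite{chen2024uniform}'' and simply restated for use in bounding $\hat{T}^{(1)}$. So there is no ``paper's own proof'' to compare against. Your proposal is a self-contained sketch that is essentially correct and, in spirit, follows the same covering-plus-jump-counting machinery the paper uses elsewhere (Theorem~\ref{thm:local_embed} for $E_s$, Lemma~\ref{lem:max_ell_sum}), and is very likely close to what the original reference does.

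Two small points to tighten. First, in your display for the bad-index term you have an erroneous leading factor of $\delta$: the second summand should be $\tfrac{\delta\ell_0}{m}\log^{1/2}(m/\ell_0)$ with $\ell_0/m\asymp\rho\log^{1/2}(\delta/\rho)/\delta$, giving $\rho\log(\delta/\rho)$ directly---not $\delta$ times that. Your conclusion ``which is $\lesssim\rho\log(\delta/\rho)$'' is correct, but the displayed expression as written is off by a factor $\delta$. Second, your split of $T_{\rm res}$ into (i) and (ii) silently drops the contribution $\tfrac{1}{m}\sum_{i\in\calB}|\ba_i^\top(\ba-\ba')|\,|\ba_i^\top\bc|$ coming from the $|\ba_i^\top(\ba-\ba')|$ part of the bound on the quantizer difference; this is harmless (it is dominated by the full product-process term (i), which is $O(\rho)$), but it should be mentioned. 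Finally, invoking Theorem~\ref{thm:local_embed} for the \emph{unsaturated} $\calQ_\delta$ is legitimate since only Assumption~\ref{assum2} is needed for $E_s$ and that assumption holds with $c^{(1)}=2$ for $\calQ_\delta$ by the same dithering argument used for $\calQ_{\delta,L}$---but it is worth flagging, since the theorem is stated for the $L$-level quantizer.
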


Then we  provide   Lemma \ref{lemsatu} that controls $\hat{T}^{(2)}$. We note that the fairly mild $\log^{1/2}(\frac{m}{\omega^2(\calX)})\lesssim L\delta\lesssim \omega(\calX)$ stems from certain technical treatment in our proof of Lemma \ref{lemsatu}. 
 
 \begin{lem} 
     \label{lemsatu}
      In our setting of DMbCS, there exist some absolute constants $C_0,C_1,C_2,c_3,C_4$ such that if $m\ge C_0\omega^2(\calX)$ and $C_1\log^{1/2}(\frac{m}{\omega^2(\calX)})\le L\delta \le C_2 \omega(\calX)$ hold, then with probability at least $1-\exp(-\frac{c_3\omega^2(\calX)}{(L\delta)^2})$ we have 
     \begin{align}\label{100}
       \sup_{\bu\in\overline{\calX}}\sup_{\bw\in\calK_{(r_1)}}\Big|\frac{1}{m}\langle \calQ_{\delta}(\bA\bu-\btau)-  \calQ_{\delta,L}(\bA\bu-\btau), \bA\bw\rangle\Big|\le \frac{C_4\omega(\calX)\omega(\calK_{(r_1)})}{m}. 
     \end{align}
 \end{lem}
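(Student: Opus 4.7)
The starting observation is that $\calQ_\delta(a) - \calQ_{\delta,L}(a) = 0$ whenever $|a| < L\delta/2$, so the integrand $g_i(\bu) := \calQ_\delta(\ba_i^\top\bu-\tau_i) - \calQ_{\delta,L}(\ba_i^\top\bu-\tau_i)$ is supported on the ``saturated'' index set $S_\bu := \{i \in [m] : |\ba_i^\top\bu - \tau_i| \geq L\delta/2\}$. A direct inspection of (\ref{eq:Qdelta}) and (\ref{eq:QdeltaL}) further gives $|g_i(\bu)| \leq |\ba_i^\top\bu - \tau_i| + \delta \leq |\ba_i^\top\bu| + \tfrac{3}{2}\delta$. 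Hence the supremum in (\ref{100}) is dominated by
$$\sup_{\bu\in\overline{\calX}}\sup_{\bw\in\calK_{(\phi)}}\frac{1}{m}\sum_{i\in S_\bu}\big(|\ba_i^\top\bu|+\tfrac{3}{2}\delta\big)|\ba_i^\top\bw|,$$
and the whole task reduces to controlling this quantity by $\omega(\calX)\omega(\calK_{(\phi)})/m$ up to an absolute constant.

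The plan has two steps, mirroring the strategy used in Fact \ref{fact8}. First I would establish a \emph{uniform} upper bound $|S_\bu|\leq\ell_0$ with $\ell_0 \asymp m\exp(-c(L\delta)^2)$ for all $\bu\in\overline{\calX}$. For a single $\bu\in\overline{\calX}\subset 2\mathbb{B}_2^n$, the sub-Gaussian tail (\ref{eq:d1bcs_sgtail}) gives $\mathbbm{P}(i\in S_\bu)\leq 2\exp(-\bar c_0(L\delta)^2/16)=:p_L$ once $L\delta\geq C_1$. A Chernoff bound (\ref{eq:weak_chernoff_up}) then yields $\mathbbm{P}(|S_\bu|>2mp_L)\leq\exp(-mp_L/6)$. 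To upgrade this to a uniform statement, I would take a net $\calN_\rho$ of $\overline{\calX}$ at a small scale $\rho$ (of order, say, $\delta$ or smaller) and bound $|S_\bu|$ by $|S_{\bu'}|$ plus the number of indices at which $|\ba_i^\top(\bu-\bu')|$ is comparable to $\delta$, where $\bu'$ is the nearest net point. The hypothesis $L\delta\leq C_2\omega(\calX)$ ensures $mp_L\gtrsim\omega^2(\calX)\gtrsim \scrH(\overline{\calX},\rho)$ so that the union bound $\exp(-mp_L/6 + \scrH(\overline{\calX},\rho))$ closes, and the perturbation term is absorbed by choosing $\rho$ small enough.

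Second, on the high-probability event $\{|S_\bu|\le\ell_0,\ \forall\bu\in\overline\calX\}$, I would apply Cauchy--Schwarz to write
$$\frac{1}{m}\sum_{i\in S_\bu}\big(|\ba_i^\top\bu|+\tfrac{3\delta}{2}\big)|\ba_i^\top\bw| \leq \frac{1}{m}\bigg(\max_{|I|\leq\ell_0}\sum_{i\in I}\big(|\ba_i^\top\bu|+\tfrac{3\delta}{2}\big)^{2}\bigg)^{1/2}\bigg(\max_{|I|\leq\ell_0}\sum_{i\in I}(\ba_i^\top\bw)^{2}\bigg)^{1/2},$$
and invoke Lemma \ref{lem:max_ell_sum} separately on $\overline{\calX}\subset 2\mathbb{B}_2^n$ and on $\calK_{(\phi)}\subset\phi\mathbb{B}_2^n$. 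This yields, up to absolute constants,
$$\lesssim\frac{\sqrt{\ell_0}}{m}\Big(\omega(\calX)+\sqrt{\ell_0\log(em/\ell_0)}+\delta\sqrt{\ell_0}\Big)\Big(\omega(\calK_{(\phi)})+\phi\sqrt{\ell_0\log(em/\ell_0)}\Big).$$
With $\ell_0\asymp m\exp(-c(L\delta)^2)$, the lower bound $L\delta\geq C_1\log^{1/2}(m/\omega^2(\calX))$ forces $\ell_0\log(em/\ell_0)\lesssim\omega^2(\calX)$ so that the additive correction is dominated by $\omega(\calX)$, and likewise $\delta\sqrt{\ell_0}\lesssim\omega(\calX)$ because $L\delta\leq C_2\omega(\calX)$ and $\delta\leq L\delta$. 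The remaining factor $\sqrt{\ell_0}\cdot\omega(\calK_{(\phi)})/m$ is, after these simplifications, bounded by $\omega(\calX)\omega(\calK_{(\phi)})/m$ up to an absolute constant, which is the claim.

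The main obstacle is the first step: because the indicator $\mathbbm{1}(|\ba_i^\top\bu-\tau_i|\geq L\delta/2)$ is discontinuous in $\bu$, a standard chaining argument is unavailable, and I must rely on a net-plus-perturbation estimate in the spirit of Fact \ref{fact8}. In particular, I need the perturbation count at scale $\rho$ to still be controllable by Chernoff at probability $\exp(-c mp_L)$, which constrains how small $\rho$ can be; the two-sided constraint on $L\delta$ in the hypothesis is exactly what permits both this net argument and the downstream absorption of the subset-sum correction terms into $\omega(\calX)$.
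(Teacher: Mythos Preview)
Your overall architecture matches the paper's: bound $|\calQ_\delta-\calQ_{\delta,L}|$ pointwise, reduce to a sum over saturated indices, control the cardinality of the saturation set uniformly in $\bu$, then finish with Cauchy--Schwarz and Lemma~\ref{lem:max_ell_sum}. The substantive difference is your Step~1, and there is a real gap.

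Your Chernoff-on-a-net argument needs $mp_L\gtrsim\scrH(\overline\calX,\rho)$, and you assert that $L\delta\le C_2\omega(\calX)$ yields $mp_L\gtrsim\omega^2(\calX)$. This is false: from $L\delta\le C_2\omega(\calX)$ you only obtain $p_L\ge 2\exp(-\bar c_0 C_2^2\omega^2(\calX)/16)$, which is \emph{exponentially small} in $\omega^2(\calX)$; the conclusion $mp_L\gtrsim\omega^2(\calX)$ would require $m\gtrsim\omega^2(\calX)\exp(c\,\omega^2(\calX))$, far beyond the hypothesis $m\ge C_0\omega^2(\calX)$. So the union bound does not close, and the perturbation step (which you correctly flag as the obstacle) is not actually carried out.

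The paper sidesteps the net entirely by a direct reduction: if for some $\bu\in\overline\calX$ one had $|\{i:|\ba_i^\top\bu|\ge L\delta/4\}|> r_0 m$, then selecting $r_0 m$ such indices would force
\[
\sup_{\bu\in\overline\calX}\max_{|I|\le r_0 m}\Big(\frac{1}{r_0 m}\sum_{i\in I}|\ba_i^\top\bu|^2\Big)^{1/2}\ge \frac{L\delta}{4},
\]
and this is ruled out by a \emph{single} application of Lemma~\ref{lem:max_ell_sum} once $r_0=C_3\omega^2(\calX)/(m(L\delta)^2)$ is chosen so that $\omega(\overline\calX)/\sqrt{r_0 m}+\log^{1/2}(e/r_0)\lesssim L\delta$. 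The two-sided constraint on $L\delta$ is exactly what verifies this condition, and the resulting probability $1-2\exp(-c\,r_0 m)=1-2\exp(-c\,\omega^2(\calX)/(L\delta)^2)$ matches the statement. This observation replaces your entire first step and explains why no explicit covering or perturbation is needed.

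A minor algebraic slip: your displayed bound after Cauchy--Schwarz carries a spurious extra factor of $\sqrt{\ell_0}$; applying Lemma~\ref{lem:max_ell_sum} to each bracket gives
\[
\frac{1}{m}\Big(\omega(\calX)+\sqrt{\ell_0\log(em/\ell_0)}+\delta\sqrt{\ell_0}\Big)\Big(\omega(\calK_{(\phi)})+\phi\sqrt{\ell_0\log(em/\ell_0)}\Big)
\]
directly, which (with the correct $\ell_0=r_0 m$ and the absorptions you describe, together with $\phi\omega(\calX)\le\omega(\calK_{(\phi)})$ from $\phi\calX\subset\calK_{(\phi)}$) yields the claimed $\omega(\calX)\omega(\calK_{(\phi)})/m$.
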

 
\begin{proof} 
    By (\ref{eq:Qdelta}), (\ref{eq:QdeltaL}) and $|\calQ_\delta(a)|\le|a|+\frac{\delta}{2}$, for any $a\in\mathbb{R}$ we have 
    \begin{align*}
        |\calQ_\delta(a)-\calQ_{\delta,L}(a)| = \mathbbm{1}\Big(|a|\ge \frac{L\delta}{2}\Big)\Big(|\calQ_\delta(a)|-\frac{(L-1)\delta}{2}\Big)\le |a| \mathbbm{1}\Big(|a|\ge\frac{L\delta}{2}\Big).
    \end{align*}
 Combining with $|\tau_i|\le\frac{\delta}{2}$ and $\frac{(L-1)\delta}{2}\ge \frac{L\delta}{4}$, it follows that 
      \begin{align}\nn
      &\sup_{\bu\in\overline{\calX}}\sup_{\bw\in\calK_{(r_1)}}\left|\frac{1}{m}\sum_{i=1}^m \big[\calQ_\delta(\ba_i^\top\bu-\tau_i)-\calQ_{\delta,L}(\ba_i^\top\bu-\tau_i)\big]\cdot\ba_i^\top\bw\right|
         \\\nn
         &\le \sup_{\bu\in\overline{\calX}}\sup_{\bw\in\calK_{(r_1)}}\frac{1}{m}\sum_{i=1}^m \big|\calQ_\delta(\ba_i^\top\bu-\tau_i)-\calQ_{\delta,L}(\ba_i^\top\bu-\tau_i)\big||\ba_i^\top\bw|\\\nn
         &\le \sup_{\bu\in\overline{\calX}}\sup_{\bw\in\calK_{(r_1)}}\frac{1}{m}\sum_{i=1}^m |\ba_i^\top\bu-\tau_i||\ba_i^\top\bw|  \cdot\mathbbm{1}\Big(|\ba_i^\top\bu-\tau_i|\ge \frac{L\delta}{2}\Big) \\
         &\le\sup_{\bu\in\overline{\calX}}\sup_{\bw\in\calK_{(r_1)}}\frac{2}{m}\sum_{i=1}^m  |\ba_i^\top\bu||\ba_i^\top\bw|\cdot\mathbbm{1}\Big(|\ba_i^\top\bu|\ge\frac{L\delta}{4}\Big).\label{257d}
     \end{align}

  Next, we bound the number of nonzero contributors to (\ref{257d}) by Lemma \ref{lem:max_ell_sum}. Due to the factor $\mathbbm{1}\big(|\ba_i^\top\bu|\ge\frac{L\delta}{4}\big)$, we wish to ensure 
 \begin{align}\label{258}
     \Big|\Big\{i\in[m]:|\ba_i^\top\bu|\ge \frac{L\delta}{4}\Big\}\Big|\le r_0m,\quad\forall \bu\in\overline{\calX}
 \end{align}
 for some $r_0\in (0,1)$ such that $r_0m\in[m]$, which is to be chosen later. To this end, we observe that it is sufficient to ensure 
 \begin{align}
     \sup_{\bu\in\overline{\calX}}\max_{\substack{I\subset [m]\\|I|\le r_0m}}\left(\frac{1}{r_0m}\sum_{i\in I}|\ba_i^\top\bu|^2\right)^{1/2}< \frac{L\delta}{8}.
 \end{align}
 Further invoking Lemma \ref{lem:max_ell_sum} shows that, for some absolute constants $c_1,c_2$,  with probability at least $1-2\exp(-c_1r_0m)$, we only need to ensure 
 \begin{align}\label{260}
     \frac{\omega^2(\calX)}{r_0m}+ \log\Big(\frac{e}{r_0}\Big) \le c_2(L\delta)^2.
 \end{align}
 Hence, under the assumptions $m\gtrsim \omega^2(\calX)$ and $L\delta \gtrsim \log^{1/2}(\frac{m}{\omega^2(\calX)})$, we  set $r_0 := \frac{C_3\omega^2(\calX)}{m(L\delta)^2}$ with sufficiently large $C_3$ to justify  (\ref{260}), which further leads to (\ref{258}) with the promised probability.

 On the event (\ref{258}), we again apply  Lemma \ref{lem:max_ell_sum} to obtain  
      \begin{align*}
         &{\rm (\ref{257d})}\le  \sup_{\bu\in\overline{\calX}}\sup_{\bw\in\calK_{(r_1)}} \max_{\substack{I\subset [m]\\|I|\le r_0m}} \frac{2}{m}\sum_{i\in I}|\ba_i^\top\bu||\ba_i^\top\bw|\\
         &\le 2 r_0 \left({\sup_{\bu\in\overline{\calX}}\max_{\substack{I\subset[m]\\|I|\le r_0m}}(r_0m)^{-1}\sum_{i\in I}|\ba_i^\top\bu|^2}\right)^{1/2}\left(\sup_{\bw\in\calK_{(r_1)}}\max_{\substack{I\subset[m]\\|I|\le r_0m}}(r_0m)^{-1}\sum_{i\in I}|\ba_i^\top\bw|^2\right)^{1/2}\\
         & \le C_4r_1 \left(\frac{\omega(\calX)}{\sqrt{m}}+\sqrt{r_0\log(r_0^{-1})}\right)\left(\frac{\omega(\calK_{(r_1)})/r_1}{\sqrt{m}}+\sqrt{r_0\log(r_0^{-1})}\right)\\&\le \frac{C_5\omega(\calX)\omega(\calK_{(r_1)})}{\sqrt{m}}
     \end{align*}   
 where in the last line, the first inequality holds with the promised probability, and the second inequality holds due to $r_0=\frac{C_3\omega^2(\calX)}{m(L\delta)^2}$ and $L\delta \gtrsim \log^{1/2}(\frac{m}{\omega^2(\calX)})$. 
\end{proof}
\begin{proof}
    [The Proof of Proposition \ref{factqpe}] Under $r_1=c_1\delta$, we prove  Proposition \ref{factqpe} by showing $\hat{T}^{(1)}\le \frac{r_1}{2}=\frac{c_1\delta}{2}$ and $\hat{T}^{(2)}\le \frac{r_1}{2}=\frac{c_1\delta}{2}$, which are sufficient because of (\ref{T12hat}). The constants in this proof can depend on $c_1.$

\paragraph{Show $\hat{T}_1\le \frac{c_1\delta}{2}$:} We first write 
$$\hat{T}^{(1)} = \sup_{\bu\in\overline{\calX}}\sup_{\bw\in r_1^{-1}\calK_{(r_1)}}\Big|\frac{1}{m}\big\langle\calQ_{\delta}(\bA\bu-\btau)-\bA\bu,\bA\bw\big\rangle\Big|.$$
We set $\rho = c_2\delta$ with small enough $c_2$, $\calA= \overline{\calX}$ and $\calC= \frac{1}{r_1}\calK_{(r_1)}$ in Lemma \ref{lem:qpe1} and obtain that, if 
\begin{align}
    m\ge C_3\left[\scrH(\overline{\calX},c_2\delta)+\frac{\omega^2(\calK_{(r_1)})}{r_1^2}+\frac{\omega^2(\overline{\calX}_{(c_2\delta)})}{\delta^2}\right]\label{addd11}
\end{align}
then with the promised probability we have  $\hat{T}^{(1)} \le \frac{c_0\delta}{2}.$ Note that $\overline{\calX}\subset 2\calX\subset  2\calK$, hence (\ref{addd11}) can be implied by $m\gtrsim \scrH(\calX,\frac{c_2\delta}{2})+\frac{\omega^2(\calK_{(\delta)})}{\delta^2}$, which is assumed in the sample complexity (\ref{pro6sam}).   

\paragraph{Show $\hat{T}_2\le \frac{c_1\delta}{2}$:} Next, by Lemma \ref{lemsatu} and (\ref{pro6sam}) that gives $m\ge  \frac{C_5\omega(\calX)\omega(\calK_{(r_1)})}{r_1\delta}$ with some large enough $C_5$ depending on $c_0$, we obtain $\hat{T}^{(2)}\le \frac{c_0r_1\delta}{2}$ with the promised probability. The proof is complete. 
   \end{proof}
 
\section{Technical Lemmas}\label{app:lemma}
 \begin{lem}
     \label{lem:L1L2}
     For   random vectors $\ba_i$ satisfying Assumption \ref{assum1}, there exists some absolute constant $L_0>0$ such that 
     $$\mathbbm{P}\Big(|\ba_i^\top\bu|\ge \frac{1}{2}\Big)\ge 2L_0,~~\mathbbm{E}|\ba_i^\top\bu|\ge L_0,\quad \forall \bu\in \mathbb{S}^{n-1}.$$
 \end{lem}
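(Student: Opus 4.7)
The plan is to use a Paley--Zygmund type small-ball argument, driven by isotropy (second moment equals one) and sub-Gaussianity (controlled fourth moment). Fix $\bu \in \mathbb{S}^{n-1}$ and set $X := \ba_i^\top \bu$. By isotropy we have $\mathbbm{E}(X^2) = \bu^\top \mathbbm{E}(\ba_i \ba_i^\top) \bu = \|\bu\|_2^2 = 1$, and by Assumption \ref{assum1}, $\|X\|_{\psi_2} \le C_0$, which in particular yields $\mathbbm{E}(X^4) \le C_1$ for some absolute constant $C_1$ depending only on $C_0$ (a standard consequence of the sub-Gaussian moment growth).

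First I would split the second moment according to whether $|X|$ is below or above $1/2$:
\begin{align*}
    1 = \mathbbm{E}(X^2) = \mathbbm{E}\big[X^2 \mathbbm{1}(|X| < 1/2)\big] + \mathbbm{E}\big[X^2 \mathbbm{1}(|X| \ge 1/2)\big] \le \frac{1}{4} + \mathbbm{E}\big[X^2 \mathbbm{1}(|X| \ge 1/2)\big].
\end{align*}
Applying Cauchy--Schwarz to the remaining term,
\begin{align*}
    \mathbbm{E}\big[X^2 \mathbbm{1}(|X| \ge 1/2)\big] \le \big(\mathbbm{E}(X^4)\big)^{1/2} \big(\mathbbm{P}(|X| \ge 1/2)\big)^{1/2} \le \sqrt{C_1} \cdot \big(\mathbbm{P}(|X| \ge 1/2)\big)^{1/2}.
\end{align*}
Combining the two displays yields $\mathbbm{P}(|X| \ge 1/2) \ge 9/(16 C_1)$, so the first claim holds with $2L_0 := 9/(16 C_1)$, i.e.\ $L_0 = 9/(32 C_1)$.

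The second claim follows immediately: since $|X| \ge \tfrac{1}{2} \mathbbm{1}(|X| \ge 1/2)$ pointwise, we have
\begin{align*}
    \mathbbm{E}|X| \ge \frac{1}{2} \mathbbm{P}(|X| \ge 1/2) \ge L_0.
\end{align*}
Both inequalities hold uniformly over $\bu \in \mathbb{S}^{n-1}$ since the argument used only the universal properties $\mathbbm{E}(X^2)=1$ and $\|X\|_{\psi_2}\le C_0$, and the resulting constant $L_0$ depends only on $C_0$. The only non-routine step is the moment bound $\mathbbm{E}(X^4) \le C_1$, but this is a standard consequence of the sub-Gaussian norm definition (take $p=4$ in $\|X\|_{L^p} \le \sqrt{p}\,\|X\|_{\psi_2}$), so there is no real obstacle.
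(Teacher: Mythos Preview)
Your proof is correct and follows essentially the same approach as the paper: both use the Paley--Zygmund small-ball argument via isotropy ($\mathbbm{E}X^2=1$) and the sub-Gaussian fourth-moment bound, and then deduce $\mathbbm{E}|X|\ge L_0$ from the probability bound in the same one-line way. The only cosmetic difference is that the paper invokes Paley--Zygmund by name (citing \cite[Lemma 7.16]{Foucart2013AMI}), whereas you spell out its proof via the split-plus-Cauchy--Schwarz step.
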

 \begin{proof}
     We give a proof for this simple fact. Note that we only need to show $\mathbbm{P}(|\ba_i^\top\bu|\ge 1/2)\ge 2L_0$ since this implies $$\mathbbm{E}|\ba_i^\top\bu|\ge \mathbbm{E}[|\ba_i^\top\bu|\mathbbm{1}(|\ba_i^\top\bu|\ge1/2)]\ge \frac{\mathbbm{P}(|\ba_i^\top\bu|\ge 1/2)}{2}\ge L_0.$$ 
     For $\ba_i$ satisfying Assumption \ref{assum1}, we have (\ref{eq:d1bcs_moment}) holds for some absolute constant $\bar{c}_1$. Hence by Paley-Zygmund inequality (e.g., Lemma 7.16 in \cite{Foucart2013AMI}), 
     \begin{align*}
         \mathbbm{P}\big(|\ba_i^\top\bu|\ge 1/2\big)= \mathbbm{P}\big(|\ba_i^\top\bu|^2\ge 1/4\big)\ge \frac{(3/4)^2}{\mathbbm{E}|\ba_i^\top\bu|^4} \ge \frac{9}{256(\bar{c}_1)^4}. 
     \end{align*}
     Hence, we  set $L_0 := \frac{9}{512 (\bar{c}_1)^4}$ to yield the desired claim. 
 \end{proof}
\begin{lem}[Estimates on Gaussian width and Metric Entropy: (Effective) Sparsity, Low-rankness] \label{lem:gwcoveres} \rev{There exist absolute constants $C_1,C_2,C_3,C_4$ such that, for any $\epsilon\in(0,1)$, $k\in[n]$ and $\bar{r}\in [n_1\wedge n_2]$, we have}  
\begin{gather} \label{sparsegw}
    \omega^2(\Sigma^n_k\cap \mathbb{B}_2^n) \le \omega^2(\sqrt{k}\mathbb{B}_1^n\cap  \mathbb{B}_2^n)\le C_1 k \log\frac{en}{k},\\ \label{sparseentropy}
    \scrH(\Sigma^n_k\cap \mathbb{B}_2^n,\epsilon)\le k\log\Big(\frac{9n}{\epsilon k}\Big), \\  
    \scrH(\sqrt{k}\mathbb{B}_1^n\cap  \mathbb{B}_2^n,\epsilon)\le \frac{C_2k\log(en/k)}{\epsilon^2},\label{l1sparseentropy}\\
    \omega^2(M^{n_1,n_2}_{\bar{r}}\cap \{\bM\in :\|\bU\|_F\le 1\}) \le C_3\bar{r}(n_1+n_2),\label{lrgw}\\\label{lrcover}
    \scrH(M^{n_1,n_2}_{\bar{r}}\cap \{\bM\in :\|\bU\|_F\le 1\},\epsilon)\le C_4 \bar{r}(n_1+n_2)\log\Big(\frac{3}{\epsilon}\Big).
\end{gather} 
\end{lem} 
\begin{proof}
  For (effective) sparsity, see \cite[Equations (II.2), (III.3)]{plan2012robust} for (\ref{sparsegw}), and see \cite[Lemma 3.3]{plan2013one} for (\ref{sparseentropy}). Then (\ref{l1sparseentropy}) follows from $\omega^2(\sqrt{k}\mathbb{B}_1^n\cap  \mathbb{B}_2^n)\le C_1 k \log\frac{en}{k}$ and Sudakov's inequality \cite[Theorem 8.1.13]{vershynin2018high}. For low-rankness, see \cite[Lemma 3.1]{candes2011tight} for (\ref{lrcover}), and then (\ref{lrgw}) follows from Dudley's inequality \cite[Theorem 8.1.10]{vershynin2018high}. 
\end{proof}
\begin{lem}[Estimates on Gaussian width and metric entropy: $\ell_q$ sparsity]
\label{lem:lq-width-entropy}
\rev{Let $q\in(0,1)$ and $1\le k\le n$. Define
$\calK := k^{\frac1q-\frac12}\mathbb B_q^n,~
    \calX := \calK\cap\mathbb B_2^n.$  For $0<r\le 1$, define  $
    \calK_{(r)} := (\calK-\calK)\cap r\mathbb B_2^n.$ 
Then there exists an absolute constant $C$  and a constant $C_q>0$ depending only on $q$} such that
\begin{gather}
 \scrH(\calX,r)
    \le
    \min\left\{
        n\log\frac{3}{r},
        \,
        C_q\, k r^{-\frac{2q}{2-q}}
        \log\frac{C_q n}{r}
    \right\},\label{lqcoverbound}
\\
      \omega^2(\calK_{(r)})
    \le
    \min\left\{Cr^2n,C_q\, k\log(en)\,
    r^{\frac{4(1-q)}{2-q}} \right\}. \label{lqgwbound}
\end{gather}
\end{lem}

\begin{proof} 
Similar estimates can be found in \cite{mendelson2008uniform,raskutti2011tit} for instance. Here, we provide a self-contained proof.  Throughout the proof, $C_q>0$ denotes a constant depending only on $q$, whose value may change from line to line. We first prove (\ref{lqcoverbound}). The first branch $n\log\frac{3}{r}$ comes from 
$
    \scrH(\calX,r)
    \le
    \scrH(\mathbb B_2^n,r)
    \le
    n\log\frac{3}{r}$. It remains to prove the second branch. Let 
$R_q := k^{\frac1q-\frac12}$
    and $
    \gamma_q := \frac{2q}{2-q}.$ 
Take any $\bu\in\calX$. Let $(u_j^*)_{j=1}^n$ denote the nonincreasing
rearrangement of $(|u_j|)_{j=1}^n$. Since $\bu\in\calK$, we have 
$\sum_{j=1}^n |u_j|^q \le R_q^q = k^{1-\frac q2}.$
Therefore, for every $j\in[n]$, we have
$
    j (u_j^*)^q
    \le
    \sum_{\ell=1}^j (u_\ell^*)^q
    \le
    k^{1-\frac q2},$ 
and hence
$
    u_j^* \le R_q j^{-1/q}.$ Let $u^{(s)}$ denote the best $s$-term truncation of $\bu$, obtained by keeping
the largest $s$ coordinates of $\bu$ in magnitude and setting the rest to zero. 
Then, for $1\le s\le n$, it holds that
$
    \|\bu-\bu^{(s)}\|_2^2
    =
    \sum_{j>s} (u_j^*)^2
    \le
    R_q^2\sum_{j>s} j^{-2/q}.$ 
Since $q<2$, we have $2/q>1$, and thus
$
    \sum_{j>s} j^{-2/q}
    \le
    C_q s^{1-\frac2q}.
$ 
Consequently,
\begin{align}
     \|\bu-\bu^{(s)}\|_2
    \le
    C_q R_q s^{\frac12-\frac1q}
    =
    C_q \left(\frac{k}{s}\right)^{\frac1q-\frac12}.\label{use11}
\end{align}
We now choose
$
    s
    :=
    \min\left\{
        n,\,
        \left\lceil A_q k r^{-\gamma_q}\right\rceil
    \right\},$ 
where $A_q>0$ is a sufficiently large constant depending only on $q$.
With this choice, (\ref{use11}) implies
$
    \|\bu-\bu^{(s)}\|_2\le \frac r2.$
Indeed, if $s=n$, this is trivial; otherwise, the choice
$s\gtrsim_q k r^{-\gamma_q}$ gives
$
    \big(\frac{k}{s}\big)^{\frac1q-\frac12}
    \lesssim_q
    r$
in light of
$
    \gamma_q\big(\frac1q-\frac12\big)=1.$ By (\ref{sparseentropy}) in Lemma \ref{lem:gwcoveres}, we can construct an $(r/2)$-net of $\Sigma^n_s\cap \mathbb{B}_2^n$, denoted by
$\Lambda_s$, such that $\log|\Lambda_s|\le s\log\frac{18n}{rk}$. 
Since $\bu^{(s)}\in\widetilde U_s$, there exists $\bv\in\Lambda_s$ such that
$
    \|\bu^{(s)}-\bv\|_2\le \frac r2.$ 
Combining this with $\|\bu-\bu^{(s)}\|_2\le r/2$ gives $\|\bu-\bv\|_2\le r.$  Therefore $\Lambda_s$ is an $r$-net for $\calX$. Since
$
    s\le C_q k r^{-\gamma_q},
$ 
we conclude that
$
    \scrH(\calX,r)
    \le
    C_q k r^{-\gamma_q}\log\frac{C_q n}{r}
    =
    C_q k r^{-\frac{2q}{2-q}}\log\frac{C_q n}{r}.$ Together with the bound $n\log\frac{3}{r}$, this proves (\ref{lqcoverbound}). 

We now prove (\ref{lqgwbound}). The first branch follows from
$\omega^2(\calK_{(r)})\le \omega^2(r\mathbb{B}_2^n)=r^2 \omega^2(\mathbb{B}_2^n)\lesssim r^2n$. It remains to show the second branch. Take any $\bv\in\calK_{(r)}$. Then
$\bv=\bu-\bu'$ for some $\bu,\bu'\in\calK$, and $\|\bv\|_2\le r$. Since $q\in(0,1)$, we have 
$
    \|\bv\|_q^q
    =
    \sum_{j=1}^n |u_j-u_j'|^q
    \le
    \sum_{j=1}^n |u_j|^q
    +
    \sum_{j=1}^n |u_j'|^q
    \le
    2k^{1-\frac q2}.$ We next interpolate between $\ell_q$ and $\ell_2$. By H\"older's inequality, 
$\|\bv\|_1=\sum_{j=1}^n
    \big(|v_j|^q\big)^{\frac1{2-q}}
    \big(|v_j|^2\big)^{\frac{1-q}{2-q}}
    \le
    \left(\sum_{j=1}^n |v_j|^q\right)^{\frac1{2-q}}
    \left(\sum_{j=1}^n |v_j|^2\right)^{\frac{1-q}{2-q}},$ which then gives an $\ell_1$ estimate
\begin{align}
    \|\bv\|_1
    \le
    \left(2k^{1-\frac q2}\right)^{\frac1{2-q}}
    r^{\frac{2(1-q)}{2-q}}
    \le
    C_q \sqrt{k}\,
    r^{\frac{2(1-q)}{2-q}} .\label{l1lqestimate}
\end{align}
Let $\bg\sim \calN(0,\bI_n)$. By (\ref{l1lqestimate}), we obtain 
$\omega(\calK_{(r)})
    =
    \mathbb E\sup_{\bv\in\calK_{(r)}}\langle \bg,\bv\rangle
    \le
    \mathbb E\|\bg\|_\infty
    \sup_{\bv\in\calK_{(r)}}\|\bv\|_1\le C_q\sqrt{k}r^{\frac{2(1-q)}{2-q}} \mathbb E\|\bg\|_\infty.$ 
The standard Gaussian maximal inequality gives
$
    \mathbb E\|\bg\|_\infty
    \le
    C\sqrt{\log(en)}.$ 
Therefore $
    \omega(\calK_{(r)})
    \le
    C_q \sqrt{k\log(en)}\,
    r^{\frac{2(1-q)}{2-q}}$,
which implies the desired bound.
The proof is complete.
\end{proof}

 \begin{lem}
    [Bernstein's inequality, e.g., Theorem 2.10 in \cite{13concen}] \label{lem:chernoff} 
    Let $X_1,\cdots,X_n$ be independent real-valued random variables. Assume that there exist positive numbers $v,c$ such that $$\sum_{i=1}^n\mathbbm{E}|X_i|^q \le \frac{q!}{2}vc^{q-2}\quad \text{for all integers }q\ge 2.$$ 
    Then for any $t>0$ we have 
    \begin{align*}
        \mathbbm{P}\left(\Big|\sum_{i=1}^n\big(X_i-\mathbbm{E}(X_i)\big)\Big| \ge \sqrt{2vt}+ct\right) \le 2\exp(-t). 
    \end{align*}
\end{lem}
\begin{lem}
    [See, e.g., \cite{dirksen2021non}] \label{lem:max_ell_sum} Let $\ba_1,...,\ba_m$ be independent   random vectors in $\mathbb{R}^n$ satisfying $\mathbbm{E}(\ba_i\ba_i^\top)=\bI_n$ and  $\max_i\|\ba_i\|_{\psi_2}\leq L$. For some   given $\calW\subset \mathbbm{R}^n$ and  $1\leq \ell\leq m$, there exist  constants $C_1,c_2$ depending only on $L$ such that   the event 
    \begin{equation*}     \sup_{\bw\in\calW}\max_{\substack{I\subset [m]\\|I|\leq \ell}}\Big(\frac{1}{\ell}\sum_{i\in I}|\langle\ba_i,\bw\rangle|^2\Big)^{1/2}\leq C_1\left(\frac{\omega(\calW)}{\sqrt{\ell}}+\rad(\calW)\log^{1/2}\Big(\frac{em}{\ell}\Big)\right)
    \end{equation*}
    holds with probability at least $1-2\exp(-c_2\ell\log(\frac{em}{\ell}))$, where $\rad(\calW)=\sup_{\bw\in\calW}\|\bw\|_2$. 
\end{lem}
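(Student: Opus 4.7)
The plan is to dualize the top-$\ell$ coordinate sum into a rectangular sub-Gaussian process and then invoke a chaining bound. For any $\bw$, the inner maximum is the top-$\ell$ $\ell_2$-norm of $\bA\bw$, which is dual to the $\ell_2$-norm on $\ell$-sparse unit vectors:
\begin{align*}
\Big(\max_{|I|\le \ell}\sum_{i\in I}\langle\ba_i,\bw\rangle^2\Big)^{1/2} \;=\; \sup_{\bv\in U_\ell}\langle\bv,\bA\bw\rangle,\quad U_\ell:=\{\bv\in\mathbb{R}^m:\|\bv\|_0\le\ell,\,\|\bv\|_2\le 1\}.
\end{align*}
So the quantity to control is the rectangular process $Z:=\sup_{\bw\in\calW,\,\bv\in U_\ell}\langle\bv,\bA\bw\rangle$.

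First I would estimate the geometry of $U_\ell$: clearly $\rad(U_\ell)=1$, and a union bound over the $\binom{m}{\ell}$ supports combined with the Gaussian width of the unit $\ell_2$-ball in $\mathbb{R}^\ell$ gives the standard estimate $\omega(U_\ell)\lesssim\sqrt{\ell\log(em/\ell)}$. Next, I would apply a sub-Gaussian chaining estimate for bilinear index sets (Mendelson's product-process chaining as used in \cite{dirksen2021non}). The increments of $(\bv,\bw)\mapsto\langle\bv,\bA\bw\rangle$ are sub-Gaussian with parameter $\lesssim L\big(\rad(\calW)\|\bv-\bv'\|_2+\rad(U_\ell)\|\bw-\bw'\|_2\big)$, so a two-scale chaining argument delivers
\begin{align*}
\mathbbm{E}\,Z \;\lesssim\; L\big[\rad(\calW)\,\omega(U_\ell)+\rad(U_\ell)\,\omega(\calW)\big] \;\lesssim\; L\big[\omega(\calW)+\rad(\calW)\sqrt{\ell\log(em/\ell)}\big].
\end{align*}

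To upgrade the expectation bound to high probability, I would note that $\bA\mapsto Z$ is Lipschitz in each row $\ba_i$ with constant $\rad(\calW)\rad(U_\ell)=\rad(\calW)$, so Talagrand's concentration inequality for suprema of empirical processes (equivalently a Bernstein-type bound for sub-Gaussian suprema) yields $\mathbbm{P}(Z\ge\mathbbm{E}Z+s)\le 2\exp\!\big(-cs^2/(L^2\rad(\calW)^2)\big)$. Choosing $s\asymp L\,\rad(\calW)\sqrt{\ell\log(em/\ell)}$ absorbs the deviation term into the expectation while producing exactly the tail $2\exp(-c_2\ell\log(em/\ell))$ claimed in the lemma; dividing both sides by $\sqrt{\ell}$ gives the stated estimate.

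The main obstacle is avoiding a parasitic $\sqrt{m}$ factor. A naive relaxation such as $\sup_{\bv\in U_\ell}\|\bv^\top\bA\|_2\le\|\bA\|_{\mathrm{op}}$ followed by $\sup_{\bw\in\calW}\|\bw\|_2=\rad(\calW)$ would produce a useless $\sqrt{m}\,\rad(\calW)$ bound. It is precisely the sharper width $\omega(U_\ell)\lesssim\sqrt{\ell\log(em/\ell)}$, combined with the fact that the product chaining keeps the $\rad(\calW)$ factor attached to the \emph{small} set $U_\ell$ rather than to the ambient $\ell_2$-ball, that yields the correct $\sqrt{\ell\log(em/\ell)}$ scaling in the radius term. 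This is the delicate point that the rectangular-process chaining handles, and anything cruder breaks the lemma.
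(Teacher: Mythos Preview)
The paper does not prove this lemma; it simply cites \cite{dirksen2021non}. So there is no paper proof to compare against, only your proposal to assess.

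Your dualization $\bigl(\max_{|I|\le\ell}\sum_{i\in I}\langle\ba_i,\bw\rangle^2\bigr)^{1/2}=\sup_{\bv\in U_\ell}\langle\bv,\bA\bw\rangle$ and the width estimate $\omega(U_\ell)\lesssim\sqrt{\ell\log(em/\ell)}$ are correct, and the expectation bound via product-type chaining is sound: the increments of $(\bv,\bw)\mapsto\langle\bv,\bA\bw\rangle$ are indeed sub-Gaussian with constant $\lesssim L\bigl(\rad(\calW)\|\bv-\bv'\|_2+\|\bw-\bw'\|_2\bigr)$, and generic chaining on the product then yields $\mathbbm{E}Z\lesssim L\bigl(\omega(\calW)+\rad(\calW)\sqrt{\ell\log(em/\ell)}\bigr)$.

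The gap is in your concentration step. The claim that ``$\bA\mapsto Z$ is Lipschitz in each row with constant $\rad(\calW)$, so Talagrand/Bernstein gives $\exp(-cs^2/(L^2\rad(\calW)^2))$'' does not follow under the stated hypotheses. Talagrand's concentration for suprema of empirical processes requires bounded summands, which $\langle\ba_i,\bw\rangle$ are not; and ``Lipschitz in each row'' for independent sub-Gaussian \emph{vectors} with possibly dependent coordinates does not by itself deliver dimension-free Gaussian-type concentration (the naive vector analogue would pick up an unwanted $m$). The clean fix is to bypass the expectation/deviation split entirely and invoke the high-probability form of generic chaining (as in Lemma~\ref{lem:tala} of this paper or \cite[Thm.~8.5.5]{vershynin2018high}): for the mixed metric $d\bigl((\bv,\bw),(\bv',\bw')\bigr)=\rad(\calW)\|\bv-\bv'\|_2+\|\bw-\bw'\|_2$ one gets directly
\[
\mathbbm{P}\Bigl(Z\ge CL\bigl[\rad(\calW)\,\omega(U_\ell)+\omega(\calW)+u\cdot\rad(\calW)\bigr]\Bigr)\le 2e^{-u^2},
\]
and taking $u\asymp\sqrt{\ell\log(em/\ell)}$ absorbs the deviation into the leading term while producing exactly the tail $2\exp\bigl(-c_2\ell\log(em/\ell)\bigr)$. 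With this correction your argument goes through.
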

 
\begin{lem}
    [Corollary 8.3 in \cite{plan2017high}] \label{planlem} Let $\calK$ be a star-shaped set, then for any $\bx\in \calK$ and $\bu\in \mathbb{R}^n$ we have 
    \begin{align*}
        \|\calP_{\calK}(\bu)-\bx\|_2 \le \max\left\{\phi,\frac{2}{\phi}\|\bu-\bx\|_{\calK^\circ_{(\phi)}}\right\},\quad\forall \phi>0.
    \end{align*}
\end{lem}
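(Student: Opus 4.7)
The plan is to follow the standard projection analysis that exploits the defining inequality for $\calP_\calK(\bu)$, combined with the star-shapedness of $\calK$ to rescale the deviation $\bx^\ast - \bx$ into $\calK_{(\phi)}$. Setting $\bx^\ast := \calP_\calK(\bu)$, the first step is to convert the inequality $\|\bx^\ast - \bu\|_2 \le \|\bx - \bu\|_2$ (which holds since $\bx \in \calK$) into a bound on $\|\bx^\ast - \bx\|_2^2$. Expanding
\begin{align*}
\|\bx^\ast - \bx\|_2^2 = \|\bx^\ast - \bu\|_2^2 - 2(\bx^\ast - \bu)^\top(\bx - \bu) + \|\bx - \bu\|_2^2
\end{align*}
and substituting $\|\bx^\ast - \bu\|_2^2 \le \|\bx - \bu\|_2^2$ yields the clean one-line inequality $\|\bx^\ast - \bx\|_2^2 \le 2(\bu-\bx)^\top(\bx^\ast - \bx)$, which is the entry point for bringing in the dual-norm control.

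The second step is a dichotomy on the size of $\|\bx^\ast - \bx\|_2$. If $\|\bx^\ast - \bx\|_2 \le \phi$, the bound holds trivially. Otherwise, I set $\lambda := \phi/\|\bx^\ast - \bx\|_2 \in (0,1)$ and define $\bv := \lambda(\bx^\ast - \bx)$. Here the star-shaped assumption is used in a critical way: both $\lambda \bx^\ast$ and $\lambda \bx$ lie in $\calK$, so $\bv = \lambda \bx^\ast - \lambda \bx \in \calK - \calK$; combined with $\|\bv\|_2 = \phi$ this gives $\bv \in \calK_{(\phi)} = (\calK-\calK)\cap \mathbb{B}_2^n(\phi)$.

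The third step uses symmetry of $\calK-\calK$ to pass to the dual norm. Since $\bv \in \calK_{(\phi)}$, we obtain $(\bu - \bx)^\top \bv \le \|\bu - \bx\|_{\calK^\circ_{(\phi)}}$, and unscaling gives
\begin{align*}
(\bu - \bx)^\top (\bx^\ast - \bx) = \lambda^{-1} (\bu - \bx)^\top \bv \le \frac{\|\bx^\ast - \bx\|_2}{\phi}\,\|\bu - \bx\|_{\calK^\circ_{(\phi)}}.
\end{align*}
Plugging this into the first-step inequality and dividing by $\|\bx^\ast - \bx\|_2$ yields $\|\bx^\ast - \bx\|_2 \le (2/\phi)\|\bu-\bx\|_{\calK^\circ_{(\phi)}}$, completing the second branch of the dichotomy.

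There is no serious obstacle here — the argument is short and elementary. The only delicate point worth flagging is the rescaling in step two: contracting $\bx^\ast - \bx$ by $\lambda < 1$ need not land back inside $\calK - \calK$ for an arbitrary set, and it is precisely the star-shapedness (with center $0$) that is invoked to ensure $\lambda \bx^\ast, \lambda \bx \in \calK$, so without that hypothesis the proof structure would collapse.
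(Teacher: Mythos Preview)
Your proof is correct. The paper does not supply its own proof of this lemma; it simply cites it as Corollary~8.3 of \cite{plan2017high}, and your argument is essentially the standard one from that reference: use the projection inequality $\|\bx^\ast-\bu\|_2\le\|\bx-\bu\|_2$ to obtain $\|\bx^\ast-\bx\|_2^2\le 2(\bu-\bx)^\top(\bx^\ast-\bx)$, then rescale $\bx^\ast-\bx$ into $\calK_{(\phi)}$ via star-shapedness when $\|\bx^\ast-\bx\|_2>\phi$.
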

\begin{lem}
    [A Useful Parameterization, Lemma H.4 in \cite{chen2024one}] \label{lem:parameterization}Given  two distinct points $\bu,\bv\in\mathbb{R}^n$, we let $\bbeta_1= \frac{\bu-\bv}{\|\bu-\bv\|_2}$ and can further find $\bbeta_2$ satisfying $\|\bbeta_2\|_2=1$ and $\langle \bbeta_1,\bbeta_2\rangle=0$, such that
        \begin{align*}
            \bu = u_1\bbeta_1 + u_2 \bbeta_2~~\text{and}~~\bv = v_1 \bbeta_1 + u_2 \bbeta_2
        \end{align*}
    hold for some $(u_1,u_2,v_1)$ satisfying $u_1>v_1~\text{ and }~u_2\ge 0.$
  We have $v_1=-u_1$ when $\|\bu\|_2=\|\bv\|_2$ holds. 
\end{lem}
\begin{lem}[See, e.g., Exercise 8.6.5 in \cite{vershynin2018high}]
    \label{lem:tala}
  Let $(R_{\bu})_{\bu\in\calW}$ be a random process (that is not necessarily zero-mean) on a subset $\calW\subset \mathbb{R}^n$. Assume that $R_{0}=0$, and  $\|R_{\bu}-R_{\bv}\|_{\psi_2}\leq K\|\bu-\bv\|_2$ holds for all $\bu,\bv\in\calW\cup\{0\}$. Then, for every $t\geq 0$, the event 
    \begin{equation*}
        \sup_{\bu\in \calW}\big|R_{\bu}\big|\leq CK \big(\omega(\calW)+t\cdot \rad(\calW)\big)
    \end{equation*}
    with probability at least $1-2\exp(-t^2)$,  where $\rad(\calW)=\sup_{\bw\in\calW}\|\bw\|_2$.  
\end{lem}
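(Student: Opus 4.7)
\medskip

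The plan is to establish Lemma~\ref{lem:tala} by a standard chaining argument, which is the template Vershynin's Exercise~8.6.5 points to. First I would augment the index set by adjoining $0$ if necessary; since $R_0=0$, the quantity of interest can be rewritten as $\sup_{\bu\in\calW}|R_\bu-R_0|$, a pairwise centered supremum of a process with sub-Gaussian increments of parameter $K$ under the Euclidean metric. Set $\rho:=\rad(\calW)$. For each integer $k\ge 0$, I would choose $\epsilon_k=2^{-k}\rho$ and a minimal $\epsilon_k$-net $\calN_k$ of $\calW\cup\{0\}$, taking $\calN_0=\{0\}$, and let $\pi_k(\bu)\in\calN_k$ be a nearest point. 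Then the telescoping identity
\[
R_\bu \;=\; R_{\pi_0(\bu)} + \sum_{k\ge 1}\bigl(R_{\pi_k(\bu)}-R_{\pi_{k-1}(\bu)}\bigr) \;=\; \sum_{k\ge 1}\bigl(R_{\pi_k(\bu)}-R_{\pi_{k-1}(\bu)}\bigr)
\]
holds (after a routine reduction to finite $\calW$, or appealing to continuity of sample paths).

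Next I would bound each link. The increment satisfies $\|\pi_k(\bu)-\pi_{k-1}(\bu)\|_2\le 3\cdot 2^{-k}\rho$, so the sub-Gaussian hypothesis gives $\|R_{\pi_k(\bu)}-R_{\pi_{k-1}(\bu)}\|_{\psi_2}\le 3K\cdot 2^{-k}\rho$. Union bounding the sub-Gaussian tail over the at most $\exp(2\scrH(\calW,\epsilon_k))$ pairs arising as $(\pi_k(\bu),\pi_{k-1}(\bu))$ yields, with probability at least $1-2^{-k}\cdot 2\exp(-t^2)$,
\[
\sup_{\bu\in\calW}\bigl|R_{\pi_k(\bu)}-R_{\pi_{k-1}(\bu)}\bigr| \;\le\; C_1 K\cdot 2^{-k}\rho\Bigl(\sqrt{\scrH(\calW,\epsilon_k)}+\sqrt{k}+t\Bigr).
\]
Summing over $k\ge 1$ via a union bound, the total failure probability is at most $\sum_k 2\cdot 2^{-k}\exp(-t^2)\le 2\exp(-t^2)$, and the three contributions to the upper bound are controlled as follows: the Dudley-type sum $\sum_k 2^{-k}\rho\sqrt{\scrH(\calW,2^{-k}\rho)}\lesssim \int_0^\rho\sqrt{\scrH(\calW,\epsilon)}\,d\epsilon$, the geometric $\sqrt{k}$ tail gives $O(\rho)$, and the $t$ contribution gives $O(t\rho)$.

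Finally I would upgrade the Dudley integral to the sharper Gaussian width. By Talagrand's majorizing measure theorem (equivalently, the comparison inequality relating sub-Gaussian increment processes to a Gaussian process with the same metric), one has
\[
\int_0^\rho\sqrt{\scrH(\calW,\epsilon)}\,d\epsilon \;\lesssim\; \gamma_2(\calW,\|\cdot\|_2) \;\lesssim\; \omega(\calW).
\]
Combining this with the chaining bound above and absorbing the $O(\rho)$ slack into $O(t\cdot\rho)$ for $t\ge 1$ (and into $\omega(\calW)$ otherwise, since $\rho\lesssim\omega(\calW)+\rho$ trivially) produces the stated estimate $\sup_{\bu\in\calW}|R_\bu|\le CK(\omega(\calW)+t\cdot\rad(\calW))$ on an event of probability at least $1-2\exp(-t^2)$.

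The main obstacle is the replacement of Dudley's entropy integral by $\omega(\calW)$: the chaining argument itself only gives the entropy-integral bound, and sharpening it to the Gaussian width requires Talagrand's majorizing-measure machinery (or, equivalently, Talagrand's generic-chaining comparison inequality comparing the sub-Gaussian process to a canonical Gaussian process indexed by the same set). The rest of the proof is bookkeeping, specifically the careful allocation of failure probabilities $2^{-k}\cdot 2\exp(-t^2)$ across scales so that both the telescoping failure budget and the resulting summable bound match the target form.
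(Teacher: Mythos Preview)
The paper does not prove this lemma; it merely records it as a known result and cites Vershynin's textbook. So there is no ``paper's proof'' to compare against, and the question is simply whether your sketch is correct.

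Your chaining outline is standard and correct up through the Dudley-integral bound, but the final step has the inequality pointing the wrong way. The relation you wrote,
\[
\int_0^\rho\sqrt{\scrH(\calW,\epsilon)}\,d\epsilon \;\lesssim\; \gamma_2(\calW,\|\cdot\|_2),
\]
is false in general: Dudley's integral is an \emph{upper} bound on $\gamma_2$, not a lower bound, and there are classical examples (ellipsoids, for instance) where the Dudley integral exceeds $\gamma_2\asymp\omega(\calW)$ by a logarithmic factor. Consequently, the classical chaining you carried out with regular $2^{-k}\rho$-nets cannot be upgraded to the Gaussian-width bound by a post-hoc comparison; the bound you have actually proved is $CK(\int_0^\rho\sqrt{\scrH(\calW,\epsilon)}\,d\epsilon + t\cdot\rad(\calW))$, which is strictly weaker than the lemma's claim.

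The fix is to replace classical chaining by \emph{generic} chaining from the start: work with an admissible sequence of partitions (or nets) achieving $\gamma_2(\calW,\|\cdot\|_2)$, so the telescoping sum is bounded by $CK\gamma_2$ directly. Then the hard direction of Talagrand's majorizing-measure theorem gives $\gamma_2(\calW,\|\cdot\|_2)\lesssim\omega(\calW)$, and the tail term $t\cdot\rad(\calW)$ drops out of the same generic-chaining argument. Equivalently, one can bypass the explicit chaining and invoke Talagrand's comparison inequality for sub-Gaussian processes (this is the content of the section in Vershynin leading up to Exercise~8.6.5), which already delivers $\sup_{\bu\in\calW}|R_\bu|\lesssim K(\omega(\calW)+t\cdot\rad(\calW))$ with the stated tail.
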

\begin{lem}
    [Concentration of Product Process, \cite{mendelson2016upper}; see also \cite{genzel2023unified}] \label{lem:product_process}Let $\{g_{\bu}\}_{\bu\in \calU}$ and $\{h_{\bv}\}_{\mathbf{b}\in\calV}$ be stochastic processes indexed by two sets    $\calU\subset \mathbb{R}^{p}$ and $\calV\subset \mathbb{R}^q$, both defined on a common probability space $(\Omega,A,\mathbbm{P})$. We assume that there exist $K_{\calU},K_{\calV},r_{\calU},r_{\calV}\geq 0$ such that 
    \begin{align*}
       & \|g_{\bu}-g_{\bu'}\|_{\psi_2}\leq K_{\calU}\|\bu-\bu'\|_2,~\|g_{\bu}\|_{\psi_2} \leq r_{\calU},~\forall\bu,\bu'\in \calU;\\
       & \|h_{\bv}-h_{\bv'}\|_{\psi_2} \leq {K}_{\calV}\|\bv-\bv'\|_2,~\|h_{\bv}\|_{\psi_2}\leq r_{\calV},~\forall\bv,\bv'\in \calV.
    \end{align*}
Suppose that $\ba_1,...,\ba_m$ are independent copies of a random variable $\ba\sim \mathbbm{P}$, then for every $t\geq 1$ the event\begin{equation}
   \begin{aligned}\nonumber
        &\sup_{\bu\in\calU}\sup_{\bv\in\calV} ~ \Big|\frac{1}{m}\sum_{i=1}^m g_{\bu}(\ba_i)h_{\bv}(\ba_i)-\mathbbm{E}[g_{\bu}(\ba_i)h_{\bv}(\ba_i)]\Big|\\
        &\leq C\left(\frac{(K_{\calU}\cdot\omega(\calU)+t\cdot r_{\calU})\cdot (K_{\calV}\cdot \omega(\calV)+t\cdot r_{\calV})}{m}+\frac{r_{\calU}\cdot K_{\calV}\cdot \omega(\calV)+r_{\calV}\cdot K_{\calU}\cdot \omega(\calU)+t\cdot r_{\calU}r_{\calV}}{\sqrt{m}}\right)
   \end{aligned}
\end{equation}
holds with probability at least $1-2\exp(-ct^2)$. 
\end{lem}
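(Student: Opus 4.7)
}
The plan is to reduce the two-sided supremum to three qualitatively simpler empirical processes by anchoring at any fixed pair $(\bu_0,\bv_0)\in\calU\times\calV$. Setting $G_{\bu}:=g_{\bu}-g_{\bu_0}$ and $H_{\bv}:=h_{\bv}-h_{\bv_0}$, the algebraic identity
\begin{equation*}
g_{\bu}h_{\bv}=G_{\bu}H_{\bv}+G_{\bu}h_{\bv_0}+g_{\bu_0}H_{\bv}+g_{\bu_0}h_{\bv_0}
\end{equation*}
lets us write the empirical centered product process as $I_1(\bu,\bv)+I_2(\bu)+I_3(\bv)+I_4$, where $I_1$ is the quadratic remainder involving $G_{\bu}H_{\bv}$, $I_2$ and $I_3$ are multiplier processes with the fixed anchor as multiplier, and $I_4$ is a scalar concentration of $\frac{1}{m}\sum_i g_{\bu_0}(\ba_i)h_{\bv_0}(\ba_i)$ about its mean.

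The scalar term $I_4$ is a sum of i.i.d.\ sub-exponential variables with $\|g_{\bu_0}h_{\bv_0}\|_{\psi_1}\le r_\calU r_\calV$, so Bernstein's inequality controls it by $r_\calU r_\calV(t/\sqrt{m}+t^2/m)$ with probability $1-2\exp(-ct^2)$. For the multiplier terms, consider $I_3(\bv)$. The increments satisfy
\begin{equation*}
\|g_{\bu_0}(\ba)H_{\bv}(\ba)-g_{\bu_0}(\ba)H_{\bv'}(\ba)\|_{\psi_1}\le \|g_{\bu_0}\|_{\psi_2}\|H_{\bv}-H_{\bv'}\|_{\psi_2}\le r_\calU K_\calV\|\bv-\bv'\|_2,
\end{equation*}
so a generic-chaining / Talagrand-style bound for empirical processes with sub-exponential increments, applied on $\calV$, delivers
\begin{equation*}
\sup_{\bv\in\calV}|I_3(\bv)|\lesssim \tfrac{r_\calU K_\calV\,\omega(\calV)}{\sqrt{m}}+\tfrac{r_\calU r_\calV t}{\sqrt{m}}+\tfrac{r_\calU r_\calV t^2}{m}
\end{equation*}
with probability $1-2\exp(-ct^2)$; the symmetric argument handles $I_2$ and yields the $\frac{r_\calV K_\calU\omega(\calU)}{\sqrt{m}}$ contribution.

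The hard step is $I_1$, the product/quadratic term. The key observation is that the increments of the product process are sub-exponential rather than sub-Gaussian:
\begin{equation*}
\|G_{\bu}H_{\bv}-G_{\bu'}H_{\bv'}\|_{\psi_1}\lesssim K_\calU\|\bu-\bu'\|_2\cdot K_\calV\,\mathrm{diam}(\calV)+K_\calV\|\bv-\bv'\|_2\cdot K_\calU\,\mathrm{diam}(\calU),
\end{equation*}
which after symmetrization and a two-level chaining in $\calU\times\calV$ (using Bernstein-type increments at each scale, as in Mendelson's upper bound for multiplier/product processes) produces the $\frac{1}{m}$-rate term $\frac{K_\calU K_\calV\omega(\calU)\omega(\calV)}{m}$ together with sub-exponential deviation contributions $\frac{t\cdot(K_\calU\omega(\calU)r_\calV+K_\calV\omega(\calV)r_\calU)}{m}$ and $\frac{t^2 r_\calU r_\calV}{m}$ absorbed into the stated bound. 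Collecting the four pieces and tidying constants gives the claim. The main obstacle is precisely this product-chaining argument: one must carry mixed-norm (Orlicz $\psi_1$) increment bounds through a chaining hierarchy on a product index set while tracking both the expectation and deviation terms, which is the technical core of \cite{mendelson2016upper,genzel2023unified}; all other pieces ($I_2,I_3,I_4$) reduce to textbook sub-exponential/sub-Gaussian concentration.
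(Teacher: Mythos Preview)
The paper does not prove this lemma at all: it is stated as a technical tool imported directly from \cite{mendelson2016upper,genzel2023unified}, with no proof or sketch given in the paper or its appendix. So there is no ``paper's own proof'' to compare against.

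Your outline is a reasonable reconstruction of the standard route in those references. The anchoring decomposition into $I_1+I_2+I_3+I_4$ is exactly how Mendelson-type product-process bounds are typically organized, and your treatment of $I_2,I_3,I_4$ via Bernstein and multiplier-process chaining is correct in spirit. The only place where more care would be needed in a full proof is your handling of $I_1$: the increment bound you wrote uses $\|G_\bu\|_{\psi_2}\lesssim K_\calU\,\mathrm{diam}(\calU)$, but to recover the stated form of the bound (with $r_\calU,r_\calV$ rather than diameters appearing in the cross terms) one should instead exploit $\|G_\bu\|_{\psi_2}\le 2r_\calU$ and $\|H_\bv\|_{\psi_2}\le 2r_\calV$, and the chaining on the product index set has to be carried out more carefully than a single mixed-increment estimate suggests (this is where the $\gamma_2$-functional machinery in \cite{mendelson2016upper} enters). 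You correctly flag this as the technical core; the remaining steps are routine.
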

 \end{appendix}
\end{document}